\tikzset{>=latex'}
 \pgfplotsset{
    label style={anchor=near ticklabel},
    xlabel style={yshift=0cm},
    ylabel style={yshift=0cm},
    ylabel style={xshift=-1cm}}
  \newtheorem{lemma}{Lemma}
 \newtheorem{mydef}{Definition}
 \newtheorem{mypro}{Proposition}
 \newtheorem{mythe}{Theorem}
 \newtheorem{corollary}{Corollary}
  \newtheorem{remark}{Remark}
\newcommand{\X}{\boldsymbol{x}}
\newcommand{\Y}{\boldsymbol{y}}
\newcommand{\bS}{\boldsymbol{S}}
\newcommand{\s}{\boldsymbol{s}}
\newcommand{\done}{d1}
\newcommand{\dtwo}{d2}
\newcommand{\dthree}{d3}
\newcommand{\cone}{c1}
\newcommand{\ctwo}{c2}
\newcommand{\cthree}{c3}
\newcommand{\oneA}{1A}
\newcommand{\oneB}{1B}
\newcommand{\oneC}{1C}
\newcommand{\twoA}{2A}
\newcommand{\twoB}{2B}
\newcommand{\twoC}{2C}
\newcommand{\twoD}{2D}
\newcommand{\threeA}{3A}
\newcommand{\threeB}{3B}
\newcommand{\threeC}{3C}
\newcommand{\R}{\rho}
\newcommand{\IACP}{IA-CP}
\newcommand{\PACP}{PA-CP}
\newcommand{\Vast}{\bBigg@{4}}
\begin{document}

\title{{(}Sub-{)}optimality of Treating Interference as Noise in the Cellular Uplink with Weak Interference}

\author{\IEEEauthorblockN{Soheil Gherekhloo, Anas Chaaban, Chen Di, and Aydin Sezgin}\\
\IEEEauthorblockA{Institute of Digital Communication Systems\\
Ruhr-Universit\"at Bochum\\
Email: soheyl.gherekhloo, anas.chaaban, di.chen, aydin.sezgin@rub.de}
\thanks{This paper is a revised and extended version of the Intern. ITG Workshop on Smart Antennas (WSA) paper~\cite{ChaabanSezgin_SubOptTIN} in March, 2012. 

This work is supported by the German Research Foundation, Deutsche
Forschungsgemeinschaft (DFG), Germany, under grant SE 1697/10.%
}
}

\maketitle

\begin{abstract}
Despite the simplicity of the scheme of treating interference as noise (TIN), it was shown to be sum-capacity optimal in the Gaussian interference channel (IC) with very-weak (noisy) interference. In this paper, the 2-user IC is altered by introducing {an additional} transmitter that wants to communicate with one of the receivers of the IC. The resulting network thus consists of a point-to-point channel interfering with a multiple access channel (MAC) and is denoted PIMAC. The sum-capacity of the PIMAC is studied with main focus on the optimality of {TIN}. It turns out that {TIN} in its naive variant, where all transmitters are active and both receivers use {TIN} for decoding, is not the best choice for the PIMAC. In fact, a scheme that combines both time division multiple access and {TIN} (TDMA-TIN) {strictly} outperforms the naive-{TIN} scheme. Furthermore, it is shown that in some regimes, TDMA-TIN achieves the sum-capacity for the deterministic PIMAC and the sum-capacity within a constant gap for the Gaussian PIMAC. Additionally, it is {shown} that, even for very-weak interference, there are some regimes where a combination of interference alignment {with power control and treating interference as noise at the receiver side} outperforms TDMA-TIN. {As a consequence, on the one hand treating interference as noise in a cellular uplink is approximately optimal in certain regimes. On the other hand those regimes cannot be simply described by the strength of interference.}	
\end{abstract}

\IEEEpeerreviewmaketitle

%\begin{IEEEkeywords}
%\end{IEEEkeywords}

%\input{./Figure_PIMAC}
%\input{./Figure_Regimes}

%%%%%%%%%%%%FIGURE_PIMAC%%%%%%%%%%%%%%%%%%%
\newcommand{\Organization}{
\draw (-5.5, -1) rectangle (-1.5,1);
\node (tx1) at (-3.5,0.5) [inner sep=0] {Deterministic};
\node (tx1) at (-3.5,-0.5) [inner sep=0] {PIMAC};
\draw (-5.5, -1.1) rectangle (-1.5,-2.1);
\node (tx1) at (-3.5,-1.55) [inner sep=0] {Determining regimes};
\draw (-5.5, -2.1) rectangle (-1.5,-3.1);
\node (tx1) at (-3.5,-2.55) [inner sep=0] {TIN optimality};
\draw (-5.5, -3.1) rectangle (-1.5,-4.1);
\node (tx1) at (-3.5,-3.55) [inner sep=0] {TIN sub-optimality};

\draw[thick,->] (-1.5,0) to (1,0);

\draw (1, -1) rectangle (6.5,1);
\node (tx1) at (3.75,0.5) [inner sep=0] {Gaussian};
\node (tx1) at (3.75,-0.5) [inner sep=0] {PIMAC};
\draw (1, -1.1) rectangle (6.5,-2.1);
\node (tx1) at (3.75,-1.55) [inner sep=0] {Constant gap analysis};
\draw (1, -2.1) rectangle (6.5,-3.1);
\node (tx1) at (3.75,-2.55) [inner sep=0] {Naive-TIN sub-optimality};
\draw (1, -3.1) rectangle (6.5,-4.1);
\node (tx1) at (3.75,-3.55) [inner sep=0] {TIN sub-optimality};

}

\newcommand{\TDMATINIntroA}[5]{
\node (t3) at (0,0) [inner sep=0] {};
\node (t1) at (0,-2) [inner sep=0] {};
\node (t2) at (0,-4) [inner sep=0] {};
\node (r1) at (4,-2) [inner sep=0] {};
\node (r2) at (4,-4) [inner sep=0] {};
\node (tx1) at (-0.5,0) [inner sep=0] {#1};
\node (tx2) at (-0.5,-2) [inner sep=0] {#2};
\node (tx3) at (-0.5,-4) [inner sep=0] {#3};
\node (W3tl) at (-1,0) [inner sep=0] {};
\node (W1tl) at (-1,-2) [inner sep=0] {};
\node (W2tl) at (-1,-4) [inner sep=0] {};
\node (W1Rl) at (5,-2) [inner sep=0] {};
\node (W2Rl) at (5,-4) [inner sep=0] {};

\node (W3tR) at (-1.5,0) [inner sep=0] {};
\node (W1tR) at (-1.5,-2) [inner sep=0] {};
\node (W2tR) at (-1.5,-4) [inner sep=0] {};

\node (W1RR) at (5.5,-2) [inner sep=0] {};
\node (W2RR) at (5.5,-4) [inner sep=0] {};
\node (W2R) at (6,-4) [inner sep=0] {};
\node (tx2) at (4.5,-2) [inner sep=0] {#4};
\node (tx3) at (4.5,-4) [inner sep=0] {#5};
\draw[thick,->] (t3) to (r1);
\draw[->,opacity=0.3] (t1) to (r1);
\draw[->,opacity=0.3] (t2) to (r2);
\draw[thick,dashed,->] (t3) to (r2);
\draw[dashed,opacity=0.3,->] (t1) to (r2);
\draw[dashed,opacity=0.3,->] (t2) to (r1);
\draw (0, 0.4) rectangle (-1,-0.4);
\draw [draw=white,fill=white,opacity=0.7](0,-1.6) rectangle (-1,-2.4);
\draw [draw=black,opacity=0.2](0,-1.6) rectangle (-1,-2.4);
\draw [draw=white,fill=white,opacity=0.7](0,-3.6) rectangle (-1,-4.4);
\draw [draw=black,opacity=0.2](0,-3.6) rectangle (-1,-4.4);
\draw (4,-1.6) rectangle (5,-2.4);
\draw (4,-3.6) rectangle (5,-4.4);}

\newcommand{\TDMATINIntroB}[5]{
\node (t3) at (0,0) [inner sep=0] {};
\node (t1) at (0,-2) [inner sep=0] {};
\node (t2) at (0,-4) [inner sep=0] {};
\node (r1) at (4,-2) [inner sep=0] {};
\node (r2) at (4,-4) [inner sep=0] {};
\node (tx1) at (-0.5,0) [inner sep=0] {#1};
\node (tx2) at (-0.5,-2) [inner sep=0] {#2};
\node (tx3) at (-0.5,-4) [inner sep=0] {#3};
\node (W3tl) at (-1,0) [inner sep=0] {};
\node (W1tl) at (-1,-2) [inner sep=0] {};
\node (W2tl) at (-1,-4) [inner sep=0] {};
\node (W1Rl) at (5,-2) [inner sep=0] {};
\node (W2Rl) at (5,-4) [inner sep=0] {};

\node (W3tR) at (-1.5,0) [inner sep=0] {};
\node (W1tR) at (-1.5,-2) [inner sep=0] {};
\node (W2tR) at (-1.5,-4) [inner sep=0] {};

\node (W1RR) at (5.5,-2) [inner sep=0] {};
\node (W2RR) at (5.5,-4) [inner sep=0] {};
\node (W2R) at (6,-4) [inner sep=0] {};
\node (tx2) at (4.5,-2) [inner sep=0] {#4};
\node (tx3) at (4.5,-4) [inner sep=0] {#5};
\draw[opacity=0.3,->] (t3) to (r1);
\draw[thick,->] (t1) to (r1);
\draw[thick,->] (t2) to (r2);
\draw[dashed,->,opacity=0.3] (t3) to (r2);
\draw[thick,dashed,->] (t1) to (r2);
\draw[thick,dashed,->] (t2) to (r1);
\draw[draw=white,fill=white,opacity=0.7] (0, 0.4) rectangle (-1,-0.4);
\draw[draw=black,opacity=0.2] (0, 0.4) rectangle (-1,-0.4);
\draw (0,-1.6) rectangle (-1,-2.4);
\draw (0,-3.6) rectangle (-1,-4.4);
\draw (4,-1.6) rectangle (5,-2.4);
\draw (4,-3.6) rectangle (5,-4.4);}

\newcommand{\TDMATINIntroC}[5]{
\node (t3) at (0,0) [inner sep=0] {};
\node (t1) at (0,-2) [inner sep=0] {};
\node (t2) at (0,-4) [inner sep=0] {};
\node (r1) at (4,-2) [inner sep=0] {};
\node (r2) at (4,-4) [inner sep=0] {};
\node (tx1) at (-0.5,0) [inner sep=0] {#1};
\node (tx2) at (-0.5,-2) [inner sep=0] {#2};
\node (tx3) at (-0.5,-4) [inner sep=0] {#3};
\node (W3tl) at (-1,0) [inner sep=0] {};
\node (W1tl) at (-1,-2) [inner sep=0] {};
\node (W2tl) at (-1,-4) [inner sep=0] {};
\node (W1Rl) at (5,-2) [inner sep=0] {};
\node (W2Rl) at (5,-4) [inner sep=0] {};

\node (W3tR) at (-1.5,0) [inner sep=0] {};
\node (W1tR) at (-1.5,-2) [inner sep=0] {};
\node (W2tR) at (-1.5,-4) [inner sep=0] {};

\node (W1RR) at (5.5,-2) [inner sep=0] {};
\node (W2RR) at (5.5,-4) [inner sep=0] {};
\node (W2R) at (6,-4) [inner sep=0] {};
\node (tx2) at (4.5,-2) [inner sep=0] {#4};
\node (tx3) at (4.5,-4) [inner sep=0] {#5};
\draw[thick,->] (t3) to (r1);
\draw[->,opacity=0.3] (t1) to (r1);
\draw[thick,->] (t2) to (r2);
\draw[thick,dashed,->] (t3) to (r2);
\draw[dashed,opacity=0.3,->] (t1) to (r2);
\draw[thick,dashed,->] (t2) to (r1);
\draw (0, 0.4) rectangle (-1,-0.4);
\draw [draw=white,fill=white,opacity=0.7](0,-1.6) rectangle (-1,-2.4);
\draw [draw=black,opacity=0.2](0,-1.6) rectangle (-1,-2.4);
\draw (0,-3.6) rectangle (-1,-4.4);
%\draw [draw=white,fill=white,opacity=0.7](0,-3.6) rectangle (-1,-4.4);
%\draw [draw=black,opacity=0.2](0,-3.6) rectangle (-1,-4.4);
\draw (4,-1.6) rectangle (5,-2.4);
\draw (4,-3.6) rectangle (5,-4.4);}

\newcommand{\LinearDetermExample}[0]{
%%%% horisonatal lines
\node (Zero) at (-4.5,-4) [inner sep=0] {$0$};
\draw[line width=1,-] (-4,-4) to (15,-4);

%%%%%%%%%%
\node (zerotx1rx1) at (-3,-4.3) [inner sep=0] {$\boldsymbol{S}^{q-n_{\done}} \boldsymbol{x}_1$};
\node (zerotx2rx1) at (0,-4.3) [inner sep=0] {$\boldsymbol{S}^{q-n_{\ctwo}} \boldsymbol{x}_2$};
\node (zerotx3rx1) at (3.2,-4.3) [inner sep=0] {$\boldsymbol{S}^{q-n_{\dthree}} \boldsymbol{x}_3$};

\node (zerotx2rx1) at (0,-5) [inner sep=0] {Rx1};

%%%%%%% Tx3 -----> Rx1
\draw[line width=1,-] (2,2) to (4,2);
\node (zerotx1rx1) at (4.3,2) [inner sep=0] {$n_{\dthree}$};

\draw[line width=1,fill=blue,opacity=.2] (2.5,-4) rectangle (3.5,2);
\draw[line width=1,draw=black] (2.5,-4) rectangle (3.5,2);
\node (zerotx1rx1) at (3,-.5) [inner sep=0] {Tx3};

%%%%%%% Tx2 -----> Rx1
\draw[line width=1,line width=1,-] (-1,-2) to (1,-2);
\node (zerotx1rx1) at (1.4,-2) [inner sep=0] {$n_{\ctwo}$};

\draw[line width=1,fill=red,opacity=.2] (-.5,-4) rectangle (.5,-2);
\draw[line width=1,draw=black] (-.5,-4) rectangle (.5,-2);
\node (zerotx1rx1) at (0,-3) [inner sep=0] {Tx2};

%%%%%%% Tx1 -----> Rx1
\draw[line width=1,-] (-4,3) to (-2,3);
\node (zerotx1rx1) at (-1.6,3) [inner sep=0] {$n_{\done}$};

\draw[line width=1,fill=green,opacity=.2] (-3.5,-4) rectangle (-2.5,3);
\draw[line width=1,draw=black] (-3.5,-4) rectangle (-2.5,3);
\node (zerotx1rx1) at (-3,-.5) [inner sep=0] {Tx1};

\draw[line width=1,-] (5,-4) to (5,3);

\node (zerotx1rx1) at (-3+10,-4.3) [inner sep=0] {$\boldsymbol{S}^{q-n_{\cone}} \boldsymbol{x}_1$};
\node (zerotx2rx1) at (0+10,-4.3) [inner sep=0] {$\boldsymbol{S}^{q-n_{\dtwo}} \boldsymbol{x}_2$};
\node (zerotx3rx1) at (3.2+10,-4.3) [inner sep=0] {$\boldsymbol{S}^{q-n_{\cthree}} \boldsymbol{x}_3$};

\node (zerotx2rx1) at (0+10,-5) [inner sep=0] {Rx2};

%%%%%%% Tx1 -----> Rx2
\draw[line width=1,-] (-4+10,-2-.5) to (-2+10,-2-.5);
\node (zerotx1rx1) at (-1.6+10,-2-.5) [inner sep=0] {$n_{\cone}$};

\draw[line width=1,fill=green,opacity=.2] (-3.5+10,-4) rectangle (-2.5+10,-2-.5);
\draw[line width=1,draw=black] (-3.5+10,-4) rectangle (-2.5+10,-2-.5);
\node (zerotx1rx1) at (-3+10,-3-.25) [inner sep=0] {Tx1};

%%%%%%% Tx2 -----> Rx2
\draw[line width=1,-] (-1+10,3+.5) to (1+10,3+.5);
\node (zerotx1rx1) at (1.4+10,3+.5) [inner sep=0] {$n_{\dtwo}$};

\draw[line width=1,fill=red,opacity=.2] (-.5+10,-4) rectangle (.5+10,3+.5);
\draw[line width=1,draw=black] (-.5+10,-4) rectangle (.5+10,3+.5);
\node (zerotx1rx1) at (0+10,-.5+.25) [inner sep=0] {Tx2};

%%%%%%% Tx3 -----> Rx2
\draw[line width=1,-] (2+10,-3) to (4+10,-3);
\node (zerotx1rx1) at (4.3+10,-3) [inner sep=0] {$n_{\cthree}$};

\draw[line width=1,fill=blue,opacity=.2] (2.5+10,-4) rectangle (3.5+10,-3);
\draw[line width=1,draw=black] (2.5+10,-4) rectangle (3.5+10,-3);
\node (zerotx1rx1) at (3+10,-3.5) [inner sep=0] {Tx3};

}

\newcommand{\NaiveTINLD}[0]{
%%%% horisonatal lines
\node (Zero) at (-4.5,-4) [inner sep=0] {$0$};
\draw[line width=1,-] (-4,-4) to (15,-4);

%%%%%%%%%%
\node (zerotx1rx1) at (-3,-4.3) [inner sep=0] {$\boldsymbol{S}^{q-n_{\done}} \boldsymbol{x}_1$};
\node (zerotx2rx1) at (0,-4.3) [inner sep=0] {$\boldsymbol{S}^{q-n_{\ctwo}} \boldsymbol{x}_2$};
\node (zerotx3rx1) at (3.2,-4.3) [inner sep=0] {$\boldsymbol{S}^{q-n_{\dthree}} \boldsymbol{x}_3$};

\node (zerotx2rx1) at (0,-5) [inner sep=0] {Rx1};

%%%%%%% Tx3 -----> Rx1
\draw[line width=1,-] (2,1) to (4,1);
\node (zerotx1rx1) at (4.3,1) [inner sep=0] {$n_{\dthree}$};
\draw[line width=1] (2.5,-2) rectangle (3.5,1);
\draw[line width=1,fill=blue,opacity=.2](2.5,-2) rectangle (3.5,1); 
%\node (zerotx1rx1) at (3,-0.5) [inner sep=0] {$\boldsymbol{u}_3$};
\draw[line width=1] (2.5,-4) rectangle (3.5,-2);
\node (zerotx1rx1) at (3,-3) [inner sep=0] {$\boldsymbol{0}$};

%%%%%%% Tx2 -----> Rx1
\draw[line width=1,-] (-1,-2) to (1,-2);
\node (zerotx1rx1) at (1.3+.1,-2) [inner sep=0] {$n_{\ctwo}$};

\draw[line width=1,fill=red,opacity=.2] (-.5,-4) rectangle (.5,-2);
\draw[line width=1,draw=black] (-.5,-4) rectangle (.5,-2);
\node (zerotx1rx1) at (0,-3) [inner sep=0] {};

%%%%%%% Tx1 -----> Rx1
\draw[line width=1,-] (-4,3) to (-2,3);
\node (zerotx1rx1) at (-1.6,3) [inner sep=0] {$n_{\done}$};
\draw[line width=1,fill=green,opacity=.2] (-3.5,3) rectangle (-2.5,1);
\draw[line width=1] (-3.5,3) rectangle (-2.5,1);
\draw[line width=1,draw=black] (-3.5,-4) rectangle (-2.5,1);
%\node (zerotx1rx1) at (-3,2) [inner sep=0] {$\boldsymbol{u}_1$};
\node (zerotx1rx1) at (-3,-1.5) [inner sep=0] {$\boldsymbol{0}$};

\draw[line width=1,-] (5,-4) to (5,3);

\node (zerotx1rx1) at (-3+10,-4.3) [inner sep=0] {$\boldsymbol{S}^{q-n_{\cone}} \boldsymbol{x}_1$};
\node (zerotx2rx1) at (0+10,-4.3) [inner sep=0] {$\boldsymbol{S}^{q-n_{\dtwo}} \boldsymbol{x}_2$};
\node (zerotx3rx1) at (3.2+10,-4.3) [inner sep=0] {$\boldsymbol{S}^{q-n_{\cthree}} \boldsymbol{x}_3$};

\node (zerotx2rx1) at (0+10,-5) [inner sep=0] {Rx2};

%%%%%%% Tx1 -----> Rx2
\draw[line width=1,-] (-4+10,-2-.5) to (-2+10,-2-.5);
\node (zerotx1rx1) at (-1.8+.2+10,-2-.5) [inner sep=0] {$n_{\cone}$};

\draw[line width=1,fill=green,opacity=.2] (-3.5+10,-2-.5) rectangle (-2.5+10,-4);
\draw[line width=1,draw=black] (-3.5+10,-4) rectangle (-2.5+10,-2-.5);
%\node (zerotx1rx1) at (-3+10,-3) [inner sep=0] {$\boldsymbol{u}_1$};

%%%%%%% Tx2 -----> Rx2
\draw[line width=1,-] (-1+10,3.5) to (1+10,3.5);
\node (zerotx1rx1) at (1.4+10,3+.5) [inner sep=0] {$n_{\dtwo}$};

\draw[line width=1,fill=red,opacity=.2] (-.5+10,-2-.5) rectangle (.5+10,3+.5);
\draw[line width=1,draw=black] (-.5+10,-2-.5) rectangle (.5+10,3+.5);
%\node (zerotx1rx1) at (0+10,.5) [inner sep=0] {$\boldsymbol{u}_2$};
\draw[line width=1,draw=black] (-.5+10,-2-.5) rectangle (.5+10,-4);
\node (zerotx1rx1) at (0+10,-3) [inner sep=0] {$\boldsymbol{0}$};

%%%%%%% Tx3 -----> Rx2
\draw[line width=1,-] (2+10,-3) to (4+10,-3);
\node (zerotx1rx1) at (4.3+10,-3) [inner sep=0] {$n_{\cthree}$};

\draw[line width=1,fill=blue,opacity=.2] (2.5+10,-4) rectangle (3.5+10,-3);
\draw[draw=black] (2.5+10,-4) rectangle (3.5+10,-3);
\node (zerotx1rx1) at (3+10,-3.5) [inner sep=0] {};

}

\newcommand{\UBRegimeSix}[0]{
%%%% horisonatal lines
\node (Zero) at (-4.5,-4) [inner sep=0] {$0$};
\draw[line width=1,-] (-4,-4) to (5,-4);

\draw[line width=1,dashed] (-4,3) to (5,3);
\draw[line width=1,dashed] (-4,1) to (5,1);
\node (Zero) at (-5.2,1) [inner sep=0] {$n_{\dthree}-n_{\cthree}$};
\draw[line width=1,dashed] (-4,0) to (5,0);
\node (Zero) at (-5.2,0) [inner sep=0] {$n_{\done}-n_{\cone}$};

\draw [decorate,decoration={brace,amplitude=5pt,mirror},xshift=-4pt,yshift=0pt]
(5,1) -- (5,3) node [black,midway,xshift=2cm] 
{\footnotesize $\boldsymbol{s}_{1,[q-(n_{\done}-n_{\dthree})^+-n_{\cthree}+1:q]}$};

%%%%%%%%%%
\node (zerotx1rx1) at (-3,-4.3) [inner sep=0] {$\boldsymbol{S}^{q-n_{\done}} \boldsymbol{x}_1$};
\node (zerotx2rx1) at (0,-4.3) [inner sep=0] {$\boldsymbol{S}^{q-n_{\ctwo}} \boldsymbol{x}_2$};
\node (zerotx3rx1) at (3.2,-4.3) [inner sep=0] {$\boldsymbol{S}^{q-n_{\dthree}} \boldsymbol{x}_3$};

%%%%%%% Tx3 -----> Rx1
\draw[line width=1,-] (2,2) to (4,2);
\node (zerotx1rx1) at (1.8-.3,2) [inner sep=0] {$n_{\dthree}$};
\draw[line width=1] (2.5,-4) rectangle (3.5,2);
\draw[line width=1,fill=blue,opacity=.2] (2.5,1) rectangle (3.5,2);
\draw[line width=1,draw=black] (2.5,1) rectangle (3.5,2);
\draw[line width=1,draw=black] (2.5,-4) rectangle (3.5,2);

%%%%%%% Tx2 -----> Rx1
\draw[line width=1,line width=1,-] (-1,-1) to (1,-1);
\node (zerotx1rx1) at (-1.4-.2,-1) [inner sep=0] {$n_{\ctwo}$};

\draw[line width=1,draw=black] (-.5,-4) rectangle (.5,-1);

%%%%%%% Tx1 -----> Rx1
\draw[line width=1,-] (-4,3) to (-2,3);
\node (zerotx1rx1) at (-4.4-.2,3) [inner sep=0] {$n_{\done}$};

\draw[line width=1,draw=black] (-3.5,-4) rectangle (-2.5,3);
\draw[line width=1,fill=blue,opacity=.2] (-3.5,1) rectangle (-2.5,3);
\draw[line width=1,draw=black] (-3.5,1) rectangle (-2.5,3);
}

\newcommand{\IACOMTIN}[0]{
%%%% horisonatal lines
\node (Zero) at (-4.5,-4) [inner sep=0] {$0$};
\draw[line width=1,-] (-4,-4) to (15,-4);

\draw[line width=1,-] (5,-4) to (5,4);

%%%%%%%%%%
\node (zerotx1rx1) at (-3,-4.3) [inner sep=0] {$\boldsymbol{S}^{q-n_{\done}} \boldsymbol{x}_1$};
\node (zerotx2rx1) at (0,-4.3) [inner sep=0] {$\boldsymbol{S}^{q-n_{\ctwo}} \boldsymbol{x}_2$};
\node (zerotx3rx1) at (3.2,-4.3) [inner sep=0] {$\boldsymbol{S}^{q-n_{\dthree}} \boldsymbol{x}_3$};

\node (zerotx1rx1) at (-3+10,-4.3) [inner sep=0] {$\boldsymbol{S}^{q-n_{\cone}} \boldsymbol{x}_1$};
\node (zerotx2rx1) at (0+10,-4.3) [inner sep=0] {$\boldsymbol{S}^{q-n_{\dtwo}} \boldsymbol{x}_2$};
\node (zerotx3rx1) at (3.2+10,-4.3) [inner sep=0] {$\boldsymbol{S}^{q-n_{\cthree}} \boldsymbol{x}_3$};

\node (zerotx2rx1) at (0,-5) [inner sep=0] {Rx1};

\node (zerotx2rx1) at (0+10,-5) [inner sep=0] {Rx2};

%%%%%%% Tx3 -----> Rx1
\draw[line width=1,-] (2,4) to (4,4);
\node (zerotx1rx1) at (4.4,4) [inner sep=0] {$n_{\dthree}$};
\draw[line width=1] (2.5,3) rectangle (3.5,4);
\node (zerotx1rx1) at (3,3.5) [inner sep=0] {$\boldsymbol{u}_{3,c}$};
\draw[line width=1] (2.5,3) rectangle (3.5,1);
\node (zerotx1rx1) at (3,2) [inner sep=0] {$\boldsymbol{0}$};
\draw[line width=1,arrows=<->] (3.8,3) -- (3.8,1) node[midway,right] {$\ell_3$};
\draw[line width=1] (2.5,1) rectangle (3.5,0);
\node (zerotx1rx1) at (3,.5) [inner sep=0] {$\boldsymbol{u}_{3,a}$};
\draw[line width=1] (2.5,0) rectangle (3.5,-1);
\node (zerotx1rx1) at (3,-0.5) [inner sep=0] {$\boldsymbol{0}$};
\draw[line width=1] (2.5,-1) rectangle (3.5,-2);
\node (zerotx1rx1) at (3,-1.5) [inner sep=0] {$\boldsymbol{u}_{3,p}$};
\draw[line width=1] (2.5,-2) rectangle (3.5,-4);
\node (zerotx1rx1) at (3,-3) [inner sep=0] {$\boldsymbol{0}$};

%%%%%%% Tx2 -----> Rx1
\draw[line width=1,-] (-1,-3) to (1,-3);
\node (zerotx1rx1) at (1.3+.1,-3) [inner sep=0] {$n_{\ctwo}$};
\draw[line width=1,draw=black] (-.5,-4) rectangle (.5,-3);
\node (zerotx1rx1) at (0,-3.5) [inner sep=0] {$\boldsymbol{0}$};

%%%%%%% Tx1 -----> Rx1
\draw[line width=1,-] (-4,0) to (-2,0);
\node (zerotx1rx1) at (-1.6,0) [inner sep=0] {$n_{\done}$};
\draw[line width=1] (-3.5,0) rectangle (-2.5,-1);
\node (zerotx1rx1) at (-3,-.5) [inner sep=0] {$\boldsymbol{u}_{1,a}$};
\draw[line width=1] (-3.5,-1) rectangle (-2.5,-2);
\node (zerotx1rx1) at (-3,-1.5) [inner sep=0] {$\boldsymbol{0}$};
\draw[line width=1] (-3.5,-2) rectangle (-2.5,-4);
\node (zerotx1rx1) at (-3,-3) [inner sep=0] {$\boldsymbol{u}_{1,p}$};

%%%%%%% Tx1 -----> Rx2
\draw[line width=1,-] (-4+10,-2) to (-2+10,-2);
\node (zerotx1rx1) at (-1.8+.2+10,-2) [inner sep=0] {$n_{\cone}$};
\draw[line width=1,draw=black] (-3.5+10,-2) rectangle (-2.5+10,-3);
\node (zerotx1rx1) at (-3+10,-2.5) [inner sep=0] {$\boldsymbol{u}_{1,a}$};
\draw[line width=1,draw=black] (-3.5+10,-3) rectangle (-2.5+10,-4);
\node (zerotx1rx1) at (-3+10,-3.5) [inner sep=0] {$\boldsymbol{0}$};

%%%%%%% Tx2 -----> Rx2
\draw[line width=1,-] (-1+10,1) to (1+10,1);
\node (zerotx1rx1) at (1.4+10,1) [inner sep=0] {$n_{\dtwo}$};
\draw[line width=1,draw=black] (-.5+10,1) rectangle (.5+10,0);
\node (zerotx1rx1) at (0+10,.5) [inner sep=0] {$\boldsymbol{0}$};
\draw[line width=1,draw=black] (-.5+10,0) rectangle (.5+10,-2);
\node (zerotx1rx1) at (0+10,-1) [inner sep=0] {$\boldsymbol{u}_{2,p1}$};
\draw[line width=1,draw=black] (-.5+10,-2) rectangle (.5+10,-3);
\node (zerotx1rx1) at (0+10,-2.5) [inner sep=0] {$\boldsymbol{0}$};
\draw[line width=1,draw=black] (-.5+10,-3) rectangle (.5+10,-4);
\node (zerotx1rx1) at (0+10,-3.5) [inner sep=0] {$\boldsymbol{u}_{2,p2}$};

%%%%%%%% Tx3 -----> Rx2
\draw[line width=1,-] (2+10,1) to (4+10,1);
\node (zerotx1rx1) at (4.4+10,1) [inner sep=0] {$n_{\cthree}$};
\draw[line width=1,draw=black] (2.5+10,1) rectangle (3.5+10,0);
\node (zerotx1rx1) at (3+10,.5) [inner sep=0] {$\boldsymbol{u}_{3,c}$};
\draw[line width=1,draw=black] (2.5+10,0) rectangle (3.5+10,-2);
\node (zerotx1rx1) at (3+10,-1) [inner sep=0] {$\boldsymbol{0}$};
\draw[line width=1,draw=black] (2.5+10,-2) rectangle (3.5+10,-3);
\node (zerotx1rx1) at (3+10,-2.5) [inner sep=0] {$\boldsymbol{u}_{3,a}$};
\draw[line width=1,draw=black] (2.5+10,-3) rectangle (3.5+10,-4);
\node (zerotx1rx1) at (3+10,-3.5) [inner sep=0] {$\boldsymbol{0}$};

}

%%%%%%%%%%%%%%%%%%%%%%%%%%%%%%%%%%%%%%%%%%%%%%%%%%

\newcommand{\IACOMTINSneg}[0]{
%%%% horisonatal lines
\node (Zero) at (-4.5,-4) [inner sep=0] {$0$};
\draw[line width=1,-] (-4,-4) to (15,-4);

\draw[line width=1,-] (5,-4) to (5,1);

%%%%%%%%%%
\node (zerotx1rx1) at (-3,-4.3) [inner sep=0] {$\boldsymbol{S}^{q-n_{\done}} \boldsymbol{x}_1$};
\node (zerotx2rx1) at (0,-4.3) [inner sep=0] {$\boldsymbol{S}^{q-n_{\ctwo}} \boldsymbol{x}_2$};
\node (zerotx3rx1) at (3.2,-4.3) [inner sep=0] {$\boldsymbol{S}^{q-n_{\dthree}} \boldsymbol{x}_3$};

\node (zerotx1rx1) at (-3+10,-4.3) [inner sep=0] {$\boldsymbol{S}^{q-n_{\cone}} \boldsymbol{x}_1$};
\node (zerotx2rx1) at (0+10,-4.3) [inner sep=0] {$\boldsymbol{S}^{q-n_{\dtwo}} \boldsymbol{x}_2$};
\node (zerotx3rx1) at (3.2+10,-4.3) [inner sep=0] {$\boldsymbol{S}^{q-n_{\cthree}} \boldsymbol{x}_3$};

\node (zerotx2rx1) at (0,-5) [inner sep=0] {Rx1};

\node (zerotx2rx1) at (0+10,-5) [inner sep=0] {Rx2};

%%%%%%% Tx3 -----> Rx1
%\draw[line width=1,-] (2,2) to (4,2);
%\node (zerotx1rx1) at (4.3,2) [inner sep=0] {$n_{\dthree}$};
%\draw[line width=1] (2.5,1) rectangle (3.5,2);
%\node (zerotx1rx1) at (3,1.5) [inner sep=0] {$\boldsymbol{u}_{3,c}$};
\draw[line width=1,-] (2,1) to (4,1);
\node (zerotx1rx1) at (4.4,1) [inner sep=0] {$n_{\dthree}$};
\draw[line width=1] (2.5,1) rectangle (3.5,-1);
\node (zerotx1rx1) at (3,0) [inner sep=0] {$\boldsymbol{0}$};
\draw[line width=1,arrows=<->] (3.8,1) -- (3.8,-1) node[midway,right] {$\ell_3$};

\draw[line width=1] (2.5,-1) rectangle (3.5,-2);
\node (zerotx1rx1) at (3,-1.5) [inner sep=0] {$\boldsymbol{u}_{3,a}$};
\draw[line width=1] (2.5,-2) rectangle (3.5,-4);
\node (zerotx1rx1) at (3,-3) [inner sep=0] {$\boldsymbol{0}$};

%%%%%%% Tx2 -----> Rx1
\draw[line width=1,-] (-1,-3.5) to (1,-3.5);
\node (zerotx1rx1) at (1.4,-3.5) [inner sep=0] {$n_{\ctwo}$};
\draw[line width=1,draw=black] (-.5,-4) rectangle (.5,-3.5);
\node (zerotx1rx1) at (0,-3.75) [inner sep=0] {$\boldsymbol{0}$};

%%%%%%% Tx1 -----> Rx1
\draw[line width=1,-] (-4,0) to (-2,0);
\node (zerotx1rx1) at (-1.6,0) [inner sep=0] {$n_{\done}$};
\draw[line width=1] (-3.5,0) rectangle (-2.5,-1);
\node (zerotx1rx1) at (-3,-.5) [inner sep=0] {$\boldsymbol{u}_{1,a}$};
\draw[line width=1] (-3.5,-1) rectangle (-2.5,-2);
\node (zerotx1rx1) at (-3,-1.5) [inner sep=0] {$\boldsymbol{0}$};
\draw[line width=1] (-3.5,-2) rectangle (-2.5,-4);
\node (zerotx1rx1) at (-3,-3) [inner sep=0] {$\boldsymbol{u}_{1,p}$};

%%%%%%% Tx1 -----> Rx2
\draw[line width=1,-] (-4+10,-2) to (-2+10,-2);
\node (zerotx1rx1) at (-1.8+.2+10,-2) [inner sep=0] {$n_{\cone}$};
\draw[line width=1,draw=black] (-3.5+10,-2) rectangle (-2.5+10,-3);
\node (zerotx1rx1) at (-3+10,-2.5) [inner sep=0] {$\boldsymbol{u}_{1,a}$};
\draw[line width=1,draw=black] (-3.5+10,-3) rectangle (-2.5+10,-4);
\node (zerotx1rx1) at (-3+10,-3.5) [inner sep=0] {$\boldsymbol{0}$};

%%%%%%% Tx2 -----> Rx2
\draw[line width=1,-] (-1+10,1) to (1+10,1);
\node (zerotx1rx1) at (1.4+10,1) [inner sep=0] {$n_{\dtwo}$};
\draw[line width=1,draw=black] (-.5+10,1) rectangle (.5+10,0.5);
\node (zerotx1rx1) at (0+10,.75) [inner sep=0] {$\boldsymbol{0}$};
\draw[line width=1,draw=black] (-.5+10,0.5) rectangle (.5+10,-2);
\node (zerotx1rx1) at (0+10,-.75) [inner sep=0] {$\boldsymbol{u}_{2,p1}$};
\draw[line width=1,draw=black] (-.5+10,-2) rectangle (.5+10,-3);
\node (zerotx1rx1) at (0+10,-2.5) [inner sep=0] {$\boldsymbol{0}$};
\draw[line width=1,draw=black] (-.5+10,-3) rectangle (.5+10,-4);
\node (zerotx1rx1) at (0+10,-3.5) [inner sep=0] {$\boldsymbol{u}_{2,p2}$};

%%%%%%%% Tx3 -----> Rx2
%\draw[line width=1,-] (2+10,1) to (4+10,1);
%\node (zerotx1rx1) at (4.3+10,1) [inner sep=0] {$n_{\cthree}$};
%\draw[line width=1,draw=black] (2.5+10,1) rectangle (3.5+10,0);
%\node (zerotx1rx1) at (3+10,.5) [inner sep=0] {$\boldsymbol{u}_{3,c}$};
\draw[line width=1,-] (2+10,0) to (4+10,0);
\node (zerotx1rx1) at (4.4+10,0) [inner sep=0] {$n_{\cthree}$};
\draw[line width=1,draw=black] (2.5+10,0) rectangle (3.5+10,-2);
\node (zerotx1rx1) at (3+10,-1) [inner sep=0] {$\boldsymbol{0}$};
\draw[line width=1,draw=black] (2.5+10,-2) rectangle (3.5+10,-3);
\node (zerotx1rx1) at (3+10,-2.5) [inner sep=0] {$\boldsymbol{u}_{3,a}$};
\draw[line width=1,draw=black] (2.5+10,-3) rectangle (3.5+10,-4);
\node (zerotx1rx1) at (3+10,-3.5) [inner sep=0] {$\boldsymbol{0}$};

}

%%%%%%%%%%%%%%%%%%%%%%%%%%%%%%%%%%%%%%%%%%%%%%%%%%%%%%%%%

\newcommand{\IACOMTINGauss}[0]{
%%%% horisonatal lines
\node (Zero) at (-5.5,-4) [inner sep=0] {$0$};
\draw[line width=1,-] (-5,-4) to (15,-4);

\draw[line width=1,-] (5,-4) to (5,4.5);

\draw[line width=1,arrows=->] (-4.5,-4) to (-4.5,4.5);

\draw[line width=1,dotted] (-4.5,4)  to (13,4);
\draw[line width=1,dotted] (-4.5,1)  to (13,1);
\draw[line width=1,dotted] (-4.5,0)  to (13,0);
\draw[line width=1,dotted] (-4.5,-2) to (13,-2);
\draw[line width=1,dotted] (-4.5,-3) to (13,-3);

%%%%%%%%%%
\node (zerotx1rx1) at (-3,-4.3) [inner sep=0] {Tx1};
\node (zerotx2rx1) at (0,-4.3) [inner sep=0] {Tx2};
\node (zerotx3rx1) at (3.2,-4.3) [inner sep=0] {Tx3};

\node (zerotx1rx1) at (-3+10,-4.3) [inner sep=0] {Tx1};
\node (zerotx2rx1) at (0+10,-4.3) [inner sep=0] {Tx2};
\node (zerotx3rx1) at (3.2+10,-4.3) [inner sep=0] {Tx3};

\node (zerotx2rx1) at (0,-5) [inner sep=0] {Rx1};

\node (zerotx2rx1) at (0+10,-5) [inner sep=0] {Rx2};

%%%%%%% Tx3 -----> Rx1
\draw[line width=1,-] (2,4) to (4,4);
\node (zerotx1rx1) at (-5,4) [inner sep=0] {$\alpha_{\dthree}$};
\draw[line width=1] (2.5,3) rectangle (3.5,4);
\node (zerotx1rx1) at (3,3.5) [inner sep=0] {$x_{3,c}$};
\draw[line width=1] (2.5,3) rectangle (3.5,1);
\node (zerotx1rx1) at (3,2) [inner sep=0] {$0$};
%\draw[line width=1,arrows=<->] (3.8,3) -- (3.8,1) node[midway,right] {$\ell_3$};
\draw[line width=1] (2.5,1) rectangle (3.5,0);
\node (zerotx1rx1) at (3,.5) [inner sep=0] {$x_{3,a}$};
\draw[line width=1] (2.5,0) rectangle (3.5,-1);
\node (zerotx1rx1) at (3,-0.5) [inner sep=0] {$0$};
\draw[line width=1] (2.5,-1) rectangle (3.5,-2);
\node (zerotx1rx1) at (3,-1.5) [inner sep=0] {$x_{3,p}$};
\draw[line width=1] (2.5,-2) rectangle (3.5,-4);
\node (zerotx1rx1) at (3,-3) [inner sep=0] {$0$};

%%%%%%% Tx2 -----> Rx1
\draw[line width=1,-] (-1,-3) to (1,-3);
\node (zerotx1rx1) at (-5,-3) [inner sep=0] {$\alpha_{\ctwo}$};
\draw[line width=1,draw=black] (-.5,-4) rectangle (.5,-3);
\node (zerotx1rx1) at (0,-3.5) [inner sep=0] {$0$};

%%%%%%% Tx1 -----> Rx1
\draw[line width=1,-] (-4,0) to (-2,0);
\node (zerotx1rx1) at (-5,0) [inner sep=0] {$\alpha_{\done}$};
\draw[line width=1] (-3.5,0) rectangle (-2.5,-1);
\node (zerotx1rx1) at (-3,-.5) [inner sep=0] {$x_{1,a}$};
\draw[line width=1] (-3.5,-1) rectangle (-2.5,-2);
\node (zerotx1rx1) at (-3,-1.5) [inner sep=0] {$0$};
\draw[line width=1] (-3.5,-2) rectangle (-2.5,-4);
\node (zerotx1rx1) at (-3,-2.75) [inner sep=0] {$x_{1,p}$};

%%%%%%% Tx1 -----> Rx2
\draw[line width=1,-] (-4+10,-2) to (-2+10,-2);
\node (zerotx1rx1) at (-5,-2) [inner sep=0] {$\alpha_{\cone}$};
\draw[line width=1,draw=black] (-3.5+10,-2) rectangle (-2.5+10,-3);
\node (zerotx1rx1) at (-3+10,-2.5) [inner sep=0] {$x_{1,a}$};
\draw[line width=1,draw=black] (-3.5+10,-3) rectangle (-2.5+10,-4);
\node (zerotx1rx1) at (-3+10,-3.5) [inner sep=0] {$0$};

\draw[line width=1,arrows=<->] (-2.2+10,-2) to (-2.2+10,-3);
\node (zerotx1rx1) at (-1.8+10,-2.5) [inner sep=0] {$d_a$};

%%%%%%% Tx2 -----> Rx2
\draw[line width=1,-] (-1+10,1) to (1+10,1);
\node (zerotx1rx1) at (-5.5,1) [inner sep=0] {$\alpha_{\dtwo},\alpha_{\cthree}$};
\draw[line width=1,draw=black] (-.5+10,1) rectangle (.5+10,0);
\node (zerotx1rx1) at (0+10,.5) [inner sep=0] {$0$};
\draw[line width=1,draw=black] (-.5+10,0) rectangle (.5+10,-2);
\node (zerotx1rx1) at (0+10,-1) [inner sep=0] {$x_{2,p1}$};
\draw[line width=1,draw=black] (-.5+10,-2) rectangle (.5+10,-3);
\node (zerotx1rx1) at (0+10,-2.5) [inner sep=0] {$0$};
\draw[line width=1,draw=black] (-.5+10,-3) rectangle (.5+10,-4);
\node (zerotx1rx1) at (0+10,-3.5) [inner sep=0] {$x_{2,p2}$};

%%%%%%%% Tx3 -----> Rx2
\draw[line width=1,-] (2+10,1) to (4+10,1);
%\node (zerotx1rx1) at (4.3+10,1) [inner sep=0] {$Ph_{\cthree}$};
\draw[line width=1,draw=black] (2.5+10,1) rectangle (3.5+10,0);
\node (zerotx1rx1) at (3+10,.5) [inner sep=0] {$x_{3,c}$};
\draw[line width=1,draw=black] (2.5+10,0) rectangle (3.5+10,-2);
\node (zerotx1rx1) at (3+10,-1) [inner sep=0] {$0$};
\draw[line width=1,draw=black] (2.5+10,-2) rectangle (3.5+10,-3);
\node (zerotx1rx1) at (3+10,-2.5) [inner sep=0] {$x_{3,a}$};
\draw[line width=1,draw=black] (2.5+10,-3) rectangle (3.5+10,-4);
\node (zerotx1rx1) at (3+10,-3.5) [inner sep=0] {$0$};

}

%%%%%%%%%%%%%%%%%FIGURE REGIMES

\newcommand{\Regimes}[0]{
\begin{axis}[%
width=14.7cm,
height=8.58cm,
scale only axis,
xmin=0,
xmax=12,
xlabel={\large{$n_{\dthree}$}},
xtick={1.5,3.5,5.5},
xticklabels={{\large{$n_{\done}-2n_{\cone}$}},{\large{$n_{\done}-n_{\cone}$}},{\large{$n_{\done}$}}},
ymin=0,
ymax=7,
ylabel={\large{$n_{\cthree}$}},
ytick={2,3.5},
yticklabels={{\large{$n_{\cone}$}}, \large{{$n_{\dtwo}-n_{\ctwo}$}}},
]

\node[above,rotate=45] at (axis cs:6.5,5){\large{$n_{\dthree}-n_{\cthree} = n_{\done}-2n_{\cone}$}};

%\node[above,rotate=45] at (axis cs:8.6,5){\large{$n_{\dthree}-n_{\cthree} = n_{\done}-n_{\cone}$}};
\node[above,rotate=45] at (axis cs:8.7,5.2){\large{$n_{\dthree}-n_{\cthree} = n_{\done}-n_{\cone}$}};

\node[above,rotate=45] at (axis cs:10.5,5){\large{$n_{\dthree}-n_{\cthree} = n_{\done}$}};

\node[above,rotate=27] at (axis cs:5.7,1){\large{$n_{\dthree}-2n_{\cthree} = n_{\done} - n_{\cone}$}};

\node[above] at (axis cs:1.7,1){\Large{(\oneA)}};

\node[above] at (axis cs:1.7,5){\Large{(\oneB)}};

\node[above] at (axis cs:4,2.8){\Large{(\oneC)}};

\node[above] at (axis cs:10.5,2.5){\Large{(\twoA)}};

\node[above] at (axis cs:10,1){\Large{(\twoB)}};

\node[above] at (axis cs:5.5,.3){\Large{(\twoC)}};

\node[above] at (axis cs:11.5,4.5){\Large{(\twoD)}};

\node[above] at (axis cs:5,5.5){\Large{(\threeA)}};

\node[above] at (axis cs:10,5.5){\Large{(\threeB)}};

\node[above] at (axis cs:5.5,2.5){\Large{(\threeC)}};
\node[above] at (axis cs:7.5,2.5){\Large{(\threeC)}};

\addplot[area legend,solid,fill=green,opacity=1.500000e-01,draw=black,forget plot]
table[row sep=crcr] {%
x	y\\
0	0\\
3.5	0\\
3.5	7\\
0	7\\
};

\addplot[area legend,solid,fill=red,opacity=1.500000e-01,draw=black,forget plot]
table[row sep=crcr] {%
x	y\\
0	0\\
3.5	0\\
3.5	2\\
0	2\\
};

\addplot[area legend,solid,fill=green,opacity=1.500000e-01,draw=black,forget plot]
table[row sep=crcr] {%
x	y\\
3.5	2\\
3.5	3.5\\
5	3.5\\
};

\addplot[area legend,solid,fill=green,opacity=1.500000e-01,draw=black,forget plot]
table[row sep=crcr] {%
x	y\\
3.5	0\\
7.5	2\\
12.5	7\\
12	7\\
12	0\\
};

\addplot[area legend,solid,fill=red,opacity=1.500000e-01,draw=black,forget plot]
table[row sep=crcr] {%
x	y\\
12	2\\
7.5	2\\
9	3.5\\
12	3.5\\
};

\addplot[area legend,solid,fill=yellow,opacity=1.500000e-01,draw=black,forget plot]
table[row sep=crcr] {%
x	y\\
3.5	7\\
3.5	3.5\\
5	3.5\\
3.5	2\\
3.5	0\\
7.5	2\\
12.5	7\\
12	7\\
};

\addplot [color=black,solid,line width=1.0pt,forget plot]
  table[row sep=crcr]{3.5	0\\
3.5	7\\
};
\addplot [color=black,dotted,line width=1.0pt,forget plot]
  table[row sep=crcr]{1.5	0\\
3.5	2\\
};
\addplot [color=black,solid,line width=1.0pt,forget plot]
  table[row sep=crcr]{0	2\\
3.5	2\\
};
\addplot [color=black,solid,line width=1.0pt,forget plot]
  table[row sep=crcr]{3.5	2\\
5	3.5\\
};
\addplot [color=black,solid,line width=1.0pt,forget plot]
  table[row sep=crcr]{3.5	3.5\\
5	3.5\\
};
\addplot [color=black,dotted,line width=1.0pt,forget plot]
  table[row sep=crcr]{5	3.5\\
8.5	7\\
};
\addplot [color=black,solid,line width=1.0pt,forget plot]
  table[row sep=crcr]{5	3.5\\
12	3.5\\
};
%\addplot [color=black,dotted,line width=1.0pt,forget plot]  table[row sep=crcr]{3.5	0\\ 7	3.5\\};

\addplot [color=black,solid,line width=1.0pt,forget plot]
  table[row sep=crcr]{7.5	2\\
12	2\\
};
\addplot [color=black,solid,line width=1.0pt,forget plot]
  table[row sep=crcr]{5.5	0\\
9	3.5\\
};
%\addplot [color=black,solid,line width=1.0pt,forget plot]
%  table[row sep=crcr]{7	3.5\\ 10.5	7\\};
\addplot [color=black,solid,line width=1.0pt,forget plot]
  table[row sep=crcr]{3.5	0\\
7.5	2\\
};
%\addplot [color=black,dotted,line width=1.0pt,forget plot]
%  table[row sep=crcr]{7.5	2\\
%12	4.25\\
%};
\addplot [color=black,dotted,line width=1.0pt,forget plot]
  table[row sep=crcr]{3.5	2\\
12	2\\
};
\addplot [color=black,solid,line width=1.0pt,forget plot]
  table[row sep=crcr]{9	3.5\\
12.5	7\\
};

\addplot [color=white,solid,line width=4.0pt,forget plot]
  table[row sep=crcr]{3.5   0 \\10.5	7\\
};

\addplot [color=black,solid,line width=.5pt,forget plot]
  table[row sep=crcr]{3.45   0 \\10.45	7\\
};

\addplot [color=black,solid,line width=.5pt,forget plot]
  table[row sep=crcr]{3.55   0 \\10.55	7\\
};

\end{axis}
}

\newcommand{\MainRegimes}[0]{
\begin{axis}[%
width=14.7cm,
height=8.58cm,
scale only axis,
xmin=0,
xmax=12,
xlabel={\large{$n_{\dthree}$}},
xtick={1.5,3.5,5.5},
xticklabels={{\large{$n_{\done}-2n_{\cone}$}},{\large{$n_{\done}-n_{\cone}$}},{\large{$n_{\done}$}}},
ymin=0,
ymax=7,
ylabel={\large{$n_{\cthree}$}},
ytick={2,3.5},
yticklabels={{\large{$n_{\cone}$}}, \large{{$n_{\dtwo}-n_{\ctwo}$}}},
]

\node[above,rotate=45] at (axis cs:3,1.5){\large{$n_{\dthree}-n_{\cthree} = n_{\done}-2n_{\cone}$}};

\node[above,rotate=45] at (axis cs:8.7,5.2){\large{$n_{\dthree}-n_{\cthree} = n_{\done}-n_{\cone}$}};

\node[above,rotate=45] at (axis cs:10.5,5){\large{$n_{\dthree}-n_{\cthree} = n_{\done}$}};

\node[above,rotate=27] at (axis cs:6,1.2){\large{$n_{\dthree}-2n_{\cthree} = n_{\done} - n_{\cone}$}};

%\node[above] at (axis cs:1.7,1){\Large{(\oneA)}};

%\node[above] at (axis cs:1.7,5){\Large{(\oneB)}};

\node[above] at (axis cs:1.7,3){\Large{(1)}};

\node[above] at (axis cs:10.5,2.5){\Large{(2)}};

%\node[above] at (axis cs:10,1){\Large{(\twoB)}};

%\node[above] at (axis cs:5.5,.3){\Large{(\twoC)}};

%\node[above] at (axis cs:11.5,4.5){\Large{(\twoD)}};

%\node[above] at (axis cs:5,5.5){\Large{(\threeA)}};

%\node[above] at (axis cs:10,5.5){\Large{(\threeB)}};

\node[above] at (axis cs:5.5,5.5){\Large{(3)}};
\node[above] at (axis cs:10.2,5.5){\Large{(3)}};

\addplot[area legend,solid,fill=green,opacity=1.500000e-01,draw=black,forget plot]
table[row sep=crcr] {%
x	y\\
0	0\\
3.5	0\\
3.5	2\\
5	3.5\\
3.5	3.5\\
3.5	7\\
0	7\\
};

%\addplot[area legend,solid,fill=red,opacity=1.500000e-01,draw=black,forget plot]
%table[row sep=crcr] {%
%x	y\\
%0	0\\
%3.5	0\\
%3.5	2\\
%0	2\\
%};

%\addplot[area legend,solid,fill=green,opacity=1.500000e-01,draw=black,forget plot]
%table[row sep=crcr] {%
%x	y\\
%3.5	2\\
%3.5	3.5\\
%5	3.5\\
%};

\addplot[area legend,solid,fill=green,opacity=1.500000e-01,draw=black,forget plot]
table[row sep=crcr] {%
x	y\\
3.5	0\\
7.5	2\\
12.5	7\\
12	7\\
12	0\\
};

%\addplot[area legend,solid,fill=red,opacity=1.500000e-01,draw=black,forget plot]
%table[row sep=crcr] {%
%x	y\\
%12	2\\
%7.5	2\\
%9	3.5\\
%12	3.5\\
%};

\addplot[area legend,solid,fill=yellow,opacity=1.500000e-01,draw=black,forget plot]
table[row sep=crcr] {%
x	y\\
3.5	7\\
3.5	3.5\\
5	3.5\\
3.5	2\\
3.5	0\\
7.5	2\\
12.5	7\\
12	7\\
};

\addplot [color=black,solid,line width=1.0pt,forget plot]
  table[row sep=crcr]{3.5	0\\
3.5	2\\
};
\addplot [color=black,solid,line width=1.0pt,forget plot]
  table[row sep=crcr]{3.5	3.5\\
3.5	7\\
};
\addplot [color=black,dotted,line width=1.0pt,forget plot]
  table[row sep=crcr]{1.5	0\\
3.5	2\\
};
%\addplot [color=black,solid,line width=1.0pt,forget plot]
%  table[row sep=crcr]{0	2\\
%3.5	2\\
%};
\addplot [color=black,solid,line width=1.0pt,forget plot]
  table[row sep=crcr]{3.5	2\\
5	3.5\\
};
%\addplot [color=black,solid,line width=1.0pt,forget plot]
%  table[row sep=crcr]{3.5	3.5\\
%5	3.5\\
%};
%\addplot [color=black,dotted,line width=1.0pt,forget plot]
%  table[row sep=crcr]{5	3.5\\
%8.5	7\\
%};
\addplot [color=black,solid,line width=1.0pt,forget plot]
  table[row sep=crcr]{3.5	3.5\\
5	3.5\\
};
%\addplot [color=black,dotted,line width=1.0pt,forget plot]
%  table[row sep=crcr]{3.5	0\\
%7	3.5\\
%};
%\addplot [color=black,solid,line width=1.0pt,forget plot]
%  table[row sep=crcr]{7.5	2\\
%12	2\\
%};
\addplot [color=black,solid,line width=1.0pt,forget plot]
  table[row sep=crcr]{7.5	2\\
9	3.5\\
};
\addplot [color=white,solid,line width=4.0pt,forget plot]
  table[row sep=crcr]{3.5   0 \\10.5	7\\
};
\addplot [color=black,solid,line width=.5pt,forget plot]
  table[row sep=crcr]{3.45   0 \\10.45	7\\
};

\addplot [color=black,solid,line width=.5pt,forget plot]
  table[row sep=crcr]{3.55   0 \\10.55	7\\
};

\addplot [color=black,solid,line width=1.0pt,forget plot]
  table[row sep=crcr]{3.5	0\\
7.5	2\\
};
%\addplot [color=black,dotted,line width=1.0pt,forget plot]
%  table[row sep=crcr]{7.5	2\\
%12	4.25\\
%};
%\addplot [color=black,dotted,line width=1.0pt,forget plot]
%  table[row sep=crcr]{3.5	2\\
%12	2\\
%};
\addplot [color=black,solid,line width=1.0pt,forget plot]
  table[row sep=crcr]{9	3.5\\
12.5	7\\
};
\end{axis}
}

\section{Introduction}
Communicating nodes in most communication systems existing nowadays have several practical constraints. One such constraint is the limited computational capability of the communicating nodes. This limitation demands communication schemes which do not have a high complexity, {and consequently, power consumption}. However, communication over networks where concurrent transmissions take place (interference networks) challenges the transmitters and the receivers with additional complexity, namely, the complexity of interference management. Most near-optimal schemes in interference networks require some involved computation (e.g. the Han-Kobayashi scheme~\cite{HanKobayashi1981}), and thus, increase the computational complexity.

One common way to avoid this problem is the simple scheme of treating interference as noise (TIN). In this scheme, the receivers' strategy is the same as if there were no interference at all, i.e., interference is ignored. {TIN} over the interference channel (IC) has been studied by several researchers (see~\cite{CharafeddineSezginPaulraj, CharafeddineSezginHanPaulraj,BandemerSezginPaulraj,JorswieckLarsson2008} and references therein). {Although seemingly a very trivial scheme, {TIN} is optimal in the IC with very-weak interference. The very-weak interference condition was identified in~\cite{EtkinTseWang} as $\text{INR}<\sqrt{\text{SNR}}$. By introducing a new concept of generalized degrees of freedom (GDoF) as the pre-log of the sum-capacity in high SNR, the authors of \cite{EtkinTseWang} have shown that {TIN} achieves GDoF of the 2-user IC. This fact was refined in~\cite{ShangKramerChen,MotahariKhandani,AnnapureddyVeeravalli} where it was shown that {TIN} achieves the exact sum-capacity of the 2-user IC with noisy interference, a smaller regime than the very-weak interference regime introduced in~\cite{EtkinTseWang}.} In a similar spirit, the very-weak interference regime for the $K$-user ($K>2$) IC was identified in~\cite{JafarVishwanath} {as the regime where $\text{INR}<\sqrt{\text{SNR}}$}. In~\cite{GengNaderializadehAvestimehrJafar}, it was shown that {TIN} achieves the capacity region of the fully asymmetric $K$-user IC within a constant gap as long as the sum of the powers of the strongest interference caused by a user plus the strongest interference it receives is less than {or equal to} the power of its desired signal, on a logarithmic scale. {Furthermore, the sum-capacity of the $K$-user IC with noisy interference was characterized in~\cite{ShangKramerChen_KUserIC}}.

In this paper, we study the impact of introducing one more transmitter (without introducing a new receiver) to the 2-user IC on {TIN}. We consider a network consisting of a point-to-point (P2P) channel interfering with a multiple access channel (MAC). We call this network a PIMAC. Such a setup arises {where} a P2P communication system uses the same communication medium as a cellular uplink for instance. {This setup was studied in~\cite{chaaban2011interference, chaaban2011capacity, ChaabanSezgin_EW2011,ZhuShangChenPoor, BuehlerWunder,BuehlerWunder2011}}. In~\cite{ChaabanSezgin_EW2011} where its capacity region in strong and very strong interference cases was obtained and a sum-capacity upper bound was derived, and in~\cite{ZhuShangChenPoor} where an achievable rate region for the discrete memoryless Z-PIMAC (partially connected PIMAC) was provided, which achieves the capacity of the Z-PIMAC with strong interference. %A cognitive variant of MAC interference networks was also studied in~\cite{DevroyeMitranTarokh_ISIT}, and a network with interfering MAC's was studied in~\cite{SuhTse} and~\cite{ChaabanSezginBandemerPaulraj}. 

The PIMAC was also considered in~\cite{BuehlerWunder} where the sum-capacity of the deterministic{~\cite{AvestimehrDiggaviTse_IT}} PIMAC (under some conditions on the channel parameters) was given. In more details, the work of B\"uhler and Wunder in~\cite{BuehlerWunder} established the sum-capacity of the deterministic PIMAC under the following symmetry consideration:  The power of the interference caused by the MAC transmitters at the P2P receiver is equal. For this case, the authors of~\cite{BuehlerWunder} have derived the sum-capacity of the deterministic PIMAC and have shown that it is larger than that of the deterministic IC.

In this paper, we consider both the deterministic model and the Gaussian model of the PIMAC without the above constraint of equal power of interference from the MAC transmitters to the P2P receiver. The main focus of the paper is to study the performance of the simple scheme of {TIN} in the PIMAC in terms of achievable rates. The question we would like to answer here is: Does {TIN} achieve the sum-capacity of the PIMAC in the noisy interference regime as in the IC? The difference between the PIMAC and the {2-user} IC is in the existence of one more transmitter. By introducing one further transmitter to an IC with noisy interference, one receiver (the P2P receiver of the PIMAC) experiences one more interferer. We focus on the impact of this interferer, i.e., {the additional MAC user}. Therefore, we put no restriction on the interference caused by the additional transmitter. The performance of {TIN} is examined in the resulting PIMAC.

We distinguish between two variants of {TIN}: Naive-TIN and TDMA-TIN. Naive-TIN corresponds to the case where each system (the MAC and the P2P) uses its interference free capacity achieving scheme. Notice that the capacity achieving scheme in the interference free MAC is known (successive decoding), and so is that in the interference free P2P channel~\cite{CoverThomas}. In the presence of interference, the receivers proceed with decoding using their interference-free optimal decoders while treating interference as noise.
{TDMA-TIN, on the other hand, corresponds to the case where the time resource is shared between the users. Based on the proposed time sharing scheme in this paper, the PIMAC is reduced to three possible types of modes. These three modes are operated over orthogonal time slots. They are illustrated in Fig. \ref{Fig:TDMA-TIN_Intro}. 
In the first mode, one transmitter (Tx3) sends with full power while two other transmitters are inactive. In this case, the PIMAC is reduced to a P2P channel (cf. Fig. \ref{Fig:TDMA-TIN_Intro_a}). 
In the two other modes which are shown in Fig. \ref{Fig:TDMA-TIN_Intro_b} and \ref{Fig:TDMA-TIN_Intro_c}, the MAC transmitters (Tx$1$ and Tx$3$) share the time resources while Tx$2$ is always active, while the receivers treat interference as noise. Note that in these cases, the PIMAC is reduced to two 2-user IC's.}

\begin{figure}
%\vspace{-2.5%cm}
\centering
\begin{tabular}{ccc}
\subfigure [] {
{\begin{tikzpicture}[scale=0.6]
\TDMATINIntroA{{Tx3}}{{Tx1}}{{Tx2}}{{Rx1}}{{Rx2}}
\end{tikzpicture}}          
\label{Fig:TDMA-TIN_Intro_a}
}
&
\subfigure [] {
{\begin{tikzpicture}[scale=0.6]
\TDMATINIntroB{{Tx3}}{{Tx1}}{{Tx2}}{{Rx1}}{{Rx2}}
\end{tikzpicture}}          
\label{Fig:TDMA-TIN_Intro_b}
}
&
\subfigure [] {
{\begin{tikzpicture}[scale=0.6]
\TDMATINIntroC{{Tx3}}{{Tx1}}{{Tx2}}{{Rx1}}{{Rx2}}
\end{tikzpicture}}          
\label{Fig:TDMA-TIN_Intro_c}
}
\end{tabular}
\caption{{Using the proposed time sharing scheme between the users, in each time slot, the channel can operate as in a P2P channel (in (a)) or 2-user IC's (in (b) and (c)).
In all figures, the solid and the dashed lines represent the desired and interference channels, respectively.
In (a), all transmitters except Tx$3$ are inactive. In this case, PIMAC is reduced to a P2P channel.
 In (b) and (c), while Tx2 is always active, Tx1 and Tx3 which are MAC transmitters, share the transmission time between themselves. Hence, in these cases, PIMAC is reduced to two 2-user IC's.}}
\label{Fig:TDMA-TIN_Intro}
\end{figure}

We compare the two variants of {TIN} in the linear-deterministic~\cite{AvestimehrDiggaviTse_IT} PIMAC first. By deriving new sum-capacity upper bounds, we show that TDMA-TIN is sum-capacity achieving for a wide range of parameters, while naive-TIN is optimal for a smaller range of channel parameters. Interestingly, we show that there exists a regime where {one interference from one MAC user} is noisy and {from} the other {MAC user} is strong, where TIN is the optimal scheme. Intuitively, this corresponds to the case where one MAC transmitter has a strong channel to the undesired receiver, and a weaker channel to the desired receiver. In this case, it is better to silence this transmitter for the sake of achieving higher sum-rates. The TDMA-TIN scheme achieves the sum-capacity in this case. It {also} turns out that there exists a sub-regime where TDMA-TIN is not optimal and is outperformed by a scheme which exploits interference alignment. Interestingly, this sub-regime includes cases where all interference links are very-weak but still {TIN} is not optimal. {Notice that the PIMAC can be interpreted as a special case of a $3\times 2$ $X$ channel by considering some restrictions on the message exchange. The optimality of TIN for the $M\times N$ $X$ channel has been studied recently in a parallel and independent work in~\cite{GengSunJafar2014}. Here, we would like to point out that part of the result of the paper at hand have already appeared in~\cite{ChaabanSezgin_SubOptTIN}. Nevertheless, from~\cite{GengSunJafar2014}, some  noisy interference regimes for the PIMAC can be extracted. It turns out that the noisy interference regimes identified in our work not only subsume those regimes extracted from~\cite{GengSunJafar2014}, but also extend them to further regimes where TIN is optimal. This is mainly due to a novel upper bound that we establish in this paper.}

Then, we consider the Gaussian PIMAC where we introduce new sum-capacity upper bounds. We identify regimes where naive-TIN achieves the sum-capacity of the channel within a constant gap. Additionally, we show that although naive-TIN achieves the sum-capacity of the channel within a constant gap for a range of channel parameters, it is strictly outperformed by TDMA-TIN, and hence, is never sum-capacity optimal. This is in contrast to the $K$-user IC where naive-TIN is optimal in the noisy interference regime. This clearly indicates that TDMA-TIN achieves the sum-capacity within a constant gap in the same regimes where naive-TIN does. We also show that TDMA-TIN achieves the sum-capacity of the channel within a constant gap in further regimes where naive-TIN does not. Interestingly, while in the interference free MAC, successive decoding performs the same as TDMA in terms of sum-capacity, the same is not true in the presence of interference with {TIN}. Next, we show there exist regimes of the PIMAC with very-weak interference, where TDMA-TIN can not achieve the sum-capacity of the PIMAC within a constant gap. We do this by extending the aforementioned schemes for the deterministic PIMAC to the Gaussian PIMAC, deriving their achievable rates, and showing that the achievable rates are higher than those of TDMA-TIN at high $\mathrm{SNR}$.

The rest of the paper is organized as follows. In Section \ref{Model}, the PIMAC is introduced. Then the deterministic PIMAC is studied in Section \ref{DetPIMAC} where the sum-capacity is characterized under some conditions. Next, the Gaussian PIMAC is discussed in Section \ref{GaussPIMAC} with a comparison between different schemes and upper bounds. Finally, we conclude with Section \ref{Conc}. Our approach towards the analysis of TIN optimality for the PIMAC is illustrated graphically in Fig.~\ref{Fig:Organization}
\begin{figure}[h]
\centering
\begin{tikzpicture}[scale=0.8]
\Organization
\end{tikzpicture}
\caption{Summary of our approach towards studying the optimality of TIN.}
\label{Fig:Organization}
\end{figure}

\emph{Notation}: Throughout the paper, we use $\mathbb{F}_2$ to denote the binary field and $\oplus$ to denote the modulo 2 addition. {Moreover, $\mathbb{N}^0$ represents the set of all natural numbers including 0.} We use normal lower-case, normal upper-case, boldface lower-case, and boldface upper-case letters to denote scalars, scalar random variables, vectors, and matrices, respectively. $\boldsymbol{X}_{[a:b]}$ denotes the matrix formed by the $a$-th to $b$-th rows of a matrix $\mathbf{X}$, and $\X_{[a:b]}$ is defined similarly. We write $X\sim\mathbb{C}\mathcal{N}(0,P)$ to indicate that the random variable $X$ is distributed according to a circularly symmetric complex normal distribution with zero mean and variance $P$. Moreover, the notation $x^*$ represents the complex conjugate of $x$. Furthermore, we define $x^+$ as $\max\{0,x\}$, and $x^{n}$ as the length-$n$ sequence $(x[1],\cdots,x[n])$. The vector $\boldsymbol{0}_{q}$ denotes the zero-vector of length $q$, the matrix $\boldsymbol{I}_q$ is the $q\times q$ identity matrix, {the matrix $\boldsymbol{0}_{l , m }$ represents the $l\times m$ zero matrix,} and $\boldsymbol{x}^{T}$ denotes the transposition of a {vector} $\boldsymbol{x}$.

\section{System Model}
\label{Model}
The system we consider consists of a P2P channel interfering with a MAC (PIMAC). As shown in Fig. \ref{General_sysmod}, each transmitter has a message to be sent to one receiver. Namely, transmitters 1 (Tx$1$) and transmitter 3 (Tx$3$) want to send the messages $W_1$ and $W_3$, respectively, to receiver 1 (Rx$1$), and transmitter 2 (Tx$2$) wants to send the message $W_2$ to receiver 2 (Rx$2$). The message $W_i$ is a random variable, uniformly distributed over the message set $\mathcal{W}_i=\{1,\cdots,\lfloor 2^{nR_i}\rfloor\}$ where $R_i$ denotes the rate of the message.
 
%   \begin{figure}
%   \centering
%   \includegraphics[width=0.4\textwidth]{General_PIMAC.pdf}
%   \caption{The message flow in the PIMAC where the solid arrows indicate desired message flow and dashed arrows indicate interference.}
%   \label{General_sysmod}
%   \end{figure}
\newcommand{\TheoreticalSystemmodel}[9]{
\node (t3) at (0,0) [inner sep=0] {};
\node (t1) at (0,-2) [inner sep=0] {};
\node (t2) at (0,-4) [inner sep=0] {};
\node (r1) at (4,-2) [inner sep=0] {};
\node (r2) at (4,-4) [inner sep=0] {};
\node (tx1) at (-0.5,0) [inner sep=0] {#1};
\node (tx2) at (-0.5,-2) [inner sep=0] {#2};
\node (tx3) at (-0.5,-4) [inner sep=0] {#3};
\node (W3tl) at (-1,0) [inner sep=0] {};
\node (W1tl) at (-1,-2) [inner sep=0] {};
\node (W2tl) at (-1,-4) [inner sep=0] {};
\node (W1Rl) at (5,-2) [inner sep=0] {};
\node (W2Rl) at (5,-4) [inner sep=0] {};

\node (W3tR) at (-1.5,0) [inner sep=0] {};
\node (W1tR) at (-1.5,-2) [inner sep=0] {};
\node (W2tR) at (-1.5,-4) [inner sep=0] {};

\node (W1RR) at (5.5,-2) [inner sep=0] {};
\node (W2RR) at (5.5,-4) [inner sep=0] {};
\node (W3) at (-2,0) [inner sep=0] {#6};
\node (W1) at (-2,-2) [inner sep=0] {#7};
\node (W2) at (-2,-4) [inner sep=0] {#8};
\node (W1R) at (6.5,-2) [inner sep=0] {#9};
\node (W2R) at (6,-4) [inner sep=0] {\large {$\hat{W}_2$}};
\node (tx2) at (4.5,-2) [inner sep=0] {#4};
\node (tx3) at (4.5,-4) [inner sep=0] {#5};
\draw[->] (t3) to (r1);
\draw[->] (t1) to (r1);
\draw[->] (t2) to (r2);
\draw[->] (W3tR) to (W3tl);
\draw[->] (W1tR) to (W1tl);
\draw[->] (W2tR) to (W2tl);
\draw[->] (W1Rl) to (W1RR);
\draw[->] (W2Rl) to (W2RR);
\draw[dashed,->] (t3) to (r2);
\draw[dashed,->] (t1) to (r2);
\draw[dashed,->] (t2) to (r1);
\draw (0, 0.4) rectangle (-1,-0.4);
\draw (0,-1.6) rectangle (-1,-2.4);
\draw (0,-3.6) rectangle (-1,-4.4);
\draw (4,-1.6) rectangle (5,-2.4);
\draw (4,-3.6) rectangle (5,-4.4);}
\begin{figure}[h]
\centering
\begin{tikzpicture}[scale=0.8]
\TheoreticalSystemmodel{\large{Tx3}}{\large{Tx1}}{\large{Tx2}}{\large{Rx1}}{\large{Rx2}}{\large{$W_3$}}{\large{$W_1$}}{\large{$W_2$}}{\large{$\hat W_1, \hat W_3$}} %{$\large{\hat{W}_2}$}
%\redpath{t2}{r1}
\end{tikzpicture}
\caption{The message flow in the PIMAC where the solid arrows indicate desired message flow and dashed arrows indicate interference.}
\label{General_sysmod}
\end{figure}     
To send its message, each transmitter uses an encoding function $f_i$ to map the message $W_i$ into a codeword of length $n$ symbols $X_i^n\in\mathbb{C}^n$. After the transmission of all $n$ symbols of the codewords, Rx$1$ has $Y_1^n$ and decodes $W_1$ and $W_3$ by using a decoding function $g_1$. Rx$1$ thus obtains $(\hat{W}_1,\hat{W}_3)=g_1(Y_1^n)$. Similarly Rx$2$ receives $Y_2^n$ and decodes $W_2$ by using a decoding  function $g_2$, i.e., $\hat{W}_2=g_2(Y_2^n)$. The messages sets, encoding functions, and decoding functions constitute a code for the channel which is denoted an	 $(n,2^{nR_1},2^{nR_2},2^{nR_3})$ code. 

An error $E_i$ occurs if $\hat{W}_i\neq W_i$ for some $i\in\{1,2,3\}$. A code for the PIMAC induces an average error probability $\mathbb{P}^{(n)}$ defined as    
\begin{align}
\label{ERROR}
      \mathbb{P}^{(n)}=\dfrac{1}
      {2^{nR_\Sigma}}\sum_{\boldsymbol{W}\in\mathcal{W}_1
      \times\mathcal{W}_2\times\mathcal{W}_3}\mathrm{Prob}
      \left(\bigcup_{i=1}^{3}E_i\right),
\end{align}
where $R_\Sigma=\sum_{i=1}^{3}R_i$ and $\boldsymbol{W}=(W_1,W_2,W_3)$. Reliable communication takes place if this error probability can be made {arbitrarily} small by increasing $n$. This can occur if the rate triple $(R_1,R_2,R_3)$ satisfies some achievability constraints which need to be found. The achievability of a rate triple $(R_1,R_2,R_3)$ is defined as the existence of a reliable coding scheme which achieves these rates. In other words, a rate triple $(R_1,R_2,R_3)$ is said to be achievable if there exists a sequence of $(n,2^{nR_1},2^{nR_2},2^{nR_3})$ codes such that $\mathbb{P}^{(n)}\to0$ as $n\to\infty$. The set of all achievable rate triples is the capacity region of the PIMAC denoted by $\mathcal{C}$. In this paper, we focus on the sum-capacity defined as the maximum achievable sum-rate, i.e.,
   \begin{align}\label{sumcap}
      {C}_{\Sigma}=\max_{(R_1,R_2,R_3)\in\mathcal{C}}R_\Sigma.
   \end{align}
We consider a Gaussian PIMAC in this paper and study its sum-capacity. Next we introduce the specifics of the Gaussian case.

\subsection{Gaussian Model}
{Consider a 2-user {asymmetric} IC consisting of two transmitters Tx$1$ and Tx$2$ which want to communicate with their desired receivers Rx$1$ and Rx$2$, respectively. Now, by adding an additional transmitter (Tx$3$) which wants to communicate only with Rx$1$, we generate a PIMAC. The system model of the Gaussian PIMAC is shown in Fig.\ref{sysmod}.}
In the Gaussian PIMAC, the received signals of the two receivers at time index $t\in\{1,\cdots,n\}$ (denoted $y_1[t]$ and $y_2[t]$) can be written as\footnote{The time index $t$ will be suppressed henceforth for clarity unless necessary.}
\begin{align}\label{recsig1}
      y_1[t]&=h_{\done} x_1[t]+h_{\ctwo} x_2[t]+h_{\dthree} x_3[t]+z_1[t],\\
      \label{recsig2}
      y_2[t]&=h_{\cone} x_1[t]+h_{\dtwo} x_2[t]+h_{\cthree} x_3[t]+z_2[t],
\end{align}
where $x_i[t]$, $i\in\{1,2,3\}$, is a realization of the random variable 
$X_i$ which represents the transmit symbol of Tx$i$, and $z_j[t]$, $j\in\{1,2\}$, is a realization of the random variable   $Z_j\sim\mathbb{C}\mathcal{N}(0,1)$ which represents the additive white Gaussian noise (AWGN), and the constants {$h_k$, $k\in\{\done,\dtwo,\cone,\ctwo,{\dthree},{\cthree}\}$} represent the complex (static) channel coefficients. We assume that global channel state information (CSI) is available to all nodes. Note that the noises $Z_1$ and $Z_2$ are independent from each other and are both independent and identically distributed (i.i.d.) over time. The transmitters of the Gaussian PIMAC have power constraints $P$ which must be satisfied by their transmitted signals. Namely, the condition 
$$\frac{1}{n}\sum_{t=1}^n\mathbb{E}[|X_i[t]|^2]= {P_i}\leq P,$$ 
must be satisfied for all $i\in\{1,2,3\}$.
%\begin{figure}[h]
%\centering
%   \includegraphics[width=0.9\textwidth]{systemmodel.pdf}
%%   \caption{System model of the Gaussian PIMAC.}
%%   \label{sysmod}
%\end{figure}
\newcommand{\GaussianSystemmodel}[0]{
\node (t3) at (0,0) [inner sep=0] {};
\node (t1) at (0,-2) [inner sep=0] {};
\node (t2) at (0,-4) [inner sep=0] {};
\node (r1) at (4,-2) [inner sep=0] {};
\node (r2) at (4,-4) [inner sep=0] {};
\node (h13) at (.8,-0.1) [inner sep=0] {$h_{\dthree}$};
\node (h23) at (0.8,-1.05) [inner sep=0] {{$h_{\cthree}$}};
\node (h11) at (.8,-1.75) [inner sep=0] {{$h_{\done}$}};
\node (h22) at (.8,-4.3) [inner sep=0] {{$h_{\dtwo}$}};
\node (h21) at (0.8,-2.65) [inner sep=0] {{$h_{\cone}$}};
\node (h22) at (0.8,-3.3) [inner sep=0] {{$h_{\ctwo}$}};
\node (tx3) at (-1.7,0) [inner sep=0] {\large{$W_3\rightarrow X_3^n(W_3)$}};
\node (tx1) at (-1.7,-2) [inner sep=0] {\large{$W_1\rightarrow X_1^n(W_1)$}};
\node (tx2) at (-1.7,-4) [inner sep=0] {\large{$W_2\rightarrow X_2^n(W_2)$}};
\node (W1Rl) at (4.2,-2) [inner sep=0] { \large{$\oplus$}};
\node (W2Rl) at (4.2,-4) [inner sep=0] {\large{$\oplus$}};
\node (Z_1) at (4.3,-1) [inner sep=0] {\large{$Z_1^n$}};
\node (Z1U) at (4.2,-1.2) [inner sep=0] {};
\node (Z1D) at (4.2,-1.8) [inner sep=0] {};
\node (Z_2) at (4.3,-5) [inner sep=0] {\large{$Z_2^n$}};
\node (Z2D) at (4.2,-4.7) [inner sep=0] {};
\node (Z2U) at (4.2,-4.1) [inner sep=0] {};
\node (Rx1) at (6.5,-2) [inner sep=0] {\large{$Y_1^n\rightarrow\hat{W}_1,\hat{W}_3$}};
\node (Rx2) at (6,-4) [inner sep=0] {\large{$Y_2^n\rightarrow\hat{W}_2$}};
\draw[->] (t3) to (r1);
\draw[->] (t1) to (r1);
\draw[->] (t2) to (r2);
\draw[->] (Z1U) to (Z1D);
\draw[->] (Z2D) to (Z2U);
\draw[dashed,->] (t3) to (r2);
\draw[dashed,->] (t1) to (r2);
\draw[dashed,->] (t2) to (r1);
}
\begin{figure}[h]
\centering
\begin{tikzpicture}[scale=0.8]
\GaussianSystemmodel
\end{tikzpicture}
\caption{System model of the Gaussian PIMAC.}
\label{sysmod}
\end{figure}    
We consider the interference limited scenario, and hence, we assume that all signal-to-noise and interference-to-noise ratios are larger than 1, i.e.,
{\begin{align}
\label{InterferenceLimited}
\min\{|h_{\done}|^2,|h_{\cone}|^2,|h_{\dtwo}|^2,|h_{\ctwo}|^2,|h_{\dthree}|^2,|h_{\cthree}|^2\}P>1.
\end{align}}
%For convenience, we denote the $\log_2$ of the signal-to-noise and interference-to-noise ratios as follows
%{
%\begin{align}
%m_k&=\log_2(|h_k|^2P), \text{ where } k\in\{\done,\cone,\dtwo,\ctwo,{\dthree},{\cthree} \}.
%\end{align}
%}
%Notice that all $m_k$, $k\in\{\done,\cone,\dtwo,\ctwo,{\cthree},{\dthree} \}$, are non-negative. 
For convenience, we define
\begin{align}
\alpha_k = \frac{\log_2(P|h_k|^2)}{\log_2(\R)}, \text{ where } k\in\{\done,\cone,\dtwo,\ctwo,{\dthree},{\cthree} \},
\end{align}
and $1<\R$ the received SNR for the reference P2P channel. We denote the sum-capacity of the Gaussian PIMAC $C_{\mathrm{G},\Sigma}(\R,\boldsymbol{\alpha})$, where $\boldsymbol{\alpha} = (\alpha_{\done},\alpha_{\cone},\alpha_{\dtwo},\alpha_{\ctwo},\alpha_{\dthree},\alpha_{\cthree})$. Now, we define the generalized degrees of freedom GDoF of the PIMAC as follows
\begin{align}
d_\Sigma(\boldsymbol{\alpha})=\lim_{\R\to\infty} \frac{C_{\mathrm{G},\Sigma}(\R,\boldsymbol{\alpha})}{\log_2(\R)}. \label{eq:GDoFDef}
\end{align}
This definition is equivalent to
\begin{align*}
C_{\mathrm{G},\Sigma}(\R,\boldsymbol{\alpha})=d_\Sigma(\boldsymbol{\alpha})\log_2(\R)+o(\log_2(\R)), 
\end{align*}
where $\frac{o(\log_2(\R))}{\log_2(\R)}\to 0$ as $\R\to\infty$.
{The focus of this work is on analysing the (sub-)optimality of simple (in terms of computation and decoding complexity) transmission schemes. To do this, we consider two types of TIN, namely
\begin{itemize}
\item naive-TIN
\item TDMA-TIN
\end{itemize}
which are defined as follows.
\begin{mydef}[\textbf{Naive-TIN:}] This is the simplest variant of TIN in which all transmitters send simultaneously with their maximum power during the whole transmission. Note that in this type of TIN, no coordination between the Tx's is required. At the receiver side, each receiver decodes its desired message as in the interference free channel by treating the interference as noise. Interestingly, despite of the simplicity of this scheme, it is optimal in some regimes of many networks such as the 2-user IC~\cite{EtkinTseWang, AnnapureddyVeeravalli, ShangKramerChen, MotahariKhandani}, the $K$-user IC~\cite{ShangKramerChen_KUserIC}, and the $X$ channel~\cite{HuangCadambeJafar2012}. 
\end{mydef}
\begin{mydef}[\textbf{TDMA-TIN:}] In this type of TIN, we allow some coordination between the transmitters in order to have a smarter variant of TIN. This might lead to a more capable scheme than the naive-TIN.
{To do this, we have a time division between three types of channels (one P2P channel and two 2-user IC's) operating over orthogonal time slots. In the assigned time slots to the P2P channel, Tx3 sends with full power while other Tx's are inactive (See Fig.~\ref{Fig:TDMA-TIN_Intro_a}). In the remaining time slots, Tx$1$ and Tx$3$ which are both communicating with Rx$1$, coordinate their transmission by sharing the transmission time between themselves. These two users send with their maximum allowed power only in their assigned time slots. Moreover, Tx$2$ sends always with the maximum power (See Fig.~\ref{Fig:TDMA-TIN_Intro_b} and \ref{Fig:TDMA-TIN_Intro_c}).} Note that no power control is addressed in this scheme and the only coordination between Tx's is for time scheduling. Similar to naive-TIN, in this scheme, the receivers decode their desired message by treating the interference as noise.
\end{mydef}}
\begin{remark}
{
Let the received signal-to-interference-plus-noise power ratio at a receiver be denoted as {\rm{SINR}}$=\frac{P_{\rm{des}}}{1+P_{\rm{int}}}$, where $P_{\rm{des}}$ and $P_{\rm{int}}$ represent the received power from desired and interference signals, respectively. The achievable rate using treating interference as noise at the receiver is given by 
$$R_{\rm{TIN}}=\log_2(1+{\rm{SINR}}).$$}
\end{remark}

Our approach towards the performance analysis of {different types of TIN} in the Gaussian PIMAC starts with the \emph{linear-deterministic (LD) approximation} of the wireless network introduced by Avestimehr {\it et al.} in~\cite{AvestimehrDiggaviTse_IT}. Next, we introduce the linear deterministic PIMAC (LD-PIMAC).

\subsection{Deterministic Model}
The Gaussian PIMAC shown in Fig. \ref{sysmod} can be approximated by the LD model as follows. An input symbol at Tx$i$ is given by a binary vector $\X_i\in\mathbb{F}_2^q$ where {$q=\max\{n_{\done},n_{\cone},n_{\dtwo},n_{\ctwo},n_{\dthree},n_{\cthree}\}$} and the integer {$n_k$, ${k\in \{\done,\cone, \dtwo,\ctwo, {\dthree}, {\cthree}\}}$} represents the Gaussian channel coefficients as follows
{
\begin{align}
      n_k= \left\lfloor \log_2\left(P|h_k|^2\right)\right\rfloor.
\end{align}}
The output symbol $\Y_j$ at Rx$j$ is given by a deterministic function of the inputs given by
   \begin{equation}\label{DETREC}
 {  \begin{aligned}
      \Y_1&=\bS^{q-n_{\done}}\X_1\oplus\bS^{q-n_{\ctwo}}\X_2\oplus\bS^{q-n_{\dthree}}\X_3,\\
      \Y_2&=\bS^{q-n_{\cone}}\X_1\oplus\bS^{q-n_{\dtwo}}\X_2\oplus\bS^{q-n_{\cthree}}\X_3,
   \end{aligned}}
   \end{equation}
where $\bS\in\mathbb{F}_2^{q\times q}$ is a down-shift matrix defined as \begin{align}\label{Sm}
      \bS=\begin{pmatrix}
             \boldsymbol{0}_{q-1}^{T} & 0\\ 
             \boldsymbol{I}_{q-1} & \boldsymbol{0}_{q-1}
          \end{pmatrix}.
   \end{align}
\begin{figure}
\centering
\begin{tikzpicture}[scale=.75]
\LinearDetermExample
\end{tikzpicture}
\caption{Block representation of received signal}
\label{Fig:RxSignal}
\end{figure}
These input-output equations approximate the input-output equations of the Gaussian PIMAC given in \eqref{recsig1} and \eqref{recsig2} in the high SNR regime. A graphical representation of the received vectors $\Y_1$ and $\Y_2$ is shown in Fig. \ref{Fig:RxSignal}, showing the three (shifted) transmitted vectors (shown as rectangular blocks) whose sum constitutes the received vector. This block representation will be used in the sequel for graphical illustration of various schemes.

We denote the sum-capacity of the LD-PIMAC by $C_{\mathrm{det},\Sigma}$. Next, we study the sum-capacity of the LD-PIMAC in the regime of channel parameters where the interference parameters $n_{\cone}$ and $n_{\ctwo}$ are small whereas the interference parameter $n_{\cthree}$ is arbitrary.

\section{{TIN} in the Deterministic PIMAC}
\label{DetPIMAC}
{In this section, we focus on regimes of the PIMAC where the interference parameters {caused by Tx$1$ and Tx$2$ are small.}}
 Notice that if we remove Tx$3$ from our PIMAC, the remaining network resembles an {asymmetric} IC. 
For this IC, the noisy interference regime is defined as the regime where 
{\begin{align}
n_{\cone} + n_{\ctwo} \leq \min\{n_{\done},n_{\dtwo}\}. \label{eq:cond_LDPIMAC}
\end{align} }
In this regime, treating interference as noise (TIN) is optimal in the IC~\cite{GengNaderializadehAvestimehrJafar} . Adding Tx$3$ leads to some changes in the channel where {naive}-TIN might not be the optimal scheme any more, even if the interference caused by Tx$3$ is very weak. However, as we shall see next,  {naive}-TIN remains the optimal scheme in some cases. 

To this end, we start first by introducing the {naive}-TIN scheme for the LD-PIMAC. In this variant of TIN, the transmitters send over the interference free components of the received signal at their corresponding receivers. Namely, transmitters 1 and 3 share the top-most {$(\max\{n_{\done},n_{\dthree}\}-n_{\ctwo})^+$} bits of $\Y_1$ and transmitter 2 sends over the top-most {$(n_{\dtwo}-\max\{n_{\cone},n_{\cthree}\})^+$} bits of $\Y_2$. We call this variant naive-TIN.
{An example of this scheme for the case in which $n_{\dthree}<n_{\done}$ and $n_{\cthree}<n_{\cone}$ is illustrated in Fig.~\ref{Fig:NaiveTINLD}. We observe that, the top-most $n_{\done}-n_{\ctwo}$ levels received at Rx$1$ are free of interference. These bits are shared between Tx$1$ and Tx$3$. In this example, Tx$1$ sends $n_{\done}-n_{\dthree}$ bits and Tx$3$ sends $n_{\dthree}-n_{\ctwo}$ bits.
Notice that the whole number of information bits sent by Tx$1$ and Tx$3$ ($\boldsymbol{x}_1$ and $\boldsymbol{x}_3$) cannot exceed $n_{\done}-n_{\ctwo}$. Moreover, at Rx$2$, the top-most $n_{\dtwo}-n_{\cone}$ levels of Tx$2$ are observed interference free. Therefore, the number of information bits in $\boldsymbol{x}_2$ is $n_{\dtwo}-n_{\cone}$. }

The achievable sum-rate is given in the following proposition.
\begin{figure}
\centering
\begin{tikzpicture}[scale=.75]
\NaiveTINLD
\end{tikzpicture}
\caption{An example for Naive-TIN where {$n_{\dthree}<n_{\done}$ and $n_{\cthree}<n_{\cone}$}.}
\label{Fig:NaiveTINLD}
\end{figure}
\begin{mypro}[Naive-TIN]
{As long as \eqref{eq:cond_LDPIMAC} is satisfied in an LD-PIMAC, the naive-TIN achieves any sum-rate {$R_\Sigma\leq R_{\Sigma,\mathrm{Naive-TIN}}$}, where}
{
\begin{align}\label{DetAchTIN}
R_{\Sigma,\mathrm{{Naive-TIN}}}=\max\{n_{\done},n_{\dthree}\}-
   n_{\ctwo}+(n_{\dtwo}-\max\{n_{\cone},n_{\cthree}\})^+.
\end{align}
}
\end{mypro}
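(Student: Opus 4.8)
The plan is to prove achievability by an explicit level-assignment argument in the LD model, reducing the decoding at each receiver to an interference-free channel whose capacity is already known. First I would fix the convention that, after applying $\bS^{q-n_k}$, the signal of a link with gain $n_k$ occupies the bottom $n_k$ rows of the length-$q$ received vector, so that a link with larger gain reaches higher (to rows with smaller index). Under this convention, at Rx$1$ the interference from Tx$2$ (gain $n_{\ctwo}$) is confined to the bottom $n_{\ctwo}$ rows of $\Y_1$, while at Rx$2$ the interference from Tx$1$ and Tx$3$ (gains $n_{\cone}$ and $n_{\cthree}$) is confined to the bottom $\max\{n_{\cone},n_{\cthree}\}$ rows of $\Y_2$.

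Next I would specify the transmit vectors. At Rx$2$ the problem is single-user: Tx$2$ places its information bits on the top-most $(n_{\dtwo}-\max\{n_{\cone},n_{\cthree}\})^+$ rows of its footprint, i.e.\ the rows strictly above the interference floor, and zeros below; since these rows carry no contribution from Tx$1$ and Tx$3$, Rx$2$ reads them off directly while treating the corrupted lower rows as noise, yielding the second term $(n_{\dtwo}-\max\{n_{\cone},n_{\cthree}\})^+$. At Rx$1$ the rows above the Tx$2$ floor form a two-user LD-MAC between Tx$1$ and Tx$3$, observed above level $n_{\ctwo}$; the key step is to partition the $\max\{n_{\done},n_{\dthree}\}-n_{\ctwo}$ interference-free rows between the two transmitters so that their shifted blocks do not collide — for instance the transmitter with the larger direct gain uses the rows above the shorter block, and the two share the overlapping rows as in the standard two-user MAC. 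This lets Rx$1$ resolve both codewords and delivers the first term $\max\{n_{\done},n_{\dthree}\}-n_{\ctwo}$.

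I would then check that the two cells do not contaminate each other: the bits Tx$2$ sends leak into $\Y_1$ only through its bottom $n_{\ctwo}$ rows, which Rx$1$ discards, and symmetrically the bits of Tx$1$ and Tx$3$ leak into $\Y_2$ only below the $\max\{n_{\cone},n_{\cthree}\}$ floor, which Rx$2$ discards; hence treating interference as noise loses nothing in the decoding regions. Here the hypothesis \eqref{eq:cond_LDPIMAC} enters: it gives $n_{\ctwo}\le n_{\done}-n_{\cone}\le\max\{n_{\done},n_{\dthree}\}$, so the first term is nonnegative and no $(\cdot)^+$ is needed on it, whereas $n_{\cthree}$ is left arbitrary so the $(\cdot)^+$ must be retained on the second term. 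Summing the two decoded rates gives $R_\Sigma=[\max\{n_{\done},n_{\dthree}\}-n_{\ctwo}]+(n_{\dtwo}-\max\{n_{\cone},n_{\cthree}\})^+$, which is exactly the rate in \eqref{DetAchTIN}, and every sum-rate below it is achievable.

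I expect the main obstacle to be the two-user MAC step at Rx$1$: one must verify that, accounting for the overlap of the shifted blocks $\bS^{q-n_{\done}}\X_1$ and $\bS^{q-n_{\dthree}}\X_3$ above the interference floor, the information of Tx$1$ and Tx$3$ can be packed into exactly $\max\{n_{\done},n_{\dthree}\}-n_{\ctwo}$ rows with unique decodability, and to handle the case split $n_{\dthree}\lessgtr n_{\done}$ cleanly. The remaining parts are essentially bookkeeping of row indices.
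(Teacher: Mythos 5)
Your proposal is correct and follows essentially the same route as the paper: the paper's justification of this proposition is exactly the level-assignment you describe, namely letting Tx$1$ and Tx$3$ partition the top-most $\max\{n_{\done},n_{\dthree}\}-n_{\ctwo}$ interference-free levels of $\Y_1$ (e.g.\ Tx$1$ on the $n_{\done}-n_{\dthree}$ levels above Tx$3$'s reach and Tx$3$ on the rest, when $n_{\dthree}<n_{\done}$) while Tx$2$ uses the top-most $(n_{\dtwo}-\max\{n_{\cone},n_{\cthree}\})^+$ levels of $\Y_2$, with condition \eqref{eq:cond_LDPIMAC} guaranteeing the first term is nonnegative. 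Your added care about the MAC partition at Rx$1$ and the non-contamination check is just a more explicit write-up of the same scheme.
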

By careful examination of this scheme, it can be seen that one can do better by using a smarter variant of TIN. Namely, consider the case when $n_{\dthree}>n_{\done}$ and $n_{\cthree}<n_{\cone}$. In this case, it would be better to keep Tx$1$ silent and operate the PIMAC as an IC with transmitters 2 and 3 active, thus achieving {$R_\Sigma=n_{\dthree}-n_{\ctwo}+n_{\dtwo}-n_{\cthree}$} which is clearly greater than \eqref{DetAchTIN} for this case. 
%A similar behaviour was recognized in~\cite{ChaabanSezgin_SubOptTIN} for the Gaussian PIMAC. 
To take this fact into account, we combine the {TIN} scheme with TDMA to obtain the TDMA-TIN scheme. {In this scheme, we switch off Tx$1$ and Tx$2$ in a $\tau_1$ fraction of time while Tx$3$ is active. In the remaining $(1-\tau_1)$ fraction of time, {Tx$1$ and Tx$3$} share the time in such a way that Tx$1$ transmits for a fraction of $\tau_2$ of the time, and Tx$3$ transmits for a fraction of $\tau_3$ of the time, while Tx$2$ is kept active. Note that $\tau_2+\tau_3=1-\tau_1$. The receivers treat interference as noise {while decoding} their {desired} signals. This scheme transforms the PIMAC into a P2P channel and two 2-user IC's operating over orthogonal time slots. This achieves
\begin{align*}
R_{\Sigma,\mathrm{TDMA-TIN}}=&\max_{\tau_1,\tau_2,\tau_3 \in [0,1]}    \tau_1 n_{\dthree} + \tau_2 [(n_{\done}-
      n_{\ctwo})^+ + (n_{\dtwo}-n_{\cone})^+]+\tau_3 [(n_{\dthree}-n_{\ctwo})^++(n_{\dtwo}-n_{\cthree})^+] \\
      & \text{subject to }\quad  \tau_1+\tau_2+\tau_3=1.
\end{align*}
This optimization problem is linear in $\tau_1$, $\tau_2$, and $\tau_3$ and is solved by setting the optimization variable equal to one of the extremes of the interval $[0,1]$. Namely, the maximization above is achieved by activating the channel which yields the highest sum-rate. The achievable sum-rate of this scheme is given in the following proposition.
\begin{mypro}[TDMA-TIN]
As long as \eqref{eq:cond_LDPIMAC} is satisfied in an LD-PIMAC, the TDMA-TIN achieves any sum-rate {$R_\Sigma\leq R_{\Sigma,\mathrm{TDMA-TIN}}$}, where
\begin{align}
\label{DetAchTDMATIN}
R_{\Sigma,\mathrm{TDMA-TIN}}&=\max\{ n_{\dthree} , (n_{\done}-
      n_{\ctwo}) + (n_{\dtwo}-n_{\cone}),(n_{\dthree}-n_{\ctwo})^++(n_{\dtwo}-n_{\cthree})^+\}.
\end{align}
\end{mypro}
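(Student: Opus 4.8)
The plan is to prove the achievability bound \eqref{DetAchTDMATIN} by showing that each of the three quantities inside the maximum is the sum-rate of a single \emph{operating mode}, and then arguing that time-sharing among these modes can do no worse than the best of them. I would begin by recalling the reduction already described in the text. In the $\tau_1$-fraction of the block only Tx$3$ is active (Tx$1$ and Tx$2$ silent), so by \eqref{DETREC} receiver Rx$1$ observes only $\bS^{q-n_{\dthree}}\X_3$ and the LD-PIMAC collapses to a clean point-to-point channel; this trivially supports $n_{\dthree}$ bits per channel use (cf.\ Fig.~\ref{Fig:TDMA-TIN_Intro_a}). In the $\tau_2$-fraction only Tx$1$ and Tx$2$ are active, yielding a $2$-user LD-IC on which treating interference as noise lets each transmitter use only the top-most interference-free levels at its intended receiver, giving $(n_{\done}-n_{\ctwo})^+$ bits on the Tx$1$--Rx$1$ link and $(n_{\dtwo}-n_{\cone})^+$ bits on the Tx$2$--Rx$2$ link (Fig.~\ref{Fig:TDMA-TIN_Intro_b}). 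Symmetrically, in the $\tau_3$-fraction only Tx$3$ and Tx$2$ are active, and the same TIN argument on the resulting $2$-user LD-IC delivers $(n_{\dthree}-n_{\ctwo})^+ + (n_{\dtwo}-n_{\cthree})^+$ (Fig.~\ref{Fig:TDMA-TIN_Intro_c}).

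Since the three modes are run over orthogonal time slots with fractions $\tau_1,\tau_2,\tau_3\ge 0$ obeying $\tau_1+\tau_2+\tau_3=1$, the overall achievable sum-rate is the time-average, i.e.\ exactly the linear objective displayed immediately before the proposition. This objective is linear in $(\tau_1,\tau_2,\tau_3)$ and the feasible region is the standard $2$-simplex, so its maximum is attained at an extreme point; each extreme point corresponds to devoting the whole block to a single mode. Consequently the optimal value equals
\begin{align*}
\max\bigl\{\, n_{\dthree},\ (n_{\done}-n_{\ctwo})^+ + (n_{\dtwo}-n_{\cone})^+,\ (n_{\dthree}-n_{\ctwo})^+ + (n_{\dtwo}-n_{\cthree})^+ \,\bigr\}.
\end{align*}

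It then remains only to reconcile this expression with the statement \eqref{DetAchTDMATIN}, where the positive-part operators on the middle term are absent. This is precisely where the noisy-interference hypothesis \eqref{eq:cond_LDPIMAC} enters: from $n_{\cone}+n_{\ctwo}\le\min\{n_{\done},n_{\dtwo}\}$ together with $n_{\cone},n_{\ctwo}\ge 0$ one gets $n_{\done}\ge n_{\ctwo}$ and $n_{\dtwo}\ge n_{\cone}$, whence $(n_{\done}-n_{\ctwo})^+ = n_{\done}-n_{\ctwo}$ and $(n_{\dtwo}-n_{\cone})^+ = n_{\dtwo}-n_{\cone}$, reproducing the middle term of \eqref{DetAchTDMATIN} verbatim. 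The positive parts on the third term are retained, as they must be, since $n_{\dthree}$ and $n_{\cthree}$ are unrestricted and either difference may be negative. I do not anticipate a genuine technical obstacle here; the argument is elementary achievability plus a linear-programming observation. The one point demanding care, and effectively the crux of the bookkeeping, is matching the LP objective (which carries $(\cdot)^+$ on every term) to the proposition's formula (where the middle $(\cdot)^+$ has been dropped) and verifying that \eqref{eq:cond_LDPIMAC} removes exactly those two positive parts and no others.
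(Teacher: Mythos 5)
Your proposal is correct and follows essentially the same route as the paper: decompose TDMA-TIN into the three orthogonal modes (P2P with Tx$3$ only, and the two $2$-user ICs with TIN), observe that the time-shared sum-rate is linear over the simplex so the optimum sits at a vertex, and take the maximum of the three per-mode rates. Your additional remark that \eqref{eq:cond_LDPIMAC} guarantees $n_{\done}\geq n_{\ctwo}$ and $n_{\dtwo}\geq n_{\cone}$, which is why the positive-part operators can be dropped from the middle term only, is exactly the bookkeeping implicit in the paper's statement.
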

\begin{remark}
{The proposed TDMA-TIN scheme is a special case of the TIN with power control where a user is either off or sends with full power. This is very similar to a so-called binary power control. However, some cases of binary power control are excluded from our proposed TDMA-TIN. These cases are discussed in what follows.
Consider the cases when the PIMAC is reduced to the P2P channels where either Tx$1$ or Tx$2$ are active while other Tx's are inactive. 
Doing this, we cannot achieve more than $\max\{n_{\done},n_{\dtwo}\}$. 
Due to the condition in \eqref{eq:cond_LDPIMAC}, these P2P channels are outperformed by using TIN in the 2-user IC when Tx$1$ and Tx$2$ are active. 
Therefore, we exclude these schemes from the TDMA-TIN. Moreover, by switching Tx$2$ off, and letting Tx$1$ and Tx$3$ be active, the channel is reduced into an LD-MAC achieving $\max\{n_{\done},n_{\dthree}\}$ which cannot outperform the achievable sum-rate in \eqref{DetAchTDMATIN}. Therefore, this case is also excluded from the proposed TDMA-TIN.}
\end{remark}
%\begin{remark}
%\label{Remark:all_possible_TDMA-TIN}
%In this work, we restrict our study on the cases of TDMA-TIN where the active users send with full power. {A} more general {strategy would be to allow that} each Tx sends with some power less than or equal to $P$ (power control) and the receivers use TIN \cite{CharafeddineSezginHanPaulraj}, \cite{GengNaderializadehAvestimehrJafar}, \cite{GengSunJafar2014}. 
%\end{remark}}
In this work, we restrict our study on the cases of TDMA-TIN where the active users send with full power. {A} more general {strategy would be to allow that} each Tx sends with some power less than or equal to $P$ (power control) and the receivers use TIN \cite{CharafeddineSezginHanPaulraj}, \cite{GengNaderializadehAvestimehrJafar}, \cite{GengSunJafar2014}. }
{In the following lemma, we summarize the analysis on the performance of the TIN scheme alongside power control at the transmitter side with respect to the achievable sum-rate. 
\begin{lemma} 
\label{Lemma:Poweralooc_versus_TDMA_TIN}
The achievable sum-rate by using TIN at the receiver side alongside power control at the transmitter side is upper bounded by the sum-rate in \eqref{DetAchTDMATIN}.
\end{lemma}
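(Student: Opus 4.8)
The plan is to model transmit power control in the LD-PIMAC as a choice of integer power back-offs and then to reduce the resulting optimization to two ordinary $2$-user IC subproblems, for which TIN-with-power-control is already understood. First I would encode the power used by Tx$i$ by a back-off $\lambda_i\ge 0$, so that reducing power by $\lambda_i$ levels amounts to zeroing the top $\lambda_i$ components of $\X_i$; the block received from Tx$i$ then shrinks from the top to height $(n_k-\lambda_i)^+$ at every receiver with gain $n_k$. Under TIN, Rx$2$ decodes $W_2$ treating Tx$1$ and Tx$3$ as noise, while Rx$1$ runs the interference-free MAC decoder for $(W_1,W_3)$ treating Tx$2$ as noise. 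Reading off the interference-free levels of each block, the sum-rate of a fixed configuration is
\begin{align*}
R_{\Sigma}^{\mathrm{PC}}(\lambda_1,\lambda_2,\lambda_3) &= \left(\left(\max\{n_{\done}-\lambda_1,\,n_{\dthree}-\lambda_3\}\right)^+-(n_{\ctwo}-\lambda_2)^+\right)^+ \\
&\quad + \left((n_{\dtwo}-\lambda_2)^+-\max\{(n_{\cone}-\lambda_1)^+,\,(n_{\cthree}-\lambda_3)^+\}\right)^+,
\end{align*}
where the first term is the deterministic MAC sum-rate achievable at Rx$1$ above the interference floor and the second is the clean rate of Tx$2$ at Rx$2$. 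Since any time-shared TIN scheme attains a convex combination of such configurations and \eqref{DetAchTDMATIN} is itself a maximum, it suffices to prove $R_{\Sigma}^{\mathrm{PC}}(\lambda_1,\lambda_2,\lambda_3)\le R_{\Sigma,\mathrm{TDMA-TIN}}$ for every fixed $(\lambda_1,\lambda_2,\lambda_3)$.

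The key structural observation I would exploit is that the MAC sum-rate at Rx$1$ depends on Tx$1$ and Tx$3$ only through $\max\{n_{\done}-\lambda_1,\,n_{\dthree}-\lambda_3\}$, so only the stronger of the two effectively contributes to the first term, whereas the weaker one can only raise the interference floor $\max\{(n_{\cone}-\lambda_1)^+,(n_{\cthree}-\lambda_3)^+\}$ seen at Rx$2$. Hence, if $n_{\done}-\lambda_1\ge n_{\dthree}-\lambda_3$, increasing $\lambda_3$ so as to switch Tx$3$ off leaves the first term unchanged and cannot decrease the second; the symmetric statement holds when Tx$3$ dominates. Therefore, without loss of optimality, at most one of Tx$1$, Tx$3$ is active, which splits the analysis into two $2$-user IC subproblems.

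In the first case Tx$3$ is silent and we are left with the IC formed by Tx$1$ and Tx$2$, whose cross gains are exactly $n_{\cone}$ and $n_{\ctwo}$; condition \eqref{eq:cond_LDPIMAC} is precisely its noisy-interference condition $n_{\cone}+n_{\ctwo}\le\min\{n_{\done},n_{\dtwo}\}$. For such an IC, TIN with power control is sum-rate optimal at full power and yields $(n_{\done}-n_{\ctwo})+(n_{\dtwo}-n_{\cone})$, the second term of \eqref{DetAchTDMATIN}. In the second case Tx$1$ is silent and we face the IC formed by Tx$3$ and Tx$2$ with cross gains $n_{\ctwo}$ and $n_{\cthree}$; since $n_{\cthree}$ is unrestricted, this IC may lie in any interference regime. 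I would show that for a general $2$-user IC, TIN with power control cannot exceed $\max\{n_{\dthree},\,n_{\dtwo},\,(n_{\dthree}-n_{\ctwo})^+ + (n_{\dtwo}-n_{\cthree})^+\}$. Its first and third members are the first and third terms of \eqref{DetAchTDMATIN}, while $n_{\dtwo}\le (n_{\done}-n_{\ctwo})+(n_{\dtwo}-n_{\cone})$ follows from \eqref{eq:cond_LDPIMAC} (equivalently $n_{\cone}+n_{\ctwo}\le n_{\done}$), so the $n_{\dtwo}$ member is dominated by the second term. Combining the two cases gives $R_{\Sigma}^{\mathrm{PC}}\le R_{\Sigma,\mathrm{TDMA-TIN}}$.

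The main obstacle is precisely the second case, where $n_{\cthree}$ may be large: unlike the noisy IC, power control is no longer a harmless wash there, so one must genuinely verify that no balancing of $\lambda_2$ against $\lambda_3$ beats both the single-user rate $n_{\dthree}$ and the weak-interference rate $(n_{\dthree}-n_{\ctwo})^+ + (n_{\dtwo}-n_{\cthree})^+$. I would argue that $R_{\Sigma}^{\mathrm{PC}}$ is piecewise-linear in $(\lambda_2,\lambda_3)$ and non-increasing once either back-off exceeds the corresponding cross gain, so its maximum is attained at a corner with $\lambda_2\in\{0,n_{\ctwo}\}$ and $\lambda_3\in\{0,n_{\cthree}\}$. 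Evaluating the four corners and simplifying the nested $(\cdot)^+$ operators—using that the unclipped corner values rearrange to $(n_{\dthree}-n_{\ctwo})^++(n_{\dtwo}-n_{\cthree})^+$, while any clipping collapses the expression to a single-user term bounded by $n_{\dthree}$ or $n_{\dtwo}$—then yields the claimed bound. The remaining bookkeeping with the $(\cdot)^+$ operators is routine.
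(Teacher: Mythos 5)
Your proposal is correct and follows essentially the same route as the paper's own proof in Appendix~A: the same parametrization of power control by received bit levels, the same TIN sum-rate expression, the same monotonicity properties of the back-offs (e.g.\ $n_{11}-n_{21}\leq n_{\done}-n_{\cone}$), and the same dichotomy on which MAC transmitter dominates the direct link at Rx$1$ --- your ``silence the weaker of Tx$1$, Tx$3$'' reduction is exactly what underlies the paper's four-case split, where the cases $n_{21}<n_{23}$ and $n_{23}<n_{21}$ are folded back into the other two. The only step I would tighten is the claim that the piecewise-linear objective is maximized at the corners $\lambda_2\in\{0,n_{\ctwo}\}$, $\lambda_3\in\{0,n_{\cthree}\}$: this requires the decreasing--flat--increasing structure of the objective in each back-off, or more simply the identity $(a)^++(b)^+=\max\{a+b,a,b,0\}$ applied termwise as the paper does, but either way the conclusion stands.
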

\begin{proof}
See Appendix \ref{App:Poweralooc_versus_TDMA_TIN}.
\end{proof}
 }

For some special cases of the LD-PIMAC (specific ranges of the channel parameters), the {TIN} schemes above can achieve the sum-capacity as we shall show next. Before, we proceed, we divide the parameter space of the LD-PIMAC into several regimes in the next subsection.
\subsection{Regimes under consideration in LD-PIMAC}
In this section, we introduce three regimes of the LD-PIMAC which satisfies \eqref{eq:cond_LDPIMAC}. These regimes are determined based on the operational meaning. 
\begin{mydef}
\label{Regimes}
{For an LD-PIMAC with $n_{\cone} + n_{\ctwo}\leq \min\{n_{\done},n_{\dtwo}\}$, we define regimes 1 to 3 {(shown in Fig.\ref{Region})} as follows:
\begin{itemize}
\item {\bf Regime 1 (Tx3-off):} 
\begin{align}
n_{\dthree}\leq n_{\done}-n_{\cone} \text{ or } n_{\dthree} -(n_{\done}-2n_{\cone})\leq n_{\cthree} \leq n_{\dtwo} - n_{\ctwo}
\end{align}
\item {\bf Regime 2 (Tx1-off):}
\begin{align}
\min\{n_{\cthree},n_{\cone}\} +n_{\done}-n_{\cone} \leq n_{\dthree} - n_{\cthree}
\end{align} 
\item {\bf Regime 3 (All Tx's active)}: All remaining cases excluding {the special case} $n_{\dthree}- n_{\cthree} = n_{\done}-n_{\cone}$.
\end{itemize} }
\end{mydef}
\begin{figure}
\centering
\begin{tikzpicture}[scale=1]
\MainRegimes
\end{tikzpicture}          
   \caption{The $(n_{\dthree},n_{\cthree})$-plane of the parameter space of the LD-PIMAC with $n_{\cone}+ n_{\ctwo} \leq \min\{n_{\done},n_{\dtwo}\}$ divided into 3 regimes as defined in Definition \ref{Regimes}.}
\label{Region}
\end{figure}
\begin{remark}
\label{Remark:PIMAC_IC}
{Since studying the optimality of TIN when $n_{\dthree}-n_{\cthree}=n_{\done}-n_{\cone}$ is not similar to the other cases, we will first exclude this special case from our analysis. Later, this case will be studied in details.}
\end{remark}
{Before, we proceed, it is worth to describe these regimes briefly.}
{
In regime 1, the desired channel of Tx$3$ to Rx$1$ is weak while the interference caused by this transmitter to Rx$2$ might be very strong. Hence, in regime 1, it is optimal to switch the Tx$3$ off. This regime is divided into following sub-regimes as shown in Fig.~\ref{Fig:SubRegime}
\begin{itemize}
\item {\bf Sub-regime \oneA}: $n_{\dthree}\leq n_{\done}-n_{\cone}$ and $n_{\cthree}\leq n_{\cone}$,
\item {\bf Sub-regime \oneB}: $n_{\dthree}\leq n_{\done}-n_{\cone}$ and $n_{\cthree}>n_{\cone}$,
\item {\bf Sub-regime \oneC}: $n_{\dthree}>n_{\done}-n_{\cone}$,  $n_{\cthree} \leq n_{\dtwo}-n_{\ctwo}$ and $n_{\dthree}-n_{\cthree}\leq n_{\done}-2n_{\cone}$.
\end{itemize}
In Regime 2, the difference of the desired and interference channel of Tx$3$ is so larger than that of the Tx$1$ that it is optimal to switch Tx$1$ off. This regime consists of following sub-regimes which are illustrated in Fig.~\ref{Fig:SubRegime}
\begin{itemize}
\item {\bf Sub-regime \twoA}: $n_{\dthree}-n_{\cthree}\geq n_{\done}$ and $n_{\cone}\leq n_{\cthree}\leq n_{\dtwo}-n_{\ctwo}$,
\item {\bf Sub-regime \twoB}: $n_{\dthree}-n_{\cthree} \geq n_{\done}$ and $n_{\cthree}< n_{\cone}$,
\item {\bf Sub-regime \twoC}: $n_{\dthree}-2n_{\cthree}\geq n_{\done}-n_{\cone}$ and $n_{\dthree}-n_{\cthree}<n_{\done}$,
\item {\bf Sub-regime \twoD}: $n_{\dthree}-n_{\cthree}\geq n_{\done}$ and $n_{\cthree}>n_{\dtwo}-n_{\ctwo}$.
\end{itemize}
 In remaining case (regime 3), it is sub-optimal to switch a transmitter off. This regime is divided into several sub-regimes (shown in Fig.~\ref{Fig:SubRegime}) given as follows 
\begin{itemize}
\item {\bf Sub-regime \threeA}: $n_{\done}-n_{\cone}<n_{\dthree} < n_{{\cthree}} + n_{\done} - n_{\cone}$ and $n_{\dtwo} - n_{\ctwo} <n_{\cthree}$,
\item {\bf Sub-regime \threeB}: $n_{{\cthree}} + n_{\done} - n_{\cone}< n_{\dthree} < n_{\done}+n_{\cthree}$ and $n_{\dtwo} - n_{\ctwo} <n_{\cthree}$,
\item {\bf Sub-regime \threeC}: $\max\{n_{\done}-n_{\cone},n_{\done}-2n_{\cone}+n_{\cthree}\}< n_{\dthree} < \min\{n_{\done}-n_{\cone}+2n_{{\cthree}},n_{\cthree}+n_{\done}\}$  and $n_{\cthree} \leq n_{\dtwo} - n_{\ctwo} $ and $n_{\dthree}- n_{\cthree} \neq n_{\done}-n_{\cone}$.
\end{itemize}}
\begin{figure}[h]
\centering
\begin{tikzpicture}[scale=1]
\Regimes
\end{tikzpicture}  
\caption{The $(n_{\dthree},n_{\cthree})$-plane of the parameter space of the LD-PIMAC with  $n_{\cone}+ n_{\ctwo} \leq \min\{n_{\done},n_{\dtwo}\}$ divided into 10 sub-regimes.}
   \label{Fig:SubRegime}
\end{figure}
In the following sub-section, we study the optimality of different variants of TIN over these sub-regimes in details.

\subsection{TIN Optimality}
Here, we study the optimality of TDMA-TIN and naive-TIN. First, we show that {TDMA-TIN} is sum-capacity optimal in regimes {1 and 2}, but strictly suboptimal in regime {3}. The following theorem characterizes the sum-capacity of the LD-PIMAC in regimes where {TDMA-TIN} is optimal.
\begin{mythe}
   \label{TINCapacity}
{TDMA-TIN is capacity optimal for the LD-PIMAC in regimes 1 and 2 defined in Definition \ref{Regimes} (shown in Fig. \ref{Region}). In these regimes the sum-capacity is given by
      \begin{align}\label{detcapTIN}
         C_{\mathrm{det},\Sigma}=
         \begin{cases}
         n_{\done}-n_{\cone} + n_{\dtwo}-n_{\ctwo}, &\text{regime 1}\\
         n_{\dthree}-n_{\cthree}+n_{\dtwo}-n_{\ctwo}, &\text{sub-regimes \twoA, \twoB, and \twoC}\\
         n_{\dthree}, & \text{sub-regime \twoD}.
         \end{cases}
      \end{align}}
\end{mythe}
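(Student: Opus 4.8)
The statement asserts both achievability and a matching converse, and since Propositions~1--2 already supply the achievable rates, the achievability part is only a verification. The plan is to invoke the TDMA-TIN proposition and check that in each regime the maximum in \eqref{DetAchTDMATIN} collapses to the corresponding entry of \eqref{detcapTIN}. In regime 1 it is the middle term, via the identity $(n_{\done}-n_{\ctwo})+(n_{\dtwo}-n_{\cone})=(n_{\done}-n_{\cone})+(n_{\dtwo}-n_{\ctwo})$ together with \eqref{eq:cond_LDPIMAC}, which forces $n_{\ctwo}\le n_{\done}$ and $n_{\cone}\le n_{\dtwo}$. In sub-regimes \twoA--\twoC it is the third term: the defining inequalities imply $n_{\dthree}\ge n_{\ctwo}$ and $n_{\cthree}<n_{\dtwo}$ (indeed $n_{\cthree}\le n_{\dtwo}-n_{\ctwo}$), so both positive parts are active and $(n_{\dthree}-n_{\ctwo})+(n_{\dtwo}-n_{\cthree})=(n_{\dthree}-n_{\cthree})+(n_{\dtwo}-n_{\ctwo})$. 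In sub-regime \twoD it is simply the P2P term $n_{\dthree}$. The whole burden is therefore the converse.

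For the converse I would begin from Fano's inequality in the form
\begin{align*}
n(R_1+R_3)&\le I(W_1,W_3;Y_1^n)+n\epsilon_n, & nR_2&\le I(W_2;Y_2^n)+n\epsilon_n,
\end{align*}
and upper-bound $I(W_1,W_3;Y_1^n)+I(W_2;Y_2^n)$ by an Etkin--Tse--Wang-type genie argument. Expanding the mutual informations, the conditional outputs collapse thanks to the deterministic structure \eqref{DETREC}: given $(W_1,W_3)$ the vector $Y_1^n$ is a function of $\bS^{q-n_{\ctwo}}\X_2^n$ only, and given $W_2$ the vector $Y_2^n$ is a function of $\bS^{q-n_{\cone}}\X_1^n\oplus\bS^{q-n_{\cthree}}\X_3^n$ only. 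Supplying each receiver with a genie equal to the interference footprint left by the messages it does not decode, and pairing each $H(Y_j^n)$ with the genie term of matching support, every remaining entropy is bounded per channel use by the number of occupied levels. The delicate point is to choose the genie for the three-transmitter output so these level counts telescope to exactly the claimed values.

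The regime-specific closing then goes as follows. In regime 1 the hypotheses $n_{\dthree}\le n_{\done}-n_{\cone}$ (sub-regimes \oneA,\oneB) respectively $n_{\dthree}-n_{\cthree}\le n_{\done}-2n_{\cone}$ (sub-regime \oneC) mean that the effective contribution of Tx3 at Rx1 fits inside the budget already occupied by Tx1, so the three-transmitter bound degenerates to the two-user weak-IC bound $(n_{\done}-n_{\cone})+(n_{\dtwo}-n_{\ctwo})$. Sub-regimes \twoA--\twoC are symmetric: here the interference $n_{\cthree}$ from Tx3 to Rx2 plays the role of the cross link of an effective weak two-user IC formed by Tx3 and Tx2, the regime inequalities guaranteeing $n_{\ctwo}+n_{\cthree}\le\min\{n_{\dtwo},n_{\dthree}\}$ (using $n_{\cthree}<n_{\cone}\le n_{\dtwo}-n_{\ctwo}$ in \twoB,\twoC), while the dominated Tx1 is subsumed by the $n_{\dthree}$-level budget at Rx1; this yields $(n_{\dthree}-n_{\cthree})+(n_{\dtwo}-n_{\ctwo})$.

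Sub-regime \twoD requires a genuinely different bound. The MAC cut at Rx1 gives $R_1+R_3\le n_{\dthree}$, and the defining strong-interference inequality $n_{\cthree}>n_{\dtwo}-n_{\ctwo}$ must be exploited to show that any rate achieved by Tx2 is traded off against the Tx1/Tx3 rate, so that $R_2$ is absorbed into the same $n_{\dthree}$ levels and the overall bound stays $n_{\dthree}$. I expect the main obstacle to be precisely the genie construction: tuning the side information so that, in the three-transmitter output, the conditional entropies cancel to the exact level counts uniformly across all the sub-regime boundaries, and in particular establishing the separate \twoD bound where a straightforward reconstruction genie leaves an $n(n_{\dtwo}-n_{\ctwo})$ slack that the strong-interference condition must be used to remove.
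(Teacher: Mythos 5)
Your high-level framework (genie-aided bounds of Etkin--Tse--Wang type, adapted to the deterministic model) is the paper's framework too, and it does close regime 1 sub-regimes \oneA, \oneB{} and sub-regimes \twoA, \twoB{} exactly as you describe: give the undecoded message $W_3$ (resp.\ $W_1$) to Rx$2$ plus the interference footprints $\bS^{q-n_{\cone}}\X_1^n$ and $\bS^{q-n_{\ctwo}}\X_2^n$, telescope, and count levels. Your achievability verification is also fine. Incidentally, your worry about sub-regime \twoD{} is unfounded: the same genie bound already gives $\max\{n_{\done},n_{\ctwo},n_{\dthree}-n_{\cthree}\}+\max\{n_{\dtwo}-n_{\ctwo},n_{\cthree}\}$, and when $n_{\cthree}>n_{\dtwo}-n_{\ctwo}$ the second maximum switches to $n_{\cthree}$, so the bound collapses to $(n_{\dthree}-n_{\cthree})+n_{\cthree}=n_{\dthree}$ with no extra argument.

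The genuine gap is in sub-regimes \oneC{} and \twoC. Your claim that under the \oneC{} hypothesis the three-transmitter bound ``degenerates to the two-user weak-IC bound'' is false for the footprint genie: \oneC{} requires $n_{\dthree}>n_{\done}-n_{\cone}$, so the Rx$1$ term of that bound is $\max\{n_{\done}-n_{\cone},n_{\ctwo},n_{\dthree}\}\geq n_{\dthree}>n_{\done}-n_{\cone}$, and the bound is strictly loose; symmetrically, in \twoC{} the $W_1$-genie bound contains $\max\{n_{\done},\cdot,n_{\dthree}-n_{\cthree}\}=n_{\done}>n_{\dthree}-n_{\cthree}$. Closing these two sub-regimes requires a different side information, namely the \emph{aligned sum} $\s_1^n=\bS^{q-(n_{\done}-n_{\dthree}+n_{\cthree})^+}\X_1^n\oplus\bS^{q-n_{\cthree}}\X_3^n$ (resp.\ its mirror with the roles of Tx$1$ and Tx$3$ swapped) given to Rx$1$, i.e.\ the top slice of $Y_1^n$ down to the $n_{\cthree}$ most significant levels of $\X_3^n$. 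This genie no longer matches the interference $\bS^{q-n_{\cone}}\X_1^n\oplus\bS^{q-n_{\cthree}}\X_3^n$ actually cancelled at Rx$2$, so the telescoping leaves the residual $H(\s_1^n)-H(\bS^{q-n_{\cone}}\X_1^n\oplus\bS^{q-n_{\cthree}}\X_3^n)$, and one must prove this is nonpositive; that is exactly the entropy-difference lemma (Lemma \ref{Lemma:Entrop_diff_determ}), whose hypothesis $\ell_2\leq\ell_3-\ell_1$ is where the defining inequalities $n_{\dthree}-n_{\cthree}\leq n_{\done}-2n_{\cone}$ and $n_{\dthree}-2n_{\cthree}\geq n_{\done}-n_{\cone}$ enter. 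Neither the aligned-sum genie nor the entropy-difference comparison appears in your sketch, and without them the level counts you invoke do not close in \oneC{} and \twoC; this is precisely the novel ingredient the paper credits for extending the TIN-optimality region beyond what follows from the standard per-message genies.
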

\begin{proof}
The achievability is proved in Subsection \ref{Achievability} and the converse is given in Subsection \ref{Converse}.
Since, the achievable sum-rate using TDMA-TIN coincides with the upper bound for the capacity of the LD-PIMAC in regimes 1 and 2, TDMA-TIN is optimal in these regimes.
\end{proof}
 Interestingly, we can notice that TDMA-TIN is optimal in the case that one MAC transmitter causes noisy interference $n_{\cone}\leq \min\{n_{\done},n_{\dtwo}\}- n_{\ctwo}$, and the other causes strong interference $n_{\cthree}>\max\{n_{\done},n_{\dthree}\}$. This can be seen in regime 1. The intuition here is that Tx$3$ in this case causes strong interference to Rx$2$, but has a weak channel to its desired receiver Rx$1$. In this case, Tx$3$ harms Rx$2$ while not increasing the achievable sum-rate of the MAC, and hence, it is better to switch it off. The remaining channel is an IC with noisy interference where {TIN} is optimal.
\begin{corollary} 
Naive-TIN is capacity optimal for the LD-PIMAC is sub-regimes \oneA{} and \twoA.
\end{corollary}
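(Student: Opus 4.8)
The plan is to lean entirely on Theorem~\ref{TINCapacity}, which already establishes the sum-capacity $C_{\mathrm{det},\Sigma}$ in regimes 1 and 2 (and exhibits TDMA-TIN as a capacity-achieving scheme). Since sub-regime \oneA{} sits inside regime 1 and sub-regime \twoA{} sits inside regime 2, the capacity value is known in both cases, so no new converse is required. It therefore suffices to show that the naive-TIN achievable sum-rate $R_{\Sigma,\mathrm{Naive\text{-}TIN}}=\max\{n_{\done},n_{\dthree}\}-n_{\ctwo}+(n_{\dtwo}-\max\{n_{\cone},n_{\cthree}\})^+$ from the Naive-TIN proposition \eqref{DetAchTIN} coincides with $C_{\mathrm{det},\Sigma}$ in each of these two sub-regimes. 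The whole argument thus reduces to evaluating one max-plus expression under the defining inequalities of the sub-regimes.

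First I would treat sub-regime \oneA, defined by $n_{\dthree}\leq n_{\done}-n_{\cone}$ and $n_{\cthree}\leq n_{\cone}$. The first inequality gives $n_{\dthree}\leq n_{\done}$, hence $\max\{n_{\done},n_{\dthree}\}=n_{\done}$; the second gives $\max\{n_{\cone},n_{\cthree}\}=n_{\cone}$. Thus $R_{\Sigma,\mathrm{Naive\text{-}TIN}}=n_{\done}-n_{\ctwo}+(n_{\dtwo}-n_{\cone})^+$. The noisy-interference condition \eqref{eq:cond_LDPIMAC} yields $n_{\cone}\leq n_{\dtwo}-n_{\ctwo}\leq n_{\dtwo}$, so the positive part is inactive and $R_{\Sigma,\mathrm{Naive\text{-}TIN}}=n_{\done}-n_{\cone}+n_{\dtwo}-n_{\ctwo}$, which is exactly the regime-1 value of $C_{\mathrm{det},\Sigma}$.

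Next I would handle sub-regime \twoA, defined by $n_{\dthree}-n_{\cthree}\geq n_{\done}$ and $n_{\cone}\leq n_{\cthree}\leq n_{\dtwo}-n_{\ctwo}$. Here $n_{\dthree}=(n_{\dthree}-n_{\cthree})+n_{\cthree}\geq n_{\done}+n_{\cthree}\geq n_{\done}$ forces $\max\{n_{\done},n_{\dthree}\}=n_{\dthree}$, and $n_{\cthree}\geq n_{\cone}$ forces $\max\{n_{\cone},n_{\cthree}\}=n_{\cthree}$, so $R_{\Sigma,\mathrm{Naive\text{-}TIN}}=n_{\dthree}-n_{\ctwo}+(n_{\dtwo}-n_{\cthree})^+$. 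The upper bound $n_{\cthree}\leq n_{\dtwo}-n_{\ctwo}\leq n_{\dtwo}$ again removes the positive part, giving $R_{\Sigma,\mathrm{Naive\text{-}TIN}}=n_{\dthree}-n_{\cthree}+n_{\dtwo}-n_{\ctwo}$, matching the \twoA{} value of $C_{\mathrm{det},\Sigma}$ in \eqref{detcapTIN}.

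The only subtlety, and it is a minor one, is verifying in each case that the $(\cdot)^+$ operator does not truncate; this is precisely where condition \eqref{eq:cond_LDPIMAC} (together with the sub-regime bound $n_{\cthree}\leq n_{\dtwo}-n_{\ctwo}$ in \twoA) is used. Once both evaluations are shown to equal the capacity already established in Theorem~\ref{TINCapacity}, the optimality of naive-TIN in \oneA{} and \twoA{} follows at once, since its achievable rate meets the upper bound, completing the proof.
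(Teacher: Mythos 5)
Your proposal is correct and follows essentially the same route as the paper: the paper's proof also reduces the corollary to the observation that naive-TIN's rate \eqref{DetAchTIN} coincides with the capacity value of Theorem~\ref{TINCapacity} in sub-regimes \oneA{} and \twoA{} (stated there as naive-TIN matching TDMA-TIN), with the same evaluation of the max and $(\cdot)^+$ terms carried out in the achievability discussion. Your version simply writes out that evaluation explicitly, including the correct use of condition \eqref{eq:cond_LDPIMAC} to deactivate the positive-part operators.
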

\begin{proof}
Since the performance of TDMA-TIN and naive-TIN is the same in sub-regimes \oneA\text{ }and \twoA, naive-TIN is sum-capacity optimal in these two sub-regimes.
\end{proof}

\subsubsection{Achievability of Theorem \ref{TINCapacity}}
\label{Achievability}
The sum-capacity expression given in Theorem \ref{TINCapacity} can be achieved by using the TDMA-TIN scheme as follows. We start with regime 1. By calculating \eqref{DetAchTDMATIN} while taking the conditions of regime 1 (given in Definition \ref{Regimes}) into account, it can be easily verified that the TDMA-TIN scheme can achieve $R_\Sigma=(n_{\done}-n_{\ctwo}) + ( n_{\dtwo}-n_{\cone})$ in this regime. This achievable sum-rate coincides with \eqref{detcapTIN} in regime 1.

For sub-regimes \twoA, \twoB, and \twoC, by calculating \eqref{DetAchTDMATIN} while taking the conditions of these sub-regimes given in Definition \ref{Regimes} into account, TDMA-TIN achieves $R_\Sigma=(n_{\dthree}-n_{\ctwo})^{+}+(n_{\dtwo}-n_{\cthree})^{+}$ which is equal to $n_{\dthree}-n_{\ctwo}+n_{\dtwo}-n_{\cthree}$ in these sub-regime. This achievable sum-rate also coincides with \eqref{detcapTIN} in sub-regimes \twoA, \twoB, and \twoC.

{Finally, by calculating \eqref{DetAchTDMATIN} while taking the conditions of sub-regime \twoD\text{ }(given in Definition \ref{Regimes}) into account, we obtain the achievable sum-rate $R_\Sigma = n_{\dthree}$ by using TDMA-TIN. This coincides with \eqref{detcapTIN} for sub-regime \twoD.}

In conclusion, TDMA-TIN achieves the sum-capacity expression given in Theorem \ref{TINCapacity} in regimes 1 and 2. This concludes the proof of the achievability of Theorem \ref{TINCapacity}.

At this point, it is worth to remark that naive-TIN can only achieve \eqref{detcapTIN} in sub-regimes \oneA\text{ }and \twoA. This can be verified by evaluating \eqref{DetAchTIN} in regimes 1 and 2 using the conditions given in Definition \ref{Regimes}. By doing so, it can be verified that 
\begin{itemize}
\item $R_{\Sigma,\mathrm{Naive-TIN}}< n_{\done}-n_{\cone} + n_{\dtwo} - n_{\ctwo}$ in sub-regimes \oneB{} and \oneC,
\item $R_{\Sigma,\mathrm{Naive-TIN}}< n_{\dthree}-n_{\cthree}+n_{\dtwo}-n_{\ctwo}$ in sub-regimes \twoB{} and \twoC,
\item {$R_{\Sigma,\mathrm{Naive-TIN}} < n_{\dthree}$ in sub-regime \twoD,}
\item $R_{\Sigma,\mathrm{Naive-TIN}}= n_{\done}-n_{\cone} + n_{\dtwo} - n_{\ctwo} $ in sub-regime \oneA, and 
\item $R_{\Sigma,\mathrm{Naive-TIN}}= n_{\dthree}-n_{\cthree}+n_{\dtwo}-n_{\ctwo}$ in sub-regime \twoA.
\end{itemize}

This shows the inferiority of this naive-TIN scheme in comparison to the smarter TDMA-TIN which is sum-capacity optimal for a wider range of channel parameters. 

\subsubsection{Converse of Theorem \ref{TINCapacity}}
\label{Converse}
The converse of Theorem \ref{TINCapacity} is based on the four lemmas that we provide next. The main idea is reducing the PIMAC by removing one interferer at Rx$2$ into a channel that can be treated similar to the IC.

\begin{lemma}
\label{Lemma:UB1}
      The sum-capacity of the LD-PIMAC is upper bounded as follow
\begin{align}
         \label{UB1}
{C_{\mathrm{det},\Sigma}\leq \max\{n_{\done}-n_{\cone},n_{\ctwo},n_{\dthree}\}+\max\{n_{\dtwo}-
      n_{\ctwo},n_{\cone}\}.}
\end{align}
\end{lemma}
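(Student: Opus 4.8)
The plan is to combine a Fano-type bound at each receiver with a single genie that strips the Tx$3$ interference off Rx$2$, and then to exploit a cross-pairing of the resulting negative-entropy terms. First I would write, via Fano's inequality, $n(R_1+R_3)\le I(W_1,W_3;\Y_1^n)+n\epsilon_n$ and, after handing Rx$2$ the genie $W_3$ (equivalently $\X_3^n$), $nR_2\le I(W_2;\Y_2^n,W_3)+n\epsilon_n$. Because the encoders are deterministic and the messages are independent, conditioning on $(W_1,W_3)$ at Rx$1$ leaves only the Tx$2$ interference, so that $H(\Y_1^n\mid W_1,W_3)=H(\bS^{q-n_{\ctwo}}\X_2^n)$; and subtracting the now-known $\bS^{q-n_{\cthree}}\X_3^n$ at Rx$2$ reduces its observation to $\tilde{\Y}_2^n=\bS^{q-n_{\cone}}\X_1^n\oplus\bS^{q-n_{\dtwo}}\X_2^n$, with $H(\tilde{\Y}_2^n\mid W_2,W_3)=H(\bS^{q-n_{\cone}}\X_1^n)$. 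This genie is the crucial move: it is exactly what eliminates $n_{\cthree}$, matching the fact that $n_{\cthree}$ does not appear in \eqref{UB1}.

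Adding the two bounds gives
\begin{align*}
n(R_1+R_2+R_3)\le H(\Y_1^n)+H(\tilde{\Y}_2^n)-H(\bS^{q-n_{\ctwo}}\X_2^n)-H(\bS^{q-n_{\cone}}\X_1^n)+2n\epsilon_n.
\end{align*}
The non-obvious step — and the one I expect to be the main obstacle — is to pair the negative terms \emph{across} receivers rather than with the receiver they came from: I would keep $H(\Y_1^n)-H(\bS^{q-n_{\cone}}\X_1^n)$ together and $H(\tilde{\Y}_2^n)-H(\bS^{q-n_{\ctwo}}\X_2^n)$ together. The naive pairing (each interference term with the receiver in which it appears) only yields the loose $\max\{n_{\done},n_{\dthree}\}+n_{\dtwo}$; the cross-pairing is what produces the sharp $n_{\done}-n_{\cone}$ and $n_{\dtwo}-n_{\ctwo}$ savings. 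Using $H(A)-H(B)\le H(A\mid B)$ for $B$ a function of the inputs, each bracket collapses to one conditional entropy, $H(\Y_1^n\mid \bS^{q-n_{\cone}}\X_1^n)$ and $H(\tilde{\Y}_2^n\mid \bS^{q-n_{\ctwo}}\X_2^n)$.

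Finally I would bound each conditional entropy by counting occupied levels in the LD model. Conditioning on $\bS^{q-n_{\cone}}\X_1^n$ reveals the top $n_{\cone}$ bits of $\X_1$, leaving a residual Tx$1$ signal on $n_{\done}-n_{\cone}$ levels, so $\Y_1^n$ given this side information occupies the union of the residual-$\X_1$, $\X_2$, and $\X_3$ level sets, whence $H(\Y_1^n\mid\bS^{q-n_{\cone}}\X_1^n)\le n\max\{n_{\done}-n_{\cone},n_{\ctwo},n_{\dthree}\}$; symmetrically, conditioning on $\bS^{q-n_{\ctwo}}\X_2^n$ leaves Tx$2$ on $n_{\dtwo}-n_{\ctwo}$ levels and Tx$1$ on $n_{\cone}$ levels in $\tilde{\Y}_2^n$, giving $H(\tilde{\Y}_2^n\mid\bS^{q-n_{\ctwo}}\X_2^n)\le n\max\{n_{\dtwo}-n_{\ctwo},n_{\cone}\}$. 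Here I would invoke the standing assumption \eqref{eq:cond_LDPIMAC}, which guarantees $n_{\cone}\le n_{\done}$ and $n_{\ctwo}\le n_{\dtwo}$ so that the residual level counts are the stated non-negative quantities. Dividing by $n$ and letting $n\to\infty$ (so $\epsilon_n\to0$) yields \eqref{UB1}. The only points needing care are checking that the genie and the cross-conditioning cancel precisely the intended interference subspaces (a deterministic-function and independence verification) and the bookkeeping of which levels survive each conditioning.
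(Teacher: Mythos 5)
Your proposal is correct and follows essentially the same route as the paper: the genie $W_3$ at Rx$2$ to remove the Tx$3$ interference, the cross-pairing of $H(\Y_1^n)$ with $H(\bS^{q-n_{\cone}}\X_1^n)$ and of $H(\tilde{\Y}_2^n)$ with $H(\bS^{q-n_{\ctwo}}\X_2^n)$, and the same level-counting bounds. The only (immaterial) difference is presentational: the paper hands $\bS^{q-n_{\cone}}\X_1^n$ and $\bS^{q-n_{\ctwo}}\X_2^n$ to the receivers as additional genies inside the Fano step so the chain-rule terms telescope, whereas you obtain the identical conditional entropies afterwards via $H(A)-H(B)\leq H(A\mid B)$.
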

\begin{proof}
The idea of the proof is to create a genie-aided channel where each receiver experiences one and only one interference just as in the IC. By doing this, the resulting channel can be treated in a similar way as the IC~\cite{BreslerTse}, and the given bound can be obtained. To this end, we give $W_3$ to Rx$2$ as side information. This enhances the PIMAC to a channel where Rx$1$ experiences interference from $\X_2$ and Rx$2$ experiences interference from $\X_1$ only. Next, we treat the resulting enhanced channel as an IC and derive a bound similar to that of the IC with noisy interference. Namely, we give the interference caused by Tx$1$ given by {$\bS^{q-n_{\cone}} \X_{1}^{n}$} to Rx$1$ as side information, and we give the interference caused by Tx$2$ given by {$\bS^{q-n_{\ctwo}} \X_{2}^{n}$} to Rx$2$ as side information. The resulting PIMAC which has been enhanced with side information is more capable than the original PIMAC, and hence the capacity of the former serves as an upper bound for the capacity of the latter. Next, by using Fano's inequality we can bound $R_\Sigma$ as follows\footnote{With a slight notational abuse, we use $\X$ and $\Y$ to denote random vectors.}
   \begin{align*}
n(R_{\Sigma}-\epsilon_n)& \leq {I(W_{1},W_{3};\Y_1^{n},
      \bS^{q-n_{\cone}} \X_{1}^{n})+I(W_{2};\Y_2^{n},\bS^{q-n_{\ctwo}} 
      \X_{2}^{n},W_{3}),}
  \end{align*}
where $\epsilon_n\rightarrow 0$ as $n\rightarrow \infty$.
By using the chain rule, and the independence of the different messages, we can rewrite this bound as
{   \begin{align}
      n(R_{\Sigma}-\epsilon_n)&\leq I(W_{1},W_{3};\bS^{q-n_{\cone}} \X_{1}^{n})+I(W_{1},W_{3};\Y_1^{n}|
      \bS^{q-n_{\cone}} \X_{1}^{n})\nonumber\\
      \label{RSigmaI1}
      &\quad +I(W_{2};\bS^{q-n_{\ctwo}} \X_{2}^n|W_{3})+ I(W_{2};\Y_2^{n}|
      \bS^{q-n_{\ctwo}}\X_{2}^{n},W_{3}).
      \end{align}}
Now, we treat each of the mutual information terms in \eqref{RSigmaI1} separately. The first mutual information term can be written as
{\begin{align}
I(W_{1},W_{3};\bS^{q-n_{\cone}} \X_{1}^{n})&=H(\bS^{q-n_{\cone}} \X_{1}^{n})-H(\bS^{q-n_{\cone}} \X_{1}^{n}|
      W_{1},W_{3})\nonumber\\
      \label{RSigmaI2}
      &=H(\bS^{q-n_{\cone}} \X_{1}^{n}),
      \end{align}}
since {$H(\bS^{q-n_{\cone}} \X_{1}^{n}|W_{1},W_{3})=0$}. The second mutual information term in \eqref{RSigmaI1} satisfies
{\begin{align}
I(W_{1},W_{3};\Y_1^{n}|\bS^{q-n_{\cone}} \X_{1}^{n})&=H(\Y_1^{n}|\bS^{q-n_{\cone}} \X_{1}^{n})-H(\Y_1^{n}|\bS^{q-n_{\cone}} 
      \X_{1}^{n},W_{1},W_{3})\nonumber\\
      \label{RSigmaI3}
&=H(\Y_1^{n}|\bS^{q-n_{\cone}} \X_{1}^{n})-H(\bS^{q-n_{\ctwo}}\X_{2}^{n}),
      \end{align}}
since given $W_1$ and $W_3$, the only randomness remaining in $\Y_1$ is that originating from $\X_2$. The third mutual information term in \eqref{RSigmaI1} satisfies
{\begin{align}
I(W_{2};\bS^{q-n_{\ctwo}} \X_{2}^n|W_{3})&=H(\bS^{q-n_{\ctwo}} \X_{2}^n|W_{3})-H(\bS^{q-n_{\ctwo}} \X_{2}^n|W_{2},W_{3})\nonumber\\
\label{RSigmaI4}
&=H(\bS^{q-n_{\ctwo}} \X_{2}^n),
\end{align}}
which follows since {$H(\bS^{q-n_{\ctwo}} \X_{2}^n|W_{2},W_{3})=0$} and since $\X_2$ is independent of $W_3$. Finally, the last mutual information term in \eqref{RSigmaI1} satisfies
{\begin{align}
I(W_{2};\Y_2^{n}|\bS^{q-n_{\ctwo}}\X_{2}^{n},W_{3})&=H(\Y_2^{n}|\bS^{q-n_{\ctwo}} \X_{2}^{n},W_{3})-H(\Y_2^{n}|\bS^{q-n_{\ctwo}}\X_{2}^{n},W_{2},W_{3})\nonumber\\
\label{RSigmaI5}
&=H(\Y_2^{n}|\bS^{q-n_{\ctwo}} \X_{2}^{n},W_{3})-H(\bS^{q-n_{\cone}} \X_{1}^{n}),
\end{align}}
since given $W_2$ and $W_3$, the only randomness in $\Y_2$ is that of $\X_1$. Now by substituting \eqref{RSigmaI2}-\eqref{RSigmaI5} in \eqref{RSigmaI1}, we obtain
{\begin{align*}
      n(R_{\Sigma}-\epsilon_n)&\leq H(\bS^{q-n_{\cone}} \X_{1}^{n})+H(\Y_1^{n}|\bS^{q-n_{\cone}} \X_{1}^{n})-H(\bS^{q-n_{\ctwo}} \X_{2}^{n})+H(\bS^{q-n_{\ctwo}} \X_{2}^n)\nonumber\\ 
      &\quad +H(\Y_2^{n}|\bS^{q-n_{\ctwo}} \X_{2}^{n},W_3)- H(\bS^{q-n_{\cone}} \X_{1}^{n}) \notag \\
&= H(\Y_1^{n}|\bS^{q-n_{\cone}} \X_{1}^{n})+H(\Y_2^{n}|\bS^{q-n_{\ctwo}} \X_{2}^{n},W_3).
\end{align*}}
Now, notice that given {$\bS^{q-n_{\cone}}\X_{1}^{n}$}, the top-most {$n_{\cone}$} components of $\X_1^{n}$ are known and can be subtracted from $\Y_1^{n}$ leaving {$\max\{n_{\done}-n_{\cone},n_{\ctwo},n_{\dthree}\}$} random components in $\Y_1$. The entropy of a binary vector is maximized if its components are i.i.d. with a Bernoulli distribution with probability $1/2$, and the maximum entropy is equal to the length of the vector. This leads to
{\begin{align*}
H(\Y_1^{n}|\bS^{q-n_{\cone}} \X_{1}^{n})&= \sum_{t=1}^{n}H(\Y_1[t]|\bS^{q-n_{\cone}} \X_{1}^{n},\Y_1^{t-1}) \notag \\
&{\overset{(a)}{\leq} \sum_{t=1}^{n}H(\Y_1[t]|\bS^{q-n_{\cone}} \X_{1}[t])}
\notag  \\
&\leq \sum_{t=1}^{n}\max\{n_{\done}-n_{\cone},n_{\ctwo},n_{\dthree}\} \notag \\
&= n\max\{n_{\done}-n_{\cone},n_{\ctwo},n_{\dthree}\},
\end{align*}}
{where step $(a)$ follows since conditioning does not increase the entropy.}
Similarly,
{\begin{align*}
H(\Y_2^{n}|\bS^{q-n_{\ctwo}} \X_{2}^{n},W_3)\leq n\max\{n_{\dtwo}-n_{\ctwo},n_{\cone}\}.
\end{align*}}
Therefore, we can write
{\begin{align*}
      n(R_{\Sigma}-\epsilon_n)&\leq n(\max\{n_{\done}-n_{\cone},n_{\ctwo},n_{\dthree}\}+\max\{n_{\dtwo}-n_{\ctwo},n_{\cone}\}).  
   \end{align*}}
By dividing the expression by $n$ and letting $n\to \infty$, we get   
   (\ref{UB1}) which concludes the proof.
\end{proof}

It can be easily checked that the upper bound of Lemma \ref{Lemma:UB1} reduces to {$(n_{\done}-n_{\cone}) + (n_{\dtwo}-n_{\ctwo})$ in sub-regimes \oneA{} and \oneB. Therefore, this lemma proves Theorem \ref{TINCapacity} for these sub-regimes.} 

The following is another upper bound on the sum-rate of the LD-PIMAC obtained by removing the interference from Tx$1$ to Rx$2$, i.e., giving $W_1$ to Rx$2$ as side information.
\begin{lemma}
\label{Lemma:UB2}
The sum-capacity of the LD-PIMAC is upper bounded as follows 
\begin{align}\label{UB2}
{C_{\mathrm{det},\Sigma}\leq \max\{n_{\done},n_{\ctwo},n_{\dthree}-n_{\cthree}\}+\max\{n_{\dtwo}-
      n_{\ctwo},n_{\cthree}\}.}
\end{align}
\end{lemma}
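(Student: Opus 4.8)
The plan is to mirror the genie-aided argument of Lemma~\ref{Lemma:UB1}, but with the roles of Tx$1$ and Tx$3$ at Rx$2$ interchanged. Concretely, I would first hand $W_1$ to Rx$2$ as side information. Since $\X_1^n$ is a deterministic function of $W_1$, Rx$2$ can then subtract the term $\bS^{q-n_{\cone}}\X_1^n$ from $\Y_2^n$, so that in the enhanced channel Rx$1$ still only sees interference from $\X_2$ while Rx$2$ now only sees interference from $\X_3$. This places the network in exactly the one-interferer-per-receiver form for which the IC-type bound of~\cite{BreslerTse} is tailored, and enhancing the channel in this way can only increase the sum-capacity.

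Next, following the same recipe as in Lemma~\ref{Lemma:UB1}, I would supply each receiver with the cross-link image of the signal that interferes at the \emph{other} receiver: give $\bS^{q-n_{\cthree}}\X_3^n$ (the way Tx$3$ is seen at Rx$2$) to Rx$1$, and $\bS^{q-n_{\ctwo}}\X_2^n$ (the way Tx$2$ is seen at Rx$1$) to Rx$2$. Starting from Fano's inequality and the independence of the messages, I would obtain
\[
n(R_\Sigma-\epsilon_n)\leq I(W_1,W_3;\Y_1^n,\bS^{q-n_{\cthree}}\X_3^n)+I(W_2;\Y_2^n,\bS^{q-n_{\ctwo}}\X_2^n,W_1),
\]
and then expand both mutual informations by the chain rule exactly as in \eqref{RSigmaI1}--\eqref{RSigmaI5}. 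The key bookkeeping is that the two genie entropies $H(\bS^{q-n_{\cthree}}\X_3^n)$ and $H(\bS^{q-n_{\ctwo}}\X_2^n)$ each appear once with a plus and once with a minus sign and cancel, leaving the clean form
\[
n(R_\Sigma-\epsilon_n)\leq H(\Y_1^n\mid\bS^{q-n_{\cthree}}\X_3^n)+H(\Y_2^n\mid\bS^{q-n_{\ctwo}}\X_2^n,W_1).
\]

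Finally I would bound the two conditional entropies by counting active bit-levels, as in Lemma~\ref{Lemma:UB1}. Conditioning $\Y_1$ on the top $n_{\cthree}$ bits of $\X_3$ shrinks the reach of $\X_3$ at Rx$1$ from $n_{\dthree}$ to $n_{\dthree}-n_{\cthree}$ levels, while $\X_1$ and $\X_2$ still reach $n_{\done}$ and $n_{\ctwo}$ levels; since all contributions are aligned to the least-significant end of $\Y_1$, the number of possibly-nonzero components is $\max\{n_{\done},n_{\ctwo},n_{\dthree}-n_{\cthree}\}$. Likewise, knowing $W_1$ removes $\X_1$ from $\Y_2$, and conditioning on the top $n_{\ctwo}$ bits of $\X_2$ shrinks the reach of $\X_2$ at Rx$2$ to $n_{\dtwo}-n_{\ctwo}$, leaving $\max\{n_{\dtwo}-n_{\ctwo},n_{\cthree}\}$ active components. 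Using that a binary vector's entropy is at most its length, together with the same single-letterization and conditioning-reduces-entropy steps as in step $(a)$ of Lemma~\ref{Lemma:UB1}, I would divide by $n$ and let $n\to\infty$ to obtain \eqref{UB2}.

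The step I expect to require the most care is the choice of genie: one must pair each receiver with the correct cross-link term ($\X_3$ at Rx$1$ and $\X_2$ at Rx$2$, rather than $\X_1$) so that the extra entropies cancel and the residual reach of the interferer at Rx$1$ collapses to $n_{\dthree}-n_{\cthree}$ instead of $n_{\dthree}$. A secondary point worth verifying is the level-counting when $n_{\cthree}>n_{\dthree}$ or $n_{\ctwo}>n_{\dtwo}$, where the residual reach would become negative; in those corner cases the $\max$ with the nonnegative terms $n_{\done}$ and $n_{\dtwo}-n_{\ctwo}$ dominates and keeps the bound valid.
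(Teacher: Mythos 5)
Your proposal is correct and follows essentially the same route as the paper: the paper's proof in Appendix~\ref{appUB2} gives exactly the genie $\bS^{q-n_{\cthree}}\X_3^n$ to Rx$1$ and $(\bS^{q-n_{\ctwo}}\X_2^n,W_1)$ to Rx$2$, obtains the same cross-cancellation of the two genie entropies, and bounds the two residual conditional entropies by the same level count. Your closing remarks on the choice of genie and on the corner case $n_{\cthree}>n_{\dthree}$ (where the $\max$ with $n_{\done}$ absorbs the negative term) are both accurate.
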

\begin{proof}
The proof of this lemma is similar to that of Lemma \ref{Lemma:UB1} where instead of $W_3$, we give $W_1$ to Rx$2$ as side information. Then, the resulting IC is treated similarly, and the desired upper bound is obtained. The details are given in Appendix \ref{appUB2}.
\end{proof}
By examining this upper bound {for the sub-regimes \twoA, \twoB, and \twoD}, it can be easily verified that the upper bound of Lemma \ref{Lemma:UB2} reduces to $n_{\dthree}-n_{\cthree}+\max\{n_{\dtwo}-n_{\ctwo},n_{\cthree}\}$. %{Moreover, by evaluating the upper bound in Lemma \ref{Lemma:UB2} for the case when , we obtain $n_{\dthree}$ as the upper bound for the sum-capacity of the LD-PIMAC in regime 7.}
Therefore, Lemma \ref{Lemma:UB2} proves Theorem \ref{TINCapacity} for the sub-regimes \twoA, \twoB, and \twoD.

It remains to prove Theorem \ref{TINCapacity} for the sub-regimes \oneC{} and \twoC.
%\begin{align}
%\begin{cases}
%n_{\done} - n_{\cone} +2n_{\cthree} \leq  n_{\dthree} < n_{\done} +n_{\cthree}\\
%n_{\cthree} \leq n_{\dtwo} - n_{\ctwo} \text{ and } n_{\done} - n_{\cone} < n_{\dthree} \leq n_{\done} - 2n_{\cone}+n_{\cthree} 
%\end{cases}
%\end{align}
For this purpose, we need two new upper bounds derived specifically for these two sub-regimes. {For establishing these two upper bounds the following Lemma is required. 
\begin{lemma}
   \label{Lemma:Entrop_diff_determ}
The difference between the entropies of $\boldsymbol{Y}_A=\bS^{\ell-\ell_1}\boldsymbol{A}\oplus\bS^{\ell-\ell_2}
      \boldsymbol{B}$ and $\boldsymbol{Y}_B = \bS^{\ell-\ell_1}\boldsymbol{A}\oplus\bS^{\ell-\ell_3}\boldsymbol{B}$, where $\boldsymbol{A}$ and $\boldsymbol{B}$ are two independent $\ell\times n$ random binary matrices with $\ell_1,\ell_2,\ell_3\in\mathbb{N}^0$, and $\ell_2\leq\ell_3-\ell_1$, satisfies
      \begin{align}\label{lem1}
         H(\boldsymbol{Y}_A)-H(\boldsymbol{Y}_B)\leq 0.
      \end{align}
   \end{lemma}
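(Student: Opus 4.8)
The plan is to sandwich both entropies between the same quantity, namely $H(\boldsymbol{A}_{[1:\ell_1]})+H(\boldsymbol{B}_{[1:\ell_2]})$, by proving $H(\boldsymbol{Y}_A)\le H(\boldsymbol{A}_{[1:\ell_1]})+H(\boldsymbol{B}_{[1:\ell_2]})\le H(\boldsymbol{Y}_B)$. The geometric fact I would use throughout is that $\bS^{\ell-m}\boldsymbol{X}$ places the top $m$ rows $\boldsymbol{X}_{[1:m]}$ of an $\ell\times n$ matrix into the bottom $m$ rows of the result and annihilates the remaining rows; hence $\bS^{\ell-m}\boldsymbol{X}$ is just a reindexed copy of $\boldsymbol{X}_{[1:m]}$, and all summands are aligned at the bottom. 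I would also use repeatedly that, since $\boldsymbol{A}$ and $\boldsymbol{B}$ are independent, any row-subset of $\boldsymbol{A}$ is independent of any row-subset of $\boldsymbol{B}$.

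For the upper bound, observe that $\boldsymbol{Y}_A=\bS^{\ell-\ell_1}\boldsymbol{A}\oplus\bS^{\ell-\ell_2}\boldsymbol{B}$ is a deterministic function of the pair $(\boldsymbol{A}_{[1:\ell_1]},\boldsymbol{B}_{[1:\ell_2]})$, because only these rows survive the shifts. Therefore $H(\boldsymbol{Y}_A)\le H(\boldsymbol{A}_{[1:\ell_1]},\boldsymbol{B}_{[1:\ell_2]})=H(\boldsymbol{A}_{[1:\ell_1]})+H(\boldsymbol{B}_{[1:\ell_2]})$, where the equality is the independence of $\boldsymbol{A}$ and $\boldsymbol{B}$.

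The lower bound on $H(\boldsymbol{Y}_B)$ is the crux. The hypothesis $\ell_2\le\ell_3-\ell_1$ forces $\ell_3\ge\ell_1$, so in $\boldsymbol{Y}_B=\bS^{\ell-\ell_1}\boldsymbol{A}\oplus\bS^{\ell-\ell_3}\boldsymbol{B}$ I would split the occupied rows into a pure-$\boldsymbol{B}$ block $\boldsymbol{Y}_B^{(P)}$, consisting of the topmost $\ell_3-\ell_1$ rows of $\boldsymbol{B}$'s region where $\boldsymbol{A}$ is absent, and a mixed block $\boldsymbol{Y}_B^{(M)}$, the bottom $\ell_1$ rows carrying $\boldsymbol{A}_{[1:\ell_1]}$ XORed with the lowest rows $\boldsymbol{B}_{[\ell_3-\ell_1+1:\ell_3]}$. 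Here $\boldsymbol{Y}_B^{(P)}$ is a copy of $\boldsymbol{B}_{[1:\ell_3-\ell_1]}$. The chain rule gives $H(\boldsymbol{Y}_B)=H(\boldsymbol{Y}_B^{(P)})+H(\boldsymbol{Y}_B^{(M)}\mid\boldsymbol{Y}_B^{(P)})$. For the first term, $\ell_2\le\ell_3-\ell_1$ yields $H(\boldsymbol{Y}_B^{(P)})=H(\boldsymbol{B}_{[1:\ell_3-\ell_1]})\ge H(\boldsymbol{B}_{[1:\ell_2]})$. For the second term, conditioning on the extra rows $\boldsymbol{B}_{[\ell_3-\ell_1+1:\ell_3]}$ can only decrease entropy, so $H(\boldsymbol{Y}_B^{(M)}\mid\boldsymbol{Y}_B^{(P)})\ge H(\boldsymbol{Y}_B^{(M)}\mid\boldsymbol{B}_{[1:\ell_3]})$; and given all of $\boldsymbol{B}_{[1:\ell_3]}$ the XOR becomes invertible, so $\boldsymbol{Y}_B^{(M)}$ determines $\boldsymbol{A}_{[1:\ell_1]}$ and conversely, whence the right-hand side equals $H(\boldsymbol{A}_{[1:\ell_1]}\mid\boldsymbol{B}_{[1:\ell_3]})=H(\boldsymbol{A}_{[1:\ell_1]})$ by independence. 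Adding the two bounds gives $H(\boldsymbol{Y}_B)\ge H(\boldsymbol{A}_{[1:\ell_1]})+H(\boldsymbol{B}_{[1:\ell_2]})$, and combining with the upper bound proves \eqref{lem1}.

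The main obstacle is the conditional term $H(\boldsymbol{Y}_B^{(M)}\mid\boldsymbol{Y}_B^{(P)})$: the temptation is to bound it crudely, whereas the clean route is to recognize that revealing the extra $\boldsymbol{B}$-rows turns the mixed block into an invertible map onto $\boldsymbol{A}_{[1:\ell_1]}$, whose entropy is left intact by the $\boldsymbol{B}$-conditioning precisely because of independence. I would also keep an eye on the degenerate cases $\ell_1=0$ or $\ell_2=0$, where the corresponding row-blocks are empty and their entropies vanish; the same chain of inequalities covers these automatically.
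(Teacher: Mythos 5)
Your proof is correct and follows essentially the same route as the paper: both arguments peel off the interference-free rows $\boldsymbol{B}_{[1:\ell_3-\ell_1]}$ of $\boldsymbol{Y}_B$ via the chain rule, reveal $\boldsymbol{B}_{[\ell_3-\ell_1+1:\ell_3]}$ to invert the XOR in the mixed block and recover $H(\boldsymbol{A}_{[1:\ell_1]})$ by independence, and then pass to $H(\boldsymbol{Y}_A)$ by data processing. The only cosmetic difference is that you sandwich both entropies around the explicit quantity $H(\boldsymbol{A}_{[1:\ell_1]})+H(\boldsymbol{B}_{[1:\ell_2]})$, whereas the paper runs the same steps as a single chain of inequalities after splitting $\boldsymbol{B}$ into five sub-blocks.
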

   \begin{proof}
   The proof of this lemma is given in Appendix \ref{appUB36}.
   \end{proof}}
{Now, we present the required upper bounds for completing the proof of Theorem \ref{TINCapacity} in the following lemma.}
{\begin{lemma}
\label{Lemma:Regimes3and6}
The sum-capacity of the LD-PIMAC with {$n_{\cone} + n_{\ctwo} \leq \min\{n_{\done},n_{\dtwo}\}$} is upper bounded by
\begin{align}
\label{UB4}
C_{\mathrm{det},\Sigma}&\leq n_{\dthree}-n_{\cthree}+\max\{n_{\cthree},n_{\dtwo}-n_{\ctwo}\} \qquad \qquad \,\text{ if } n_{\dthree} -2n_{\cthree} \geq n_{\done}- n_{\cone} \\ 
\label{UB3}
C_{\mathrm{det},\Sigma}&\leq n_{\done}-n_{\cone} + \max\{n_{\cthree},n_{\dtwo}-n_{\ctwo}\} \qquad \,\,\,\,\,\,\,\,\, \, \,\,\, \text{ if } n_{\dthree}-n_{\cthree} \leq n_{\done}-2n_{\cone}.
\end{align}
\end{lemma}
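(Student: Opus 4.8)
The plan is to establish both \eqref{UB4} and \eqref{UB3} (which are what remain to settle sub-regimes \twoC{} and \oneC) by a single genie-aided argument that, unlike Lemmas \ref{Lemma:UB1} and \ref{Lemma:UB2}, \emph{retains both interferers} at Rx$2$. I would hand Rx$2$ the side information $\bS^{q-n_{\ctwo}}\X_2^n$ and hand Rx$1$ the side information $\bS^{q-n_{\cthree}}\X_3^n$ (for \eqref{UB4}; respectively $\bS^{q-n_{\cone}}\X_1^n$ for \eqref{UB3}), keeping the genie-interferences $\bS^{q-n_{\cone}}\X_1^n\oplus\bS^{q-n_{\cthree}}\X_3^n$ intact at Rx$2$. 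Starting from Fano, $n(R_\Sigma-\epsilon_n)\le I(W_1,W_3;\Y_1^n,\bS^{q-n_{\cthree}}\X_3^n)+I(W_2;\Y_2^n,\bS^{q-n_{\ctwo}}\X_2^n)$, and expanding both terms by the chain rule as in the proof of Lemma \ref{Lemma:UB1}, the two copies of $H(\bS^{q-n_{\ctwo}}\X_2^n)$ cancel and one is left with
\begin{align*}
n(R_\Sigma-\epsilon_n)\le \underbrace{H(\bS^{q-n_{\cthree}}\X_3^n)+H(\Y_1^n\mid\bS^{q-n_{\cthree}}\X_3^n)-H(\bS^{q-n_{\cone}}\X_1^n\oplus\bS^{q-n_{\cthree}}\X_3^n)}_{=:T}+H(\Y_2^n\mid\bS^{q-n_{\ctwo}}\X_2^n).
\end{align*}

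The term $H(\Y_2^n\mid\bS^{q-n_{\ctwo}}\X_2^n)$ is the easy one: once the known top $n_{\ctwo}$ levels of $\X_2$ are removed, the residual signal at Rx$2$ occupies $\max\{n_{\cone},n_{\dtwo}-n_{\ctwo},n_{\cthree}\}$ levels, which equals $\max\{n_{\cthree},n_{\dtwo}-n_{\ctwo}\}$ because \eqref{eq:cond_LDPIMAC} forces $n_{\cone}\le n_{\dtwo}-n_{\ctwo}$; this already produces the second summand of \eqref{UB4}. All the difficulty is thus concentrated in the residual term, for which the target is $T\le n(n_{\dthree}-n_{\cthree})$ (and, by symmetry, $T\le n(n_{\done}-n_{\cone})$ for \eqref{UB3}).

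To bound the residual I would split $\Y_1$ at height $n_{\dthree}-n_{\cthree}$. Conditioned on $\bS^{q-n_{\cthree}}\X_3^n$, the top $n_{\cthree}$ levels of $\X_3$ are stripped, so $\X_3$ reaches only height $n_{\dthree}-n_{\cthree}$; since \eqref{UB4}'s hypothesis together with \eqref{eq:cond_LDPIMAC} gives $n_{\ctwo}\le n_{\done}-n_{\cone}\le n_{\dthree}-n_{\cthree}$, everything above height $n_{\dthree}-n_{\cthree}$ in $\Y_1$ is a pure function of the top $n_{\done}-(n_{\dthree}-n_{\cthree})$ symbols of $\X_1$, call them $\X_{1,\mathrm{top}}$. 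Hence $H(\Y_1^n\mid\bS^{q-n_{\cthree}}\X_3^n)\le H(\X_{1,\mathrm{top}}^n)+n(n_{\dthree}-n_{\cthree})$, and it remains to absorb $H(\bS^{q-n_{\cthree}}\X_3^n)+H(\X_{1,\mathrm{top}}^n)$ into the subtracted entropy. This is exactly where Lemma \ref{Lemma:Entrop_diff_determ} is required: the hypothesis $n_{\dthree}-2n_{\cthree}\ge n_{\done}-n_{\cone}$ is equivalent to $n_{\done}-(n_{\dthree}-n_{\cthree})\le n_{\cone}-n_{\cthree}$ (so in particular $n_{\cone}\ge n_{\cthree}$), which forces $\X_{1,\mathrm{top}}$ to occupy levels of $\bS^{q-n_{\cone}}\X_1^n\oplus\bS^{q-n_{\cthree}}\X_3^n$ lying strictly above the $\X_3$-contribution, i.e. interference-free, while $\bS^{q-n_{\cthree}}\X_3^n$ sits below. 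Applying Lemma \ref{Lemma:Entrop_diff_determ} with $\boldsymbol{A}=\X_3$, $\boldsymbol{B}=\X_1$ and the shift bookkeeping dictated by this level count yields $H(\bS^{q-n_{\cone}}\X_1^n\oplus\bS^{q-n_{\cthree}}\X_3^n)\ge H(\X_{1,\mathrm{top}}^n)+H(\bS^{q-n_{\cthree}}\X_3^n)$, and substituting back gives $T\le n(n_{\dthree}-n_{\cthree})$, which completes \eqref{UB4}.

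Bound \eqref{UB3} is then obtained by the mirror-image argument under the exchange of Tx$1$ and Tx$3$: give Rx$1$ the genie $\bS^{q-n_{\cone}}\X_1^n$, split $\Y_1$ at height $n_{\done}-n_{\cone}$ (now it is $\X_3$, reaching height $n_{\dthree}>n_{\done}-n_{\cone}$, whose top part is exposed), and invoke Lemma \ref{Lemma:Entrop_diff_determ} with $n_{\dthree}-n_{\cthree}\le n_{\done}-2n_{\cone}$ playing the role of the level-containment hypothesis. I expect the final Lemma \ref{Lemma:Entrop_diff_determ} step to be the delicate point: one must check that each regime inequality is exactly the condition $\ell_2\le\ell_3-\ell_1$ under which the entropy comparison holds, and, crucially, that the whole chain is distribution-free, i.e. valid for arbitrary code-induced input laws (possibly correlated across levels and across time), not merely for i.i.d. uniform inputs — which is precisely the reason Lemma \ref{Lemma:Entrop_diff_determ} is invoked rather than an elementary level-counting that would only be rigorous under independence.
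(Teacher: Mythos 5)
Your proof is correct and is, at heart, the same argument as the paper's: a genie-aided bound with the identical genie $\bS^{q-n_{\ctwo}}\X_2^n$ at Rx$2$, Fano plus the chain rule, and the entropy-comparison machinery of Lemma \ref{Lemma:Entrop_diff_determ} invoked under exactly the regime condition. The one structural difference is the genie at Rx$1$: the paper hands Rx$1$ the combined sum $\s_1^n=\bS^{q-(n_{\done}-n_{\dthree}+n_{\cthree})^+}\X_1^n\oplus\bS^{q-n_{\cthree}}\X_3^n$, so that the positive term $H(\s_1^n)$ is directly comparable to the subtracted term $H(\bS^{q-n_{\cone}}\X_1^n\oplus\bS^{q-n_{\cthree}}\X_3^n)$ via the \emph{statement} of Lemma \ref{Lemma:Entrop_diff_determ}, and $H(\Y_1^n\mid\s_1^n)\le n(n_{\dthree}-n_{\cthree})$ is then immediate; you hand Rx$1$ only $\bS^{q-n_{\cthree}}\X_3^n$ and recover the missing top-of-$\X_1$ contribution by splitting $\Y_1$ at height $n_{\dthree}-n_{\cthree}$. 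Both routes close, but note one imprecision in yours: the inequality you need, $H(\bS^{q-n_{\cone}}\X_1^n\oplus\bS^{q-n_{\cthree}}\X_3^n)\ge H(\X_{1,\mathrm{top}}^n)+H(\bS^{q-n_{\cthree}}\X_3^n)$, does \emph{not} follow from the statement of Lemma \ref{Lemma:Entrop_diff_determ} (applied literally with $\ell_2=m:=(n_{\done}-(n_{\dthree}-n_{\cthree}))^+$ it lower-bounds the left side by $H(\bS^{q-m}\X_1^n\oplus\bS^{q-n_{\cthree}}\X_3^n)$, which is only $\le$ your right side by subadditivity, so the chain points the wrong way). It does follow from the intermediate step of that lemma's proof in Appendix \ref{appUB36}, namely $H(\boldsymbol{Y}_B)\ge H(\boldsymbol{B}_1,\boldsymbol{B}_2,\boldsymbol{A}_1)$ with $\boldsymbol{B}=\X_1$, $\boldsymbol{A}=\X_3$, or from a two-line chain-rule argument: since $m\le n_{\cone}-n_{\cthree}$, the block $\X_{1,\mathrm{top}}$ sits on interference-free levels of the sum, and conditioning the overlapped bottom levels on all of $\X_1$ leaves exactly $H(\bS^{q-n_{\cthree}}\X_3^n)$ by independence. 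Make that step explicit (and its mirror image for \eqref{UB3}) and your proof is complete; everything else, including the distribution-free nature of the bounds, checks out.
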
}
\begin{proof}
\begin{figure}[h]
   \centering
   \begin{tikzpicture}[scale=.75]
\UBRegimeSix
\end{tikzpicture}
   \caption{The block representation of $\Y_1$ {and the elements $q-(n_{\done}-n_{\dthree})^+-n_{\cthree}+1:q$ of $\s_1 =\bS^{q-(n_{\done} -n_{\dthree}+n_{\cthree} )^+}\X_1 \oplus \bS^{q-n_{\cthree}} \X_3 $. }} \label{fig:UB_R6}
\end{figure}
{First, we establish the upper bound given in {\eqref{UB4}}. To do this, we give {$$\s_1^n =\bS^{q-(n_{\done} -n_{\dthree}+n_{\cthree} )^+}\X_1^n\oplus \bS^{q-n_{\cthree}} \X_3^n$$} as side information to Rx$1$ and {$\s_2^n=\bS^{q-n_{\ctwo}}\X_2^n$} to Rx$2$. Note that the side information provided to Rx$1$ is the top-most bits of $\Y_1^n$ upto the first $n_{\cthree}$ most significant bits of $\X_3^n$ (see Fig. \ref{fig:UB_R6}).}  {Obviously, by giving these side information, the resulting PIMAC channel is more capable than the original PIMAC. Then, we use Fano's inequality to write}
\begin{align}\label{UBnb2na}
      n(R_{\Sigma}-\epsilon_n)&\leq I(\X_1^n,\X_3^n;\Y_1^n, \s_1^n)+I(\X_2^n;
      \Y_2^n,\s_2^n),
\end{align}
where $\epsilon_n\to0$ as $n\to\infty$. Using the chain rule, we obtain
{ \begin{align}\label{UBnb2nb}
n(R_{\Sigma}-\epsilon_n)\leq& I(\X_1^n,\X_3^n;\s_1^n) + I(\X_1^n,\X_3^n;\Y_1^n|\s_1^n) +I(\X_2^n;
\s_2^n) +{I(\X_2^n;
\Y_2^n|\s_2^n)}.
\end{align}}
{Next, we consider each of the mutual information terms in \eqref{UBnb2nb} separately. Using the definition of $\s_1^n$, the first term can be rewritten as}
{\begin{align}
I(\X_1^n,\X_3^n;\s_1^n)= & H(\bS^{q-(n_{\done} -n_{\dthree}+n_{\cthree} )^+}\X_1^n\oplus \bS^{q-n_{\cthree}} \X_3^n) - H(\bS^{q-(n_{\done} -n_{\dthree}+n_{\cthree} )^+}\X_1^n\oplus \bS^{q-n_{\cthree}} \X_3^n|\X_1^n,\X_3^n) \notag \\ 
= & H(\bS^{q-(n_{\done} -n_{\dthree}+n_{\cthree} )^+}\X_1^n\oplus \bS^{q-n_{\cthree}} \X_3^n)\label{UBnb2nc}.
\end{align}}
{Now consider the second mutual information term in \eqref{UBnb2nb}}
{\begin{align}
I(\X_1^n,\X_3^n;\Y_1^n|\s_1^n) = & H(\Y_1^n|\s_1^n) - H(\Y_1^n| \s_1^n,\X_1^n,\X_3^n) \notag\\ 
=& H(\Y_1^n|\bS^{q-(n_{\done} -n_{\dthree}+n_{\cthree} )^+}\X_1^n\oplus \bS^{q-n_{\cthree}} \X_3^n) - H(\bS^{q-n_{\ctwo}}\X_2^n), \label{UBnb2nd}
\end{align}}
since given $\X_1^n$ and $\X_3^n$, the remaining randomness in $\Y_1^n$ is that of $\X_2^n$ with $\X_2^n$ being independent of $\X_1^n$ and $\X_3^n$. Note also that $\s_1^n$ is independent of $\X_2^n$. By using the definition of $\s_2^n$, the third mutual information term in \eqref{UBnb2nb} satisfies
{\begin{align}
I(\X_2^n;\s_2^n) &=  H(\bS^{q-n_{\ctwo}}\X_2^n)  - H(\bS^{q-n_{\ctwo}}\X_2^n|\X_2^n) \notag \\ 
& = H(\bS^{q-n_{\ctwo}}\X_2^n),\label{UBnb2ne}
\end{align}}
{since {$H(\bS^{q-n_{\ctwo}}\X_2^n|\X_2^n)=0$}. Finally, the last term in \eqref{UBnb2nb} is rewritten as}
{\begin{align}
I(\X_2^n;
\Y_2^n|\s_2^n)  =& H(\Y_2^n|\bS^{q-n_{\ctwo}}\X_2^n)- H(\Y_2^n|\bS^{q-n_{\ctwo}}\X_2^n,\X_2^n)  \notag \\ 
=& H(\Y_2^n|\bS^{q-n_{\ctwo}}\X_2^n)- H( \bS^{q-n_{\cthree}}\X_3^n \oplus\bS^{q-n_{\cone}}\X_1^n),\label{UBnb2nf}
\end{align}}
{since given $\X_2^n$, the only randomness remaining in $\Y_2^n$ is that of $\X_1^n$ and $\X_3^n$. Moreover, $\X_1^n$ and $\X_3^n$ are independent of $\X_2^n$. Now, substituting \eqref{UBnb2nc}, \eqref{UBnb2nd}, \eqref{UBnb2ne}, and \eqref{UBnb2nf} into \eqref{UBnb2nb}, we obtain}
{\begin{align}\label{UBnb2ng}
n(R_{\Sigma}-\epsilon_n)\leq&   H(\bS^{q-(n_{\done} -n_{\dthree}+n_{\cthree} )^+}\X_1^n\oplus \bS^{q-n_{\cthree}} \X_3^n)  + H(\Y_1^n|\bS^{q-(n_{\done} -n_{\dthree}+n_{\cthree} )^+}\X_1^n\oplus \bS^{q-n_{\cthree}} \X_3^n) \notag \\ & - H(\bS^{q-n_{\ctwo}}\X_2^n)  +H(\bS^{q-n_{\ctwo}}\X_2^n) + H(\Y_2^n|\bS^{q-n_{\ctwo}}\X_2^n)- H( \bS^{q-n_{\cthree}} \X_3^n\oplus\bS^{q-n_{\cone}}\X_1^n) .
\end{align}}
Now, we write the sum of the first and the last terms in \ref{UBnb2ng} as follows
\begin{align*}%\label{UBnb2nh1}
& H(\bS^{q-(n_{\done} -n_{\dthree}+n_{\cthree} )^+}\X_1^n\oplus \bS^{q-n_{\cthree}} \X_3^n) - H( \bS^{q-n_{\cthree}}\X_3^n \oplus \bS^{q-n_{\cone}}\X_1^n) \\
&  = 
\begin{cases} 
H( \bS^{q-n_{\cthree}}\X_3^n \oplus \bS^{q-(n_{\done}-n_{\dthree}+n_{\cthree})}\X_1^n) - H( \bS^{q-n_{\cthree}}\X_3^n \oplus \bS^{q-n_{\cone}}\X_1^n) \overset{(a)}{\leq} 0 &\text{if } 0 \leq n_{\done}-n_{\dthree}+n_{\cthree} \\ 
H(\bS^{q-n_{\cthree}}\X_3^n) - H( \bS^{q-n_{\cthree}}\X_3^n \oplus \bS^{q-n_{\cone}}\X_1^n) \overset{(b)}{\leq}  0 & \text{otherwise,}
\end{cases}
\end{align*}
where step $(a)$ follows from Lemma \ref{Lemma:Entrop_diff_determ} and {the condition of \eqref{UB4}}. Moreover, step $(b)$ holds since
\begin{align*}
H(\bS^{q-n_{\cthree}}\X_3^n) - H( \bS^{q-n_{\cthree}}\X_3^n \oplus \bS^{q-n_{\cone}}\X_1^n) \overset{}{\leq} H(\bS^{q-n_{\cthree}}\X_3^n) - H( \bS^{q-n_{\cthree}}\X_3^n \oplus \bS^{q-n_{\cone}}\X_1^n|\X_1^n) =0,
\end{align*}
where we used the facts that conditioning does not increase the entropy and $\X_1^n$ and $\X_3^n$ are independent. 
%By using Lemma \ref{Lemma:Entrop_diff_determ} and keeping {the condition of this upper bound} in mind, we can write 
%{\begin{align}\label{UBnb2nh}
%& H(\bS^{q-(n_{\done} -n_{\dthree}+n_{\cthree} )^+}\X_1^n\oplus \bS^{q-n_{\cthree}} \X_3^n) - H( \bS^{q-n_{\cthree}}\X_3^n \oplus \bS^{q-n_{\cone}}\X_1^n)\\
%&\qquad  \qquad = H( \bS^{q-n_{\cthree}}\X_3^n \oplus \bS^{q-(n_{\done}-n_{\dthree}+n_{\cthree})}\X_1^n) - H( \bS^{q-n_{\cthree}}\X_3^n \oplus \bS^{q-n_{\cone}}\X_1^n) \leq 0.
%\end{align}}
{Therefore, the expression in \eqref{UBnb2ng} is upper bound as follows}
{\begin{align}\label{UBnb2ni}
n(R_{\Sigma}-\epsilon_n) \leq&   H(\Y_1^n|\bS^{q-(n_{\done}-n_{\dthree}+n_{\cthree})^+}\X_1^n\oplus \bS^{q-n_{\cthree}} \X_3^n)+ H(\Y_2^n|\bS^{q-n_{\ctwo}}\X_2^n)\notag \\
%\overset{(a)}{\leq}& H(\Y_{1,[(n_{\done}-n_{\dthree})^+ +n_{\cthree}+1:\max\{n_{\done},n_{\dthree}\}]}^n) + H(\Y_{2,[n_{\ctwo}+1:n_{\dtwo}]}^n)  \notag \\
%%%%%%%%%%%%%%%%%%%%%%%%%%%%%%%%%%%%%%%%%%%%%%%%%%%%%%%
\overset{(a)}{\leq}& H(\Y_{1,[q-(n_{\dthree}-n_{\cthree}) +1:q]}^n) + H(\Y_{2,[q-(n_{\dtwo}-n_{\ctwo})+1:q]}^n)  \notag \\
\overset{(b)}{\leq} & n(n_{\dthree}-n_{\cthree}+\max\{n_{\cthree},n_{\dtwo}-n_{\ctwo}\}\notag,
\end{align}
}
{where in $(a)$, we use the fact that conditioning does not increase the entropy. Moreover, in $(b)$ we use the fact that the i.i.d. Bernoulli distribution $1/2$ maximizes the entropy terms. So, by dividing by $n$ and letting $n\to\infty$, $\epsilon\to 0$, the sum rate in this regime can be bounded as}
{\begin{align}
   R_{\Sigma}\leq n_{\dthree}-n_{\cthree}+\max\{n_{\cthree},n_{\dtwo}-n_{\ctwo}\},
\end{align}}
{which proves \eqref{UB4}.}

The proof for {upper bound in \eqref{UB3} is similar to \eqref{UB4}} where instead of {$\bS^{q-(n_{\done} -n_{\dthree}+n_{\cthree} )^+}\X_1^n\oplus \bS^{q-n_{\cthree}} \X_3^n$}, we give {$\bS^{q-n_{\cone}}\X_1^n\oplus \bS^{q-(n_{\dthree}-n_{\done}+n_{\cone})^+} \X_3^n$} to Rx$1$ as side information. Then, the resulting PIMAC is treated similarly and we obtain the upper bound {$R_{\Sigma}\leq n_{\done}-n_{\cone} +\max\{n_{\cthree},n_{\dtwo}-n_{\ctwo}\}$}. The details are given in Appendix \ref{appUB3}.
\end{proof}
It can be easily checked that the upper bound of Lemma \ref{Lemma:Regimes3and6} reduces to $(n_{\done}-n_{\cone}) + (n_{\dtwo}-n_{\ctwo})$, $(n_{\dthree}-n_{\cthree}) + (n_{\dtwo}-n_{\ctwo})$ in sub-regimes \oneC{} and \twoC. Therefore, this lemma proves Theorem \ref{TINCapacity} for these sub-regimes.
This concludes the proof of the converse of Theorem \ref{TINCapacity} for regimes 1 and 2. 
Consequently, with this, the optimality of TDMA-TIN in regimes 1 and 2 of the LD-PIMAC is shown. For the remaining regimes (3), {TDMA-TIN} is not optimal. In fact, in this regime, {a combination of common signalling and interference alignment with TIN} achieves higher rates. This is discussed in the next subsection.
\subsection{Sub-optimality of TIN}
\label{Det:regime3}
Both naive-TIN and TDMA-TIN are sub-optimal in regime 3. In order to show this, we propose an alternative scheme which outperforms the presented TIN schemes. The proposed scheme which is called \IACP{}, is based on private and common signalling with interference alignment~\cite{CadambeJafar_KUserIC}. 
The following proposition summarizes the achievable sum-rate using the proposed scheme in this subsection.
\begin{mypro}
\label{Pro:Rsum789}
The following sum-rate is achievable by using \IACP{} in a PIMAC with {$n_{\cone}+n_{\ctwo}\leq \min\{n_{\done},n_{\dtwo}\}$.}
\begin{align}
\label{Rsum3A}
R_{\Sigma} &= \min\{n_{\dthree}+(n_{\dtwo}-n_{\ctwo}), n_{\cthree} + (n_{\done}-n_{\cone})\} & \text{regime \threeA}\\
\label{Rsum3B}
R_{\Sigma} &= \min\{n_{\done}+n_{\cthree},(2n_{\dthree}-n_{\cthree})-(n_{\done}-n_{\cone})\} & \text{regime \threeB}\\
\label{Rsum3C}
R_{\Sigma} &= (n_{\dtwo}-n_{\ctwo}) + \min \left\{
2\mu-\nu, n_{\done}-(n_{\cone}-n_{\cthree})^{+}, 
n_{\dthree}-(n_{\cthree}-n_{\cone})^{+} \right\}   & \text{regime \threeC}
\end{align}
where $\mu=\max\{n_{\dthree}-n_{\cthree},n_{\done}-n_{\cone}\}$ and $\nu=\min\{n_{\dthree}-n_{\cthree},n_{\done}-n_{\cone}\}$.
\end{mypro}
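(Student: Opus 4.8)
The statement is an achievability claim, so the plan is to exhibit, in each of the sub-regimes \threeA{}, \threeB{}, and \threeC{}, an explicit linear-deterministic \IACP{} signalling scheme and to read off its sum-rate. First I would fix the common structure of the scheme. Tx3 superimposes three layers: a \emph{common} layer $\boldsymbol{u}_{3,c}$ in its most significant levels (intended to be decoded at both receivers), an \emph{aligned} layer $\boldsymbol{u}_{3,a}$, and a \emph{private} layer $\boldsymbol{u}_{3,p}$ in its least significant levels. Tx1 uses an aligned layer $\boldsymbol{u}_{1,a}$ together with a private layer $\boldsymbol{u}_{1,p}$, while Tx2 places two private layers $\boldsymbol{u}_{2,p1},\boldsymbol{u}_{2,p2}$ into the level-gaps that interference leaves open at Rx2. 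The defining idea is interference alignment at Rx2: the transmit positions of $\boldsymbol{u}_{1,a}$ and $\boldsymbol{u}_{3,a}$ are chosen so that, after the channel shifts $\bS^{q-n_{\cone}}$ and $\bS^{q-n_{\cthree}}$, they occupy \emph{the same} received levels at Rx2, so their joint interference footprint there is that of a single block rather than the sum of two. At Rx1 the same layers pass instead through $\bS^{q-n_{\done}}$ and $\bS^{q-n_{\dthree}}$; a short computation shows that alignment at Rx2 forces non-alignment at Rx1 precisely when $n_{\done}-n_{\cone}\neq n_{\dthree}-n_{\cthree}$, which is exactly the degenerate line excluded from regime~3 by Remark~\ref{Remark:PIMAC_IC}. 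Hence the two aligned layers remain separable at Rx1.

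Second, for each sub-regime I would pin down the layer lengths as explicit functions of the channel parameters, subject to that sub-regime's inequalities from Definition~\ref{Regimes} (Fig.~\ref{Fig:SubRegime}); these lengths are exactly what the claimed expressions \eqref{Rsum3A}--\eqref{Rsum3C} count. In \threeC{} the allocation branches on the sign of $n_{\cone}-n_{\cthree}$ and on which of $n_{\dthree}-n_{\cthree}$ and $n_{\done}-n_{\cone}$ is larger; this is the source of the $(n_{\cone}-n_{\cthree})^{+}$, $(n_{\cthree}-n_{\cone})^{+}$ and $\mu,\nu$ terms, and I would handle these cases with separate level assignments.

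Third comes decodability. At Rx1, which must recover both $W_1$ and $W_3$, the interference from Tx2 is confined to the bottom $n_{\ctwo}$ levels; the remaining levels carry the common, aligned, and private layers of Tx1 and Tx3, which by the non-alignment established above sit on distinct levels, so $W_1$ and $W_3$ are read off directly. At Rx2, which recovers $W_2$, I would decode and subtract the common layer $\boldsymbol{u}_{3,c}$ from the top levels first; the aligned layer then occupies only its single designated block, while $\boldsymbol{u}_{1,p}$ and $\boldsymbol{u}_{3,p}$ lie below the $n_{\cone}$- resp.\ $n_{\cthree}$-level windows seen at Rx2 and are therefore invisible there, leaving Tx2's two private layers decodable. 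Summing $R_1=|\boldsymbol{u}_{1,a}|+|\boldsymbol{u}_{1,p}|$, $R_3=|\boldsymbol{u}_{3,c}|+|\boldsymbol{u}_{3,a}|+|\boldsymbol{u}_{3,p}|$, and $R_2=|\boldsymbol{u}_{2,p1}|+|\boldsymbol{u}_{2,p2}|$ and simplifying under each regime's constraints should reproduce \eqref{Rsum3A}--\eqref{Rsum3C}, with the outer $\min\{\cdots\}$ reflecting whether the binding constraint is the desired-signal room at Rx1 or the interference-free room at Rx2.

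The main obstacle I expect is the bookkeeping that makes the alignment simultaneously exact at Rx2, harmless to the desired layers at Rx1, and consistent with each sub-regime's inequalities — in particular verifying at the regime boundaries that the private layers really sink below Rx2's window without overlapping the levels reserved for Tx2, and that the common layer is decodable above everything else at Rx2. The case split in \threeC{} (sign of $n_{\cone}-n_{\cthree}$ and ordering of $n_{\dthree}-n_{\cthree}$ versus $n_{\done}-n_{\cone}$) is where this level-counting is heaviest, and it is the part I would push into the detailed figures and, if needed, an appendix.
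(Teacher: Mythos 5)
Your proposal follows essentially the same route as the paper: the same common/aligned/private layering at the three transmitters, alignment of $\boldsymbol{u}_{1,a}$ and $\boldsymbol{u}_{3,a}$ at Rx$2$ with separability at Rx$1$ guaranteed by $n_{\dthree}-n_{\cthree}\neq n_{\done}-n_{\cone}$, the same decodability conditions, and the same sub-regime-dependent choice of layer lengths (the paper records these in Tables \ref{Table:Rc} and \ref{Table:RA} and sums them to obtain Table \ref{Table:RSigma}). The only cosmetic difference is that the paper additionally fixes a decoding order for the aligned layers at Rx$1$ depending on the sign of $(n_{\dthree}-n_{\cthree})-(n_{\done}-n_{\cone})$, which it notes is immaterial in the deterministic case.
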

Now we describe the scheme that achieves the sum-rate given in this proposition. 
\begin{remark}
A more sophisticated interference alignment scheme which achieves higher rates for the PIMAC was given in~\cite{BuehlerWunder}. Since our aim here is to show the sub-optimality of {TIN}, the following simple alignment scheme suffices.
\end{remark}
\subsubsection{Interference alignment with common and private signalling (\IACP{} scheme)}
\label{sec:IA-CM_PR}
We construct $\X_1$, $\X_2$, and $\X_3$ as follows 
\begin{align}
\label{TxSignalsAlign}
    \X_{1} = \begin{bmatrix}
    \boldsymbol{0}_{\ell_1} \\ \boldsymbol{u}_{1,a} \\ {\boldsymbol{0}_{n_{\cone}-\ell_1-R_{a}} }\\ \boldsymbol{u}_{1,p}\\ {\boldsymbol{0}_{q-n_c-R_{1,p}}}
    \end{bmatrix},\quad 
      \X_{2} = \begin{bmatrix}
        \boldsymbol{0}_{n_{\ctwo}} \\ \boldsymbol{u}_{2,p1} \\ \boldsymbol{0}_{R_a} \\ \boldsymbol{u}_{2,p2}\\ {\boldsymbol{0}_{q-n_{\ctwo}-R_{2,p1}-R_{2,p2}-R_a}}
        \end{bmatrix}, \quad 
\X_{3} = \begin{bmatrix}
            \boldsymbol{u}_{3,c} \\ \boldsymbol{0}_{\ell_3} \\ \boldsymbol{u}_{3,a} \\ {\boldsymbol{0}_{n_{\cthree}-R_{3,c}-\ell_3-R_{a}}} \\ \boldsymbol{u}_{3,p} \\
            {\boldsymbol{0}_{ 	q-n_{\cthree}-R_{3,p}}}
            \end{bmatrix},
\end{align}
where $R_a$ is the length of vectors $\boldsymbol{u}_{1,a}$ and $\boldsymbol{u}_{3,a}$ and the sub-script $a$ refers to alignment signals, and $R_{1,p}$, $R_{2,p1}$, $R_{2,p2}$, and $R_{3,p}$ are the lengths of the vectors $\boldsymbol{u}_{1,p}$, $\boldsymbol{u}_{2,p1}$, $\boldsymbol{u}_{2,p2}$, and $\boldsymbol{u}_{3,p}$ and the sub-script $p$ refers to private signals. The common signal vector $\boldsymbol{u}_{3,c}$ has a length of 
\begin{align}
\label{R3c}
R_{3,c} = \min\{[n_{\dthree}-(n_{\done} - n_{\cone})]^+,[n_{\cthree}-(n_{\dtwo} - n_{\ctwo})]^+\}.
\end{align}
For sake of simplicity, the value of $R_{3,c}$ is given in Table \ref{Table:Rc}.
\begin{table}[h]
\centering
\begin{tabular}{ |c| c | c |c| }
\hline
\footnotesize{$R_{3,c}$} & $ N_{{\dthree} 1} < n_{\dthree} \leq N_{{\dthree} 2}$ & $ N_{{\dthree} 2} < n_{\dthree} < N_{{\dthree} 3}$  & $ N_{{\dthree} 3} < n_{\dthree} < N_{{\dthree} 4}$ \\ \hline
$n_{\dtwo}-n_{\ctwo}< n_{\cthree}$ & $\min\{n_{\dthree}-(n_{\done} - n_{\cone}),n_{\cthree}-(n_{\dtwo} - n_{\ctwo})\}$ & $n_{\cthree}-(n_{\dtwo} - n_{\ctwo})$ & $n_{\cthree}-(n_{\dtwo} - n_{\ctwo})$ \\ \hline 
$ n_{\cthree}\leq n_{\dtwo}-n_{\ctwo}$ & \text{Out of regime 3} & $0$ & $0$ \\ \hline 
\end{tabular}
\caption{$N_{{\dthree} 1} = \min\{n_{\done}-n_{\cone}, n_{\cthree} + n_{\done}-2n_{\cone}\}$, $N_{{\dthree} 2} = \max\{n_{\done}-n_{\cone}, n_{\cthree} + n_{\done}-2n_{\cone}\}$, $N_{{\dthree} 3} = n_{\cthree} + n_{\done}-n_{\cone}$, and $N_{{\dthree} 4}=\min \{n_{\cthree} + n_{\done}, n_{\done}-n_{\cone}+2n_{\cthree}\}$.}
\label{Table:Rc}
\end{table}
The $\ell_1$ and $\ell_3$ zeros introduced in $\X_1$ and $\X_3$ are used to shift $\boldsymbol{u}_{1,a}$ and $\boldsymbol{u}_{3,a}$ down appropriately (power allocation). We fix these parameters as follows
\begin{align}
\label{Ell1and3}
\ell_1 = (n_{\cone}-n_{\cthree})^+,\qquad \ell_3=(n_{\cthree}-n_{\cone}-R_{3,c})^+. 
\end{align}
A graphical illustration of the received signals at both receivers is shown in Fig. \ref{Fig:Align} for the case when $n_{\dtwo}-n_{\ctwo}<n_{\cthree}$.
As it is shown in this figure, the private signals are not received at undesired receivers. This can be guaranteed by allocating the private signals of Tx$1$ and Tx$3$ to the lowest  $n_{\done}-n_{\cone}$ bits of $\X_1$ and the lowest $n_{\dthree}-n_{\cthree}$ bits of $\X_3$, respectively. Therefore, the private signals of the MAC transmitters are received at the lower-most $\max\{n_{\dthree}-n_{\cthree},n_{\done}-n_{\cone}\}$ bits of $\Y_1$. Since the private signals from Tx$1$ and Tx$3$ are treated as in a multiple access channel at Rx$1$, their sum-rate is fixed by
\begin{align}
\label{Private1and3}
R_{1,p}+ R_{3,p}  = \max\{n_{\dthree}-n_{\cthree},n_{\done}-n_{\cone}\}.
\end{align}
Moreover, the private signal of Tx$2$ must not be received at Rx$1$, thus it must be sent at the lowest $n_{\dtwo}-n_{\ctwo}$ bits of $\X_2$.
%It is worth noting that $R_{3,c}$ is chosen such that $\boldsymbol u_{3,c}$ does not have any overlap with private signals $\boldsymbol{u}_{1,p}$ and $\boldsymbol{u}_{3,p}$ at Rx$1$.

\begin{figure}
\centering
\begin{tikzpicture}[scale=.75]
\IACOMTIN
\end{tikzpicture}
\caption{A graphical illustration showing the received signals at receivers 1 and 2 for the case that $n_{\done}-n_{\cone} < n_{\dthree}-n_{\cthree}$ and $n_{\dtwo} - n_{\ctwo} < n_{\cthree}$ when the transmit signals are constructed as in \eqref{TxSignalsAlign}.}
\label{Fig:Align}
\end{figure}
The main idea of the scheme is to align the vectors $\boldsymbol{u}_{1,a}$ and $\boldsymbol{u}_{3,a}$ at Rx$2$ while they are received without any overlap at Rx$1$. To align these vectors at Rx$2$, the condition 
\begin{align}
n_{\cone}-\ell_1 = n_{\cthree}-R_{3,c}-\ell_3 \label{eq:alignmnent_cond}
\end{align}
must be satisfied. This condition is indeed satisfied by our choice of $R_{3,c}$ in \eqref{R3c} and $\ell_1$, $\ell_3$ in \eqref{Ell1and3}.
 Moreover, the aligned signal $\boldsymbol{u}_{1,a}$ and $\boldsymbol{u}_{3,a}$ must not have an overlap with private signal of Tx$2$ at Rx$2$. Due to this, the private signal of Tx$2$ is split into two parts, $\boldsymbol{u}_{2,p1}$, and $\boldsymbol{u}_{2,p2}$ with sum-rate
\begin{align}
\label{Private2}
R_{2,p1} + R_{2,p2} = n_{\dtwo}-n_{\ctwo} - R_{a}.
\end{align}

Now, we are ready to discuss the reliability of decoding at the receivers. First, consider Rx$2$. Since $R_{3,c}$ in \eqref{R3c} is chosen such that $\boldsymbol{u}_{3,c}$ does not overlap the private and alignment signals at Rx$2$, Rx$2$ is able to decode $\boldsymbol{u}_{3,c}$. Due to the condition in \eqref{Private2}, Rx$2$ is able to decode $\boldsymbol{u}_{2,p1}$, $\boldsymbol{u}_{1,a}\oplus \boldsymbol{u}_{3,a}$, and $\boldsymbol{u}_{2,p2}$ as long as 
\begin{align}
R_a\leq n_{\cone}-\ell_1.
\label{eq:all_pr_Rx2}
\end{align}
Notice that since $R_a\leq n_{\cone}\leq n_{\dtwo}-n_{\ctwo}$, the sum $R_{2,p1}+R_{2,p2}$ in \eqref{Private2} is non-negative.

Now, consider Rx$1$. 
In order to guarantee that the common signal vector $\boldsymbol{u}_{3,c}$ is decodable at Rx$1$, an overlap between $\boldsymbol{u}_{3,c}$ and the alignment signal vectors ($\boldsymbol{u}_{1,a}$, $\boldsymbol{u}_{3,a}$) and private signal vectors ($\boldsymbol{u}_{1,p}$, $\boldsymbol{u}_{3,p}$) at Rx$1$ needs to be avoided. An overlap between $\boldsymbol{u}_{3,c}$ and private signal vectors is avoided by the choice of $R_{3,c}$ in \eqref{R3c}. While an overlap between $\boldsymbol{u}_{3,c}$ and $\boldsymbol{u}_{3,a}$ is avoided by the alignment condition in \eqref{eq:alignmnent_cond}, the following condition has to be satisfied
\begin{align}
R_{3,c} \leq (n_{\dthree} - (n_{\done}-\ell_1))^+ \quad \text{ if } 0<R_{a}, \label{eq:cond_R3c_Ra}
\end{align}
to guarantee that $\boldsymbol{u}_{3,c}$ does not overlap $\boldsymbol{u}_{1,a}$ at Rx$1$.
Now, we need to guarantee that Rx$1$ decodes $\boldsymbol{u}_{3,a}$ and $\boldsymbol{u}_{1,a}$ reliably. For decoding these signal vectors, we address a decoding order.  
The order of decoding these signal vectors depends on the sign of $S = (n_{\dthree}-n_{{\cthree}}) - (n_{\done} - n_{\cone})$. If $S$ is positive (see Fig. \ref{Fig:Align}), $\boldsymbol{u}_{3,a}$ is received on the top of $\boldsymbol{u}_{1,a}$ at Rx$1$ and hence, Rx$1$ decodes $\boldsymbol{u}_{1,a}$ first after decoding $\boldsymbol{u}_{3,a}$ and vice verse.\footnote{{The decoding order in the deterministic case is not important. However, {it} is important in the Gaussian setup. In order to have {the same decoding procedure} for both models, we use the decoding order also in the deterministic case.}} 
An example for the case when $S$ is negative is illustrated in Fig. \ref{Fig:Align_Sneg}.
\begin{figure}
\centering
\begin{tikzpicture}[scale=.75]
\IACOMTINSneg
\end{tikzpicture}
\caption{A graphical illustration showing the received signals at receivers 1 and 2 for the case that $n_{\done}-n_{\cone} > n_{\dthree}-n_{\cthree}$ and $n_{\dtwo} - n_{\ctwo} > n_{\cthree}$ when the transmit signals are constructed as in \eqref{TxSignalsAlign}.}
\label{Fig:Align_Sneg}
\end{figure}
Regardless of the order of decoding, an overlap between vectors $\boldsymbol{u}_{1,a}$ and $\boldsymbol{u}_{3,a}$ at Rx$1$ has to be avoided. To this end, we have following conditions 
\begin{align}\label{eq:AS_8_4}
R_{a}\leq
\begin{cases}
n_{\dthree}-(\ell_3 + R_{3,c})-(n_{\done}-\ell_1) & \text{if } S>0  \\
(n_{\done}-\ell_1) - (n_{\dthree}-(\ell_3 + R_{3,c})) & \text{if } S < 0 
\end{cases}.
\end{align} 
By substituting $\ell_1$, $\ell_3$ in \eqref{Ell1and3} and $R_{3,c}$ in \eqref{R3c} into \eqref{eq:AS_8_4}, and setting $\mu=\max\{n_{\dthree}-n_{\cthree},n_{\done}-n_{\cone}\}$ and $\nu=\min\{n_{\dthree}-n_{\cthree},n_{\done}-n_{\cone}\}$, we can rewrite the conditions in \eqref{eq:AS_8_4} as  
\begin{align}\label{eq:AS_8_6}
R_{a}\leq \mu-\nu.
\end{align} 
%Since in sub-regimes \threeB{} and \threeC{}, $\mu-\nu \leq n_{\cone}-\ell_1$, the condition \eqref{eq:AS_8_6} is tighter than \eqref{eq:all_pr_Rx2} in these sub-regimes.
In addition to this, vectors $\boldsymbol{u}_{1,a}$ and $\boldsymbol{u}_{3,a}$ must not have any overlap with private vectors $\boldsymbol{u}_{1,p}$ and {$\boldsymbol{u}_{3,p}$}. Due to this, the following condition has to be satisfied
\begin{align}
R_{a}\leq (\min\{n_{\done}-\ell_1,n_{\dthree} - (\ell_3 + R_{3,c})\}-\underbrace{\max\{n_{\dthree}-n_{\cthree},n_{\done}-n_{\cone}\}}_{R_{1,p} + R_{3,p}})^+. \label{eq:AS_8_7}
\end{align}
By using \eqref{R3c}, \eqref{Ell1and3}, and definition of $\mu$ and $\nu$, we rewrite \eqref{eq:AS_8_7} as 
\begin{align}
R_a \leq (\min\{n_{\done}-(n_{\cone}-n_{\cthree})^+,n_{{\dthree}} - (n_{\cthree}-n_{\cone})^+\}-\mu)^+. \label{eq:AS_8_8}
\end{align}
Combining the condition in \eqref{eq:all_pr_Rx2}, \eqref{eq:AS_8_6} and \eqref{eq:AS_8_8}, we obtain
\begin{align}
R_a \leq (\min\{n_{\cone}-\ell_1 + \mu,2\mu-\nu,n_{\done}-(n_{\cone}-n_{\cthree})^+,n_{{\dthree}} - (n_{\cthree}-n_{\cone})^+\}-\mu)^+. \label{eq:AS_8_9}
\end{align}
The rate of the aligned signals $R_a$ is given in Table \ref{Table:RA}. Note that both conditions \eqref{eq:cond_R3c_Ra} and \eqref{eq:AS_8_9} are satisfied by chosen $R_a$ and $R_{3,c}$ in Table \ref{Table:Rc} and \ref{Table:RA}.
{
%The rate allocation in Table \ref{Table:RA} together with \eqref{R3c} satisfy the condition in \eqref{eq:cond_R3c_Ra}.
\begin{table}[h]
\centering
\begin{tabular}{ |c| c | c |c| }
\hline
\footnotesize{$R_a$} & $ N_{{\dthree} 1} < n_{\dthree} \leq N_{{\dthree} 2}$ & $ N_{{\dthree} 2} < n_{\dthree} < N_{{\dthree} 3}$  & $ N_{{\dthree} 3} < n_{\dthree} < N_{{\dthree} 4}$ \\ \hline
$n_{\dtwo}-n_{\ctwo}< n_{\cthree}$ & 0 & $0$ & $\min\{n_{\done}+n_{{\cthree}}-n_{\dthree},\mu-\nu\}$ \\ \hline 
$n_{\cone}< n_{\cthree}\leq n_{\dtwo}-n_{\ctwo}$ & \text{Out of regime 3} & $\min\{(n_{\dthree} - n_{\cthree}) - (n_{\done} - 2n_{\cone}),\mu-\nu\} $ & $\min\{n_{\done}+n_{{\cthree}}-n_{\dthree},\mu-\nu\}$ \\ \hline 
$ n_{\cthree}\leq n_{\cone}$ & \text{Out of regime 3} & $\min\{n_{\dthree}-(n_{\done}-n_{\cone}),\mu-\nu\}$  & $\min\{(n_{\done}-n_{\cone})-(n_{{\dthree}}-2n_{\cthree}),\mu-\nu\}$ \\ \hline
\end{tabular}
\caption{$N_{{\dthree} 1} = \min\{n_{\done}-n_{\cone}, n_{\cthree} + n_{\done}-2n_{\cone}\}$, $N_{{\dthree} 2} = \max\{n_{\done}-n_{\cone}, n_{\cthree} + n_{\done}-2n_{\cone}\}$, $N_{{\dthree} 3} = n_{\cthree} + n_{\done}-n_{\cone}$, and $N_{{\dthree} 4}=\min \{n_{\cthree} + n_{\done}, n_{\done}-n_{\cone}+2n_{\cthree}\}$}
\label{Table:RA}
\end{table}
%%%%%%%%%%%%%%%%%%%%%%%%%%%%%%%%%%
\begin{remark}
One can improve the proposed scheme by choosing a non-zero $R_a$ for the case that $n_{\dtwo} - n_{\ctwo}<n_{\cthree}$ and $n_{\done}-2n_{\cone}<n_{\dthree}-n_{\cthree}<n_{\done}-n_{\cone}$. Since  our goal is to outperform TDMA-TIN, we avoided using {an} alignment signal in this case to decrease the complexity of the scheme.
\end{remark}}

Using this scheme, we achieve 
\begin{align}
R_\Sigma &= R_{1,p} + R_{3,p} + R_{2,p1} + R_{2,p2} + 2R_{a} + R_{3,c}. \label{eq:R_sigma_Al_1}
\end{align}
By substituting \eqref{Private1and3} and \eqref{Private2} into \eqref{eq:R_sigma_Al_1}, we obtain
\begin{align}
R_\Sigma &= \max\{n_{\dthree}-n_{\cthree},n_{\done}-n_{\cone}\}  + (n_{\dtwo} -  n_{\ctwo})+ R_{a} + R_{3,c}. \label{eq:R_sigma_Al_2}
\end{align}
Now, by using the chosen $R_a$ and $R_{3,c}$ in Table \ref{Table:Rc} and \ref{Table:RA}, we obtain the achievable sum-rate. This is given in Table \ref{Table:RSigma}, which completes the proof of Proposition \ref{Pro:Rsum789}. 

\begin{table}
\centering
\begin{tabular}{ |c| c | c |c| }
\hline
{$R_\Sigma$} & $ N_{{\dthree} 1} < n_{\dthree} \leq N_{{\dthree} 2}$ & $ N_{{\dthree} 2} < n_{\dthree} < N_{{\dthree} 3}$  & $ N_{{\dthree} 3} < n_{\dthree} < N_{{\dthree} 4}$ \\ \hline
$n_{\dtwo}-n_{\ctwo}< n_{\cthree}$ & \multicolumn{2}{|c|}{$\min\{n_{\dthree}+(n_{\dtwo}-n_{\ctwo}), n_{\cthree} + (n_{\done}-n_{\cone})\}$} & $\min\{n_{\done}+n_{\cthree},(2n_{\dthree}-n_{\cthree})-(n_{\done}-n_{\cone})\}$ \\ \hline 
$n_{\cthree}\leq n_{\dtwo}-n_{\ctwo}$ & \text{Out of regime 3} & \multicolumn{2}{|c|}{$(n_{\dtwo}-n_{\ctwo}) + \min \left\{
2\mu-\nu, n_{\done}-(n_{\cone}-n_{\cthree})^{+}, 
n_{\dthree}-(n_{\cthree}-n_{\cone})^{+} \right\} $}  \\ \hline 
\end{tabular}
\caption{$N_{{\dthree} 1} = \min\{n_{\done}-n_{\cone}, n_{\cthree} + n_{\done}-2n_{\cone}\}$, $N_{{\dthree} 2} = \max\{n_{\done}-n_{\cone}, n_{\cthree} + n_{\done}-2n_{\cone}\}$, $N_{{\dthree} 3} = n_{\cthree} + n_{\done}-n_{\cone}$, and $N_{{\dthree} 4}=\min \{n_{\cthree} + n_{\done}, n_{\done}-n_{\cone}+2n_{\cthree}\}$}
\label{Table:RSigma}
\end{table}

\iffalse
Let $R_A=(\min\{n_{{\dthree}} - (n_{\cthree}-n_{\cone})^+,n_{\done}-(n_{\cone}-n_{\cthree})^+\}-\mu)^+$.  The value of $R_A$ depending on the channel strength is given in Table \ref{Table:RA}.
\begin{table}[h]
\centering
\begin{tabular}{ |c| c | c |c| }
\hline
\footnotesize{$R_A$} & $ N_{{\dthree} 1} < n_{\dthree} < N_{{\dthree} 2}$ & $ N_{{\dthree} 2} < n_{\dthree} < N_{{\dthree} 3}$  & $ N_{{\dthree} 3} < n_{\dthree} < N_{{\dthree} 4}$ \\ \hline
$n_{\dtwo}-n_{\ctwo}< n_{\cthree}$ & 0 & $(n_{\dthree} - n_{\cthree}) - (n_{\done} - 2n_{\cone})$ & $n_{\done}+n_{{\cthree}}-n_{\dthree}$ \\ \hline 
$n_{\cone}< n_{\cthree}\leq n_{\dtwo}-n_{\ctwo}$ & \text{Out of regime 3} & $(n_{\dthree} - n_{\cthree}) - (n_{\done} - 2n_{\cone}) $ & $n_{\done}+n_{{\cthree}}-n_{\dthree}$ \\ \hline 
$ n_{\cthree}\leq n_{\cone}$ & \text{Out of regime 3} & $n_{\dthree}-(n_{\done}-n_{\cone})$  & $(n_{\done}-n_{\cone})-(n_{{\dthree}}-2n_{\cthree})$ \\ \hline
\end{tabular}
\caption{$N_{{\dthree} 1} = \min\{n_{\done}-n_{\cone}, n_{\cthree} + n_{\done}-2n_{\cone}\}$, $N_{{\dthree} 2} = \max\{n_{\done}-n_{\cone}, n_{\cthree} + n_{\done}-2n_{\cone}\}$, $N_{{\dthree} 3} = n_{\cthree} + n_{\done}-n_{\cone}$, and $N_{{\dthree} 4}=\min \{n_{\cthree} + n_{\done}, n_{\done}-n_{\cone}+2n_{\cthree}\}$}
\label{Table:RA}
\end{table}
\fi

\subsubsection{Comparison with TDMA-TIN}
Now, we need to show that the sum-rate in Proposition \ref{Pro:Rsum789} is higher than the achievable sum-rate using the TDMA-TIN given in \eqref{DetAchTDMATIN} for regime 3.
We show this for sub-regimes \threeA, \threeB, and \threeC{}  separately. 

First, consider sub-regime \threeA. In this sub-regime, TDMA-TIN achieves
\begin{align}
R_{\Sigma,\text{TDMA-TIN}} & =  \max\{n_{\dthree}, (n_{\done}-n_{\cone}) + (n_{\dtwo}-n_{\ctwo}) , (n_{\dthree} -n_{\ctwo} )^+ + (n_{\dtwo}-n_{\cthree})^+\} \label{eq:R_TT_3A1}\\
&=\max\{n_{\dthree}, (n_{\done}-n_{\cone}) + (n_{\dtwo}-n_{\ctwo}) , n_{\dthree} -n_{\ctwo} + (n_{\dtwo}-n_{\cthree})^+\}\\
& \overset{(a)}{=} \max\{n_{\dthree},n_{\done}-n_{\cone} + n_{\dtwo}-n_{\ctwo}\}, \label{eq:R_TT_3A3}
\end{align}
where step $(a)$ follows since 
$n_{\dthree}-n_{\ctwo} + (n_{\dtwo}-n_{{\cthree}})^+= n_{\dthree}-n_{\ctwo} + \max\{n_{\dtwo},n_{{\cthree}}\}-n_{{\cthree}} = n_{\dthree} +\max\{ n_{\dtwo}-n_{\ctwo} -n_{\cthree},-n_{\ctwo}\}  \leq n_{\dthree} $ since $n_{\cthree}>n_{\dtwo} - n_{\ctwo}$.
Using the definition of sub-regime \threeA{} and the condition in \eqref{eq:cond_LDPIMAC},
%, $n_{\dtwo}-n_{\ctwo}<n_{\cthree} $ and  $n_{\done}-n_{\cone}<n_{\dthree}<n_{\cthree} + n_{\done}-n_{\cone}$, 
we upper bound the expression in \eqref{eq:R_TT_3A3} by
\begin{align}
R_{\Sigma,\text{TDMA-TIN}} < \min\{n_{\dthree} +(n_{\dtwo}-n_{\ctwo}), n_{\cthree} + (n_{\done}-n_{\cone})\}.\label{eq:R_3A_IA}
\end{align}
Note that \eqref{eq:R_3A_IA} is the achievable sum-rate given in Proposition \ref{Pro:Rsum789} for sub-regime \threeA. Therefore, the scheme \IACP{} outperforms TDMA-TIN and consequently naive-TIN in this sub-regime.
%This shows that the sum-rate in \eqref{Rsum3A} in Proposition \ref{Pro:Rsum789} is higher than the achievable sum-rate using TDMA-TIN for sub-regime \threeA. 
Now, consider sub-regime \threeB. Doing similar steps as in \eqref{eq:R_TT_3A1}-\eqref{eq:R_TT_3A3}, we can write the achievable sum-rate using the TDMA-TIN scheme for sub-regime \threeB{} as
\begin{align}
R_{\Sigma,\text{TDMA-TIN}} & = \max\{n_{\dthree},n_{\done}-n_{\cone} + n_{\dtwo}-n_{\ctwo}\}. \label{eq:R_TT_3B}
\end{align}
Due to the conditions of sub-regime \threeB, we have $(n_{\dtwo}-n_{\ctwo}) + (n_{\done} -n_{\cone})<n_{\cthree}+(n_{\done}-n_{\cone})<n_{\dthree}$. Hence, we rewrite \eqref{eq:R_TT_3B} as 
\begin{align}
R_{\Sigma,\text{TDMA-TIN}} & = n_{\dthree}. \label{eq:R_TT_3B_2}
\end{align} 
Since in sub-regime \threeB, $n_{\cthree}+(n_{\done}-n_{\cone})<n_{\dthree}<n_{\done}+n_{\cthree}$, the achievable sum-rate in \eqref{eq:R_TT_3B_2} is bounded by
\begin{align}
R_{\Sigma,\text{TDMA-TIN}} &< \min\{n_{\done}+n_{\cthree}, (2n_{\dthree}-n_{\cthree})-(n_{\done}-n_{\cone})\}. \label{eq:R_sum3B}
\end{align}
The expression in \eqref{eq:R_sum3B} coincides with the achievable sum-rate in Proposition \ref{Pro:Rsum789} for sub-regime \threeB. Hence, we conclude that the scheme \IACP{} outperforms TDMA-TIN and naive-TIN in sub-regime \threeB.
%sum-rate given in Proposition \ref{Pro:Rsum789} is higher than \eqref{DetAchTDMATIN} for sub-regime \threeB. 
Finally, we consider sub-regime \threeC. 
In this sub-regime TDMA-TIN achieves
\begin{align*}
R_{\Sigma,\text{TDMA-TIN}}=(n_{\dtwo}-n_{\ctwo})+\max\{n_{\dthree}-n_{\cthree},n_{\done}-n_{\cone}\}.
\end{align*}
In sub-regimes \threeC, the achievable sum-rate of \IACP{} is 
\begin{align*}
R_\Sigma &=(n_{\dtwo}-n_{\ctwo}) + \min \left\{2\mu-\nu, n_{\done}-(n_{\cone}-n_{\cthree})^{+}, n_{\dthree}-(n_{\cthree}-n_{\cone})^{+} \right\},
\end{align*}
where $\mu=\max\{n_{\dthree}-n_{\cthree},n_{\done}-n_{\cone}\}$ and $\nu=\min\{n_{\dthree}-n_{\cthree},n_{\done}-n_{\cone}\}$. This can be rewritten as
\begin{align}
R_\Sigma &=(n_{\dtwo}-n_{\ctwo}) +\mu + \min \left\{\mu-\nu, n_{\done}-(n_{\cone}-n_{\cthree})^{+}-\mu, n_{\dthree}-(n_{\cthree}-n_{\cone})^{+} -\mu\right\}\notag \\
&=R_{\Sigma,\mathrm{TDMA-TIN}} + \min \left\{\mu-\nu, n_{\done}-(n_{\cone}-n_{\cthree})^{+}-\mu, n_{\dthree}-(n_{\cthree}-n_{\cone})^{+} -\mu\right\}. \label{eq:R_IA3C}
\end{align}
{The $\min$ expression above is the rate of the aligned signal vector which is given in Table \ref{Table:RA}. Notice that in sub-regime \threeC, this $\min$ expression is positive.} Hence, the sum-rate in \eqref{eq:R_IA3C} is higher than the achievable sum-rate using TDMA-TIN in sub-regime \threeC. 
Thus, both TDMA-TIN and naive-TIN are sub-optimal in regime 3.

{\begin{remark}
The expression $\mu-\nu$ is equal to zero, when $n_{\dthree}-n_{\cthree}=n_{\done}-n_{\cone}$. Note that this is the case which is excluded from our analysis (cf. Remark \ref{Remark:PIMAC_IC}). In this case, \IACP{} achieves the same sum-rate as TDMA-TIN. 
This is due to the fact that in this special case, the LD-PIMAC can be modelled as an IC with inputs $\tilde{\X}_1=\bS^{q-n_{\done}}\X_1\oplus\bS^{q-n_{\dthree}}\X_3$ and $\X_2$ and outputs $\Y_1=\tilde{\X}_1\oplus\bS^{q-n_{\ctwo}}\X_2$ and {$\Y_2=\bS^{n_{\done}-n_{\cone}}\tilde{\X}_1\oplus\bS^{q-n_{\dtwo}}\X_2$}. Obviously, aligning the interference signals at the undesired receiver while they are separable at the desired receiver is not doable in 2-user IC. Hence, \IACP{} cannot outperform TDMA-TIN.
\end{remark}
\begin{remark}
Interestingly, on the whole line $n_{\dthree}-n_{\cthree}=n_{\done}-n_{\cone}$, TDMA-TIN achieves the sum-capacity of the LD-PIMAC. Moreover, naive-TIN is optimal only when $n_{\cthree} \leq n_{\dtwo}-n_{\ctwo}$. This is shown in Appendix \ref{app:Special_case_Det}.
\end{remark}}

\subsection{Discussion}
{The analysis of this section shows two interesting results about the optimality of TIN. Firstly, there are some regimes where TIN is sub-optimal, although we have very-weak interference, in the sense that {the strongest interference caused by a user plus the strongest interference it receives is less than or equal to the strongest desired channel parameter, i.e., }
% the interference parameter is smaller than half the desired channel parameter, i.e.,
{
\begin{align}
\label{Noisy}
\max\{n_{\cthree},n_{\cone}\} + n_{\ctwo} &\leq n_{\dtwo} \\ 
\max\{n_{\cthree},n_{\cone}\} + n_{\ctwo} &\leq \max\{n_{\done},n_{\dthree}\} .\label{Noisy2}
%\\ 
%n_c&\leq\max\left\{\frac{n_d}{2},\frac{n_{\dthree}}{2}\right\}\\
%\label{Noisy2}
%\max\{n_c,n_{\cthree}\}&\leq \frac{n_d}{2}.
\end{align}}
Secondly, there are some regimes where interference is not very-weak (according to \eqref{Noisy} and \eqref{Noisy2}), but still TIN is optimal.}

Regarding the first point, \IACP{} (in regime 3) leads to a better performance than using plain {TIN}. This conclusion is particularly interesting in regimes where both receivers experience very-weak interference according to conditions \eqref{Noisy} and \eqref{Noisy2}.
{Note that the 2-user IC which consists of Tx$1$, Tx$2$, Rx$1$, and Rx$2$, {operates} in the noisy interference regime ($n_{\cone} + n_{\ctwo} \leq \min\{n_{\done},n_{\dtwo}\}$). By adding to this setup a transmitter which has a strong channel to its desired receiver ($n_{\done}<n_{\dthree}$) and {which} causes {very} weak interference to the undesired Rx ($n_{\cthree}<n_{\cone}$), we obtain a PIMAC which satisfies \eqref{Noisy} and \eqref{Noisy2}. One would expect that {TIN} is optimal in this case. However, even in this case interference alignment might outperform {TIN} although the channel parameters satisfy \eqref{Noisy} {and \eqref{Noisy2}}.
For instance, if $(n_{\done},n_{\cone},n_{\dtwo},n_{\ctwo},
n_{\dthree},n_{\cthree})=(8,4,7,2,9,3)$, then while conditions \eqref{Noisy} and \eqref{Noisy2} are satisfied, the channel is in sub-regime \threeC{} where {\IACP{}} outperforms TDMA-TIN.}
Interestingly, the given example is also noisy according to~\cite{GengNaderializadehAvestimehrJafar} where the noisy interference regime of the IC is defined as the case where the desired signal of each user is stronger than the sum of the strongest interference it receives and the strongest interference it causes\footnote{It does not follow from~\cite{GengNaderializadehAvestimehrJafar} that the same {TIN} optimality condition has to hold {for} the PIMAC.}. If we apply this condition to the PIMAC in this example, we can see that the sum of the strongest produced and received interference is $n_{\ctwo}+\max\{n_{\cone},n_{\cthree}\}=6$ which is smaller than direct channel parameters $n_{\done}$, $n_{\dtwo}$ and $n_{\dthree}$, but still {TIN} is sub-optimal.

{Regarding the second point, it can be seen that the interference in the parts of regimes 1, 2 (for instance when $n_{\cthree}>n_{\dtwo}-n_{\ctwo}$) cannot be characterized as very-weak (according to \eqref{Noisy} and \eqref{Noisy2}). However, TIN is still optimal in these regimes.}
\begin{remark}
In a recent parallel and independent work~\cite{GengSunJafar2014}, the optimality of the TIN for $M\times N$ $X$ channel has been studied, where $M$ denotes the number of transmitters and $N$ denotes the number of receivers. Note that {the} PIMAC is a special case of the $X$ channel with $M=3$, $N=2$, and with the rates of some messages set to zero. {By specializing the conditions of~\cite[Theorem 3]{GengSunJafar2014} to the PIMAC, we conclude that TIN is optimal in sub-regime \oneA, part of sub-regime \oneB{} ($n_{\cthree}\leq n_{\dtwo}-n_{\ctwo}$, $n_{\dthree}\leq n_{\done}-n_{\cone}$), sub-regime \twoA{}, and sub-regime \twoB. In this work, we show that TIN is optimal in regimes 1 and 2.} Note that this subsumes and extends the regimes identified as noisy by~\cite{GengSunJafar2014} for the PIMAC. This extension is partly due to the structure of the channel (for regime 1, no message from Tx$3$ to Rx$2$ contrary to the $X$ channel) and partly due to our new lemma (Lemma \ref{Lemma:Entrop_diff_determ}) and upper bounds \eqref{UB4} and \eqref{UB3} developed in Lemma \ref{Lemma:Regimes3and6}.
\end{remark}
\begin{remark}
{By applying Lemma \ref{Lemma:Entrop_diff_determ} to the $3\times 2$ $X$ channel, and proceeding similar to the proof of \eqref{UB4}, \eqref{UB3}, we extend in \cite{GherekhlooChaabanSezginISWCS2014} the noisy interference regimes identified in~\cite{GengSunJafar2014} to new regimes. These new regimes are equivalent to the sub-regimes \oneC{} and \twoC{} in PIMAC. 
%Applying \cite[Theorem 1]{GherekhlooChaabanSezginISWCS2014} to the $3\times 2$ $X$ channel with $n_{\cone} + n_{\ctwo}\leq\min\{n_{\done},n_{\dtwo}\}$ and $n_{\cthree}< n_{\dtwo} - n_{\ctwo}$, the noisy interference regime is given by
%\begin{align}
%\{\forall(n_{\done},n_{\cone},n_{\dtwo},n_{\ctwo},n_{\dthree},n_{\cthree})|&
%\max\{n_{\dthree}-(n_{\cthree}-n_{\cone})^+,n_{\ctwo}\} \leq n_{\done} - n_{\cone}, \notag \\ & \text{ or }
% \max\{n_{\done}-(n_{\cone}-n_{\cthree})^+,n_{\ctwo}\} \leq n_{\dthree}-n_{\cthree}\}.
%\end{align}
%This regime is larger than the noisy interference regime which we obtain by applying \cite[Theorem 3]{GengSunJafar2014} which is 
%\begin{align}
%\{\forall(n_{\done},n_{\cone},n_{\dtwo},n_{\ctwo},n_{\dthree},n_{\cthree})|&
%\max\{n_{\dthree},n_{\ctwo}\} \leq n_{\done} - n_{\cone}, \notag \\ & \text{ or }
% \max\{n_{\done},n_{\ctwo}\} \leq n_{\dthree}-n_{\cthree}\}.
%\end{align}
}
\end{remark}

\section{TIN in the Gaussian PIMAC}
\label{GaussPIMAC}
For the linear deterministic PIMAC, we have shown that the naive-TIN scheme is optimal only in sub-regimes \oneA{} and \twoA, {while} TDMA-TIN {is optimal in} regimes 1 and 2. In this section, we assess the optimality of naive-TIN and TDMA-TIN in the Gaussian case by finding the gap between the upper bound and the achievable sum-rates in the regimes where this gap can be upper bounded by a constant. 
\subsection{Regimes under consideration in Gaussian PIMAC}
{Similar to the LD-PIMAC, we divide the parameter space of the Gaussian PIMAC into several} regimes defined similar to the deterministic case ({Definition \ref{Regimes}}), with $n_i$ replaced by $\alpha_k$ for $k\in\{\done,\cone,\dtwo,\ctwo,{\dthree},{\cthree}\}$. {The channel parameters in the Gaussian PIMAC are summarized in Table \ref{Tab:channel_param_Gauss}.}
\begin{table}[h]
\centering
\begin{tabular}{ |c| c |  }
\hline
LD-PIMAC & Gaussian PIMAC \\ \hline
$n_{\done}$ & $\alpha_{\done}$\\ \hline
$n_{\cone}$ & $\alpha_{\cone}$\\ \hline
$n_{\dtwo}$ & $\alpha_{\dtwo}$\\ \hline
$n_{\ctwo}$ & $\alpha_{\ctwo}$\\ \hline
$n_{\dthree}$ & $\alpha_{\dthree}$\\ \hline
$n_{\cthree}$ & $\alpha_{\cthree}$\\ \hline
\end{tabular}
\caption{{{Related} channel parameters in the Gaussian and linear deterministic PIMAC.}}
\label{Tab:channel_param_Gauss}
\end{table}

Using the insights from linear deterministic PIMAC, we establish the upper bounds for the Gaussian case, as follows. 
\begin{mythe}
\label{Thm:GaussUpperBounds}
The sum-capacity of the Gaussian PIMAC is upper bounded by
\begin{align}
	  \label{UBG1}
      C_{\mathrm{G},\Sigma} \leq& \log_2\left(1+\R^{\alpha_{\ctwo}}+\R^{\alpha_{\dthree}}+
\dfrac{\R^{\alpha_{\done}}}{1+\R^{\alpha_{\cone}}}\right) +\log_2\left(1+\R^{\alpha_{\cone}}+\dfrac{\R^{\alpha_{\dtwo}}}{1+\R^{\alpha_{\ctwo}}}\right),\\
%%%%%%%%%%%%%%%%%%%%%%%%%%%%%%%%%%%%%%%%%%%%%%%%%%%%%%%%%%
      \label{UBG2}
      C_{\mathrm{G},\Sigma}\leq&\log_2\left(1+\R^{\alpha_{\ctwo}}+\R^{\alpha_{\done}}
+\dfrac{\R^{\alpha_{\dthree}}}
      {1+\R^{\alpha_{\cthree}}}\right)
      +\log_2\left(1+\R^{{\alpha_{\cthree}}}+
      \dfrac{\R^{\alpha_{\dtwo}}}{1+\R^{\alpha_{\ctwo}}}\right),\\
%%%%%%%%%%%%%%%%%%%%%%%%%%%%%%%%%%%%%%%%%%%%%%%%%%%%%%%%%%%
      \label{UBG3}
      C_{\mathrm{G},\Sigma}\leq&  \log_2\left(1+ \R^{\alpha_{\ctwo}}+\frac{\R^{\alpha_{\done}}+ \R^{\alpha_{\dthree}}}{1+\R^{{\alpha_{\cone}-\alpha_{\done}}}(\R^{\alpha_{\done}}+\R^{\alpha_{\dthree}})} \right) + \log_2\left(1+\R^{{\alpha_{\cthree}}} + \R^{{\alpha_{\cone}}}+\frac{\R^{\alpha_{\dtwo}}} {1+\R^{\alpha_{\ctwo}}}\right) + 1 ,\notag \\ 
      & \text{if} \quad \alpha_{\dthree} -\alpha_{\done}\leq \alpha_{\cthree} -2\alpha_{\cone},
      \\
%%%%%%%%%%%%%%%%%%%%%%%%%%%%%%%%%%%%%%%%%%%%%%%%%%%%%%%%%%%
      \label{UBG4}
      C_{\mathrm{G},\Sigma}\leq& \log_2\left(1+ \R^{\alpha_{\ctwo}}+\frac{ \R^{\alpha_{\done}} +\R^{\alpha_{\dthree}}}{1+\R^{\alpha_{\cthree} - \alpha_{\dthree}}( \R^{\alpha_{\done}} + \R^{\alpha_{\dthree}})} \right) +  
\log_2\left(1+\R^{\alpha_{\cthree}} + \R^{\alpha_{\cone}}+ \frac{\R^{\alpha_{\dtwo}}}{1+\R^{\alpha_{\ctwo}}}\right) +            
             1,\notag \\&  \text{if} \quad \alpha_{\done}-\alpha_{\dthree} \leq \alpha_{\cone} - 2\alpha_{\cthree}.
%%%%%%%%%%%%%%%%%%%%%%%%%%%%%%%%%%%%%%%%%%%%%%%%%%%%%%%%%%%             
%P\left(\frac{|h_d|}{|h_{\dthree}|}\right)^2 |h_{\cthree}|^2 \leq \left(\frac{|h_c|}{|h_{\cthree}|}\right)^2 %\quad \text{and} \quad 1<\frac{|h_c|}{|h_{\cthree}|}.
      \end{align}   
   \end{mythe}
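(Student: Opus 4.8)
The plan is to treat the four inequalities as the Gaussian counterparts of the deterministic Lemmas~\ref{Lemma:UB1}, \ref{Lemma:UB2}, and the two parts of Lemma~\ref{Lemma:Regimes3and6}, and to reproduce the corresponding genie-aided converses in the Gaussian setting. In each case I would begin from Fano's inequality, hand out carefully chosen side information so that each receiver effectively sees only a single interferer, apply the chain rule together with the independence of the messages to collapse the ``genie'' entropy terms, and finally bound the surviving conditional differential entropies by those of Gaussian vectors with the same conditional covariance, using the maximum-entropy property of the Gaussian distribution. The passage from the deterministic ``subtract the known top bits'' step to the Gaussian setting is made by giving \emph{noisy} versions of the cross-link signals as side information, so that conditioning replaces a signal by its MMSE residual rather than by a hard truncation.

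For \eqref{UBG1} I would mirror the proof of Lemma~\ref{Lemma:UB1}: give $W_3$ to Rx$2$, give $s_1^n=h_{\cone}x_1^n+\tilde z_1^n$ to Rx$1$ and $s_2^n=h_{\ctwo}x_2^n+\tilde z_2^n$ to Rx$2$, where $\tilde z_j\sim\mathbb{C}\mathcal{N}(0,1)$. Expanding $I(W_1,W_3;y_1^n,s_1^n)+I(W_2;y_2^n,s_2^n,W_3)$ by the chain rule, the genie entropy terms combine as in the deterministic proof, leaving $h(y_1^n\mid s_1^n)+h(y_2^n\mid s_2^n,W_3)$ minus the noise entropies. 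Conditioning $y_1$ on $s_1$ reduces the contribution $h_{\done}x_1$ to its MMSE residual of variance $\tfrac{\R^{\alpha_{\done}}}{1+\R^{\alpha_{\cone}}}$, while the knowledge of $W_3$ at Rx$2$ removes $h_{\cthree}x_3$; bounding the two conditional entropies by their Gaussian maxima then yields exactly the two logarithms in \eqref{UBG1}. The bound \eqref{UBG2} follows identically, except that $W_1$ is given to Rx$2$ and the noisy cross-signal handed to Rx$1$ is $s_1^n=h_{\cthree}x_3^n+\tilde z_1^n$, so the MMSE residual now falls on $x_3$ and produces the term $\tfrac{\R^{\alpha_{\dthree}}}{1+\R^{\alpha_{\cthree}}}$.

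The bounds \eqref{UBG3} and \eqref{UBG4} are the Gaussian analogues of \eqref{UB3} and \eqref{UB4} and are more delicate. Here I would give Rx$1$ a noisy version of a \emph{superposition} of $x_1$ and $x_3$ --- the Gaussian counterpart of the side information $\s_1^n$ used in Lemma~\ref{Lemma:Regimes3and6} --- together with $s_2^n=h_{\ctwo}x_2^n+\tilde z_2^n$ at Rx$2$. After the same chain-rule manipulation, a difference of two differential entropies survives, namely the entropies of two scaled superpositions of $x_1$ and $x_3$ with different relative gains. The core step is to show, under the exponent condition $\alpha_{\dthree}-\alpha_{\done}\leq\alpha_{\cthree}-2\alpha_{\cone}$ for \eqref{UBG3} (respectively $\alpha_{\done}-\alpha_{\dthree}\leq\alpha_{\cone}-2\alpha_{\cthree}$ for \eqref{UBG4}), that this difference is bounded by a constant; this is precisely the Gaussian replacement of Lemma~\ref{Lemma:Entrop_diff_determ}, and the resulting constant is what produces the additive~$1$ in \eqref{UBG3} and \eqref{UBG4}.

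I expect the main obstacle to be this last entropy-comparison step. In the deterministic model Lemma~\ref{Lemma:Entrop_diff_determ} gives the clean inequality $H(\boldsymbol{Y}_A)-H(\boldsymbol{Y}_B)\leq 0$, but in the Gaussian model the analogous difference of differential entropies cannot be made nonpositive and must instead be controlled by a bounded (a few bits) gap. I would establish this via a worst-case-noise/MMSE argument: lower-bounding $h(\boldsymbol{Y}_B)$ by conditioning on one of the inputs and upper-bounding $h(\boldsymbol{Y}_A)$ by its Gaussian maximum, invoking the exponent conditions to guarantee that the relative signal scalings line up so that the difference stays $O(1)$. Once this lemma is in place, the remaining steps --- collapsing the genie terms and applying Gaussian maximization to the conditional entropies --- are routine and mirror the deterministic derivation line by line.
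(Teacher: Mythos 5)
Your overall architecture coincides with the paper's: all four bounds are obtained by genie-aided Fano arguments that mirror Lemmas \ref{Lemma:UB1}, \ref{Lemma:UB2} and \ref{Lemma:Regimes3and6}, with noisy cross-link (respectively noisy superposition) side information, exact cancellation of the genie entropy terms, and Gaussian maximization plus Jensen's inequality applied to the surviving conditional entropies. For \eqref{UBG1} and \eqref{UBG2} your derivation is correct and essentially identical to the paper's; whether the genie noise is the receiver's own noise realization (as in the paper) or a fresh independent copy (as you propose) is immaterial, since the cancellation only requires equality in distribution.

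The gap lies in the key lemma behind \eqref{UBG3} and \eqref{UBG4}. You correctly identify that the surviving term is a difference of the form $h\bigl(h_{\cone}X_1^n+\tfrac{h_{\cone}h_{\dthree}}{h_{\done}}X_3^n+Z^n\bigr)-h\bigl(h_{\cone}X_1^n+h_{\cthree}X_3^n+Z_2^n\bigr)$ and that it must be shown to be at most $n$ bits, but the method you propose --- upper-bounding the first entropy by its Gaussian maximum and lower-bounding the second by conditioning on one of the inputs --- does not close. Conditioning on $X_3^n$ leaves $h(h_{\cone}X_1^n+Z_2^n)$, which cannot be lower-bounded by anything better than $h(Z_2^n)$ without knowledge of the distribution of $X_1^n$ (which may be anything from near-deterministic to near-Gaussian), while the Gaussian upper bound on the first term retains the full contribution of $h_{\cone}X_1^n$; the resulting difference is of order $n\log_2(1+|h_{\cone}|^2P)$ per the power constraint, not $O(n)$ bits. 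Conditioning on $X_1^n$ fails symmetrically. The difficulty is precisely that the common component $h_{\cone}X_1^n$ carries an unknown entropy that must cancel between the two terms, and decoupled upper and lower bounds destroy that cancellation. The paper's Lemma \ref{Lemma:Entropy_diff_Gaussian} handles this by rewriting the difference as $I(A^n,B^n;Y_A^n)-I(A^n,B^n;Y_B^n)$, inserting the genie term $I(A^n;B^n|Y_A^n)$, cancelling the matched terms $I(A^n;h_1A^n+Z_A^n)=I(A^n;h_1A^n+Z_B^n)$ so that the unknown statistics of $A$ disappear, and then reducing the residue through two channel-degradation steps to $h(\tilde{A}^n+Z_B^n)-h(Z_A^n)\leq n$; the conditions \eqref{Lemma:ExtCondition1} and \eqref{Lemma:ExtCondition2} --- i.e.\ exactly the exponent conditions attached to \eqref{UBG3} and \eqref{UBG4} together with the interference-limited assumption \eqref{InterferenceLimited} --- are what make those two degradation steps valid. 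Without an argument of this type the additive $1$ in \eqref{UBG3} and \eqref{UBG4} is unsubstantiated, so the proofs of these two bounds as sketched would fail.
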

   \begin{proof}
   The proof for these upper bounds is essentially similar to the proofs of Lemmas \ref{Lemma:UB1}, \ref{Lemma:UB2}, and \ref{Lemma:Regimes3and6}, but with some steps in the proof adapted to the Gaussian PIMAC. Details can be found in Appendix \ref{appGU2}.
   \end{proof}    
   
The naive-TIN scheme achieves these bounds within a constant gap in sub-regimes \oneA{} and \twoA. This is shown in the next subsection.

\subsection{Naive-TIN is constant-gap-optimal (CGO)}
In naive-TIN, all transmitters send with full power. This causes interference at undesired receivers. At the receiver side, the strategy is the same as if there {is} no interference. Therefore, the receivers decode their desired signals while the interference is treated as noise. Hence, Rx$1$ decodes $W_1$ and $W_3$ as in a multiple access channel (successive decoding) with noise variance $1+|h_{\ctwo}|^{2}P$, and Rx$2$ decodes $W_2$ as in a point-to-point channel with noise variance $1+|h_{\cone}|^{2}P+|h_{\cthree}|^{2}P$. Hence, we obtain the following achievable rate.
      \begin{mypro}
      In the Gaussian PIMAC, naive-TIN achieves any sum-rate $R_\Sigma \leq R_{\Sigma,\mathrm{{Naive-TIN}}}$, where
         \begin{align}
         \label{TINGACH}
         R_{\Sigma,\mathrm{{Naive-TIN}}}
         &=\log_2\left(1+\dfrac{\R^{\alpha_{\done}}
         +\R^{\alpha_{\dthree}}}{1+\R^{\alpha_{\ctwo}}}\right)+
         \log_2\left(1+\dfrac{\R^{\alpha_{\dtwo}}}
         {1+\R^{\alpha_{\cone}}+\R^{\alpha_{\cthree}}}\right).
         \end{align}
      \end{mypro}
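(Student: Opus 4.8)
The plan is to exhibit a concrete coding scheme matching the description preceding the statement: all three transmitters use independent i.i.d.\ Gaussian codebooks $X_i \sim \mathbb{C}\mathcal{N}(0,P)$ at full power, and the two receivers decode with their interference-free optimal decoders while treating the remaining signals as noise. Because the codebooks are Gaussian and mutually independent, the aggregate interference-plus-noise seen at each receiver is itself a Gaussian random variable independent of the desired signals, so the effective channels are genuine Gaussian channels and the standard achievability results apply directly. I would split the analysis into the MAC seen at Rx$1$ and the point-to-point channel seen at Rx$2$.

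First I would treat Rx$1$. Writing $y_1 = h_{\done}x_1 + h_{\dthree}x_3 + (h_{\ctwo}x_2 + z_1)$, the term $h_{\ctwo}x_2 + z_1$ is the effective noise, which is Gaussian with variance $1+|h_{\ctwo}|^2 P$. Since both $W_1$ and $W_3$ are desired at Rx$1$, this is a two-user Gaussian MAC, and successive (or joint) decoding achieves every rate pair in its capacity region; in particular the sum-rate constraint gives
\begin{align}
R_1 + R_3 \leq \log_2\!\left(1 + \frac{|h_{\done}|^2 P + |h_{\dthree}|^2 P}{1+|h_{\ctwo}|^2 P}\right).
\end{align}
Next I would treat Rx$2$. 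Writing $y_2 = h_{\dtwo}x_2 + (h_{\cone}x_1 + h_{\cthree}x_3 + z_2)$, the effective noise is Gaussian with variance $1+|h_{\cone}|^2 P + |h_{\cthree}|^2 P$, so the standard point-to-point Gaussian result yields
\begin{align}
R_2 \leq \log_2\!\left(1 + \frac{|h_{\dtwo}|^2 P}{1+|h_{\cone}|^2 P + |h_{\cthree}|^2 P}\right).
\end{align}

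To conclude I would add these two bounds and substitute the GDoF parametrization $|h_k|^2 P = \R^{\alpha_k}$ valid for each $k$, which turns the right-hand sides into exactly the two logarithmic terms of $R_{\Sigma,\mathrm{Naive-TIN}}$ in \eqref{TINGACH}; since $R_\Sigma = R_1 + R_2 + R_3$ and the individual MAC rate constraints are always satisfiable at the sum-rate corner point, every $R_\Sigma$ up to this value is achievable. The only delicate point, rather than a genuine obstacle, is the justification that the interference may be folded into the noise: this rests on the fact that with Gaussian codebooks the interference is exactly Gaussian and statistically independent of the desired codewords and of $z_j$, so the effective noise remains Gaussian (merely of larger variance) and the invoked MAC and P2P achievability theorems apply without modification. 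I would state this independence-and-Gaussianity observation explicitly to make the reduction airtight.
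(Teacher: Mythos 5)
Your proposal is correct and follows essentially the same route as the paper, which simply observes that Rx$1$ sees a two-user Gaussian MAC with effective noise variance $1+|h_{\ctwo}|^2P$ and Rx$2$ sees a point-to-point Gaussian channel with effective noise variance $1+|h_{\cone}|^2P+|h_{\cthree}|^2P$, and adds the corresponding sum-capacity and capacity expressions. Your added remark on the Gaussianity and independence of the aggregate interference is the standard justification the paper leaves implicit, and the substitution $P|h_k|^2=\R^{\alpha_k}$ matches the paper's definition of $\alpha_k$ exactly.
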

By comparing the achievable sum rate in \eqref{TINGACH} with the upper bounds in \eqref{UBG1} and \eqref{UBG2}, we can show that naive-TIN can achieve the sum-capacity within a constant gap in sub-regimes \oneA{} and \twoA{} (where naive-TIN is optimal for the LD-PIMAC). The following corollary summarizes this result. 
      \begin{corollary} 
      \label{Theorem_cont_gap}
The achievable sum-rate of naive-TIN is within a gap of $3+2\log_23$ bits of the sum-capacity of the Gaussian PIMAC in sub-regimes \oneA{} and \twoA.
      \end{corollary}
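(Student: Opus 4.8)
The plan is to upper-bound the sum-capacity by \eqref{UBG1} in sub-regime \oneA{} and by \eqref{UBG2} in sub-regime \twoA, and then subtract the naive-TIN rate \eqref{TINGACH}. The guiding observation, inherited from the LD analysis, is that in each of these sub-regimes every $\log_2(\cdot)$ term is driven by a single dominant power of $\R$, so that after bounding each argument between constant multiples of that power, the terms of order $\log_2\R$ cancel and only a bounded constant survives. The interference-limited assumption $\alpha_k\ge 0$ (hence $\R^{\alpha_k}\ge 1$) and the noisy condition $\alpha_{\cone}+\alpha_{\ctwo}\le\min\{\alpha_{\done},\alpha_{\dtwo}\}$ are what keep the cross-terms subordinate and the leading coefficients matched.

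Concretely, in sub-regime \oneA{} I would first bound the two terms of \eqref{UBG1} from above. Using the noisy condition together with $\alpha_{\dthree}\le\alpha_{\done}-\alpha_{\cone}$ and $\alpha_{\cthree}\le\alpha_{\cone}$ (the defining conditions of \oneA), every summand inside the first logarithm is at most $\R^{\alpha_{\done}-\alpha_{\cone}}$, giving $1+\R^{\alpha_{\ctwo}}+\R^{\alpha_{\dthree}}+\R^{\alpha_{\done}}/(1+\R^{\alpha_{\cone}})\le 4\R^{\alpha_{\done}-\alpha_{\cone}}$, so the first term is at most $(\alpha_{\done}-\alpha_{\cone})\log_2\R+2$. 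Likewise $\alpha_{\cone}\le\alpha_{\dtwo}-\alpha_{\ctwo}$ bounds the second term by $(\alpha_{\dtwo}-\alpha_{\ctwo})\log_2\R+\log_2 3$, so \eqref{UBG1} is at most $(\alpha_{\done}-\alpha_{\cone}+\alpha_{\dtwo}-\alpha_{\ctwo})\log_2\R+2+\log_2 3$. For the achievable side I would lower-bound \eqref{TINGACH} by dropping the unused $\R^{\alpha_{\dthree}}$, using $1+\R^{\alpha_{\ctwo}}\le 2\R^{\alpha_{\ctwo}}$ in the first term and $1+\R^{\alpha_{\cone}}+\R^{\alpha_{\cthree}}\le 3\R^{\alpha_{\cone}}$ (since $\alpha_{\cthree}\le\alpha_{\cone}$) in the second, obtaining $R_{\Sigma,\mathrm{Naive-TIN}}\ge(\alpha_{\done}-\alpha_{\ctwo}+\alpha_{\dtwo}-\alpha_{\cone})\log_2\R-1-\log_2 3$. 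The coefficients of $\log_2\R$ in the two bounds are identical, namely $\alpha_{\done}+\alpha_{\dtwo}-\alpha_{\cone}-\alpha_{\ctwo}$, so they cancel in the difference and leave a gap of at most $3+2\log_2 3$.

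Sub-regime \twoA{} is handled symmetrically using \eqref{UBG2}: the conditions $\alpha_{\done}\le\alpha_{\dthree}-\alpha_{\cthree}$ and $\alpha_{\cone}\le\alpha_{\cthree}\le\alpha_{\dtwo}-\alpha_{\ctwo}$, together with $\alpha_{\ctwo}\le\alpha_{\done}\le\alpha_{\dthree}-\alpha_{\cthree}$ (from the noisy condition), let me bound the first logarithm's argument by $4\R^{\alpha_{\dthree}-\alpha_{\cthree}}$ and the second by $3\R^{\alpha_{\dtwo}-\alpha_{\ctwo}}$, so \eqref{UBG2} is at most $(\alpha_{\dthree}-\alpha_{\cthree}+\alpha_{\dtwo}-\alpha_{\ctwo})\log_2\R+2+\log_2 3$. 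Lower-bounding \eqref{TINGACH} by keeping $\R^{\alpha_{\dthree}}$ in the first term and using $\alpha_{\cone}\le\alpha_{\cthree}$ in the second yields $(\alpha_{\dthree}-\alpha_{\ctwo}+\alpha_{\dtwo}-\alpha_{\cthree})\log_2\R-1-\log_2 3$, whose $\log_2\R$-coefficient again matches, producing the same $3+2\log_2 3$ bound.

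The calculation is routine once the right bound is paired with the right sub-regime; the only real care, and thus the main obstacle, is verifying that every power of $\R$ appearing in the arguments of \eqref{UBG1}--\eqref{UBG2} is dominated by the intended exponent ($\alpha_{\done}-\alpha_{\cone}$ and $\alpha_{\dtwo}-\alpha_{\ctwo}$ in \oneA, and $\alpha_{\dthree}-\alpha_{\cthree}$ and $\alpha_{\dtwo}-\alpha_{\ctwo}$ in \twoA). This domination is exactly what guarantees that the two $\log_2\R$-order coefficients agree so that the leading terms cancel exactly, and it is where the noisy condition is indispensable. Collecting the additive constants from the upper and lower bounds then gives the uniform gap of $3+2\log_2 3$ bits, independent of $\R$ and of the channel parameters within these sub-regimes.
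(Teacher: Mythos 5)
Your proposal is correct and follows essentially the same route as the paper's proof in Appendix~\ref{appGap}: pair bound \eqref{UBG1} with sub-regime \oneA{} and \eqref{UBG2} with sub-regime \twoA, relax each logarithm's argument to a constant multiple ($4$ and $3$) of the dominant power of $\R$ using the sub-regime and noisy-interference conditions, and lower-bound \eqref{TINGACH} with the matching constants ($2$ and $3$) so the $\log_2\R$ coefficients cancel and leave the gap $3+2\log_2 3$. No substantive differences.
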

      \begin{proof}
      The gap calculation is given in Appendix \ref{appGap}.   
      \end{proof}
TDMA-TIN is also optimal within a constant gap in these two regimes. This will show in the next sub-section.%In fact, TDMA-TIN outperforms {TIN} as we show next.

\subsection{TDMA-TIN is constant-gap-optimal}
In contrast to naive-TIN, in TDMA-TIN not all transmitters are active at the same time. The transmitting scheme for TDMA-TIN is as follows.

{
In this scheme, we divide the transmission time into three fractions, i,e, $\tau_1$, $\tau_2$, and $\tau_3$, where $\tau_1 + \tau_2 + \tau_3 =1$. While in $\tau_1$ fraction of time only Tx$3$ is active and hence, we have a point-to-point channel, in the remaining $(1-\tau_1)$ fraction of time, we have two types of 2-user IC's. The active transmitters in the first type are Tx$1$ and Tx$2$. In total, $\tau_2$ fraction of time is assigned to this IC. In the other type of IC, Tx$3$ and Tx$2$ are active. We allocate $\tau_3$ fraction of time to this type. In this scheme, all transmitters send such that they consume their maximum power $P$ in the whole transmission.
In other words, Tx$1$, Tx$2$, and Tx$3$ send $X_1\sim \mathbb{C}\mathcal{N}(0,\frac{P}{\tau_2})$, $X_2\sim \mathbb{C}\mathcal{N}(0,\frac{P}{(1-\tau_1)})$, and $X_{3}\sim \mathbb{C}\mathcal{N}(0,\frac{P}{\tau_1+\tau_3})$ in $\tau_2$, $(1-\tau_1)$, and $(\tau_1+\tau_3)$ fraction of time, respectively.
The achievable sum-rate using TDMA-TIN is presented in the following proposition.}
{
\begin{mypro}
In the Gaussian PIMAC, TDMA-TIN achieves any sum-rate $R_{\Sigma}\leq R_{\Sigma,\mathrm{TDMA-TIN}}$, where
\begin{align}
R_{\Sigma,\mathrm{TDMA-TIN}}
=\max_{\tau_1,\tau_2,\tau_3\in[0,1]} &
\tau_1 \log_2 \left(1+\frac{\rho^{\alpha_{\dthree}}}{\tau_1+\tau_3}\right) + \tau_2 \left[\log_2\left(1+\dfrac{\dfrac{\R^{\alpha_{\done}}}{\tau_2}} {1+\dfrac{\R^{\alpha_{\ctwo}}}{(1-\tau_1)}}\right)+\log_2\left(1+
\dfrac{\dfrac{\R^{\alpha_{\dtwo}}}{(1-\tau_1)}}
{1+\dfrac{\R^{\alpha_{\cone}}}{\tau_2}}\right)\right] \notag
\\         
 &  + \tau_3 \left[\log_2\left(1+\dfrac{\dfrac{\R^{\alpha_{\dthree}}}{\tau_1+\tau_3}}{1+\dfrac{\R^{\alpha_{\ctwo}}}{(1-\tau_1)}}\right)+\log_2\left(1+
\dfrac{\dfrac{\R^{\alpha_{\dtwo}}}{(1-\tau_1)}}
{1+\dfrac{\R^{\alpha_{\cthree}}}{\tau_1+\tau_3}}\right)         \right] \label{TDMATINGACH} \\
\mathrm{subject\,\, to} \quad & \tau_1+\tau_2 +\tau_3=1. \notag
\end{align}
\end{mypro}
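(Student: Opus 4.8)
The plan is to prove this as a direct achievability result, since the proposed TDMA-TIN scheme decomposes the block of length $n$ into three orthogonal sub-blocks, each of which is a \emph{standard} Gaussian channel (a point-to-point channel or a $2$-user IC with Gaussian inputs and TIN), so the overall achievable sum-rate is just the time-average of the per-slot single-user rates. First I would formalize the scheme: partition the transmission into sub-blocks of lengths $\tau_1 n$, $\tau_2 n$, $\tau_3 n$ with $\tau_1+\tau_2+\tau_3=1$ (the rounding to integers is immaterial as $n\to\infty$). Message $W_3$ is split into two independent parts transmitted over the $\tau_1$- and $\tau_3$-slots, $W_1$ is transmitted over the $\tau_2$-slot, and $W_2$ is split into parts transmitted over the $\tau_2$- and $\tau_3$-slots; each part uses an independent Gaussian codebook. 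Tx$1$ transmits $X_1\sim\mathbb{C}\mathcal{N}(0,P/\tau_2)$ in its $\tau_2$ active fraction, Tx$2$ transmits $X_2\sim\mathbb{C}\mathcal{N}(0,P/(1-\tau_1))$ throughout its $(1-\tau_1)$ active fraction, and Tx$3$ transmits $X_3\sim\mathbb{C}\mathcal{N}(0,P/(\tau_1+\tau_3))$ throughout its $(\tau_1+\tau_3)$ active fraction. A one-line check then confirms that each transmitter's average power equals $P$, so the power constraint is met.

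Next I would compute the per-slot rates using the identity $\R^{\alpha_k}=P|h_k|^2$. In the $\tau_1$-slot only Tx$3$ is on, so Rx$1$ sees the point-to-point channel $y_1=h_{\dthree}x_3+z_1$ and, by the Gaussian point-to-point capacity~\cite{CoverThomas}, reliably carries rate $\log_2\!\left(1+\R^{\alpha_{\dthree}}/(\tau_1+\tau_3)\right)$. In the $\tau_2$-slot the active pair $(\text{Tx}1,\text{Tx}2)$ forms a $2$-user IC; decoding with TIN, Rx$1$ achieves $\log_2\!\big(1+\tfrac{\R^{\alpha_{\done}}/\tau_2}{1+\R^{\alpha_{\ctwo}}/(1-\tau_1)}\big)$ and Rx$2$ achieves $\log_2\!\big(1+\tfrac{\R^{\alpha_{\dtwo}}/(1-\tau_1)}{1+\R^{\alpha_{\cone}}/\tau_2}\big)$. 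In the $\tau_3$-slot the active pair $(\text{Tx}3,\text{Tx}2)$ forms another $2$-user IC, yielding by the same argument $\log_2\!\big(1+\tfrac{\R^{\alpha_{\dthree}}/(\tau_1+\tau_3)}{1+\R^{\alpha_{\ctwo}}/(1-\tau_1)}\big)$ at Rx$1$ and $\log_2\!\big(1+\tfrac{\R^{\alpha_{\dtwo}}/(1-\tau_1)}{1+\R^{\alpha_{\cthree}}/(\tau_1+\tau_3)}\big)$ at Rx$2$. Each of these is the achievable single-user rate $\log_2(1+\mathrm{SINR})$ obtained when a Gaussian codeword is decoded against statistically independent Gaussian interference-plus-noise, which is exactly the $R_{\mathrm{TIN}}$ of the Remark.

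Finally I would assemble the sum-rate by weighting each slot rate by its time fraction, which reproduces the three bracketed terms of \eqref{TDMATINGACH}, and then take the supremum over $(\tau_1,\tau_2,\tau_3)\in[0,1]^3$ with $\tau_1+\tau_2+\tau_3=1$; since the objective is continuous on this compact simplex, the supremum is attained, giving the claimed $R_{\Sigma,\mathrm{TDMA-TIN}}$. The only point requiring care rather than pure bookkeeping is justifying that splitting $W_3$ (and $W_2$) across the disjoint slots and summing the corresponding slotwise rates yields a genuinely achievable point of $\mathcal{C}$; this follows because the slots are orthogonal in time, so the codewords carrying the different parts of a message do not interfere, and reliable decoding in each slot implies reliable decoding of the concatenated message with a total rate equal to the sum of the slot rates. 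I expect no genuine obstacle here beyond this orthogonality argument and the straightforward power-scaling verification.
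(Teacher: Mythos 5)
Your proposal is correct and follows essentially the same route as the paper: the paper states exactly this scheme (orthogonal time slots of fractions $\tau_1,\tau_2,\tau_3$; power boosting to $P/\tau_2$, $P/(1-\tau_1)$, $P/(\tau_1+\tau_3)$ so that each transmitter's average power is $P$; per-slot $\log_2(1+\mathrm{SINR})$ rates with TIN) and reads off the sum-rate as the time-weighted sum, without a separate formal proof. Your added bookkeeping on message splitting across orthogonal slots and the attainment of the supremum on the compact simplex just fills in standard details consistent with the paper's argument.
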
}
{Since for constant gap optimality, the GDoF optimality is required, it is worth to convert the achievable sum-rate of TDMA-TIN into the GDoF expression. To do this, we identify the achievable sum-rate of TDMA-TIN at high $\mathrm{SNR}$ }
{\begin{align*}
%\label{sumrateTDMATINregime7app}
R_{\Sigma,\mathrm{TDMA-TIN}}\approx
\max_{\tau_1,\tau_2,\tau_3\in[0,1]}  &
\tau_1 \log_2 \left(1+\frac{\rho^{\alpha_{\dthree}}}{\tau_1+\tau_3}\right) + \tau_2 \left[\log_2\left(1+\dfrac{\dfrac{\R^{\alpha_{\done}}}{\tau_2}} {\dfrac{\R^{\alpha_{\ctwo}}}{(1-\tau_1)}}\right)+\log_2\left(1+
\dfrac{\dfrac{\R^{\alpha_{\dtwo}}}{(1-\tau_1)}}
{\dfrac{\R^{\alpha_{\cone}}}{\tau_2}}\right)\right] \notag
\\         
&  + \tau_3 \left[\log_2\left(1+\dfrac{\dfrac{\R^{\alpha_{\dthree}}}{\tau_1+\tau_3}}{\dfrac{\R^{\alpha_{\ctwo}}}{(1-\tau_1)}}\right)+\log_2\left(1+
\dfrac{\dfrac{\R^{\alpha_{\dtwo}}}{(1-\tau_1)}}
{\dfrac{\R^{\alpha_{\cthree}}}{\tau_1+\tau_3}}\right)         \right] \\
%%%%%%%%%%%%%%%%%%%%%%%%%%%%%%%%%%%%%%%%%%%%%%%%%%%%%%
\overset{(a)}{\approx} \max_{\tau_1,\tau_2,\tau_3\in[0,1]}  &
\tau_1 \log_2 \left(\frac{\rho^{\alpha_{\dthree}}}{\tau_1+\tau_3}\right) + \tau_2 \left[\log_2\left(\dfrac{\dfrac{\R^{\alpha_{\done}}}{\tau_2}} {\dfrac{\R^{\alpha_{\ctwo}}}{(1-\tau_1)}}\right)+\log_2\left(
\dfrac{\dfrac{\R^{\alpha_{\dtwo}}}{(1-\tau_1)}}
{\dfrac{\R^{\alpha_{\cone}}}{\tau_2}}\right)\right] \notag
\\         
&  + \tau_3 \left[\log_2\left(\dfrac{\dfrac{\R^{\alpha_{\dthree}}}{\tau_1+\tau_3}}{\dfrac{\R^{\alpha_{\ctwo}}}{(1-\tau_1)}}\right)+\log_2\left(
\dfrac{\dfrac{\R^{\alpha_{\dtwo}}}{(1-\tau_1)}}
{\dfrac{\R^{\alpha_{\cthree}}}{\tau_1+\tau_3}}\right)         \right]
\\
%%%%%%%%%%%%%%%%%%%%%%%%%%%%%%%%%%%%%%%%%%%%%%%%%%%%%%%%
= \max_{\tau_1,\tau_2,\tau_3\in[0,1]}  &
\log_2\R [\tau_1 \alpha_{\dthree} + \tau_2 (\alpha_{\done}-\alpha_{\ctwo} +\alpha_{\dtwo}-\alpha_{\cone})
+ \tau_3 [(\alpha_{\dthree} - \alpha_{\ctwo})^+ + (\alpha_{\dtwo}-\alpha_{\cthree})^+]] \notag \\ &
+ \tau_1 \log_2 \left(\frac{1}{\tau_1+\tau_3}\right),
\end{align*}}
{where $\tau_1+\tau_2+\tau_3 = 1$.
Note that $(a)$ is due to the high SNR approximation. Now, by dividing the sum-rate by $\log_2\R$ and letting $\R\to \infty$ and keeping the condition $\alpha_{\cone} + \alpha_{\ctwo} \leq \min\{\alpha_{\done},\alpha_{\dtwo}\}$ in mind, we obtain the following achievable GDoF using TDMA-TIN }
{\begin{align*}
d_{\Sigma,\text{TDMA-TIN}}(\boldsymbol{\alpha}) = \max_{\tau_1,\tau_2,\tau_3\in[0,1]}   &\quad \tau_1 \alpha_{\dthree} + \tau_2 (\alpha_{\done}-\alpha_{\ctwo} +\alpha_{\dtwo}-\alpha_{\cone})
+ \tau_3 [(\alpha_{\dthree} - \alpha_{\ctwo})^+ + (\alpha_{\dtwo}-\alpha_{\cthree})^+]\\
 \text{subject to} & \quad \tau_1+\tau_2+\tau_3=1.
\end{align*}}
{Since this maximization is linear in $\tau_1$, $\tau_2$, and $\tau_3$, we obtain the optimal solution by assigning the whole transmission time to the type which achieves the highest GDoF. Hence, the achievable GDoF of TDMA-TIN can be presented as in the following corollary.}
{
\begin{corollary}
TDMA-TIN achieves any $d_\Sigma\leq d_{\Sigma,\mathrm{TDMA-TIN}}$, where
\begin{align}
d_{\Sigma,\mathrm{TDMA-TIN}}(\boldsymbol \alpha) = \max\{\alpha_{\dthree},\alpha_{\done}-\alpha_{\ctwo} + \alpha_{\dtwo}-\alpha_{\cone},(\alpha_{\dthree}-\alpha_{\ctwo})^+
+(\alpha_{\dtwo}-\alpha_{\cthree})^+\}. \label{eq:GDoF-TDMA-TIN}
\end{align}
\end{corollary}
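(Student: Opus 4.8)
The plan is to start from the achievable sum-rate of TDMA-TIN in \eqref{TDMATINGACH}, which is already established by the preceding proposition, and to compute the associated GDoF directly from the definition \eqref{eq:GDoFDef}. Concretely, I would divide the right-hand side of \eqref{TDMATINGACH} by $\log_2(\R)$ and let $\R \to \infty$, while keeping the feasibility constraint $\tau_1 + \tau_2 + \tau_3 = 1$ and the standing assumption $\alpha_{\cone} + \alpha_{\ctwo} \leq \min\{\alpha_{\done}, \alpha_{\dtwo}\}$ in force. Since each of the three summands is a nonnegative rate multiplied by a fixed time fraction $\tau_i \in [0,1]$ that does not scale with $\R$, the limit can be taken summand by summand.

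For the high-SNR analysis I would treat each logarithmic term as $\log_2(1 + \mathrm{SINR})$ and use the elementary fact that $\tfrac{1}{\log_2 \R}\log_2(1 + \R^{\beta}) \to \beta^{+}$ as $\R \to \infty$. The point-to-point term then contributes $\tau_1 \alpha_{\dthree}$; the first IC (Tx$1$, Tx$2$), under the noisy-interference assumption which forces $\alpha_{\done} - \alpha_{\ctwo} \geq 0$ and $\alpha_{\dtwo} - \alpha_{\cone} \geq 0$, contributes $\tau_2 (\alpha_{\done} - \alpha_{\ctwo} + \alpha_{\dtwo} - \alpha_{\cone})$ with no positive part needed; and the second IC (Tx$3$, Tx$2$) contributes $\tau_3\big[(\alpha_{\dthree} - \alpha_{\ctwo})^{+} + (\alpha_{\dtwo} - \alpha_{\cthree})^{+}\big]$, where the positive parts arise precisely because these exponents may be negative, in which case the corresponding SINR vanishes and the log tends to zero. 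The leftover pieces --- the additive constant $1$ inside each logarithm and the power-normalization factors such as $\tau_1 \log_2\!\big(\tfrac{1}{\tau_1 + \tau_3}\big)$ --- are $O(1)$ uniformly over the compact feasible set and thus disappear after division by $\log_2 \R$. This yields
\begin{align*}
d_{\Sigma,\mathrm{TDMA-TIN}}(\boldsymbol{\alpha}) = \max_{\substack{\tau_1,\tau_2,\tau_3 \in [0,1] \\ \tau_1+\tau_2+\tau_3=1}} \tau_1 \alpha_{\dthree} + \tau_2\,(\alpha_{\done}-\alpha_{\ctwo}+\alpha_{\dtwo}-\alpha_{\cone}) + \tau_3\big[(\alpha_{\dthree}-\alpha_{\ctwo})^{+}+(\alpha_{\dtwo}-\alpha_{\cthree})^{+}\big].
\end{align*}

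Finally, I would observe that the objective is an affine (indeed linear) function of $(\tau_1, \tau_2, \tau_3)$ maximized over the probability simplex, so its maximum is attained at an extreme point. The three vertices $(1,0,0)$, $(0,1,0)$, $(0,0,1)$ evaluate exactly to the three arguments appearing in \eqref{eq:GDoF-TDMA-TIN}, and the maximum over the simplex therefore equals their maximum, which establishes the claim. I expect the only genuinely delicate point to be the interchange of the supremum over $\boldsymbol{\alpha}$-free variables $(\tau_1,\tau_2,\tau_3)$ with the limit $\R \to \infty$: one must verify that the $O(1)$ remainder terms vanish \emph{uniformly} over the feasible simplex, including the degenerate corners where a time fraction equals zero and a nominal per-symbol power blows up. This is handled by the compactness of the simplex together with the convention $0 \cdot \log_2(1/0) = 0$, so that the zero time-fraction nullifies the corresponding term before any divergence matters; everything else is a routine high-SNR expansion.
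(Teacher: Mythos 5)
Your proposal is correct and follows essentially the same route as the paper: a high-SNR expansion of \eqref{TDMATINGACH}, division by $\log_2(\R)$, and the observation that the resulting objective is linear over the simplex so the maximum is attained at one of the three vertices, which yields exactly the three arguments of the $\max$ in \eqref{eq:GDoF-TDMA-TIN}. Your extra care about uniformity of the $O(1)$ remainders over the simplex (including the degenerate corners) is a point the paper glosses over, but it does not change the argument.
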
}
%{Notice that by using the definition of the GDoF in \eqref{eq:GDoFDef}, and the fact that $\alpha_{\cone} + \alpha_{\ctwo} \leq \min\{\alpha_{\done},\alpha_{\dtwo}\}$, we achieve the following GDoF by using TDMA-TIN
%\begin{align*}
%d_{\Sigma,\text{TDMA-TIN}}(\boldsymbol{\alpha}) = 
%\max_{\tau_1,\tau_2,\tau_3 \in [0,1]}   & \tau_1 \alpha_{\dthree} + \tau_2 (\alpha_{\done}-
%      \alpha_{\ctwo} + \alpha_{\dtwo}-\alpha_{\cone})+\tau_3 [(\alpha_{\dthree}-\alpha_{\ctwo})^++(\alpha_{\dtwo}-\alpha_{\cthree})^+] \\
%\text{subject to }\quad &  \tau_1+\tau_2+\tau_3=1.
%\end{align*}}
%{Since this is a linear optimization problem, the maximum GDoF is achieved if the optimization variables $\tau_1,\tau_2$, and $\tau_3$ are chosen from the set $\{0,1\}$. In other words, the optimal GDoF is achieved by allocating the whole transmission time to the channel which achieves the highest GDoF. Hence, the achievable GDoF of TDMA-TIN can be presented as in the following corollary.}
%{
%\begin{corollary}
%TDMA-TIN achieves any $d_\Sigma\leq d_{\Sigma,\mathrm{TDMA-TIN}}$, where
%\begin{align}
%d_{\Sigma,\mathrm{TDMA-TIN}} = \max\{\alpha_{\dthree},\alpha_{\done}-\alpha_{\ctwo} + \alpha_{\dtwo}-\alpha_{\cone},(\alpha_{\dthree}-\alpha_{\ctwo})^+
%+(\alpha_{\dtwo}-\alpha_{\cthree})^+\}. \label{eq:GDoF-TDMA-TIN}
%\end{align}
%\end{corollary}}
{As we have shown for the LD-PIMAC, as long as receivers treat interference as noise, the best power allocation at the transmitter side cannot achieve higher sum-rate than TDMA-TIN. In the following lemma, we extend this result to the Gaussian PIMAC. 
\begin{lemma}
\label{Lemma:poweralooc_versus_TDMA_TIN_Gaussian}
The achievable GDoF by using TIN at the receiver side alongside power control at the transmitter side is upper bounded by the GDoF achieved by TDMA-TIN given in \eqref{eq:GDoF-TDMA-TIN}.
\end{lemma}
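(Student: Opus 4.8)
The plan is to translate the achievable rate of receiver-side TIN with arbitrary transmit powers into a GDoF maximization and then show this maximization never exceeds \eqref{eq:GDoF-TDMA-TIN}. First I would parametrize the power of Tx$i$ as $P_i=\R^{-\gamma_i}P$ with $\gamma_i\geq 0$, so that $\gamma_i=0$ is full power and $\gamma_i\to\infty$ switches Tx$i$ off. Because any time-sharing between different power allocations yields a sum-GDoF equal to the corresponding convex combination of the per-slot sum-GDoF values, and a convex combination of quantities each at most \eqref{eq:GDoF-TDMA-TIN} is again at most \eqref{eq:GDoF-TDMA-TIN}, it suffices to bound the single-letter (fixed allocation) sum-GDoF. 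For a fixed allocation, Rx$1$ decodes $(W_1,W_3)$ as a MAC while treating $\X_2$ as noise, and Rx$2$ decodes $W_2$ as a P2P channel while treating $\X_1,\X_3$ as noise, which gives
$$R_\Sigma \leq \log_2\!\left(1+\frac{\R^{\alpha_{\done}-\gamma_1}+\R^{\alpha_{\dthree}-\gamma_3}}{1+\R^{\alpha_{\ctwo}-\gamma_2}}\right)+\log_2\!\left(1+\frac{\R^{\alpha_{\dtwo}-\gamma_2}}{1+\R^{\alpha_{\cone}-\gamma_1}+\R^{\alpha_{\cthree}-\gamma_3}}\right).$$

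Next I would take the high-SNR limit, using $\lim_{\R\to\infty}\log_2(1+\R^{x})/\log_2\R=x^+$ together with the fact that a ratio of sums of powers of $\R$ contributes the difference of the largest numerator and denominator exponents. This converts the two logarithms into piecewise-linear functions and yields the achievable sum-GDoF of TIN with power control as
$$d_{\Sigma}^{\mathrm{PC}}(\boldsymbol\alpha)=\max_{\gamma_1,\gamma_2,\gamma_3\geq 0}\Big\{\big[\max\{\alpha_{\done}-\gamma_1,\alpha_{\dthree}-\gamma_3\}-(\alpha_{\ctwo}-\gamma_2)^+\big]^+ + \big[(\alpha_{\dtwo}-\gamma_2)-\max\{\alpha_{\cone}-\gamma_1,\alpha_{\cthree}-\gamma_3\}^+\big]^+\Big\}.$$
The goal then reduces to showing that this maximum is at most $d_{\Sigma,\mathrm{TDMA-TIN}}(\boldsymbol\alpha)$ in \eqref{eq:GDoF-TDMA-TIN}.

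The cleanest route is to observe that this objective is exactly the GDoF counterpart of the deterministic achievable sum-rate of TIN with power control analyzed in Lemma \ref{Lemma:Poweralooc_versus_TDMA_TIN}: replacing each $n_k$ by $\alpha_k$ and each level truncation by the continuous back-off $\gamma_i$ turns \eqref{DetAchTDMATIN} into \eqref{eq:GDoF-TDMA-TIN} and turns the deterministic TIN-with-power-control rate into $d_\Sigma^{\mathrm{PC}}$. Since the bound established there is a pointwise inequality between piecewise-linear functions of the channel parameters that does not use the integrality of the $n_k$, it extends verbatim to real-valued $\alpha_k$ and $\gamma_i$, giving the claim. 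Alternatively, the maximization can be carried out directly: the objective is continuous and piecewise linear in $(\gamma_1,\gamma_2,\gamma_3)$, so on each polyhedral region where all inner $\max\{\cdot\}$ and $(\cdot)^+$ operations are resolved it is affine and attains its maximum at a boundary configuration in which each $\gamma_i$ is either $0$ (full power) or effectively $+\infty$ (off); evaluating these finitely many on/off configurations reproduces precisely the three arguments of the maximum in \eqref{eq:GDoF-TDMA-TIN}.

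The main obstacle is the coupling introduced by power control: lowering $\gamma_i$ simultaneously strengthens Tx$i$'s useful contribution and its interference at the other receiver, so the two bracketed terms trade off against each other, and because of the $(\cdot)^+$ kinks the objective is not concave. One therefore cannot globally invoke linear-programming vertex optimality, and the real work lies in verifying that no genuinely fractional power-control point (an interior breakpoint of the piecewise-linear surface) can beat the on/off extremes. Reducing to the already-proven deterministic Lemma \ref{Lemma:Poweralooc_versus_TDMA_TIN} is precisely what lets us sidestep re-running this case analysis in the Gaussian setting.
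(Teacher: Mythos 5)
Your proposal is correct and follows essentially the same route as the paper: parametrize the transmit powers on the GDoF scale, convert the receiver-side TIN sum-rate into the piecewise-linear GDoF expression $(\max\{\alpha_{11},\alpha_{13}\}-\alpha_{12})^+ +(\alpha_{22}-\max\{\alpha_{21},\alpha_{23}\})^+$, and then observe that the case analysis of the deterministic Lemma~\ref{Lemma:Poweralooc_versus_TDMA_TIN} carries over verbatim to real-valued exponents to bound this by \eqref{eq:GDoF-TDMA-TIN}. Your added remarks on time-sharing and on the on/off vertex structure are consistent extras, not deviations.
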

\begin{proof}
See Appendix \ref{App:Poweralooc_versus_TDMA_TIN_Gaussian}.
\end{proof}}
Now, we are ready to show that TDMA-TIN achieves the sum-capacity of the Gaussian PIMAC within a constant gap in regimes 1 and 2. The gap between the achievable sum-rate of TDMA-TIN and the sum-capacity upper bound is given in the following corollary.
\begin{corollary} 
\label{Theorem_cont_gap2}
{The gap between the achievable sum-rate of TDMA-TIN and the sum-capacity of the Gaussian PIMAC is bounded by $4+\log_23$ bits in sub-regimes \oneA, \oneB, \twoA{} and \twoB, and by $7$ bits in sub-regimes \oneC{} and \twoC{} and $2+\log_23$ bits in sub-regime \twoD.}
\end{corollary}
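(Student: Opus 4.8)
The plan is to dispatch the seven sub-regimes one at a time, in each case pairing a single upper bound from Theorem~\ref{Thm:GaussUpperBounds} with one time-allocation of the TDMA-TIN scheme and then showing that, once the GDoF slopes are seen to coincide, the two expressions differ only by an additive constant. Concretely, I would use \eqref{UBG1} together with the allocation $\tau_2=1$ (the $(\mathrm{Tx}1,\mathrm{Tx}2)$ 2-user IC) in sub-regimes \oneA{} and \oneB; \eqref{UBG2} together with $\tau_3=1$ (the $(\mathrm{Tx}3,\mathrm{Tx}2)$ 2-user IC) in \twoA{} and \twoB; \eqref{UBG2} with $\tau_1=1$ (the P2P channel) in \twoD; \eqref{UBG3} with $\tau_2=1$ in \oneC; and \eqref{UBG4} with $\tau_3=1$ in \twoC. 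Each of these allocations collapses the achievable sum-rate \eqref{TDMATINGACH} to one or two clean logarithms, and the applicability condition of the chosen bound matches the defining inequality of the sub-regime by construction.

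The two elementary estimates I would lean on are $\log_2(1+x)\ge\log_2 x$ on the achievable side and $1+\sum_j\rho^{a_j}\le N\rho^{\max_j a_j^+}$ (with $N$ the number of summands) together with $\rho^{a}/(1+\rho^{b})\le\rho^{a-b}$ on the converse side. For instance, in \oneA,\oneB{} the $\tau_2=1$ rate has each term lower-bounded as $\log_2\!\big(1+\R^{\alpha_{\done}}/(1+\R^{\alpha_{\ctwo}})\big)\ge(\alpha_{\done}-\alpha_{\ctwo})\log_2\R-1$ via $1+\R^{\alpha_{\ctwo}}\le2\R^{\alpha_{\ctwo}}$, and symmetrically for the second term; on the converse side the argument $1+\R^{\alpha_{\ctwo}}+\R^{\alpha_{\dthree}}+\R^{\alpha_{\done}}/(1+\R^{\alpha_{\cone}})$ in \eqref{UBG1} is bounded by $4\R^{\alpha_{\done}-\alpha_{\cone}}$, because the sub-regime condition $\alpha_{\dthree}\le\alpha_{\done}-\alpha_{\cone}$ and the standing noisy condition $\alpha_{\cone}+\alpha_{\ctwo}\le\alpha_{\done}$ (hence $\alpha_{\ctwo}\le\alpha_{\done}-\alpha_{\cone}$) make $\alpha_{\done}-\alpha_{\cone}$ dominate every exponent. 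Subtracting, the slopes cancel and the gap is $4+\log_2 3$; the identical computation with $(\done,\cone)$ and $(\dthree,\cthree)$ interchanged gives the same $4+\log_2 3$ for \twoA,\twoB, while in \twoD one instead balances $\alpha_{\dthree}-\alpha_{\cthree}$ against $\alpha_{\cthree}$ (using $\alpha_{\cthree}>\alpha_{\dtwo}-\alpha_{\ctwo}$) to reach the smaller gap $2+\log_2 3$.

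The hard part will be the two non-trivial bounds \eqref{UBG3} and \eqref{UBG4}, because of the composite fraction $(\R^{\alpha_{\done}}+\R^{\alpha_{\dthree}})/\big(1+\R^{\alpha_{\cone}-\alpha_{\done}}(\R^{\alpha_{\done}}+\R^{\alpha_{\dthree}})\big)$. The key fact I would isolate is that, under the defining condition of \oneC{} ($\alpha_{\dthree}>\alpha_{\done}-\alpha_{\cone}$, which forces the exponent $\alpha_{\cone}-\alpha_{\done}+\alpha_{\dthree}>0$ and hence $\R^{\alpha_{\cone}-\alpha_{\done}}(\R^{\alpha_{\done}}+\R^{\alpha_{\dthree}})\ge\R^{\alpha_{\cone}}\ge1$), the denominator is dominated by its second summand, so the fraction is at most $\R^{\alpha_{\done}-\alpha_{\cone}}$. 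Feeding this back and using $\alpha_{\cthree}\le\alpha_{\dtwo}-\alpha_{\ctwo}$ and $\alpha_{\cone}\le\alpha_{\dtwo}-\alpha_{\ctwo}$ in the second logarithm, \eqref{UBG3} is at most $3+\log_2 3+(\alpha_{\done}-\alpha_{\cone}+\alpha_{\dtwo}-\alpha_{\ctwo})\log_2\R$, whereas the $\tau_2=1$ achievable rate is at least $(\alpha_{\done}-\alpha_{\cone}+\alpha_{\dtwo}-\alpha_{\ctwo})\log_2\R-2$, so the gap is $5+\log_2 3<7$ bits. The bound \eqref{UBG4} in \twoC{} is handled verbatim after the substitution $(\done,\cone)\leftrightarrow(\dthree,\cthree)$, its condition $\alpha_{\dthree}-\alpha_{\cthree}<\alpha_{\done}$ now guaranteeing that the denominator exponent $\alpha_{\cthree}-\alpha_{\dthree}+\alpha_{\done}$ is positive. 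Taking the worst constant over all sub-regimes yields the stated $4+\log_2 3$, $7$, and $2+\log_2 3$ bounds, and the routine arithmetic can be relegated to an appendix.
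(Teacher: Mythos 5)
Your proposal is correct and follows essentially the same route as the paper's Appendix on the TDMA-TIN gap analysis: the same pairing of upper bounds \eqref{UBG1}--\eqref{UBG4} with the degenerate time allocations $\tau_2=1$, $\tau_3=1$, $\tau_1=1$, the same elementary relaxations on both sides, and the same use of the sub-regime and noisy-interference conditions to make the dominant exponents match. The only difference is cosmetic: in sub-regimes \oneC{} and \twoC{} your slightly more careful counting of summands yields $5+\log_23$ rather than the paper's $7$, which of course still establishes the stated $7$-bit bound.
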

\begin{proof}
The proof is given in Appendix \ref{appGap2}.   
\end{proof}

\subsection{TDMA-TIN strictly outperforms naive-TIN}
As we have seen in the LD-PIMAC, TDMA-TIN also outperforms naive-TIN in the Gaussian case. This interesting statement is given in the following corollary.
\begin{corollary}
\label{Cor:TDMAoutperformsTIN}
TDMA-TIN strictly outperforms naive-TIN, i.e., $$R_{\Sigma,\mathrm{TDMA-TIN}}>R_{\Sigma,\mathrm{{TIN}}}$$ for all values of channel parameters except for the case where $\frac{|h_{\done}|}{|h_{\dthree}|}=\frac{|h_{\cone}|}{|h_{\cthree}|}$.
\end{corollary}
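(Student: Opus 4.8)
The plan is to exhibit one explicitly chosen time-allocation in the TDMA-TIN maximization \eqref{TDMATINGACH} whose value already matches or beats the naive-TIN rate \eqref{TINGACH}, and then to read off the equality condition from a single convexity inequality. I write the naive-TIN rate as the sum of a ``Rx$1$ (MAC) term'' $\log_2\!\left(1+\frac{\R^{\alpha_{\done}}+\R^{\alpha_{\dthree}}}{1+\R^{\alpha_{\ctwo}}}\right)$ and a ``Rx$2$ term'' $\log_2\!\left(1+\frac{\R^{\alpha_{\dtwo}}}{1+\R^{\alpha_{\cone}}+\R^{\alpha_{\cthree}}}\right)$. In \eqref{TDMATINGACH} I would set $\tau_1=0$ and split the remaining time proportionally to the direct-link powers of the two MAC users,
\[
\tau_2=\frac{\R^{\alpha_{\done}}}{\R^{\alpha_{\done}}+\R^{\alpha_{\dthree}}},\qquad \tau_3=\frac{\R^{\alpha_{\dthree}}}{\R^{\alpha_{\done}}+\R^{\alpha_{\dthree}}}.
\]

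First I would check that this proportional split collapses the two Rx$1$ logarithms. Since $\tau_1=0$ gives $1-\tau_1=1$ and $\tau_1+\tau_3=\tau_3$, the signal arguments become $\R^{\alpha_{\done}}/\tau_2=\R^{\alpha_{\dthree}}/\tau_3=\R^{\alpha_{\done}}+\R^{\alpha_{\dthree}}$, so
\[
\tau_2\log_2\!\left(1+\tfrac{\R^{\alpha_{\done}}/\tau_2}{1+\R^{\alpha_{\ctwo}}}\right)+\tau_3\log_2\!\left(1+\tfrac{\R^{\alpha_{\dthree}}/\tau_3}{1+\R^{\alpha_{\ctwo}}}\right)=\log_2\!\left(1+\tfrac{\R^{\alpha_{\done}}+\R^{\alpha_{\dthree}}}{1+\R^{\alpha_{\ctwo}}}\right),
\]
which is exactly the Rx$1$ term of naive-TIN. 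This is the crux: the proportional allocation reproduces the MAC sum-rate unchanged, so the whole comparison reduces to the Rx$2$ terms alone.

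Second, for the Rx$2$ contribution I would introduce $\phi(t)=\log_2\!\left(1+\frac{\R^{\alpha_{\dtwo}}}{1+t}\right)$, which is strictly convex on $t\geq 0$ because $\phi''(t)=\frac{1}{\ln 2}\!\left[\frac{1}{(1+t)^2}-\frac{1}{(1+\R^{\alpha_{\dtwo}}+t)^2}\right]>0$. With $\tau_1=0$ the TDMA-TIN Rx$2$ contribution is $\tau_2\,\phi(\R^{\alpha_{\cone}}/\tau_2)+\tau_3\,\phi(\R^{\alpha_{\cthree}}/\tau_3)$, and Jensen's inequality yields
\[
\tau_2\,\phi\!\left(\tfrac{\R^{\alpha_{\cone}}}{\tau_2}\right)+\tau_3\,\phi\!\left(\tfrac{\R^{\alpha_{\cthree}}}{\tau_3}\right)\geq \phi\!\left(\tau_2\tfrac{\R^{\alpha_{\cone}}}{\tau_2}+\tau_3\tfrac{\R^{\alpha_{\cthree}}}{\tau_3}\right)=\phi\!\left(\R^{\alpha_{\cone}}+\R^{\alpha_{\cthree}}\right),
\]
which is precisely the Rx$2$ term of naive-TIN. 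Summing the two contributions, the value of \eqref{TDMATINGACH} at this particular $(\tau_1,\tau_2,\tau_3)$ is at least $R_{\Sigma,\mathrm{Naive-TIN}}$, and since $R_{\Sigma,\mathrm{TDMA-TIN}}$ is the maximum over all feasible allocations, $R_{\Sigma,\mathrm{TDMA-TIN}}\geq R_{\Sigma,\mathrm{Naive-TIN}}$.

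Finally, I would extract strictness: by strict convexity of $\phi$, Jensen is tight exactly when $\R^{\alpha_{\cone}}/\tau_2=\R^{\alpha_{\cthree}}/\tau_3$, which under the proportional split means $\R^{\alpha_{\cone}}/\R^{\alpha_{\done}}=\R^{\alpha_{\cthree}}/\R^{\alpha_{\dthree}}$, i.e.\ $|h_{\cone}|^2/|h_{\done}|^2=|h_{\cthree}|^2/|h_{\dthree}|^2$, i.e.\ $\frac{|h_{\done}|}{|h_{\dthree}|}=\frac{|h_{\cone}|}{|h_{\cthree}|}$. Hence whenever this condition fails, the chosen allocation already gives a strictly larger sum-rate, so $R_{\Sigma,\mathrm{TDMA-TIN}}>R_{\Sigma,\mathrm{Naive-TIN}}$. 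The main obstacle is the first step: recognizing that splitting the time proportionally to $\R^{\alpha_{\done}}$ and $\R^{\alpha_{\dthree}}$ is exactly what leaves the MAC sum-rate invariant, thereby isolating the Rx$2$ terms so that one convexity inequality decides the comparison. The excluded case $\frac{|h_{\done}|}{|h_{\dthree}|}=\frac{|h_{\cone}|}{|h_{\cthree}|}$ is precisely the situation in which the two MAC users are scaled replicas of each other, so that separating them in time brings no gain and the two schemes coincide.
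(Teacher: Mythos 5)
Your proof is correct and follows essentially the same route as the paper's (Appendix \ref{appTDMAG}): set $\tau_1=0$, choose the proportional split $\tau_2=\R^{\alpha_{\done}}/(\R^{\alpha_{\done}}+\R^{\alpha_{\dthree}})$ so that the Rx$1$ (MAC) term collapses to the naive-TIN value, and then show the Rx$2$ term can only increase, with equality precisely when $\R^{\alpha_{\cone}-\alpha_{\done}}=\R^{\alpha_{\cthree}-\alpha_{\dthree}}$. The only difference is cosmetic: where the paper establishes the Rx$2$ comparison by computing derivatives of $C(0,\tau_2,1-\tau_2)$ and locating its minimum at $\tau'$, you obtain the same inequality and the same equality condition in one line via Jensen's inequality applied to the strictly convex function $\phi$.
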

\begin{proof}
The proof is given in Appendix \ref{appTDMAG}. Note that the excluded case corresponds to the special case discussed in Remark \ref{Remark:PIMAC_IC}. We study this case later in details.
\end{proof}
\begin{remark}
The difference between TDMA-TIN and naive-TIN is that while all transmitters are simultaneously active in the latter, the same is not true in the former {which orthogonalizes the users in time}. Switching one or two transmitters off in TDMA-TIN, leads to a larger sum-rate than naive-TIN. A similar behaviour was observed in the $K$-user IC in~\cite[Example 2]{GengNaderializadehAvestimehrJafar} where higher rates can be achieved by switching one transmitter off and using {TIN} at the receivers.
\end{remark}
This means that although naive-TIN achieves the sum-capacity of the PIMAC within a constant gap in sub-regimes \oneA{} and \twoA, it can not be sum-capacity optimal since it is strictly outperformed by TDMA-TIN. Clearly, since naive-TIN achieves the sum-capacity of the Gaussian PIMAC within a constant gap in sub-regimes \oneA{} and \twoA, so does TDMA-TIN as shown in previous sub-section. 

\subsection{Sub-optimality of TIN}
Although TDMA-TIN always outperforms naive-TIN, it is sub-optimal in regime 3. As discussed in Section \ref{Det:regime3}, a combination of common and private signalling with interference alignment outperforms TDMA-TIN in regime 3 and hence, TDMA-TIN cannot achieve the capacity of the LD-PIMAC. {In this section, we show that TDMA-TIN cannot achieve the capacity of the Gaussian PIMAC within a constant gap in regime 3. To do this, first, we show that TDMA-TIN is sub-optimal in terms of GDoF. This is shown by proposing the so-called \IACP{} (interference alignment with common and private signalling) scheme which achieves a higher GDoF than TDMA-TIN in regime 3. Next, we show that the gap between achievable sum-rate of TDMA-TIN and capacity increases with SNR.} In the following proposition, we present the achievable GDoF of \IACP{}. 

\begin{mypro}
\label{Pro:regime3}
The following GDoF is achievable in regime 3 of the the Gaussian PIMAC using \IACP
\begin{align}
\label{sumrateIA}
d_{\Sigma,\rm{\IACP}} = d_{3,c}+2d_{a}+d_{1,p}+d_{2,p1}+d_{2,p2}+d_{3,p},
%R_{\Sigma,a} = R_{3,c}+R_{1,p}+2R_{a}+R_{2,p1}+R_{2,p2}+R_{3,p},
\end{align}
where
\begin{align}
d_{3,c} &\leq \min\{[\alpha_{\dthree}+r_{3,c} - \max\{0,
\alpha_{\done} + r_{1,a}, \alpha_{\done} + r_{1,p}, \alpha_{\dthree}+r_{3,a}, \alpha_{\dthree}+r_{3,p},
\alpha_{\ctwo} + r_{2,p1},
\alpha_{\ctwo} + r_{2,p2}\}]^+,  \notag\\
& \, [\alpha_{\cthree}+r_{3,c} - \max\{0,
\alpha_{\cone} + r_{1,a}, \alpha_{\cone} + r_{1,p}, \alpha_{\cthree}+r_{3,a}, \alpha_{\cthree}+r_{3,p},
\alpha_{\dtwo} + r_{2,p1},
\alpha_{\dtwo} + r_{2,p2}\}]^+\},
\end{align}
\begin{align}
d_{1,p} &\leq [\alpha_{\done}+r_{1,p}-\max\{0,\alpha_{\ctwo}+r_{2,p1},\alpha_{\ctwo}+
r_{2,p2}\}]^{+},\\
d_{3,p} &\leq [\alpha_{\dthree}+r_{3,p}-\max\{0,\alpha_{\ctwo}+r_{2,p1},
\alpha_{\ctwo}+r_{2,p2}\}]^{+},\\
d_{3,p} + d_{1,p} &\leq [\max\{\alpha_{\dthree}+r_{3,p},\alpha_{\done}+r_{1,p}\}-
\max\{0,\alpha_{\ctwo}+r_{2,p1},\alpha_{\ctwo}+r_{2,p2}\}]^{+},\\
d_{2,p1} &\leq [\alpha_{\dtwo}+r_{2,p1}-\max\{0,\alpha_{\cone}+r_{1,p},
\alpha_{\cone}+r_{1,a}, \alpha_{{\cthree}}+r_{3,p},
\alpha_{{\cthree}}+r_{3,a},\alpha_{\dtwo}+r_{2,p2}\}]^{+},\\
d_{2,p2}&\leq [\alpha_{\dtwo}+r_{2,p2}-\max\{0,\alpha_{\cone}+r_{1,p},
\alpha_{\cthree}+r_{3,p}\}]^{+},\\
d_{a} &\leq\begin{cases}
A_1 & \text{if }\  \frac{|h_{\dthree}|}{|h_{\cthree}|} < \frac{|h_{\done}|}{|h_{\cone}|}\\
A_2 & \text{otherwise}
\end{cases},
\end{align}
where $A_1$ and $A_2$ are defined in \eqref{A1} and \eqref{A2} at the top of this page, and for $i\in\{1,3\}$, $r_{i,a},r_{i,p},r_{2,p1},r_{2,p2},r_{3,c}\leq 0$,
$\R^{r_{1,a}}+\R^{r_{1,p}}\leq 1$, $\R^{r_{2,p1}}+\R^{r_{2,p2}}\leq 1$, $\R^{r_{3,c}}+\R^{r_{3,a}}+\R^{r_{3,p}}\leq 1$ and $\alpha_{\cone}+r_{1,a}=\alpha_{\cthree}+r_{3,a}$.
\end{mypro}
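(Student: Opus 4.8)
The plan is to lift the deterministic \IACP{} scheme of Subsection~\ref{sec:IA-CM_PR} to the Gaussian PIMAC, replacing each binary signal vector in \eqref{TxSignalsAlign} by an independently generated Gaussian codebook whose power is a fractional power of $\R$. Concretely, I would write $X_1 = x_{1,a}+x_{1,p}$, $X_2 = x_{2,p1}+x_{2,p2}$, and $X_3 = x_{3,c}+x_{3,a}+x_{3,p}$, where each component is a zero-mean circularly symmetric complex Gaussian with variance $P\R^{r}$ for the corresponding exponent $r\le 0$; the per-transmitter power constraints then translate directly into $\R^{r_{1,a}}+\R^{r_{1,p}}\le 1$, $\R^{r_{2,p1}}+\R^{r_{2,p2}}\le 1$, and $\R^{r_{3,c}}+\R^{r_{3,a}}+\R^{r_{3,p}}\le 1$. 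The role played by the bit-level offsets $\ell_1,\ell_3$ and by $R_{3,c}$ in the linear-deterministic construction is now played by these continuous power exponents, and the deterministic alignment condition \eqref{eq:alignmnent_cond} becomes the received-power matching condition $\alpha_{\cone}+r_{1,a}=\alpha_{\cthree}+r_{3,a}$, which guarantees that $h_{\cone}x_{1,a}$ and $h_{\cthree}x_{3,a}$ arrive at Rx$2$ with the same power exponent and hence occupy a single common GDoF level.

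Next I would fix a decoding order mirroring the deterministic scheme and read off each per-stream GDoF from the standard high-SNR formula for a Gaussian point-to-point/MAC channel: a stream received with power exponent $\beta$ and decoded while all not-yet-decoded streams together with the unit-variance noise (exponent $0$) are treated as noise contributes GDoF $[\beta-\max\{0,\text{remaining exponents}\}]^{+}$. At Rx$1$ (the MAC receiver) the common stream $x_{3,c}$ is decoded first against everything else as noise, which yields the first argument of the $\min$ in the $d_{3,c}$ bound; the requirement that the same common stream also be decodable at Rx$2$ supplies the second argument. The private bounds on $d_{1,p}$, $d_{3,p}$, their joint MAC constraint $d_{3,p}+d_{1,p}$, and $d_{2,p1}$, $d_{2,p2}$ then follow by identifying, at the relevant receiver, the power exponent of each stream and the maximal exponent among the streams still treated as noise, exactly as recorded by the brackets in the statement.

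The main obstacle, and the step meriting the most care, is the characterization of the aligned-stream GDoF $d_a$ captured by the two cases $A_1$ and $A_2$ according to whether $\frac{|h_{\dthree}|}{|h_{\cthree}|}<\frac{|h_{\done}|}{|h_{\cone}|}$. Unlike the deterministic case where the aligned bits coincide exactly, here $h_{\cone}x_{1,a}$ and $h_{\cthree}x_{3,a}$ only match in power exponent at Rx$2$, so I must verify simultaneously that (i) at Rx$2$ the superposition $h_{\cone}x_{1,a}+h_{\cthree}x_{3,a}$ is decodable as one aggregate layer after removing $x_{3,c}$ and peeling off the higher private layers, and (ii) at Rx$1$ the streams $x_{1,a}$ and $x_{3,a}$ are received at \emph{distinct} power exponents, separated according to the sign of $S=(\alpha_{\dthree}-\alpha_{\cthree})-(\alpha_{\done}-\alpha_{\cone})$, so that they can be peeled off successively without overlapping one another or the private and common layers. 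Collecting the binding constraints from (i) and (ii) produces the expressions $A_1$ and $A_2$, and summing all decoded streams $d_{3,c}+2d_a+d_{1,p}+d_{2,p1}+d_{2,p2}+d_{3,p}$, where the factor two reflects that the aligned level simultaneously carries one stream of Tx$1$ and one of Tx$3$, yields \eqref{sumrateIA} and completes the proof.
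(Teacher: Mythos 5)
Your overall strategy---replacing bit levels by power exponents, imposing the power constraints $\R^{r_{1,a}}+\R^{r_{1,p}}\le 1$ etc., fixing a successive decoding order that mirrors the deterministic scheme, and reading off each stream's GDoF as $[\beta-\max\{0,\text{remaining exponents}\}]^{+}$---is exactly the paper's strategy, and your treatment of the common and private streams is sound. The genuine gap is in the aligned streams. You encode $x_{1,a}$ and $x_{3,a}$ with \emph{independent Gaussian codebooks} and then assert that Rx$2$ can decode ``the superposition $h_{\cone}x_{1,a}+h_{\cthree}x_{3,a}$ as one aggregate layer'' and peel it off. With unstructured random codes the sum of two independent codewords is not a codeword of any lower-rate codebook, so the only way for Rx$2$ to reconstruct and cancel this interference is to jointly decode the pair $(W_{1,a},W_{3,a})$, which imposes $2R_a$ (not $R_a$) against the width of the single power level they share at Rx$2$. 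Since that level has width exactly $d_a$ in GDoF, joint decoding forces $d_a=0$ and the claimed term $2d_a$ in \eqref{sumrateIA} evaporates. Matching the received power exponents via $\alpha_{\cone}+r_{1,a}=\alpha_{\cthree}+r_{3,a}$ is necessary but nowhere near sufficient.

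The paper closes this gap with structure: Tx$1$ and Tx$3$ draw their alignment codewords from the \emph{same} nested-lattice codebook $(\Lambda_f,\Lambda_c)$, pre-rotate by $\mathrm{e}^{-\mathrm{j}\varphi_{\cone}}$ and $\mathrm{e}^{-\mathrm{j}\varphi_{\cthree}}$ so the effective channels to Rx$2$ are real and positive, and scale powers so that $|h_{\cone}|^2P_{1,a}=|h_{\cthree}|^2P_{3,a}$. Then the received aggregate is (a scaled version of) a modulo-$\Lambda_c$ sum of lattice points, which Rx$2$ decodes as a \emph{single} codeword at the compute-and-forward rate $R_a\le[\log_2(\tfrac12+\mathrm{SINR})]^{+}$, reconstructs, and subtracts before decoding $W_{2,p2}$. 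This is what lets each of the two aligned messages individually carry $d_a$ while consuming only one level of width $d_a$ at Rx$2$, yielding the factor $2d_a$. You correctly identified this as the delicate step, and conditions (i)--(ii) you list (single-level occupancy at Rx$2$, distinct levels at Rx$1$ ordered by the sign of $S$) are the right bookkeeping for $A_1$ versus $A_2$, but without the common lattice codebook and the phase/amplitude alignment the decodability claim in (i) fails and the proof does not go through.
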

\begin{figure*}
{
\begin{align}
\label{A1}
A_1 = \min\{&
[\alpha_{\cone}+r_{1,a}-\max\{0,\alpha_{\cone}+r_{1,p},
\alpha_{\cthree}+r_{3,p},\alpha_{\dtwo}+r_{2,p2}\}]^{+},\nonumber\\
&[\alpha_{\done}+r_{1,a}-\max\{0,\alpha_{\done}+r_{1,p},
\alpha_{\dthree}+r_{3,p},
\alpha_{\dthree}+r_{3,a},
\alpha_{\ctwo}+r_{2,p1},
\alpha_{\ctwo}+r_{2,p2}\}]^{+},\nonumber\\
%%%%%%%%%%%%%%%%%%%%%%%%%%%%%%%%%%%%%%%%%%%%%%%%%%%%%%%
&[\alpha_{\dthree}+r_{3,a}-\max\{0, \alpha_{\done}+r_{1,p},
\alpha_{\dthree}+r_{3,p},
\alpha_{\ctwo}+r_{2,p1},
\alpha_{\ctwo}+r_{2,p2}\}]^{+}\},
\\
\label{A2}
A_2 =  \min\{&[\alpha_{\cone}+r_{1,a}-\max\{0,\alpha_{\cone}+r_{1,p},
\alpha_{\cthree}+r_{3,p},\alpha_{\dtwo}+r_{2,p2}\}]^{+},\nonumber\\
%%%%%%%%%%%%%%%%%%%%%%%%%%%%%%%%%%%%%%%%%%%%%%%
&[\alpha_{\dthree}+r_{3,a}-\max\{0, \alpha_{\done}+r_{1,p},\alpha_{\done}+r_{1,a},\alpha_{\dthree}+r_{3,p},
\alpha_{\ctwo}+r_{2,p1},\alpha_{\ctwo}+r_{2,p2}\}]^{+},\nonumber\\
%%%%%%%%%%%%%%%%%%%%%%%%%%%%%%%%%%%%%%%%%%%%%%%
&[\alpha_{\done}+r_{1,a}-\max\{0,\alpha_{\done}+r_{1,p},
\alpha_{\dthree}+r_{3,p},
\alpha_{\ctwo}+r_{2,p1},\alpha_{\ctwo}+r_{2,p2}\}]^{+}\}
\end{align}}
\hrule
\end{figure*}
%\newpage
The details of the scheme are given in Appendix \ref{Gauss:Regime3}. By varying the power allocation parameters ($r$'s) of these schemes, different GDoF can be achieved. In order to obtain the highest achievable GDoF of the scheme, one has to optimize over the various power allocations. Next, we show that there exists power allocations that lead to higher achievable GDoF than that of TDMA-TIN in regime 3.

\begin{corollary}
\label{Theorem_subopt_TDMATINin 7,8,9}
TDMA-TIN cannot achieve the GDoF of the Gaussian PIMAC in regime 3.
\end{corollary}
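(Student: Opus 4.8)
The plan is to prove the corollary by exhibiting an achievable scheme—namely the \IACP{} scheme of Proposition~\ref{Pro:regime3}—whose sum-GDoF strictly exceeds the TDMA-TIN GDoF of \eqref{eq:GDoF-TDMA-TIN} everywhere in regime~3. Since $d_{\Sigma,\mathrm{TDMA-TIN}}$ is an achievable lower bound on the sum-GDoF of the channel, producing a second achievable scheme that beats it strictly immediately shows that TDMA-TIN is not GDoF-optimal in regime~3, which is precisely the assertion. This is the Gaussian counterpart of the LD-PIMAC statement already established for the deterministic model in Subsection~\ref{Det:regime3}, so the strategy is to transfer that argument level-by-level.

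First I would specialize the power-allocation exponents $r_{1,a},r_{1,p},r_{2,p1},r_{2,p2},r_{3,c},r_{3,a},r_{3,p}$ in Proposition~\ref{Pro:regime3} to the (normalized) counterparts of the level assignments $\ell_1,\ell_3,R_{3,c},R_a$ used in the deterministic construction of Subsection~\ref{sec:IA-CM_PR}; in particular the constraint $\alpha_{\cone}+r_{1,a}=\alpha_{\cthree}+r_{3,a}$ is exactly the GDoF analogue of the alignment condition \eqref{eq:alignmnent_cond}, and $d_{3,c}$, $d_a$ are set to the GDoF versions of $R_{3,c}$ in \eqref{R3c} and of the entries of Tables~\ref{Table:Rc} and~\ref{Table:RA} obtained by replacing every $n_k$ with $\alpha_k$. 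I would then check that, under this choice, the many $\max/\min$ bounds for $d_{3,c}$, $d_a$ (through $A_1,A_2$ in \eqref{A1}, \eqref{A2}), and the private GDoFs collapse to the same closed forms as in the deterministic case, so that the achievable sum-GDoF \eqref{sumrateIA} reduces to the GDoF analogue of \eqref{eq:R_sigma_Al_2}, namely
$$
d_{\Sigma,\IACP}=\max\{\alpha_{\dthree}-\alpha_{\cthree},\alpha_{\done}-\alpha_{\cone}\}+(\alpha_{\dtwo}-\alpha_{\ctwo})+d_a+d_{3,c},
$$
which coincides with the entries of Table~\ref{Table:RSigma} under $n_k\to\alpha_k$. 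Having matched the deterministic expressions, I would invoke the comparison already carried out for the LD-PIMAC: read verbatim with $n_k\to\alpha_k$, the chains \eqref{eq:R_3A_IA}, \eqref{eq:R_sum3B}, \eqref{eq:R_IA3C} show that in sub-regimes \threeA{} and \threeB{} the \IACP{} GDoF strictly dominates $\max\{\alpha_{\dthree},\alpha_{\done}-\alpha_{\cone}+\alpha_{\dtwo}-\alpha_{\ctwo}\}=d_{\Sigma,\mathrm{TDMA-TIN}}$, while in sub-regime \threeC{} the gain equals the strictly positive quantity $\min\{\mu-\nu,\ \alpha_{\done}-(\alpha_{\cone}-\alpha_{\cthree})^+-\mu,\ \alpha_{\dthree}-(\alpha_{\cthree}-\alpha_{\cone})^+-\mu\}$ with $\mu=\max\{\alpha_{\dthree}-\alpha_{\cthree},\alpha_{\done}-\alpha_{\cone}\}$ and $\nu=\min\{\alpha_{\dthree}-\alpha_{\cthree},\alpha_{\done}-\alpha_{\cone}\}$.

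The main obstacle is the second step: confirming that the Gaussian GDoF bounds of Proposition~\ref{Pro:regime3} genuinely reduce to the clean deterministic forms under the chosen exponents. The difficulty is that each bound on $d_{3,c}$, on $d_{2,p1}$, and on $d_a$ via $A_1,A_2$ is a $\min$ of $\max$-differences with many potentially active terms, so one must verify—case by case over the columns of Tables~\ref{Table:Rc}--\ref{Table:RA} and the sign of $S=(\alpha_{\dthree}-\alpha_{\cthree})-(\alpha_{\done}-\alpha_{\cone})$—that the power split renders the ``interfering'' terms inside each inner $\max$ inactive, so that no bound is looser than its deterministic counterpart and, crucially, $d_a$ remains strictly positive throughout regime~3. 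These checks are the real-valued analogue of the deterministic verification and are routine but lengthy; I would relegate them to Appendix~\ref{Gauss:Regime3}. Once they are in place, the strict GDoF gain over TDMA-TIN in every sub-regime of regime~3 follows at once, establishing the corollary.
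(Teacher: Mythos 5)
Your proposal follows essentially the same route as the paper's proof in Appendix~\ref{IACP_poweralloc}: there the power-allocation exponents of Proposition~\ref{Pro:regime3} are likewise fixed to mirror the deterministic level assignments (the translation recipe is spelled out in Appendix~\ref{app:int_Powerallocation}), the \IACP{} GDoF is evaluated under that choice sub-regime by sub-regime, and the strict dominance over $d_{\Sigma,\mathrm{TDMA-TIN}}$ is established by the same comparison as in the LD-PIMAC analysis. One small correction: in sub-regime \threeA{} the relevant allocation yields $d_a=0$ and the strict gain comes entirely from $d_{3,c}>0$, so it is the sum $d_a+d_{3,c}$, not $d_a$ alone, that is strictly positive throughout regime 3.
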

\begin{proof}
The proof is given in Appendix \ref{IACP_poweralloc}.
\end{proof}
%\begin{remark}
%From Corollary \ref{Cor:TDMAoutperformsTIN}, we conclude that the GDoF of naive-TIN cannot be larger than that of TDMA-TIN. Hence, from Corollary \ref{Theorem_subopt_TDMATINin 7,8,9}, we conclude that naive-TIN is also GDoF sub-optimal in regime~3.
%\end{remark}
\begin{remark}
{
Similar to the LD-PIMAC, in Gaussian case the scheme \IACP{} (proposed in  Appendix \ref{Gauss:Regime3}) cannot outperform TDMA-TIN in terms of GDoF when $\alpha_{\dthree}-\alpha_{\cthree} = \alpha_{\done}-\alpha_{\cone}$. Surprisingly, while TDMA-TIN achieves the sum-capacity of the LD-PIMAC when $n_{\dthree}-n_{\cthree} = n_{\done}-n_{\cone}$, it cannot achieve the GDoF of the Gaussian PIMAC in the equivalent case, i.e., $\alpha_{\dthree}-\alpha_{\cthree} = \alpha_{\done}-\alpha_{\cone}$, except over a subset of channel coefficient values of measure $0$. Moreover, naive-TIN is also GDoF sub-optimal in this case.
We show this by introducing a scheme which outperforms TDMA-TIN and naive-TIN in terms of GDoF. Interestingly, in this scheme phase alignment \cite{CadambeJafarWangAsymetricSig} is required. The scheme and its achievable GDoF are presented in Appendix~\ref{app:Special_case_Gauss} in details.}
\end{remark}
As we have shown, TDMA-TIN is not GDoF optimal in regime 3 and for the special case $\alpha_{\dthree}-\alpha_{\cthree} = \alpha_{\done}-\alpha_{\cone}$. Now, we are ready to extend this result and show sub-optimality of TDMA-TIN in these ceases. This result is presented in the following Corollary.
\begin{corollary}
TIN cannot achieve the sum-capacity of Gaussian PIMAC within a constant gap in regime 3 and for the case $\alpha_{\dthree}-\alpha_{\cthree} = \alpha_{\done}-\alpha_{\cone}$.
\end{corollary}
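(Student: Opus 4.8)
The plan is to exploit the fact that constant-gap optimality is strictly stronger than GDoF optimality, and then simply invoke the strict GDoF sub-optimality of TIN that has just been established. First I would record the elementary lever: if some scheme achieving sum-rate $R_{\Sigma,\mathrm{scheme}}(\R,\boldsymbol{\alpha})$ were within a constant gap $G$ (independent of $\R$) of the sum-capacity, i.e. $C_{\mathrm{G},\Sigma}(\R,\boldsymbol{\alpha})-R_{\Sigma,\mathrm{scheme}}(\R,\boldsymbol{\alpha})\leq G$ for all $\R$, then dividing by $\log_2(\R)$ and letting $\R\to\infty$ yields
\begin{align*}
d_\Sigma(\boldsymbol{\alpha})-d_{\Sigma,\mathrm{scheme}}(\boldsymbol{\alpha})\leq\lim_{\R\to\infty}\frac{G}{\log_2(\R)}=0,
\end{align*}
so the scheme would necessarily be GDoF-optimal. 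Contrapositively, a strict GDoF gap forces the rate gap to grow unboundedly with $\R$, which rules out constant-gap optimality.

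Next I would apply this lever to the two cases separately. For regime 3, Corollary~\ref{Theorem_subopt_TDMATINin 7,8,9} already gives $d_{\Sigma,\mathrm{TDMA-TIN}}(\boldsymbol{\alpha})<d_\Sigma(\boldsymbol{\alpha})$, the strict inequality being witnessed by the achievable GDoF of the \IACP{} scheme of Proposition~\ref{Pro:regime3}, which, being achievable, lower-bounds $d_\Sigma$. Since Corollary~\ref{Cor:TDMAoutperformsTIN} shows that TDMA-TIN dominates naive-TIN throughout regime 3, we also obtain $d_{\Sigma,\mathrm{Naive-TIN}}(\boldsymbol{\alpha})\leq d_{\Sigma,\mathrm{TDMA-TIN}}(\boldsymbol{\alpha})<d_\Sigma(\boldsymbol{\alpha})$. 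Hence neither variant of TIN is GDoF-optimal, and by the lever above neither can be constant-gap optimal in regime 3.

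For the boundary case $\alpha_{\dthree}-\alpha_{\cthree}=\alpha_{\done}-\alpha_{\cone}$, I would instead invoke the scheme constructed in Appendix~\ref{app:Special_case_Gauss}, which (as noted in the preceding remark) uses phase alignment and strictly exceeds the GDoF of both TDMA-TIN and naive-TIN for all channel coefficients outside a set of measure zero. This again produces a strict gap $d_{\Sigma,\mathrm{TIN}}(\boldsymbol{\alpha})<d_\Sigma(\boldsymbol{\alpha})$, so the same limiting argument closes the case.

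I expect the only genuinely substantive content to be the strict GDoF gaps themselves, which are already supplied by Corollary~\ref{Theorem_subopt_TDMATINin 7,8,9} and by the special-case scheme of Appendix~\ref{app:Special_case_Gauss}; the passage from a strict GDoF gap to the failure of constant-gap optimality is the short limit computation above. The one subtlety worth flagging as the main obstacle is the boundary case: there TDMA-TIN and naive-TIN coincide in GDoF (precisely the case excluded in Corollary~\ref{Cor:TDMAoutperformsTIN}), so one cannot piggyback on the naive-TIN bound and must appeal directly to the phase-alignment construction, taking care to state the conclusion modulo the measure-zero set of channel coefficients on which that construction degenerates.
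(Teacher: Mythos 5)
Your proposal is correct and follows essentially the same route as the paper: establish a strict GDoF gap for TIN (via the \IACP{} scheme in regime 3 and the phase-alignment scheme on the boundary line), then observe that a constant rate gap would force the GDoF gap to vanish in the limit $\R\to\infty$, which the paper phrases equivalently as the rate gap growing like $a\log_2(\R)$ for some $a>0$. The only point you leave implicit is that the corollary's ``TIN'' is meant to include TIN with arbitrary power control at the transmitters, not just the naive and TDMA variants; the paper closes this by invoking Lemma~\ref{Lemma:poweralooc_versus_TDMA_TIN_Gaussian}, which upper-bounds the GDoF of every such power-controlled TIN scheme by that of TDMA-TIN, so you should cite that lemma rather than only Corollary~\ref{Cor:TDMAoutperformsTIN} (which, as you correctly note, excludes the boundary case anyway). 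With that one citation added, your argument is complete and matches the paper's.
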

\begin{proof}
As we have shown in Lemma~\ref{Lemma:poweralooc_versus_TDMA_TIN_Gaussian}, the achieved GDoF using TIN at the receiver side alongside power control at the transmitter side is upper bounded by the GDoF of TDMA-TIN. Moreover, we have shown that TDMA-TIN is outperformed in terms of GDoF by better schemes in regime 3 and for the case when $\alpha_{\dthree}-\alpha_{\cthree} = \alpha_{\done}-\alpha_{\cone}$. Hence, in these cases, TDMA-TIN and subsequently TIN with power control cannot achieve the GDoF of the Gaussian PIMAC, i.e., $d_{\Sigma}(\boldsymbol{\alpha})$. Therefore, the gap between the achievable GDoF of TIN and the GDoF of the Gaussian PIMAC is lower bounded by a positive value $a$. This can be written as
\begin{align*}
d_{\Sigma}(\boldsymbol{\alpha}) - d_{\Sigma,\text{TIN}}(\boldsymbol{\alpha})\geq a >0.
\end{align*}
Now, by using the definition of GDoF, we can write the capacity of Gaussian PIMAC and achievable sum-rate of TIN as follows
\begin{align*}
C_{\rm{G},\Sigma}(\R,\boldsymbol{\alpha}) &= d_{\Sigma}(\boldsymbol{\alpha}) \log_2(\R) - o(\log_2(\R))\\
R_{\Sigma,\text{TIN}}(\R,\boldsymbol{\alpha}) &= d_{\Sigma,\text{TIN}}(\boldsymbol{\alpha}) \log_2(\R) - o_{\text{TIN}}(\log_2(\R)).
\end{align*}
Now, by obtaining the difference between the sum-capacity and the achievable sum-rate, we can write
\begin{align*}
C_{\rm{G},\Sigma}(\R,\boldsymbol{\alpha})- R_{\Sigma,\text{TIN}}(\R,\boldsymbol{\alpha}) \geq a\log_2(\R) \underbrace{- o_{\text{IACP}}(\log_2(\R)) - o_{\text{TDMA-TIN}}(\log_2(\R))}_{-o_s(\log_2(\R))}.
\end{align*}
While the term $-o_s(\log_2(\R))$ does not scale with $\R$ as $\R\to\infty$, the first term $a\log_2(\R)$ increases by $\R$.
This shows that the gap between the sum-capacity of the Gaussian PIMAC and the achievable sum-rate of TIN grows as a function of $\R$. Hence, TIN cannot achieve the sum-capacity of Gaussian PIMAC within a constant gap.
\end{proof}

\section{Conclusions}
\label{Conc}
We examined the optimality of the simple scheme of treating interference as noise (TIN) in a network consisting of a P2P channel interfering with a MAC (PIMAC). We derived some upper bounds on the sum-rate for both the deterministic PIMAC and the Gaussian PIMAC. Then, we characterized regimes of channel parameters where {TIN} is sum-capacity optimal for the deterministic PIMAC, and sum-capacity optimal within a constant gap for the Gaussian one. It turns out that one has to combine {TIN} with TDMA in order to improve the performance of {TIN}, and make it optimal for a wider range of parameters. This combination, denoted TDMA-TIN, strictly outperforms naive-TIN in the Gaussian PIMAC. This leads to the following conclusion: The naive-TIN scheme where all transmitters transmit simultaneously and all receivers treat interference as noise is always a sub-optimal scheme in the PIMAC (except for a special case). This conclusion is in contrast to the 2-user interference channel where naive-TIN is sum-capacity optimal {in the so-called noisy interference regime}. We have also shown that TDMA-TIN is outperformed by a combination of {TIN} {and} interference alignment in some cases. Interestingly, this includes cases where both receivers experience very-weak interference. 

{Surprisingly,} although {TIN} is optimal (within a constant gap) in some regimes of the Gaussian PIMAC with very-weak interference, there exists regimes also with very-weak interference where {TIN} is not optimal. In these regimes, interference alignment leads to rate improvement. Furthermore, there exist regimes where not all interference is very-weak, but still {TIN} is optimal.
\section*{Acknowledgment}
The authors would like to thank the reviewers and the editor for invaluable comments which helped significantly improve the quality of this paper.

\appendices
\section{Proof of Lemma \ref{Lemma:Poweralooc_versus_TDMA_TIN}}
\label{App:Poweralooc_versus_TDMA_TIN}
Here, we want to show that the achievable sum-rate using TIN at the receiver side alongside power control at transmitters is upper bounded by the achievable sum-rate using the proposed TDMA-TIN given in \eqref{DetAchTDMATIN}. 
To do this, first we need to write the achievable sum-rate using TIN with power control for the LD-PIMAC. Now, suppose that Tx's do not send with full power. It means that in the LD-PIMAC, Tx's do not use some most significant bits. In more details, Tx$i$ sends such that only its $n_{ji}$ bits are received at Rx$j$, where $n_{ji}$ satisfies
\begin{align}
n_{21} &= (n_{11} - (n_{\done}-n_{\cone}))^+, \label{eq:Power_alloc_proof_1}\\ 
n_{12} &= (n_{22} - (n_{\dtwo}-n_{\ctwo}))^+, \label{eq:Power_alloc_proof_2} \\
n_{23} &= (n_{13}-(n_{\dthree}-n_{\cthree}))^+ \quad \text{ if } n_{\cthree}\leq n_{\dthree}, \label{eq:Power_alloc_proof_3} \\ 
n_{13} &= (n_{23}-(n_{\cthree}-n_{\dthree}))^+ \quad \text{ if } n_{\dthree}<n_{\cthree}, \label{eq:Power_alloc_proof_4}
\end{align} 
and $ n_{11} \in [0 , n_{\done}] $, $n_{22} \in [0,n_{\dtwo}]$, $ n_{13} \in [0 , n_{\dthree}]$, and $ n_{23} \in [0 , n_{\cthree}]$. Let $\mathcal{S}_n$ represent the set of all possible $(n_{11}, n_{21}, n_{12}, n_{22}, n_{13},n_{23})$.
Now, by using TIN at the receiver side, the maximum achievable sum-rate is 
\begin{align}
%R_{\Sigma,\text{TIN}} =& \max_{(n_{11}, n_{21}, n_{22}, n_{12}, n_{13},n_{23})\in \mathcal{S}_n} \left(\max\{n_{11},n_{13}\}-n_{12}\right)^+ + \left(n_{22}-\max\{n_{21},n_{23}\}\right)^+. \label{eq:R_sum_power_alloc_TIN}
%%%%%%%%%%%%%%%%%%%%%%%%%%%%%%%%%%%%%%%%%%%%%%%%%%%
R_{\Sigma,\text{TIN}} =& \left(\max\{n_{11},n_{13}\}-n_{12}\right)^+ + \left(n_{22}-\max\{n_{21},n_{23}\}\right)^+. \label{eq:R_sum_power_alloc_TIN}
\end{align} 
The goal is to show that there exists no $(n_{11}, n_{21}, n_{12}, n_{22}, n_{13},n_{23})\in \mathcal{S}_n$ which provides a higher sum-rate than that of the TDMA-TIN in \eqref{DetAchTDMATIN}. To do this, we will show that for any arbitrary $(n_{11}, n_{21}, n_{12}, n_{22}, n_{13},n_{23})\in \mathcal{S}_n$, the achievable sum-rate using TDMA-TIN is larger than or equal to \eqref{eq:R_sum_power_alloc_TIN}. Before doing this, we present following properties of $(n_{11}, n_{21}, n_{12}, n_{22}, n_{13},n_{23})\in \mathcal{S}_n$
\begin{align}
n_{11} - n_{21}& = \min\{n_{11},n_{\done}-n_{\cone}\}\leq n_{\done}-n_{\cone} \label{eq:prop1} \\ 
n_{22}-n_{12} &= \min\{n_{22},n_{\dtwo}-n_{\ctwo}\}\leq n_{\dtwo}-n_{\ctwo} \label{eq:prop2}\\
n_{13}-n_{23} &\leq \min\{n_{13},(n_{\dthree}-n_{\cthree})^+\} \leq (n_{\dthree}-n_{\cthree})^+. \label{eq:prop3}
\end{align}
These properties can be directly obtained from \eqref{eq:Power_alloc_proof_1}-\eqref{eq:Power_alloc_proof_4}.
Now, we compare \eqref{DetAchTDMATIN} with \eqref{eq:R_sum_power_alloc_TIN} by distinguishing between following cases:
\begin{itemize}
\item $n_{13}\leq n_{11}$ and $n_{23}\leq n_{21}$: In this case the sum-rate in \eqref{eq:R_sum_power_alloc_TIN} is upper bounded as follows
\begin{align*}
R_{\Sigma,\text{TIN}} =& \left(n_{11}-n_{12}\right)^+ + \left(n_{22}-n_{21}\right)^+\\
\leq & \max\{n_{11}-n_{12}+n_{22}-n_{21} , n_{11},n_{22}\} \\
\overset{(a)}{\leq}& \max\{n_{\done}-n_{\ctwo}+n_{\dtwo}-n_{\cone} , n_{\done},n_{\dtwo}\},
\end{align*}
where in $(a)$, we used the properties \eqref{eq:prop1} and \eqref{eq:prop2} and the fact that $n_{11}\leq n_{\done}$, $n_{22}\leq n_{\dtwo}$, and all $n$-parameters are non-negative. Now, by using the condition in \eqref{eq:cond_LDPIMAC}, we can upper bound the sum-rate as follows
\begin{align*}
R_{\Sigma}\leq n_{\done}-n_{\ctwo}+n_{\dtwo}-n_{\cone} \leq R_{\Sigma,\mathrm{TDMA-TIN}}.
\end{align*}

\item $n_{13}\leq n_{11}$ and $n_{21}< n_{23}$: In this case, we upper bound the sum-rate in \eqref{eq:R_sum_power_alloc_TIN} as follows
\begin{align*}
R_{\Sigma,\text{TIN}} =& \left(n_{11}-n_{12}\right)^+ + \left(n_{22}-n_{23}\right)^+  \\
\overset{(b)}{\leq}& \left(n_{11}-n_{12}\right)^+ + \left(n_{22}-n_{21}\right)^+,
\end{align*}
where in $(b)$, we used the condition of this case $n_{21}< n_{23}$. As it is shown above this expression is upper bounded by  $R_{\Sigma,\mathrm{TDMA-TIN}}$.
\item $n_{11}<n_{13}$ and $n_{21}\leq n_{23}$: In this case, the sum-rate in \eqref{eq:R_sum_power_alloc_TIN} is upper bounded by
\begin{align*}
R_{\Sigma} &= (n_{13}-n_{12})^+ + (n_{22}-n_{23})^+ \\
&\leq \max\{n_{13}-n_{12} + n_{22}-n_{23}, n_{13}, n_{22}\}\\
&\overset{(c)}{\leq} \max\{n_{\dthree}-n_{\ctwo} + n_{\dtwo} - n_{\cthree},n_{\dthree},n_{\dtwo}\},
\end{align*}
where in $(c)$, we used the properties \eqref{eq:prop2} and \eqref{eq:prop3} and the fact that $n_{13}\leq n_{\dthree}$, $n_{22}\leq n_{\dtwo}$, and all $n$-parameters are non-negative. By using the condition in \eqref{eq:cond_LDPIMAC}, this sum-rate is upper bounded by 
\begin{align*}
R_{\Sigma} &\leq \max\{n_{\dthree}-n_{\ctwo} + n_{\dtwo} - n_{\cthree},n_{\dthree}, (n_{\done}-n_{\ctwo}) + (n_{\dtwo}-n_{\cone}) \} \leq R_{\Sigma,\text{TDMA-TIN}}.
\end{align*}
\item $n_{11}<n_{13}$ and $n_{23}<n_{21}$: In this case, the sum-rate in \eqref{eq:R_sum_power_alloc_TIN} is upper bounded as follows
\begin{align*}
R_{\Sigma} &= (n_{13}-n_{12})^+ + (n_{22}-n_{21})^+\\
&\overset{(d)}{\leq} (n_{13}-n_{12})^+ + (n_{22}-n_{23})^+,
\end{align*}
where in $(d)$, we used the condition of this case, i.e., $n_{23}<n_{21}$. As we have shown in the previous case, this expression is upper bounded by $R_{\Sigma,\text{TDMA-TIN}}$.
\end{itemize} 
We have shown for any arbitrary $(n_{11}, n_{21}, n_{12}, n_{22}, n_{13},n_{23})\in \mathcal{S}_n$ that the achievable sum-rate in \eqref{eq:R_sum_power_alloc_TIN} is upper bounded by $R_{\Sigma,\text{TDMA-TIN}}$. 

\section{Proof of Lemma \ref{Lemma:UB2}}
\label{appUB2}
For establishing the upper bound in Lemma \ref{Lemma:UB2}, we give $\bS^{q-n_{\cthree}}\X_{3}^n$ and {$(\bS^{q-n_{\ctwo}} \X_{2}^n,W_1)$} as side information to Rx$1$ and Rx$2$, respectively. Then, by using Fano's inequality we may write
{\begin{align*}
n(R_{\Sigma}-\epsilon_n)&\leq I(W_{1},W_{3};\Y_1^n,\bS^{q-n_{\cthree}}\X_{3}^n)
+I(W_{2};\Y_2^n,\bS^{q-n_{\ctwo}}\X_{2}^n,W_1)\\
&\overset{(a)}{=} I(W_{1},W_{3};\bS^{q-n_{\cthree}}\X_{3}^n)+   
      I(W_{1},W_{3};\Y_1^n|\bS^{q-n_{\cthree}}\X_{3}^n)\nonumber\\ 
      &\quad + I(W_{2};\bS^{q-n_{\ctwo}}\X_{2}^n|W_1) + I(W_{2};\Y_2^n|\bS^{q- 
      n_{\ctwo}}\X_{2}^n,W_1)\\
&\overset{(b)}{=} H(\bS^{q-n_{\cthree}}\X_{3}^n)+H(\Y_1^n|\bS^{q-n_{\cthree}}\X_{3}^n)-H(\Y_1^n|\bS^{q-
      n_{\cthree}}\X_{3}^n,W_{1},W_{3})\nonumber\\ 
      &\quad +H(\bS^{q-n_{\ctwo}}\X_{2}^n)+H(\Y_2^n|\bS^{q-n_{\ctwo}}\X_{2}^n,W_1)-H(\Y_2^n|\bS^{q-n_{\ctwo}}\X_{2}^n,W_{2},W_1)\\
&= H(\bS^{q-n_{\cthree}}\X_{3}^n)+H(\Y_1^n|\bS^{q-n_{\cthree}}\X_{3}^n)-H(\bS^{q-n_{\ctwo}} \X_{2}^{n})\nonumber\\ 
&\quad +H(\bS^{q-n_{\ctwo}}\X_{2}^n)+H(\Y_2^n|\bS^{q-n_{\ctwo}}\X_{2}^n,W_1)-H(\bS^{q-n_{\cthree}}\X_{3}^{n})\\
&= H(\Y_1^n|\bS^{q-n_{\cthree}}\X_{3}^n)+H(\Y_2^n|\bS^{q-n_{\ctwo}}\X_{2}^n,W_1),
\end{align*}}
where step $(a)$ follows by using the chain rule and the independence of the messages, and step $(b)$ follows from the fact that $\X_{3}$ and $\X_{2}$ can be reconstructed knowing $W_{3}$ and $W_{2}$, respectively, and since $\X_2$ is independent of $W_1$. Next, by proceeding similar to the proof of Lemma \ref{Lemma:UB1}, we can show that 
{\begin{align}
n(R_{\Sigma}-\epsilon_n)&\leq  n(\max\{n_{\done},n_{\ctwo},n_{\dthree}-n_{\cthree}\}+\max\{n_{\dtwo}-n_{\ctwo},n_{\cthree}\}).
\end{align}}
By dividing this inequality by $n$ and letting $n\to \infty$, we get 
the upper bound in (\ref{UB2}) which concludes the proof of Lemma \ref{Lemma:UB2}.

\section{Proof of Lemma \ref{Lemma:Entrop_diff_determ}}
\label{appUB36}   
Let $\boldsymbol{A}$ and $\boldsymbol{B}$ be two independent $\ell\times n$ random binary matrices representing the transmit signals of two transmitters, say A and B, over $n$ channel uses. Let $\boldsymbol{Y}_A$ and $\boldsymbol{Y}_B$ be received signals at receiver A and B, respectively. They are given by 
      \begin{align}
      \label{YA}
      \boldsymbol{Y}_A=\bS^{\ell-\ell_1}\boldsymbol{A}\oplus\bS^{\ell-\ell_2}
      \boldsymbol{B}, \\
            \label{YB}
      \boldsymbol{Y}_B=\bS^{\ell-\ell_1}\boldsymbol{A}\oplus\bS^{\ell-\ell_3}\boldsymbol{B}.      
      \end{align}
where {$\ell_1,\ell_2,\ell_3\in\mathbb{N}^0$} and $\ell_2\leq\ell_3-\ell_1$.
The gaol is to bound the difference between the entropies of\footnote{{A similar lemma with a slightly different structure than \eqref{YA} and \eqref{YB} was given in~\cite{BuehlerWunder}.}} $\boldsymbol{Y}_A$ and $\boldsymbol{Y}_B$.
%   \begin{proof}
To do this, we define the following matrices
   \begin{align}
   \boldsymbol{B}_1 = \boldsymbol{B}_{[1:(\ell_2-\ell_1)^+]}, \quad {\boldsymbol{B}_2 = \boldsymbol{B}_{[(\ell_2-\ell_1)^++1:\ell_2]}} %\begin{bmatrix}\boldsymbol{0}_{(\ell_1-\ell_2)^+,n}\\ \boldsymbol{B}_{[(\ell_2-\ell_1)^++1:\ell_2]}\end{bmatrix} 
   , \quad 
   \boldsymbol{B}_3 = \boldsymbol{B}_{[\ell_2+1:\ell_3-\ell_1]}, \quad 
   \boldsymbol{B}_4 = \boldsymbol{B}_{[\ell_3-\ell_1+1:\ell_3]},
   \quad 
{   \boldsymbol{B}_5 = \boldsymbol{B}_{[\ell_3+1:\ell]}}.
   \end{align}   
{Notice that if $\ell_2=\ell_3-\ell_1$, $\ell_2+1>\ell_3-\ell_1$. Hence, the matrix $\boldsymbol{B}_3$ does not have any component.
Moreover, due to the condition $\ell_2\leq \ell_3-\ell_1$, the matrices $\boldsymbol{B}_2$ and $\boldsymbol{B}_4$ do not have any common row.}
Therefore, the matrix $\boldsymbol{B}$ can be split into five matrices since $\boldsymbol{B}^T =
   \begin{bmatrix}
   \boldsymbol{B}_1^T & \boldsymbol{B}_2^T & \boldsymbol{B}_3^T & \boldsymbol{B}_4^T & \boldsymbol{B}_5^T
   \end{bmatrix}$.
{Moreover, we split the matrix $\boldsymbol{A}$ into 
\begin{align*}
\boldsymbol{A}_1 = \boldsymbol{A}_{[1:\ell_1]}, \quad \boldsymbol{A}_2 = \boldsymbol{A}_{[\ell_1+1:\ell]}.
\end{align*}  
Therefore, we have $\boldsymbol{A}^T=\begin{bmatrix}
\boldsymbol{A}_1^T & \boldsymbol{A}_2^T \end{bmatrix}$. Now, we can write 
\begin{align*}
\bS^{\ell-\ell_1}\boldsymbol{A} = \begin{bmatrix}
\boldsymbol{0}_{(\ell-\ell_1), n } \\ \boldsymbol{A}_1
\end{bmatrix} \quad \text{and} \quad 
\bS^{\ell-\ell_3}\boldsymbol{B} = \begin{bmatrix}
\boldsymbol{0}_{(\ell-\ell_3), n } \\ \boldsymbol{B}_1\\ \boldsymbol{B}_2\\ \boldsymbol{B}_3\\ \boldsymbol{B}_4
\end{bmatrix}.
\end{align*}}
%{Note that the size of matrices $\boldsymbol{B}_1$, $\boldsymbol{B}_2$, $\boldsymbol{B}_3$, and $\boldsymbol{B}_4$ are $(\ell_2-\ell_1)^+ \times n$, $(\ell_2-(\ell_2-\ell_1)^+) \times n$, $(\ell_3-\ell_1-\ell_2) \times n$, and $\ell_1\times n$, respectively. Therefore, the size of matrix $\boldsymbol{B}$ is $\ell\times n$, where $\ell$ is equal to $\max\{\ell_1,\ell_2,\ell_3\}=\ell_3$.}   
Now, we lower bound $H(\boldsymbol{Y}_B)$ as follows
   \begin{align*}
   H(\boldsymbol{Y}_B)&=H(\bS^{\ell-\ell_1}\boldsymbol{A}\oplus\bS^{\ell-\ell_3}\boldsymbol{B})\\ 
&  {=H\left(\begin{bmatrix} \boldsymbol{0}_{(\ell-\ell_3), n } \\
   \boldsymbol{B}_1   \\  \boldsymbol{B}_2 \\  \boldsymbol{B}_3  \\  \boldsymbol{A}_1 \oplus \boldsymbol{B}_4
   \end{bmatrix}\right)} \\
   & { =H\left(\boldsymbol{B}_1  ,  \boldsymbol{B}_2,  \boldsymbol{B}_3  , \boldsymbol{A}_1 \oplus \boldsymbol{B}_4 \right)}\\
      &{\stackrel{(a)}{=}H(\boldsymbol{B}_1)+H(\boldsymbol{B}_2|
      \boldsymbol{B}_1)+ H(\boldsymbol{B}_3|
      \boldsymbol{B}_2,\boldsymbol{B}_1)+H(\boldsymbol{A}_1\oplus\boldsymbol{B}_4|
      \boldsymbol{B}_1,\boldsymbol{B}_2,\boldsymbol{B}_3)}\\    
&{\stackrel{(b)}{\geq}
      H(\boldsymbol{B}_1)+H(\boldsymbol{B}_2|
            \boldsymbol{B}_1)+H(\boldsymbol{A}_1\oplus\boldsymbol{B}_4|
            \boldsymbol{B}_1,\boldsymbol{B}_2,\boldsymbol{B}_3,\boldsymbol{B}_4)}\\
%%%%%%%%%%%%%%%%%%%%%%%%%%%%%%%%%%%%%%%%%%%%%
      &{=H(\boldsymbol{B}_1)+H(\boldsymbol{B}_2|
      \boldsymbol{B}_1)+H(\boldsymbol{A}_1|
      \boldsymbol{B}_1,\boldsymbol{B}_2,\boldsymbol{B}_3,\boldsymbol{B}_4)}\\  
%%%%%%%%%%%%%%%%%%%%%%%%%%%%%%%%%%%%%%%%%%%%%
   &{\stackrel{(c)}{=}
   H(\boldsymbol{B}_1)+H(\boldsymbol{B}_2|
  \boldsymbol{B}_1)+H(\boldsymbol{A}_1|                      \boldsymbol{B}_1,\boldsymbol{B}_2)}\\
     &{\stackrel{(a)}{=}
     H(\boldsymbol{B}_1,\boldsymbol{B}_2,\boldsymbol{A}_1 )}\\
      &{\stackrel{(d)}{\geq}      H(f(\boldsymbol{B}_1,\boldsymbol{B}_2,\boldsymbol{A}_1))}\\      
      &{\stackrel{(e)}{=} H(\boldsymbol{Y}_A)},
   \end{align*} 
 where $(a)$ follows by using the chain rule, $(b)$ follows from the facts that entropy is {non-negative} and conditioning does not increase the entropy, $(c)$ follows due to the independence of the matrix $\boldsymbol{B}$ of $\boldsymbol{A}_1$, $(d)$ follows using the data processing inequality, and {$(e)$ follows by setting 
 \begin{align*}
 f(\boldsymbol{B}_1,\boldsymbol{B}_2,\boldsymbol{A}_1)&=  \begin{bmatrix}
\boldsymbol{0}_{(\ell-\ell_1), n } \\ \boldsymbol{A}_1
\end{bmatrix} \oplus
\begin{bmatrix}
\boldsymbol{0}_{(\ell-\ell_2), n }\\
\boldsymbol{B}_1 \\ \boldsymbol{B}_2
\end{bmatrix}  \\
&=\boldsymbol{S}^{\ell-\ell_1} \boldsymbol{A} \oplus \boldsymbol{S}^{\ell-\ell_2} \boldsymbol{B} \\ 
&= \boldsymbol{Y}_A .
 \end{align*}} 
 Therefore, $H(\boldsymbol{Y}_B)\geq H(\boldsymbol{Y}_A)$ which leads to $H(\boldsymbol{Y}_A)-H(\boldsymbol{Y}_B)\leq0$. 

%\section{{Proof of the Upper Bound for the sum-capacity of the LD-PIMAC in regime 3}}
\section{{Proof of \eqref{UB3} in Lemma \ref{Lemma:Regimes3and6}}}
\label{appUB3}
In this appendix, we establish the upper bound given in \eqref{UB3}. To do this, we give {$\s_1^n =\bS^{q-n_{\cone}}\X_1^n\oplus \bS^{q-(n_{\dthree}-n_{\done}+n_{\cone})^+} \X_3^n$} as side information to Rx$1$ and {$\s_2^n=\bS^{q-n_{\ctwo}}\X_2^n$} to Rx$2$. The sum-capacity of the original PIMAC is upper bounded by the {genie-aided} PIMAC. By using Fano's inequality, we can write
\begin{align*}
n(R_\Sigma-\epsilon_n) \leq& I(\X_1^n,\X_3^n;\Y_1^n,\s_1^n) + I(\X_2^n;\Y_2^n,\s_2^n) \\
\overset{(a)}{=}& I(\X_1^n,\X_3^n;\s_1^n) + I(\X_1^n,\X_3^n;\Y_1^n|\s_1^n) + I(\X_2^n;\s_2^n) + I(\X_2^n;\Y_2^n|\s_2^n) \\ 
=& H(\s_1^n) - H(\s_1^n|\X_1^n,\X_3^n) + H(\Y_1^n|\s_1^n) -H(\Y_1^n|\s_1^n,\X_1^n,\X_3^n) + H(\s_2^n)-H(\s_2^n|\X_2^n) \notag \\ &+ H(\Y_2^n|\s_2^n) - H(\Y_2^n|\s_2^n,\X_2^n) ,
\end{align*}
where in $(a)$, we use the chain rule. Using the definition of $\s_1^n$ and $\s_2^n$, we obtain
{\begin{align}
n(R_\Sigma-\epsilon_n)\leq & H(\bS^{q-n_{\cone}}\X_1^n\oplus \bS^{q-(n_{\dthree}-n_{\done}+n_{\cone})^+} \X_3^n)  + H(\Y_1^n|\bS^{q-n_{\cone}}\X_1^n\oplus \bS^{q-(n_{\dthree}-n_{\done}+n_{\cone})^+} \X_3^n) \notag \\ & 
-H(\bS^{q-n_{\ctwo}}\X_2^n)  + H(\bS^{q-n_{\ctwo}}\X_2^n) + H(\Y_2^n|\bS^{q-n_{\ctwo}}\X_2^n)- H(\bS^{q-n_{\cone}}\X_1^n\oplus\bS^{q-n_{\cthree}}\X_3^n), \label{Regime3b}
\end{align}}
{since knowing $\X_1^n$ and $\X_3^n$, $\s_1^n$ is not random and $H(\s_1^n|\X_1^n,\X_3^n)=0$. Moreover, $H(\s_2^n|\X_2^n)=0$, since knowing $\X_2^n$, $\s_2^n$ can be completely reconstructed. In addition to them, knowing $\X_1^n$ and $\X_3^n$, the remaining randomness of $\Y_1^n$ is that of $\X_2^n$ and knowing $\X_2^n$, the remaining randomness of $\Y_2^n$ is that of $\X_1^n$ and $\X_3^n$. We used also the fact that $\X_1^n$, $\X_2^n$ and $\X_3^n$ are independent.} By using Lemma \ref{Lemma:Entrop_diff_determ} similar to the proof of \eqref{UB4}, we can write
{\begin{align}
H(\bS^{q-n_{\cone}}\X_1^n\oplus \bS^{q-(n_{\dthree}-n_{\done}+n_{\cone})^+} \X_3^n) - H(\bS^{q-n_{\cone}}\X_1^n\oplus\bS^{q-n_{\cthree}}\X_3^n) \leq 0,
\end{align} 
as long as the condition of \eqref{UB3} is satisfied.} Therefore, we upper bound the expression in \eqref{Regime3b} as follows
{\begin{align}
n(R_\Sigma - \epsilon_n) \leq 
H(\Y_1^n|\bS^{q-n_{\cone}}\X_1^n\oplus \bS^{q-(n_{\dthree}-n_{\done}+n_{\cone})^+} \X_3^n) + H(\Y_2^n|\bS^{q-n_{\ctwo}}\X_2^n). \label{UB_regime_3_1}
\end{align}
Next, by proceeding similar to the proof of \eqref{UB4}, we can upper bound the expression in \eqref{UB_regime_3_1} as follows
\begin{align}
n(R_\Sigma-\epsilon_n)\leq n(n_{\done}-n_{\cone} + \max\{n_{\dtwo}-n_{\ctwo},n_{\cthree}\}).
\end{align}}
By dividing this inequality by $n$ and letting $n\to\infty$, we get the upper bound in \eqref{UB3} which concludes the proof of Lemma \ref{Lemma:Regimes3and6}.

\section{Optimality of TIN when $n_{\dthree}-n_{\cthree}=n_{\done}-n_{\cone}$}
\label{app:Special_case_Det}
{
In this appendix, we want to show that TDMA-TIN is optimal on the whole line $n_{\dthree}-n_{\cthree}=n_{\done}-n_{\cone}$ while naive-TIN is optimal when $n_{\cthree}\leq n_{\dtwo}-n_{\ctwo}$. To show the optimality of TDMA-TIN, we need to find a tight upper bound for the capacity of the LD-PIMAC. This is presented in the following lemma. 
\begin{lemma}
The sum-capacity of the LD-PIMAC with $n_{\dthree}-n_{\cthree} = n_{\done}-n_{\cone}$ is upper bounded by 
\begin{align}
C_{\rm{det},\Sigma}\leq \max\{n_{\done}-n_{\cone}, n_{\ctwo}\}+\max\{n_{\cone},n_{\cthree},n_{\dtwo}-n_{\ctwo}\}. \label{eq:UB_special_case_Det}
\end{align}
\end{lemma}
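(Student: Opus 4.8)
The plan is to exploit the observation of Remark~\ref{Remark:PIMAC_IC}: on the line $n_{\dthree}-n_{\cthree}=n_{\done}-n_{\cone}$ the signals of Tx$1$ and Tx$3$ merge into a single effective transmitter, so the channel collapses to a two-user LD-IC. Concretely I set $\tilde{\X}_1=\bS^{q-n_{\done}}\X_1\oplus\bS^{q-n_{\dthree}}\X_3$. Because $n_{\done}-n_{\cone}=n_{\dthree}-n_{\cthree}$, one checks directly that $\bS^{q-n_{\cone}}\X_1\oplus\bS^{q-n_{\cthree}}\X_3=\bS^{n_{\done}-n_{\cone}}\tilde{\X}_1$, hence the LD-PIMAC becomes an IC with inputs $\tilde{\X}_1,\X_2$ and outputs $\Y_1=\tilde{\X}_1\oplus\bS^{q-n_{\ctwo}}\X_2$ and $\Y_2=\bS^{n_{\done}-n_{\cone}}\tilde{\X}_1\oplus\bS^{q-n_{\dtwo}}\X_2$, whose merged direct/cross gains are $\max\{n_{\done},n_{\dthree}\}$ and $\max\{n_{\cone},n_{\cthree}\}$. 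I would then bound $R_\Sigma=R_1+R_2+R_3$ by treating $(W_1,W_3)$ as one message and running the single-interferer genie argument of Lemma~\ref{Lemma:UB1} on this effective IC.

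For the genie step I give Rx$1$ the side information $\s_1^n=\bS^{q-n_{\cone}}\X_1^n\oplus\bS^{q-n_{\cthree}}\X_3^n$ (the interference the merged user inflicts on Rx$2$) and give Rx$2$ the side information $\s_2^n=\bS^{q-n_{\ctwo}}\X_2^n$. Applying Fano's inequality, the chain rule, and the independence of the messages exactly as in Lemma~\ref{Lemma:UB1}, I expect all interference entropy terms to cancel and obtain
\begin{align*}
n(R_\Sigma-\epsilon_n)\le H(\Y_1^n\mid \s_1^n)+H(\Y_2^n\mid \s_2^n).
\end{align*}
The cancellation hinges on the identity $\s_1^n=\bS^{n_{\done}-n_{\cone}}\tilde{\X}_1^n$: after removing $W_2$, the residual randomness of $\Y_2^n$ is precisely $\s_1^n$, so $H(\Y_2^n\mid\s_2^n,W_2)=H(\s_1^n)$ cancels $H(\s_1^n)=I(W_1,W_3;\s_1^n)$, while $H(\bS^{q-n_{\ctwo}}\X_2^n)$ enters once with each sign. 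This is exactly where the hypothesis $n_{\dthree}-n_{\cthree}=n_{\done}-n_{\cone}$ is consumed.

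Finally I bound the two conditional entropies by level counting, using that all surviving signals are aligned to the most significant level $q$. Given $\s_1^n$, the top $\max\{n_{\cone},n_{\cthree}\}$ levels of $\tilde{\X}_1$ are known, so the residual uncertainty in $\Y_1$ stems from the top $n_{\done}-n_{\cone}$ levels of $\tilde{\X}_1$ together with the top $n_{\ctwo}$ levels of $\X_2$, giving $H(\Y_1^n\mid\s_1^n)\le n\max\{n_{\done}-n_{\cone},n_{\ctwo}\}$. Symmetrically, given $\s_2^n$ the top $n_{\ctwo}$ levels of $\X_2$ are known, so $\Y_2$ retains the top $\max\{n_{\cone},n_{\cthree}\}$ levels of the shifted merged signal and the top $n_{\dtwo}-n_{\ctwo}$ levels of $\X_2$, yielding $H(\Y_2^n\mid\s_2^n)\le n\max\{n_{\cone},n_{\cthree},n_{\dtwo}-n_{\ctwo}\}$. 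Summing, dividing by $n$, and letting $n\to\infty$ delivers \eqref{eq:UB_special_case_Det}.

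I expect the main obstacle to be the bookkeeping of the level-counting step: after subtracting the genie-revealed levels one must identify exactly which top-aligned contributions of $\tilde{\X}_1$ and $\X_2$ remain uncertain, and argue that since all are anchored at the most significant level $q$ their union is governed by the deepest-reaching one, which is why a single $\max$ appears. The simplifications $\max\{n_{\done},n_{\dthree}\}-\max\{n_{\cone},n_{\cthree}\}=n_{\done}-n_{\cone}$ and $\max\{n_{\done},n_{\dthree}\}-(n_{\done}-n_{\cone})=\max\{n_{\cone},n_{\cthree}\}$, both immediate consequences of $n_{\done}-n_{\cone}=n_{\dthree}-n_{\cthree}$, are what collapse the bound into the stated clean form; the only genuinely delicate point is verifying that the genie terms cancel exactly rather than leaving a residual.
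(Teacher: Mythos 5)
Your proposal is correct and follows essentially the same route as the paper's proof in Appendix~\ref{app:Special_case_Det}: the identical genie signals $\s_1^n=\bS^{q-n_{\cone}}\X_1^n\oplus\bS^{q-n_{\cthree}}\X_3^n$ and $\s_2^n=\bS^{q-n_{\ctwo}}\X_2^n$, the same exact cancellation $H(\Y_1^n|\s_1^n,\X_1^n,\X_3^n)=H(\s_2^n)$ and $H(\Y_2^n|\s_2^n,\X_2^n)=H(\s_1^n)$, and the same level counting exploiting that $n_{\done}-n_{\cone}=n_{\dthree}-n_{\cthree}$ makes $\s_1^n$ appear intact inside $\Y_1^n$. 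The only cosmetic differences are your explicit merged-transmitter framing $\tilde{\X}_1$ (which the paper relegates to a remark) and a couple of harmless ``top'' versus ``bottom'' level slips that do not affect the counts.
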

\begin{proof}
To establish this upper bound, we give $\s_1^n = \bS^{q-n_{\cone}}\X_1^n\oplus \bS^{q-n_{\cthree}} \X_3^n$ and $\s_2^n=\bS^{q-n_{\ctwo}}\X_2^n$ to Rx$1$ and Rx$2$, respectively. Obviously, The sum-capacity of the generated PIMAC (after providing the side information) provides an upper bound for the sum-rate of the original PIMAC. Now, we use the Fano's inequality to write
\begin{align*}
n(R_{\Sigma}-\epsilon_n) \leq& I(\X_1^n,\X_3^n;\Y_1^n,\s_1^n) + I(\X_2^n;\Y_2^n,\s_2^n) \\
%%%%%%%%%%%%%%%%%%%%%%%%%%%%%%%%%%%%%%%%%%%%%%%%
\overset{(a)}{=}& I(\X_1^n,\X_3^n;\s_1^n) + I(\X_1^n,\X_3^n;\Y_1^n|\s_1^n) + I(\X_2^n;\s_2^n) + I(\X_2^n;\Y_2^n|\s_2^n) \\ 
%%%%%%%%%%%%%%%%%%%%%%%%%%%%%%%%%%%%%%%%%%%%%%%%
\overset{(b)}{=}&  H(\Y_1^n|\s_1^n)  + H(\Y_2^n|\s_2^n), 
\end{align*}
where in $(a)$, we used the chain rule and in $(b)$, we used the fact that $ H(\s_1^n|\X_1^n,\X_3^n) = 0$, 
$ H(\Y_1^n|\s_1^n,\X_1^n,\X_3^n) = H(\s_2^n)$,
$H(\s_2^n|\X_2^n) = 0$, and $H(\Y_2^n|\s_2^n,\X_2^n) = H(\s_1^n)$. 
Now, notice that $\s_1^n$ appears in the signal vector $\Y_1^n$ since $n_{\done}-n_{\cone} = n_{\dthree}-n_{\cthree}$. Hence, knowing $\s_1^n$, the randomness of the top-most $n_{\cone}$ bits of $\X_1^n$ and the top-most $n_{\cthree}$ bits of $\X_3^n$ can be removed from $\Y_1^n$. Hence, we can write
\begin{align*}
n(R_{\Sigma}-\epsilon_n)  &\leq n(\max\{n_{\done}-n_{\cone}, n_{\ctwo}\} + \max\{n_{\cone},n_{\cthree},n_{\dtwo}-n_{\ctwo}\}).
\end{align*}
By dividing the expression by $n$ and letting $n\to \infty$, we obtain \eqref{eq:UB_special_case_Det}.
\end{proof}
Now, by using the condition $n_{\dthree}-n_{\cthree}=n_{\done}-n_{\cone}$ and the assumption of this work $n_{\cone}+n_{\ctwo}\leq \min\{n_{\done},n_{\dtwo}\}$, we can write the upper bound in \eqref{eq:UB_special_case_Det} as follows
\begin{align}
C_{\rm{det},\Sigma}\leq \max\{n_{\dthree},n_{\done}-n_{\cone}+n_{\dtwo}-n_{\ctwo}\}.
\end{align}
This coincides with the achievable sum-rate of TDMA-TIN given in \eqref{DetAchTDMATIN}. Hence, we conclude that TDMA-TIN is optimal when $n_{\dthree}-n_{\cthree}=n_{\done}-n_{\cone}$ holds. Moreover, by comparing the upper bound in \eqref{eq:UB_special_case_Det} with the achievable sum-rate using naive-TIN in \eqref{DetAchTIN}, we conclude that naive-TIN achieves the upper bound in \eqref{eq:UB_special_case_Det} when the channel parameters of the LD-PIMAC satisfy the conditions $n_{\dthree}-n_{\cthree}=n_{\done}-n_{\cone}$ and $n_{\cthree}\leq n_{\dtwo}-n_{\ctwo}$.}

\section{Proof of Theorem \ref{Thm:GaussUpperBounds}}
\label{appGU2}
{In this appendix, we prove that the sum-capacity of the PIMAC is upper bounded as given in Theorem \ref{Thm:GaussUpperBounds}. We start by restating the sum-capacity upper bounds:}
	  \begin{align}
	  \label{eqn:GaussBnd1}
C_{\mathrm{G},\Sigma} \leq&\log_2\left(1+\R^{\alpha_{\ctwo}}+\R^{\alpha_{\dthree}}+
\dfrac{\R^{\alpha_{\done}}}{1+\R^{\alpha_{\cone}}}\right) +\log_2\left(1+\R^{\alpha_{\cone}}+\dfrac{\R^{\alpha_{\dtwo}}}{1+\R^{\alpha_{\ctwo}}}\right),\\
%%%%%%%%%%%%%%%%%%%%%%%%%%%%%%%%%%%%%%%%%%%%%%%%%%%%%%%
\label{eqn:GaussBnd2}
C_{\mathrm{G},\Sigma}\leq&\log_2\left(1+\R^{\alpha_{\ctwo}}+\R^{\alpha_{\done}}
+\dfrac{\R^{\alpha_{\dthree}}}
      {1+\R^{\alpha_{\cthree}}}\right)
      +\log\left(1+\R^{{\alpha_{\cthree}}}+
      \dfrac{\R^{\alpha_{\dtwo}}}{1+\R^{\alpha_{\ctwo}}}\right),\\
%%%%%%%%%%%%%%%%%%%%%%%%%%%%%%%%%%%%%%%%%%%%%%%%%%%%%%%
      \label{eqn:GaussBnd3}
      C_{\mathrm{G},\Sigma}\leq&  \log_2\left(1+ \R^{\alpha_{\ctwo}}+\frac{\R^{\alpha_{\done}}+ \R^{\alpha_{\dthree}}}{1+\R^{{\alpha_{\cone}-\alpha_{\done}}}(\R^{\alpha_{\done}}+\R^{\alpha_{\dthree}})} \right) + \log_2\left(1+\R^{{\alpha_{\cthree}}} + \R^{{\alpha_{\cone}}}+\frac{\R^{\alpha_{\dtwo}}} {1+\R^{\alpha_{\ctwo}}}\right) + 1 ,\notag \\ 
      & \text{if} \quad \alpha_{\dthree} -\alpha_{\done}\leq \alpha_{\cthree} -2\alpha_{\cone},     \\
%%%%%%%%%%%%%%%%%%%%%%%%%%%%%%%%%%%%%%%%%%%%%%%%%%%%%%%%
      \label{eqn:GaussBnd4}
            C_{\mathrm{G},\Sigma}\leq& \log_2\left(1+ \R^{\alpha_{\ctwo}}+\frac{ \R^{\alpha_{\done}} +\R^{\alpha_{\dthree}}}{1+\R^{\alpha_{\cthree} - \alpha_{\dthree}}( \R^{\alpha_{\done}} + \R^{\alpha_{\dthree}})} \right) +  
\log_2\left(1+\R^{\alpha_{\cthree}} + \R^{\alpha_{\cone}}+ \frac{\R^{\alpha_{\dtwo}}}{1+\R^{\alpha_{\ctwo}}}\right) +            
             1,\notag \\&  \text{if} \quad \alpha_{\done}-\alpha_{\dthree} \leq \alpha_{\cone} - 2\alpha_{\cthree}.
      \end{align}
{These bounds are proved in the next subsections.}

\subsection{{Proof of \eqref{eqn:GaussBnd1}}}
In order to derive the first upper bound in Theorem \ref{Thm:GaussUpperBounds}, $S_1^n=h_{\cone}X_1^n+Z_2^n$ is given to Rx$1$ as side information, and $S_2^n=h_{\ctwo}X_2^n+Z_1^n$ and $X_3^n$ are given to Rx$2$ as side information. Then, by Fano's inequality, we have
\begin{align*}
      n(R_\Sigma-\epsilon_n)&\leq I(X_1^n,X_3^n;Y_1^n,
      S_1^n)+I(X_2^n;Y_2^n,S_2^n,X_3^n),
\end{align*}
      where $\epsilon_n\to0$ as $n\to\infty$. Then, we proceed by using the chain rule to write
\begin{align*}
      n(R_\Sigma-\epsilon_n)&\leq I(X_1^n,X_3^n;S_1^n)+I(X_1^n,X_3^n;Y_1^n|S_1^n)+I(X_2^n;X_3^n)+I(X_2^n;S_2^n|X_3^n)+I(X_2^n;Y_2^n|S_2^n,X_3^n).
\end{align*}
Since, $X_2$ and $X_3$ are independent, then $I(X_2^n;X_3^n)=0$ and we get
\begin{align*}
      n(R_\Sigma-\epsilon_n)&\leq I(X_1^n,X_3^n;S_1^n)+I(X_1^n,X_3^n;Y_1^n|S_1^n)+I(X_2^n;S_2^n|X_3^n)+I(X_2^n;Y_2^n|S_2^n,X_3^n)\\
      &=h(S_1^n)-h(S_1^n|X_1^n,X_3^n)+h(Y_1^n|S_1^n)-h(Y_1^n|X_1^n,X_3^n,S_1^n)\\&\ +h(S_2^n|X_3^n)-h(S_2^n|X_3^n,X_2^n)+h(Y_2^n|S_2^n,X_3^n)-h(Y_2^n|S_2^n,X_2^n,X_3^n)\\
      &=h(S_1^n)-h(Z_2^n)+h(Y_1^n|S_1^n)-h(S_2^n)+h(S_2^n)-h(Z_1^n) +h(Y_2^n|S_2^n,X_3^n)-h(S_1^n)\\
      &=h(Y_1^n|S_1^n)+h(Y_2^n|S_2^n,X_3^n)-h(Z_1^n)-h(Z_2^n).
\end{align*}
Now, by using the chain rule, keeping in mind that the noise is i.i.d., we can continue with
\begin{align*}
      n(R_\Sigma-\epsilon_n)&\leq \sum_{t=1}^nh(Y_{1}[t]|S_1^{n},Y_1^{t-1})+\sum_{t=1}^nh(Y_{2}[t]|S_2^{n},X_{3}^{n},Y_2^{t-1})-\sum_{t=1}^nh(Z_{1}[t])-\sum_{t=1}^nh(Z_{2}[t]).
\end{align*}
Since the noise is circularly symmetric complex Gaussian with unit variance, we  have $h(Z_{1}[t])=h(Z_{2}[t])=\log_2(\pi e)$. On the other hand,
\begin{align*}
&\hspace{-.5cm}\frac{1}{n}\sum_{t=1}^n\left[h(Y_{1}[t]|S_1^{n},Y_1^{t-1})-h(Z_{1}[t])\right]\nonumber\\
&\stackrel{(a)}{\leq} \frac{1}{n}\sum_{t=1}^n\left[h(Y_{1}[t]|S_1[t])-h(Z_{1}[t])\right]\\
&\stackrel{(b)}{\leq}\frac{1}{n}\sum_{t=1}^n\log_2\left(1+|h_{\ctwo}|^2P_{2}[t]+|h_{\dthree}|
      ^2P_{3}[t]+\dfrac{|h_{\done}|^2P_{1}[t]}{1+|h_{\cone}|
      ^2P_{1}[t]}\right)\\
&\stackrel{(c)}{\leq}\log_2\left(1+|h_{\ctwo}|
      ^2\left(\dfrac{1}{n}\sum_{t=1}^nP_{2}[t]\right)+|h_{\dthree}|
      ^2\left(\dfrac{1}{n}\sum_{t=1}^nP_{3}[t]\right)+\dfrac{|h_{\done}|
      ^2\left(\dfrac{1}{n}\sum_{t=1}^nP_{1}[t]\right)}{1+|h_{\cone}|
      ^2\left(\dfrac{1}{n}\sum_{t=1}^nP_{1}[t]\right)}\right)\\
&\stackrel{(d)}{=}\log_2\left(1+|h_{\ctwo}|^2P_{2}+|h_{\dthree}|^2P_{3}+\dfrac{|h_{\done}|^2P_{1}}{1+|h_{\cone}|^2P_{1}}\right)\\
&{\stackrel{(e)}{\leq}\log_2\left(1+|h_{\ctwo}|^2P+|h_{\dthree}|^2P+\dfrac{|h_{\done}|^2P}{1+|h_{\cone}|^2P}\right)}\\
&=\log_2\left(1+\R^{\alpha_{\ctwo}}+\R^{\alpha_{\dthree}}+
\dfrac{\R^{\alpha_{\done}}}{1+\R^{\alpha_{\cone}}}\right),
   \end{align*}  
where 
\begin{itemize}
\item[$(a)$] follows from the fact that conditioning does not increase the differential entropy and
\item[$(b)$] follows from the fact that Gaussian distribution maximizes the conditional differential entropy for a given covariance constraint with $P_i[t]$ being the transmit power of Tx$i$ at time instant $t$, 
\item[$(c)$] follows from Jensen's inequality,
\item[$(d)$] {follows by denoting the average transmit power of Tx$i$ by $P_i$, and}
\item[$(e)$] {follows from the power constraint $P_i\leq P$.}
\end{itemize}
Similarly 
 \begin{align*}
\frac{1}{n}\sum_{t=1}^n\left[h(Y_{2}[t]|S_2^{n},{X_3^n},Y_2^{t-1})-h(Z_{2}[t])\right]& \leq {\frac{1}{n}\sum_{t=1}^n\left[h(h_{\dtwo} X_2[t]+h_{\cone} X_1[t]+Z_2[t] |S_2[t])-h(Z_{2}[t])\right] }\\
 & \leq
\log_2\left(1+\R^{\alpha_{\cone}}+\dfrac{\R^{\alpha_{\dtwo}}}{1+\R^{\alpha_{\ctwo}}}\right).
\end{align*}  
Therefore, we obtain
\begin{align*}
      R_\Sigma-\epsilon_n&\leq  \log_2\left(1+\R^{\alpha_{\ctwo}}+\R^{\alpha_{\dthree}}+
\dfrac{\R^{\alpha_{\done}}}{1+\R^{\alpha_{\cone}}}\right) +\log_2\left(1+\R^{\alpha_{\cone}}+\dfrac{\R^{\alpha_{\dtwo}}}{1+\R^{\alpha_{\ctwo}}}\right),
\end{align*}  
which concludes the proof of \eqref{eqn:GaussBnd1}.

\subsection{Proof of \eqref{eqn:GaussBnd2}}
For establishing the upper bound given in \eqref{eqn:GaussBnd2}, we provide $S_1^n=h_{\cthree} X_3^n + Z_2^n$ to Rx$1$ and $S_2^n=h_{\ctwo} X_2^n+ Z_1^n$ and $X_1^n$ to Rx$2$. Then, by proceeding with similar steps as above, we obtain the second bound in Theorem \ref{Thm:GaussUpperBounds}.

\subsection{{Proof of \eqref{eqn:GaussBnd3}}}
{Before we prove the bound \eqref{eqn:GaussBnd3}, we introduce the following lemma which bounds the difference between the entropies of two (noisy) linearly independent linear combinations of two random variables under some conditions on this sum.}

Let $A$ and $B$ be independent random variables {satisfying $$\frac{1}{n}\sum_{t=1}^{n}\mathbb{E}[|A[t]|^{2}]\leq P,\quad \frac{1}{n}\sum_{t=1}^{n}\mathbb{E}[|B[t]|^{2}]\leq P,$$} and let $Z_i$, $i\in\{A,B\}$, {be distributed as $ \mathbb{C} \mathcal{N}(0,1)$. Define $Y_A$ and $Y_B$ as the outputs of the following noisy channels,}
 \begin{align*}
   Y_A & = h_1 A + h_2 B+Z_A\\
   Y_B & = h_1 A + h_3 B+Z_B,
   \end{align*}
where the constants $h_1$, $h_2$, and $h_3$ are complex-valued and satisfy 
\begin{align}
\label{Lemma:ExtCondition1}
P |h_2|^2 & \leq \left(\frac{|h_3|}{|h_1|}\right)^2 \\
\label{Lemma:ExtCondition2}
1 & < P |h_1|^2.
\end{align} 
{Let $Y_A^{n}$ and $Y_B^{n}$ be the outputs corresponding to inputs $A^{n}$ and $B^{n}$ of length $n$. Then, The difference between the entropies of $A^{n}$ and $B^{n}$ is bounded by the following lemma.}

   \begin{lemma}
      \label{Lemma:Entropy_diff_Gaussian}
If conditions \eqref{Lemma:ExtCondition1} and {\eqref{Lemma:ExtCondition2}} are satisfied, then the difference between the entropies of $Y_A^n$ and $Y_B^n$ satisfies
         \begin{align}
         \label{lem2}
            h(Y_A^n)-h(Y_B^n)\leq n.
         \end{align}
   \end{lemma}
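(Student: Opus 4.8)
The plan is to prove the Gaussian counterpart of Lemma~\ref{Lemma:Entrop_diff_determ} by first recasting the entropy difference as a comparison of two conditional entropies of the common input, and then controlling that comparison with one genie step together with an ``adding independent noise cannot decrease entropy'' argument whose slack is exactly one bit per channel use. The first move is to condition on $B$. Since $Z_A$ and $Z_B$ are identically distributed and independent of $(A,B)$, conditioning on $B^n$ leaves in both $Y_A^n$ and $Y_B^n$ the same residual $h_1A^n+Z^n$ up to a deterministic shift, so $h(Y_A^n\mid B^n)=h(Y_B^n\mid B^n)$. Subtracting this common term yields the key identity $h(Y_A^n)-h(Y_B^n)=h(B^n\mid Y_B^n)-h(B^n\mid Y_A^n)$, which recasts the claim as: observing $B$ through the strong-gain link $h_3$ costs at most $n$ bits of uncertainty relative to the weak-gain link $h_2$.

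Next I would produce two one-sided bounds. For $h(B^n\mid Y_A^n)$ I would give the genie $A^n$; as $A$ is independent of $(B,Z_A)$ this gives $h(B^n\mid Y_A^n)\ge h(B^n\mid Y_A^n,A^n)=h(B^n\mid U^n)$ with $U^n:=h_2B^n+Z_A^n$, i.e. it strips the common interferer and leaves a clean Gaussian channel. For $h(B^n\mid Y_B^n)$ I would degrade $Y_B$: scaling by $h_2/h_3$ and adding a fresh independent $W\sim\mathbb{C}\mathcal{N}(0,1-|h_2/h_3|^2)$ (a legal operation precisely because \eqref{Lemma:ExtCondition1}--\eqref{Lemma:ExtCondition2} force $|h_2/h_3|^2<1$) produces $\bar Y_B^n=h_2B^n+\tfrac{h_1h_2}{h_3}A^n+\hat Z^n$ with $\hat Z:=\tfrac{h_2}{h_3}Z_B+W\sim\mathbb{C}\mathcal{N}(0,1)$. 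Since $\bar Y_B$ is a stochastically degraded version of $Y_B$, we get $h(B^n\mid Y_B^n)\le h(B^n\mid \bar Y_B^n)$. Crucially, $\bar Y_B^n$ now carries exactly the same desired-signal gain $h_2$ and the same unit-variance Gaussian noise as $U^n$, differing only by the residual interference $\tfrac{h_1h_2}{h_3}A^n$.

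Combining the two bounds, it suffices to control $h(B^n\mid\bar Y_B^n)-h(B^n\mid U^n)=I(B^n;U^n)-I(B^n;\bar Y_B^n)$, which I would expand as $[\,h(U^n)-h(\bar Y_B^n)\,]+[\,h(\bar Y_B^n\mid B^n)-h(U^n\mid B^n)\,]$. The first bracket is $\le 0$ because $\bar Y_B^n$ equals, in distribution, $U^n$ plus the independent term $\tfrac{h_1h_2}{h_3}A^n$, and adding an independent random vector cannot decrease differential entropy. The second bracket equals $h\!\big(\tfrac{h_1h_2}{h_3}A^n+\hat Z^n\big)-h(\hat Z^n)$, which by the Gaussian maximum-entropy bound together with Jensen's inequality is at most $n\log_2\sigma^2$, where $\sigma^2=1+|h_1h_2/h_3|^2P$ is the per-symbol power of the residual noise. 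Here condition \eqref{Lemma:ExtCondition1} enters decisively: it is exactly $|h_1|^2|h_2|^2P\le|h_3|^2$, i.e. $|h_1h_2/h_3|^2P\le1$, so $\sigma^2\le2$ and the bracket is $\le n$. Chaining these inequalities gives $h(Y_A^n)-h(Y_B^n)\le n$.

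The main obstacle is that $A$ and $B$ are arbitrary and in particular non-Gaussian, so one cannot invoke any naive monotonicity such as ``a larger interference coefficient produces a larger conditional entropy'', which is false for non-Gaussian inputs. The resolution is the two-sided reshaping above: the genie removes the common interferer $A$ from the weak-link observation, while the degradation rewrites the strong-link observation as the \emph{same} Gaussian channel perturbed only by $A$, so that the whole gap collapses to an ``extra independent interference'' term whose entropy is pinned to at most one bit per symbol by the power budget in \eqref{Lemma:ExtCondition1}. I would finally note that every manipulation is carried out directly at the vector (length-$n$) level, so no single-letterization is required, and that \eqref{Lemma:ExtCondition2} is used only to guarantee $|h_2/h_3|^2<1$, ensuring the degrading noise $W$ has nonnegative variance.
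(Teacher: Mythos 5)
Your proof is correct and rests on the same three pillars as the paper's own argument: a genie/conditioning step that strips the common interferer $A$ from the weak-gain observation, a stochastic degradation of the strong-gain observation, and a terminal bound of one bit per symbol coming from the capacity of a power-constrained Gaussian channel with received $\mathrm{SNR}\leq 1$. The organization, however, is genuinely different. The paper starts from $h(Y_A^n)-h(Y_B^n)=I(A^n,B^n;Y_A^n)-I(A^n,B^n;Y_B^n)$, injects the slack term $I(A^n;B^n\mid Y_A^n)\geq 0$, cancels the two copies of $I(A^n;h_1A^n+Z^n)$, and then performs two separate degradations: condition \eqref{Lemma:ExtCondition2} inflates the noise in the residual $A$-channel from variance $1/(P|h_1|^2)$ to $1$, and condition \eqref{Lemma:ExtCondition1} compares the two point-to-point channels seen by $B$; the leftover is $I(\tilde{A}^n;\tilde{A}^n+Z_B^n)\leq n$ with $\tilde{A}=A/\sqrt{P}$ of unit power. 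You instead condition on $B$ to reach $I(B^n;Y_A^n)-I(B^n;Y_B^n)$ directly, perform a single combined degradation (scale by $h_2/h_3$ and add noise of variance $1-|h_2/h_3|^2$, legal because the two conditions jointly give $|h_2|<|h_3|$), and let condition \eqref{Lemma:ExtCondition1} alone cap the residual-interference term via $I\bigl(A^n;\tfrac{h_1h_2}{h_3}A^n+\hat{Z}^n\bigr)\leq n\log_2\bigl(1+P|h_1h_2/h_3|^2\bigr)\leq n$. Your route is arguably cleaner: the one-bit slack is isolated in exactly one place, and the role of each hypothesis is more transparent.

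One presentational caveat: the inputs are codewords, i.e.\ in general discrete random vectors, so $h(B^n)$ and the conditional differential entropies $h(B^n\mid Y_A^n)$ and $h(B^n\mid Y_B^n)$ may equal $-\infty$, which makes your intermediate identity $h(Y_A^n)-h(Y_B^n)=h(B^n\mid Y_B^n)-h(B^n\mid Y_A^n)$ formally ill-defined. This is harmless, because every step can be phrased purely in terms of the well-defined mutual informations $I(B^n;\cdot)$: the genie step becomes $I(B^n;Y_A^n)\leq I(B^n;Y_A^n,A^n)=I(B^n;U^n)$ by the independence of $A$ and $B$, and the degradation becomes $I(B^n;\bar{Y}_B^n)\leq I(B^n;Y_B^n)$ by the data processing inequality --- which is exactly the form your final decomposition already uses. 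You should simply carry the mutual-information form throughout rather than passing through $h(B^n\mid\cdot)$.
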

   \begin{proof}
   We {start by} upper bounding the expression $h(Y_A^n)-h(Y_B^n)$ as follows
\begin{align*}
h(Y_A^n)-h(Y_B^n)
&=h(Y_A^n)-h(Y_B^n) - h(Z_A^n) +h(Z_B^n) \\
&=I(A^n,B^n;Y_A^n) - I(A^n,B^n;Y_B^n)\\
&\overset{(a)}{=} I(A^n;Y_A^n) + I(B^n;Y_A^n|A^n) - I(B^n;Y_B^n) - I(A^n;Y_B^n|B^n) \\
&\overset{(b)}{\leq} I(A^n;Y_A^n) + I(A^n;B^n|Y_A^n) + I(B^n;Y_A^n|A^n) - I(B^n;Y_B^n) - I(A^n;Y_B^n|B^n) \\ 
&\overset{(c)}{=}I(A^n;Y_A^n,B^n) + I(B^n;Y_A^n|A^n) - I(B^n;Y_B^n) - I(A^n;Y_B^n|B^n),
\end{align*}
where in $(a)$ and $(c)$, we used the chain rule and in $(b)$, we used the non-negativity of mutual information. We proceed by using the chain rule to get
\begin{align*}
h(Y_A^n)-h(Y_B^n)  
&\leq I(A^n;B^n) + I(A^n;Y_A^n|B^n) + I(B^n;Y_A^n|A^n) - I(B^n;Y_B^n) - I(A^n;Y_B^n|B^n) \\
&\stackrel{(a)}{=} I(A^n;h_1 A^n + Z_A^n) + I(B^n;h_2 B^n+Z_A^n|A^n) - I(B^n;Y_B^n) - I(A^n; h_1 A^n +Z_B^n)\\
&\stackrel{(b)}{=} I(B^n;h_2 B^n+Z_A^n) - I(B^n;h_1 A^n + h_3 B^n+Z_B^n),
\end{align*}
{where $(a)$ follows from the independence of $A$ and $B$, and $(b)$ follows since $I(A^n;h_1 A^n + Z_A^n)=I(A^n; h_1 A^n +Z_B^n)$ since $Z_A$ and $Z_B$ have the same distribution. We proceed as follows}
\begin{align*}
h(Y_A^n)-h(Y_B^n)  
&\leq I(B^n;h_2 B^n+Z_A^n) - I(B^n;h_1 A^n + h_3 B^n+Z_B^n)\\
&\stackrel{(a)}{=} I(B^n;h_2 B^n+Z_A^n) - I(B^n; \tilde{A}^{n} + \frac{h_3}{\sqrt{P}h_1} B^n+\frac{1}{\sqrt{P}h_1}Z_B^n) \\
&\stackrel{(b)}{\leq} I(B^n;h_2 B^n+Z_A^n) - I(B^n; \tilde{A}^{n} + \frac{h_3}{\sqrt{P}h_1} B^n+Z_B^n)\\
&= I(B^n;h_2 B^n+Z_A^n)- h(\tilde{A}^{n} + \frac{h_3}{\sqrt{P}h_1} B^n+Z_B^n) + h ( \tilde{A}^{n} + Z_B^n)\\
&\stackrel{(c)}{\leq} I(B^n;h_2 B^n+Z_A^n)- h(\tilde{A}^{n} + \frac{h_3}{\sqrt{P}h_1} B^n+Z_B^n|\tilde{A}^{n}) + h ( \tilde{A}^{n} + Z_B^n) \\
&= I(B^n;h_2 B^n+Z_A^n)- h(\frac{h_3}{\sqrt{P}h_1} B^n+Z_B^n) + h ( \tilde{A}^{n} + Z_B^n) - h(Z_A^n) + h(Z_B^n) \\
&= I(B^n;h_2 B^n+Z_A^n) - I(B^n;\frac{h_3}{\sqrt{P}h_1} B^n+Z_B^n) + h ( \tilde{A}^{n} + Z_B^n) - h(Z_A^n) \\
&= I(B^n; B^n+\frac{1}{h_2}Z_A^n) - I(B^n; B^n+\frac{\sqrt{P}h_1}{h_3}Z_B^n) + h ( \tilde{A}^{n} + Z_B^n) - h(Z_A^n) \\
&\stackrel{(d)}{\leq} I(B^n; B^n+\frac{1}{h_2}Z_A^n) - I(B^n; B^n+\frac{1}{h_2}Z_B^n) + h ( \tilde{A}^{n} + Z_B^n) - h(Z_A^n) \\
&= h ( \tilde{A}^{n} + Z_B^n) - h(Z_A^n) \\
&= I( \tilde{A}^{n};\tilde{A}^{n} + Z_B^n)+ h(Z_B^n)- h(Z_A^n) \\
&\stackrel{(e)}{\leq} n,
\end{align*}
where
\begin{itemize}
\item[$(a)$] follows by defining the random variable $\tilde{A}= \frac{A}{\sqrt{P}}$ which satisfies $\frac{1}{n}\sum_{t=1}^{n}\mathbb{E}[|\tilde{A}[t]|^{2}]\leq 1$,
\item[$(b)$] follows from the fact that increasing the noise variance (by $1-\frac{1}{P|h_1|^2}>0$, cf. \eqref{Lemma:ExtCondition2}) leads to a degraded channel, and hence, decreases the mutual information,
\item[$(c)$] follows since conditioning does not increase the differential entropy,
\item[$(d)$] follows from the fact that increasing the noise variance (by $\frac{1}{|h_2|^2} - \frac{P|h_1|^2}{|h_3|^2}\geq 0$, cf. \eqref{Lemma:ExtCondition1}) leads to a degraded channel, and hence, decreases the mutual information, and 
\item[$(e)$] follows since the capacity of the Gaussian channel with input $\tilde{A}$ and output $\tilde{A}+Z_B$ is upper bounded by 1 (since $\tilde{A}$ has power 1).
\end{itemize}
\end{proof}

{Now we are ready to prove the bound \eqref{eqn:GaussBnd3} given by
$$C_{\mathrm{G},\Sigma}\leq \log_2\left(1+ \R^{\alpha_{\ctwo}}+\frac{\R^{\alpha_{\done}}+ \R^{\alpha_{\dthree}}}{1+\R^{{\alpha_{\cone}-\alpha_{\done}}}(\R^{\alpha_{\done}}+\R^{\alpha_{\dthree}})} \right) + \log_2\left(1+\R^{{\alpha_{\cthree}}} + \R^{{\alpha_{\cone}}}+\frac{\R^{\alpha_{\dtwo}}} {1+\R^{\alpha_{\ctwo}}}\right) + 1,$$
if $\alpha_{\dthree}- \alpha_{\done}\leq \alpha_{\cthree}-2\alpha_{\cone}$ or equivalently $P\left(\frac{|h_{\cone}|}{|h_{\done}|}\right)^2 |h_{\dthree}|^2 \leq \left(\frac{|h_{\cthree}|}{|h_{\cone}|}\right)^2$.} In order to derive this bound, the side information $S_1^n= \frac{h_{\cone}}{h_{\done}}(h_{\done} X_1^n+h_{\dthree} X_3^n)+Z^n$ is given to Rx$1$ and the side information $S_2^n=h_{\ctwo} X_2^n+Z_1^n$ is given to Rx$2$, where $Z^n\sim \mathbb{C}\mathcal{N}(0,1)$ denotes {an} AWGN which is independent from $Z_1^n$ and $Z_2^n$ and i.i.d. over time. Using Fano's inequality, we obtain 
\begin{align*}
n(R_\Sigma-\epsilon_n) \leq I(X_1^n,X_3^n; Y_1^n,S_1^n) + I(X_2^n;Y_2^n,S_2^n),
\end{align*}
where $\epsilon_n\rightarrow \infty$ as $n \rightarrow \infty$. Then, using {the} chain rule, we {have}
\begin{align}
n(R_\Sigma-\epsilon_n) \leq & I(X_1^n,X_3^n;S_1^n) +  I(X_1^n,X_3^n; Y_1^n|S_1^n) + I(X_2^n;S_2^n) + I(X_2^n;Y_2^n|S_2^n) \notag \\
 = & h(S_1^n) - h(S_1^n|X_1^n,X_3^n) +  h(Y_1^n|S_1^n) - h(Y_1^n|S_1^n,X_1^n,X_3^n) \notag \\ 
& + h(S_2^n) - h(S_2^n|X_2^n) + h(Y_2^n|S_2^n) - h(Y_2^n|S_2^n,X_2^n) \notag \\
\overset{(a)}{=} & h(S_1^n) - h(Z^n) +  h(Y_1^n|S_1^n) - h(S_2^n)
 + h(S_2^n) - h(Z_1^n) + h(Y_2^n|S_2^n) - h(Y_2^n|S_2^n,X_2^n)  \notag \\
% = & h(S_1^n) - h(Z^n) +  h(Y_1^n|S_1^n) 
% - h(Z_1^n) + h(Y_2^n|S_2^n) - h(Y_2^n|S_2^n,X_2^n) \notag \\
\overset{(b)}{=} & h(h_{\cone}X_1^n + \frac{h_{\cone}}{h_{\done}}h_{\dthree} X_3^n+Z^n) - h(Z^n) +  h(Y_1^n|S_1^n)  - h(Z_1^n) + h(Y_2^n|S_2^n) - h(h_{\cone} X_1^n + h_{\cthree} X_3^n+Z_2^n) \notag \\
\overset{(c)}{\leq}& h(Y_1^n|S_1^n)- h(Z^n) 
 + h(Y_2^n|S_2^n)  - h(Z_1^n) + n, \label{eq:UB_reg3_Gauss_1}
\end{align}
where $(a)$ and $(b)$ follow from the fact that the transmitted signals from different Tx's and the additive noise signals are all independent from each other, and $(c)$ follows from Lemma \ref{Lemma:Entropy_diff_Gaussian}. Note that the first condition of Lemma \ref{Lemma:Entropy_diff_Gaussian} is satisfied if the condition of bound \eqref{eqn:GaussBnd3} given by $P\left(\frac{|h_{\cone}|}{|h_{\done}|}\right)^2 |h_{\dthree}|^2 \leq \left(\frac{|h_{\cthree}|}{|h_{\cone}|}\right)^2$ holds.
This condition corresponds to $n_{\dthree}-n_{\cthree}<n_{\done}-2n_{\cone}$ in {the} linear deterministic model which {also} defines a border of regime 3. {The second condition of Lemma \ref{Lemma:Entropy_diff_Gaussian} ($1 < P|h_{\cone}|^2$) holds since we consider the interference limited scenario \eqref{InterferenceLimited}.}
\\
By using Lemma 1 in~\cite{AnnapureddyVeeravalli} which shows that {a} circularly symmetric complex Gaussian distribution maximizes {the} conditional differential entropy for a given covariance constraint and {defining a new variable $V=h_{\done} X_{1}+h_{\dthree} X_{3}$,} we upper bound the expression in \eqref{eq:UB_reg3_Gauss_1} as follows
\begin{align*}
n(R_\Sigma-\epsilon_n) \leq  & n[h(Y_{1G}|S_{1G}) -  h(Z)   +  h(Y_{2G}|S_{2G})  - h(Z_{1}) + 1]\\
= & n[h(V_G+h_{\ctwo} X_{2G} + Z_1|\frac{h_{\cone}}{h_{\done}}V_G+Z) -  h(Z) \notag \\ &  +  h(h_{\dtwo}X_{2G} + h_{\cone} X_{1G}+h_{\cthree} X_{3G}+Z_2|h_{\ctwo} X_{2G}+ Z_1)  - h(Z_{1}) + 1],
\end{align*}
where the subscript $G$ indicates that the inputs are i.i.d. and Gaussian distributed, i.e., $X_{iG} \sim \mathbb{C} \mathcal{N}(0,P_i)$ and $Y_{iG}$ and $S_{iG}$ are the corresponding signals. Let $P_v = |h_{\done}|^2 P_1 + |h_{\dthree}|^2 P_3$ represent the variance of $V_G$. Thus we have
\begin{align*}
n(R_\Sigma-\epsilon_n)\leq& n\left[\log_2\left(\frac{(1+P_v + |h_{\ctwo}|^2 P_2)(1+\frac{|h_{\cone}|^2}{|h_{\done}|^2} P_v) - \frac{|h_{\cone}|^2}{|h_{\done}|^2} P_v^2}{1+\frac{|h_{\cone}|^2}{|h_{\done}|^2} P_v}\right) \right. \\
%%%%%%%%%%%%%%%%%%%%%%%%%%%%%%%%%%%%%%%%%%%%%%%%%%
&+ \left. \log_2\left(\frac{(1+|h_{\dtwo}|^2 P_2 + |h_{\cone}|^2 P_1+|h_{\cthree}|^2 P_3  )(1+|h_{\ctwo}|^2 P_2) - |h_{\ctwo}|^2 |h_{\dtwo}|^2P_2^2}{1+|h_{\ctwo}|^2P_2}\right) + 1\right] \\ 
%%%%%%%%%%%%%%%%%%%%%%%%%%%%%%%%%%%%%%%%%%%%%%%%%%%%%%
=& n\left[\log_2\left(1+ |h_{\ctwo}|^2P_2+\frac{P_v}{1+\frac{|h_{\cone}|^2}{|h_{\done}|^2}P_v} \right)+ \log_2\left(1+ |h_{\cone}^2|P_1 + |h_{\cthree}|^2P_3 +\frac{|h_{\dtwo}|^2 P_2}{1+|h_{\ctwo}|^2P_2}\right) +1\right]\\
%%%%%%%%%%%%%%%%%%%%%%%%%%%%%%%%%%%%%%%%%%%%%%%%%%%%%%%%
\leq &  n\left[\log_2\left(1+ \R^{\alpha_{\ctwo}}+\frac{\R^{\alpha_{\done}}+ \R^{\alpha_{\dthree}}}{1+\R^{{\alpha_{\cone}-\alpha_{\done}}}(\R^{\alpha_{\done}}+\R^{\alpha_{\dthree}})} \right) %\right.\\  &\left. 
+ \log_2\left(1+\R^{{\alpha_{\cthree}}} + \R^{{\alpha_{\cone}}}+\frac{\R^{\alpha_{\dtwo}}} {1+\R^{\alpha_{\ctwo}}}\right) + 1\right],
\end{align*}  
Since $\epsilon_n\rightarrow 0$ as $n\rightarrow \infty$, we obtain the third bound \eqref{eqn:GaussBnd3}.

\subsection{Proof of \eqref{eqn:GaussBnd4}}
For the bound \eqref{eqn:GaussBnd4}, the side information $S_1^n=\frac{h_{\cthree}}{h_{\dthree}} (h_{\done} X_1^n+h_{\dthree} X_3^n)+Z^n$ is given to Rx$1$, and the side information $S_2^n=h_{\ctwo}X_2^n+Z_1^n$ is given to Rx$2$, where $Z\sim \mathbb{C}\mathcal N (0,1)$ denotes an AWGN which is independent from all other random variables and i.i.d. over time. Then by proceeding with similar steps as above, we obtain the bound {given in \eqref{eqn:GaussBnd4}.} This concludes the proof of Theorem \ref{Thm:GaussUpperBounds}.

\section{Gap analysis for naive-TIN: proof of Corollary \ref{Theorem_cont_gap}}
\label{appGap}
We focus on sub-regimes \oneA{} and \twoB. In sub-regime \oneA{} where $\alpha_{\dthree}\leq \alpha_{\done}-\alpha_{\cone}$ and $\alpha_{\cthree}\leq \alpha_{\cone},$ the upper bound given in \eqref{UBG1} can be further upper bounded as follows
\begin{align}
C_{\mathrm{G},\Sigma}&\leq\log_2\left(1+\R^{\alpha_{\ctwo}}+\R^{\alpha_{\dthree}}+
\dfrac{\R^{\alpha_{\done}}}{1+\R^{\alpha_{\cone}}}\right) +\log_2\left(1+\R^{\alpha_{\cone}}+\dfrac{\R^{\alpha_{\dtwo}}}{1+\R^{\alpha_{\ctwo}}}\right) \notag\\
%%%%%%%%%%%%%%%%%%%%%%%%%%%%%%%%%%%%%%%%%%%%%%%%%%%%%%%%%%      
      &<\log_2\left(1+\R^{\alpha_{\ctwo}}+\R^{\alpha_{\dthree}}+
\dfrac{\R^{\alpha_{\done}}}{\R^{\alpha_{\cone}}}\right) +\log_2\left(1+\R^{\alpha_{\cone}}+\dfrac{\R^{\alpha_{\dtwo}}}{\R^{\alpha_{\ctwo}}}\right) \notag \\
%%%%%%%%%%%%%%%%%%%%%%%%%%%%%%%%%%%%%%%%%%%%%%%%%%%%%%%%%%      
&<\log_2\left(4\R^{\alpha_{\done}-\alpha_{\cone}}\right)
+\log_2\left(3\R^{\alpha_{\dtwo}-\alpha_{\ctwo}}\right)
\notag\\
%%%%%%%%%%%%%%%%%%%%%%%%%%%%%%%%%%%%%%%%%%%%%%%%%%%%%%%%%%% 
&=[\alpha_{\done}-\alpha_{\cone} + \alpha_{\dtwo}-\alpha_{\ctwo}] \log_2\R+2+\log_23,         \label{UB1UB}
\end{align}
where we used the fact that in sub-regime \oneA, $\max\{0,\alpha_{\ctwo},\alpha_{\dthree}, \alpha_{\done} - \alpha_{\cone} \} = \alpha_{\done} - \alpha_{\cone}$, $\max\{0,\alpha_{\cone},\alpha_{\dtwo}-\alpha_{\ctwo}\} = \alpha_{\dtwo}- \alpha_{\ctwo}$, and $\R^{\alpha_{\done}-\alpha_{\cone}},\R^{\alpha_{\dtwo}-\alpha_{\ctwo}}>1$ due to \eqref{InterferenceLimited}.
%assumed $\R^{\frac{m_d-m_c}{m_d}}>1$. 

On the other hand, for the achievable rate of naive-TIN, we have
   \begin{align}
R_{\Sigma,\mathrm{{Naive-TIN}}}
         &=\log_2\left(1+\dfrac{\R^{\alpha_{\done}}
         +\R^{\alpha_{\dthree}}}{1+\R^{\alpha_{\ctwo}}}\right)+
         \log_2\left(1+\dfrac{\R^{\alpha_{\dtwo}}}
         {1+\R^{\alpha_{\cone}}+\R^{\alpha_{\cthree}}}         \right)
        \notag \\
%%%%%%%%%%%%%%%%%%%%%%%%%%%%%%%%%%%%%%%%%%%%%%%%%%%%%%%
& > \log_2\left(\dfrac{\R^{\alpha_{\done}}}{2\R^{\alpha_{\ctwo}}}\right)+
         \log_2\left(\dfrac{\R^{\alpha_{\dtwo}}}
         {3\R^{\alpha_{\cone}}}\right) \notag \\
%%%%%%%%%%%%%%%%%%%%%%%%%%%%%%%%%%%%%%%%%%%%%%%%%%%%%%%%                  
&=[\alpha_{\done} - \alpha_{\cone} + \alpha_{\dtwo}-\alpha_{\ctwo}]\log_2\R-1-\log_23,\label{TIN1LB}
   \end{align}
where we used $\R^{\alpha_{\cone}},\R^{\alpha_{\ctwo}}>1$ (cf. \eqref{InterferenceLimited}).

Comparing (\ref{UB1UB}) with (\ref{TIN1LB}) in this regime, we see that 
naive-TIN is within a gap of $G_{\mathrm{{Naive-TIN},\oneA}}=3+2\log_23$ bits to the sum-capacity.

Similarly, for sub-regime \twoA{} where $\alpha_{\dthree}-\alpha_{\cthree}\geq \alpha_{\done}$ and $\alpha_{\cone}\leq \alpha_{\cthree}\leq \alpha_{\dtwo}-\alpha_{\ctwo}$, the upper bound \eqref{UBG2} can be upper bounded as
   \begin{align}
      C_{\mathrm{G},\Sigma}&\leq\log_2\left(1
+\R^{\alpha_{\ctwo}}+\R^{\alpha_{\done}}+\dfrac{\R^{\alpha_{\dthree}}}
      {1+\R^{\alpha_{\cthree}}}\right)
      +\log_2\left(1+\R^{{\alpha_{\cthree}}}+
      \dfrac{\R^{\alpha_{\dtwo}}}{1+\R^{\alpha_{\ctwo}}}\right) \notag\\    
%%%%%%%%%%%%%%%%%%%%%%%%%%%%%%%%%%%%%%%%%%%%%%%%%      
      &<[\alpha_{\dthree} -\alpha_{\cthree} + \alpha_{\dtwo} - \alpha_{\ctwo} ]\log_2\R+2+\log_23,  \label{UB2UB}
\end{align}
{which follows since $\alpha_{\dthree}-\alpha_{\cthree} \geq \alpha_{\done}$ and $\alpha_{\cthree}\leq \alpha_{\dtwo}-\alpha_{\ctwo}$ in this regime}, whereas for the achievable rate of the naive-TIN scheme in this regime, we have
\begin{align*}
      R_{\Sigma,\mathrm{{Naive-TIN}}}&=\log_2\left(1+\dfrac{\R^{\alpha_{\done}}
         +\R^{\alpha_{\dthree}}}{1+\R^{\alpha_{\ctwo}}}\right)+
         \log_2\left(1+\dfrac{\R^{\alpha_{\dtwo}}}
         {1+\R^{\alpha_{\cone}}+\R^{\alpha_{\cthree}}}\right).
\\
%%%%%%%%%%%%%%%%%%%%%%%%%%%%%%%%%%%%%%%%%%%%%%%%%%%%%      
&> [\alpha_{\dthree}-\alpha_{\cthree}+\alpha_{\dtwo}
-\alpha_{\ctwo}]\log_2\R-1-\log_23,   % \label{TIN2LB}
\end{align*}
{which follows since $\alpha_{\cone}<\alpha_{\cthree}$ in this regime}. Therefore, naive-TIN is within a constant gap $G_{\mathrm{Naive-TIN,\twoA}}=3+2\log_23$
bits to the sum-capacity.

\section{Proof of Lemma \ref{Lemma:poweralooc_versus_TDMA_TIN_Gaussian}}
\label{App:Poweralooc_versus_TDMA_TIN_Gaussian}
{In this appendix, we want to show that as long as the receivers of Gaussian PIMAC treat the interference as noise, the best power control at the transmitter side achieves the same GDoF as that of TDMA-TIN given in \eqref{eq:GDoF-TDMA-TIN}. To show this, we need to first write the achievable GDoF using TIN at the receivers side with power control at the transmitter side. First suppose that Tx$i$ transmits $x_i$ with power $P_i\leq P$. Doing this the maximum achievable sum-rate using TIN is given by}
{
\begin{align}
R_{\Sigma,\mathrm{{TIN}}}
         &=\log_2\left(1+\dfrac{P_1{|h_{\done}|^2}
         +P_3{|h_{\dthree}|^2}}{1+P_2{|h_{\ctwo}|^2}}\right) +       \log_2\left(1+\dfrac{P_2{|h_{\dtwo}|^2}}
         {1+P_1{|h_{\cone}|^2}+P_3{|h_{\cthree}|^2}}\right)
\label{eq:sum-rate_TIN_power_cont1}
\end{align}}
{Now, we define
\begin{align*}
\alpha_{11} &= \left(\frac{\log_2(P_1|h_{\done}|^2)}{\log_2\R}\right)^+, 
\quad
\alpha_{21} = \left(\frac{\log_2(P_1|h_{\cone}|^2)}{\log_2\R}\right)^+ \\ 
\alpha_{22} &= \left(\frac{\log_2(P_2|h_{\dtwo}|^2)}{\log_2\R}\right)^+,
\quad
\alpha_{12} = \left(\frac{\log_2(P_2|h_{\ctwo}|^2)}{\log_2\R}\right)^+,\\
\alpha_{13} &= \left(\frac{\log_2(P_3|h_{\dthree}|^2)}{\log_2\R}\right)^+,\quad
\alpha_{23} = \left(\frac{\log_2(P_3|h_{\cthree}|^2)}{\log_2\R}\right)^+,
\end{align*}
where they satisfy
\begin{align*}
&\alpha_{21} = (\alpha_{11}-(\alpha_{\done}-\alpha_{\cone}))^+ \\ 
&\alpha_{12} = (\alpha_{22}-(\alpha_{\dtwo}-\alpha_{\ctwo}))^+ \\ 
&\alpha_{23} =(\alpha_{13}-(\alpha_{\dthree}-\alpha_{\cthree}))^+ 
 \quad \text{if } \alpha_{\cthree}\leq \alpha_{\dthree}  \\ &\alpha_{13}=(\alpha_{23}-(\alpha_{\cthree}-\alpha_{\dthree}))^+
\quad \text{if } \alpha_{\dthree}<\alpha_{\cthree} .
\end{align*}}
{Notice that $\alpha_{11}\in[0,\alpha_{\done}]$, $\alpha_{21}\in[0,\alpha_{\cone}]$, $\alpha_{22}\in[0,\alpha_{\dtwo}]$, $\alpha_{12}\in[0,\alpha_{\ctwo}]$,  $\alpha_{13}\in[0,\alpha_{\dthree}]$, and $\alpha_{23}\in[0,\alpha_{\cthree}]$.
Moreover, for any arbitrary $P_1$, $P_2$, and $P_3$, the following conditions are satisfied
\begin{align*}
\alpha_{11} - \alpha_{21} &\leq \alpha_{\done}- \alpha_{\cone} \\
\alpha_{22}-\alpha_{12} &\leq \alpha_{\dtwo}-\alpha_{\ctwo} \\
\alpha_{13}-\alpha_{23} &\leq (\alpha_{\dthree}-\alpha_{\cthree})^+. 
\end{align*}}
{Now, we can convert the achievable sum-rate in \eqref{eq:sum-rate_TIN_power_cont1} to the GDoF expression and write
\begin{align}
d_{\Sigma,\mathrm{{TIN}}}
         &= (\max\{\alpha_{11},\alpha_{13}\} - \alpha_{12})^+ + (\alpha_{22} - \max\{\alpha_{21},\alpha_{23}\})^+.
\label{eq:GDoF_TIN_power_cont}
\end{align}}
{Now, we can show similar to proof of Lemma \ref{Lemma:Poweralooc_versus_TDMA_TIN} in Appendix \ref{App:Poweralooc_versus_TDMA_TIN} that the GDoF in \eqref{eq:GDoF_TIN_power_cont} is outperformed by \eqref{eq:GDoF-TDMA-TIN}.}

\section{Gap analysis for TDMA-TIN: proof of Corollary \ref{Theorem_cont_gap2}}
\label{appGap2}   
Here, we consider TDMA-TIN and show that it achieves the sum-capacity of the Gaussian PIMAC within a constant gap for regimes 1 and 2. 
In sub-regimes \oneA, \oneB, and \oneC, by setting $\tau_2=1$ and $\tau_1=\tau_3=0$, the achievable rate of TDMA-TIN satisfies
   \begin{align}
R_{\Sigma,\mathrm{TDMA-TIN}}&\geq 
\log_2\left(1+\dfrac{{\R^{\alpha_{\done}}}} {1+{\R^{\alpha_{\ctwo}}}}\right)+\log_2\left(1+
\dfrac{{\R^{\alpha_{\dtwo}}}}
{1+{\R^{\alpha_{\cone}}}}\right) \notag 
\\
%%%%%%%%%%%%%%%%%%%%%%%%%%%%%%%%%%%%%%%%%%%%%%%%%%%
&> \log_2\left(\dfrac{{\R^{\alpha_{\done}}}} {2{\R^{\alpha_{\ctwo}}}}\right)+\log_2\left(
\dfrac{{\R^{\alpha_{\dtwo}}}}
{2{\R^{\alpha_{\cone}}}}\right) \notag \\
%%%%%%%%%%%%%%%%%%%%%%%%%%%%%%%%%%%%%%%%%%%%%%%%%%%
&= [(\alpha_{\done} - \alpha_{\ctwo}) + (\alpha_{\dtwo} - \alpha_{\cone})]\log_2(\R)-2. \label{TDMATIN1LB}
\end{align}
Similar to sub-regime \oneA{} (see \eqref{UB1UB}), it can be shown that the upper bound for the capacity in sub-regime \oneB{} is upper bounded by the expression in \eqref{UB1UB}. 
By comparing (\ref{UB1UB}) with (\ref{TDMATIN1LB}), we see 
that TDMA-TIN is within a constant gap of $G_{\mathrm{TDMA-TIN,\oneA,\oneB}} =4+\log_23$ bits to the sum-capacity in sub-regimes \oneA{} and \oneB.
   
In sub-regime \oneC, we relax the upper bound in \eqref{UBG3} as follows 
\begin{align}
C_{\text{G},\Sigma}& \leq \log_2\left(1+ \R^{\alpha_{\ctwo}}+\frac{\R^{\alpha_{\done}}+ \R^{\alpha_{\dthree}}}{1+\R^{{\alpha_{\cone}-\alpha_{\done}}}(\R^{\alpha_{\done}}+\R^{\alpha_{\dthree}})} \right) + \log_2\left(1+\R^{{\alpha_{\cthree}}} + \R^{{\alpha_{\cone}}}+\frac{\R^{\alpha_{\dtwo}}} {1+\R^{\alpha_{\ctwo}}}\right) + 1  \notag\\ 
%%%%%%%%%%%%%%%%%%%%%%%%%%%%%%%%%%%%%%%%%%%%%%%%%%%%%5
&< \log_2\left(2\R^{\alpha_{\ctwo}}+
\frac{2\R^{\max\{\alpha_{\done},\alpha_{\dthree}\}}}
{\R^{{ \max\{\alpha_{\cone},\alpha_{\cone}-\alpha_{\done}+\alpha_{\dthree} \}  }}}\right) + 
\log_2\left(3\R^{{\alpha_{\cthree}}} +  \frac{\R^{\alpha_{\dtwo}}} {\R^{\alpha_{\ctwo}}}\right) + 1 \notag\\
%%%%%%%%%%%%%%%%%%%%%%%%%%%%%%%%%%%%%%%%%%%%%%%%%%%%%5
&<\log_2\left( 4\R^{\alpha_{\done} - \alpha_{\cone}}\right) +  \log_2\left(4\R^{\alpha_{\dtwo}-\alpha_{\ctwo}} \right)+1 \notag\\ 
%%%%%%%%%%%%%%%%%%%%%%%%%%%%%%%%%%%%%%%%%%%%%%%%%%%%%5
& = (\alpha_{\done} - \alpha_{\cone} + \alpha_{\dtwo}-\alpha_{\ctwo})  \log_2 \R + 5, \label{appH_UB_3}
\end{align}
where we used the fact that in sub-regime \oneC, {$\alpha_{\cthree}>\alpha_{\cone}$} and $\max\{0,\alpha_{\cone},\alpha_{\cthree},\alpha_{\dtwo}-\alpha_{\ctwo}\} = \alpha_{\dtwo}-\alpha_{\ctwo}$. %and $\R^{\frac{m_d-m_c}{m_d}}>1$ due to \eqref{InterferenceLimited}. 
Comparing \eqref{appH_UB_3} and \eqref{TDMATIN1LB}, we conclude that TDMA-TIN is within a constant gap of $G_{\mathrm{TDMA-TIN,\oneC}}=7$ bits to the sum-capacity {in sub-regime \oneC}.

For sub-regimes \twoA, \twoB, and \twoC, by setting $\tau_3=1$ and $\tau_1=\tau_2=0$, we have
\begin{align}
R_{\Sigma,\mathrm{TDMA-TIN}}&> \log_2\left(1+\dfrac{{\R^{\alpha_{\dthree}}}}{1+{\R^{\alpha_{\ctwo}}}}\right)+\log_2\left(1+
\dfrac{{\R^{\alpha_{\dtwo}}}}
{1+{\R^{\alpha_{\cthree}}}}\right)\notag\\
%%%%%%%%%%%%%%%%%%%%%%%%%%%%%%%%%%%%%%%%%%%%%%%%%%%
&> (\alpha_{\dthree} -\alpha_{\ctwo} + \alpha_{\dtwo} - \alpha_{\cthree}) \log_2 \R -2.      \label{TDMATIN2LB}
\end{align}
Similar to sub-regime \twoA{} (see \eqref{UB2UB}), the upper bound for the capacity can be relaxed in sub-regime \twoB. Doing this, we can show that the capacity in sub-regime \twoB{} is upper bounded by the expression in \eqref{UB2UB}.   
Comparing (\ref{UB2UB}) and (\ref{TDMATIN2LB}), we see that TDMA-TIN achieves a sum-rate within a constant gap of $G_{\mathrm{TDMA-TIN,\twoA,\twoB}}=4+\log_23$ bits to the sum-capacity {in sub-regimes \twoA{} and \twoB}. 

{For sub-regime \twoC, we relax the upper bound given in \eqref{UBG4} as follows}
\begin{align}
C_{\mathrm{G},\Sigma}\leq& \log_2\left(1+ \R^{\alpha_{\ctwo}}+\frac{ \R^{\alpha_{\done}} +\R^{\alpha_{\dthree}}}{1+\R^{\alpha_{\cthree} - \alpha_{\dthree}}( \R^{\alpha_{\done}} + \R^{\alpha_{\dthree}})} \right) +  
\log_2\left(1+\R^{\alpha_{\cthree}} + \R^{\alpha_{\cone}}+ \frac{\R^{\alpha_{\dtwo}}}{1+\R^{\alpha_{\ctwo}}}\right)+1\notag\\
%%%%%%%%%%%%%%%%%%%%%%%%%%%%%%%%%%%%%%%%%%%%%%%%%%%%%%
<& \log_2\left(2\R^{\alpha_{\ctwo}}+\frac{ 2\R^{\max\{\alpha_{\done},\alpha_{\dthree}\}}}{\R^{\max\{\alpha_{\cthree} - \alpha_{\dthree}+\alpha_{\done},\alpha_{\cthree}\} }} \right) +  \log_2\left(3\R^{\alpha_{\cone}} + \frac{\R^{\alpha_{\dtwo}}}{\R^{\alpha_{\ctwo}}}\right)+1 \notag\\
%%%%%%%%%%%%%%%%%%%%%%%%%%%%%%%%%%%%%%%%%%%%%%%%%%%%%%%%%%
=&  \log_2\left(2\R^{\alpha_{\ctwo}}
+2\R^{\alpha_{\dthree}-\alpha_{\cthree}} \right) +  \log_2\left(3\R^{\alpha_{\cone}} + \R^{\alpha_{\dtwo}-\alpha_{\ctwo}}\right)+1 \notag\\
%%%%%%%%%%%%%%%%%%%%%%%%%%%%%%%%%%%%%%%%%%%%%%%%%%%%%%%%%%
<&  \log_2\left(4\R^{\alpha_{\dthree}-\alpha_{\cthree}} \right) +  \log_2\left(4\R^{\alpha_{\dtwo}-\alpha_{\ctwo}}\right) +1 \notag\\
%%%%%%%%%%%%%%%%%%%%%%%%%%%%%%%%%%%%%%%%%%%%%%%%%%%%%%%%%%
=&(\alpha_{\dthree}-\alpha_{\cthree}+
\alpha_{\dtwo}-\alpha_{\ctwo})\log_2{\R}+5,\label{appH_UB_4}
\end{align}
where we used the facts that in sub-regime \twoC, {$\alpha_{\cone}>\alpha_{\cthree}$} and $\max\{\alpha_{\done}-\alpha_{\cone},\alpha_{\ctwo},0\}=\alpha_{\done}-\alpha_{\cone}\leq
\alpha_{\dthree}-\alpha_{\cthree}$. By comparing \eqref{TDMATIN2LB} and \eqref{appH_UB_4}, we see that the rate {obtained with} TDMA-TIN is within a constant gap of $G_{\mathrm{TDMA-TIN,\twoC}}=7$ bits to the sum capacity {in sub-regime \twoC}.

Finally, we consider the sub-regime \twoD. In this sub-regime, we set $\tau_1=1$ and $\tau_2=\tau_3=0$ to obtain
\begin{align}
R_{\Sigma,\text{TDMA-TIN}} \geq& \log_2 \left(1+{\rho^{\alpha_{\dthree}}}\right)\notag \\
> & \alpha_{\dthree} \log_2 {\rho}. \label{appH_LB_4}
\end{align}
Now, we relax the sum-capacity in \eqref{UBG2} as follows
\begin{align}
C_{\mathrm{G},\Sigma}\leq&\log_2\left(1
+\R^{\alpha_{\ctwo}}+\R^{\alpha_{\done}}+\dfrac{\R^{\alpha_{\dthree}}}
      {1+\R^{\alpha_{\cthree}}}\right)
      +\log_2\left(1+\R^{{\alpha_{\cthree}}}+
      \dfrac{\R^{\alpha_{\dtwo}}}{1+\R^{\alpha_{\ctwo}}}\right)\notag \\
%%%%%%%%%%%%%%%%%%%%%%%%%%%%%%%%%%%%%%%%%%%%%%%%%%%%%%%%
< & \log_2\left(3\R^{\alpha_{\done}}
+\dfrac{\R^{\alpha_{\dthree}}}
      {\R^{\alpha_{\cthree}}}\right)
      +\log_2\left(2\R^{{\alpha_{\cthree}}}+
      \dfrac{\R^{\alpha_{\dtwo}}}{\R^{\alpha_{\ctwo}}}\right) \notag\\ 
%%%%%%%%%%%%%%%%%%%%%%%%%%%%%%%%%%%%%%%%%%%%%%%%%%%%%%%%
< & \log_2\left(3\R^{\alpha_{\done}}
+\R^{\alpha_{\dthree}-\alpha_{\cthree}}\right)
      +\log_2\left(2\R^{{\alpha_{\cthree}}}+
      \R^{\alpha_{\dtwo}-\alpha_{\ctwo}}\right) \notag\\ 
%%%%%%%%%%%%%%%%%%%%%%%%%%%%%%%%%%%%%%%%%%%%%%%%%%%%%%%%
< & \log_2\left(4\R^{\alpha_{\dthree}-\alpha_{\cthree}}\right)
+\log_2\left(3\R^{{\alpha_{\cthree}}}\right) \notag \\
%%%%%%%%%%%%%%%%%%%%%%%%%%%%%%%%%%%%%%%%%%%%%%%%%%%%%%%%
= & \alpha_{\dthree} \log_2 \R + 2 + \log_23 \label{appH_UB_5}
\end{align}
where we used the facts that in regime \twoD, $\alpha_{\done}\leq \alpha_{\dthree} - \alpha_{\cthree}$ and $\alpha_{\dtwo}-\alpha_{\ctwo}<\alpha_{\cthree}$. By comparing \eqref{appH_LB_4} and \eqref{appH_UB_5}, we see that the rate obtained by TDMA-TIN is within a constant gap of $G_{\mathrm{TDMA-TIN,\twoD}}=2+\log_23$ bits to the sum-capacity in sub-regime \twoD.

\section{TDMA-TIN outperforms naive-TIN: Proof of Corollary \ref{Cor:TDMAoutperformsTIN}}
\label{appTDMAG}
In this appendix, we show that naive-TIN can never be an optimal scheme for the Gaussian PIMAC. It is worth to rewrite the achievable sum-rate of TDMA-TIN as follows
\begin{align}
R_{\Sigma,\mathrm{TDMA-TIN}} = \max_{\tau_1,\tau_2,\tau_3\in[0,1]} &  A(\tau_1,\tau_2,\tau_3) +  B(\tau_1,\tau_2,\tau_3) + C(\tau_1,\tau_2,\tau_3)\label{eq:R_sum_TDMATIN_app}\\
\mathrm{subject\,\, to} \quad & \tau_1+\tau_2 +\tau_3=1,
\notag
\end{align}
where
\begin{align*}
A(\tau_1,\tau_2,\tau_3) & = \tau_1 \log_2 \left(1+\frac{\rho^{\alpha_{\dthree}}}{\tau_1+\tau_3}\right), \\
B(\tau_1,\tau_2,\tau_3) &= \tau_2  \log_2\left(1+\dfrac{\dfrac{\R^{\alpha_{\done}}}{\tau_2}} {1+\dfrac{\R^{\alpha_{\ctwo}}}{(1-\tau_1)}}\right)+ \tau_3  \log_2\left(1+\dfrac{\dfrac{\R^{\alpha_{\dthree}}}{\tau_1+\tau_3}}{1+\dfrac{\R^{\alpha_{\ctwo}}}{(1-\tau_1)}}\right),\\
C(\tau_1,\tau_2,\tau_3) &= 
\tau_2 \log_2\left(1+
\dfrac{\dfrac{\R^{\alpha_{\dtwo}}}{(1-\tau_1)}}
{1+\dfrac{\R^{\alpha_{\cone}}}{\tau_2}}\right)+ \tau_3\log_2\left(1+
\dfrac{\dfrac{\R^{\alpha_{\dtwo}}}{(1-\tau_1)}}
{1+\dfrac{\R^{\alpha_{\cthree}}}{\tau_1+\tau_3}}\right).
\end{align*}
If we fix any optimization parameter $\tau_1$, $\tau_2$, or $\tau_3$ in \eqref{eq:R_sum_TDMATIN_app}, the obtained sum-rate by using TDMA-TIN is less than or equal to the sum-rate given in \eqref{eq:R_sum_TDMATIN_app}. Hence, by setting $\tau_1=0$, we write
\begin{align*} 
R_{\Sigma,\mathrm{TDMA-TIN}}\geq\max_{\tau_2,\tau_3 \in[0,1]} &B(0,\tau_2,\tau_3)+C(0,\tau_2,\tau_3),\\
\mathrm{subject\,\, to} \quad & \tau_2 +\tau_3=1.
\end{align*} 
Now, consider $B(0,\tau_2, \tau_3)$. This is the achievable sum-rate using TDMA with a time sharing parameters $\tau_2$ and $\tau_3$ in a multiple access channel with a noise variance of $1+ \R ^{\alpha_{\ctwo}}$. This sum-rate is maximized by setting
\begin{align*}
\tau_2 = \frac{\R^{\alpha_{\done}}}{\R^{\alpha_{\done}}+\R^{\alpha_{\dthree}}} \triangleq\tau^\star. 
%\qquad \text{and} \qquad \tau_3 = \frac{\R^{\alpha_{\dthree}}}{\R^{\alpha_{\done}}+\R^{\alpha_{\dthree}}} \triangleq\tau_3^\star .
\end{align*}
Substituting $\tau^\star = \tau_2$ and $1- \tau^\star = \tau_3 $ into $B(0,\tau_2,\tau_3)$, we obtain
\begin{align*}
\max_{\tau_2,\tau_3 \in[0,1]}B(0,\tau_2,\tau_3)=B(0,\tau^\star,1-\tau^\star)=\log_2\left(1+    \dfrac{\R^{\alpha_{\done}}+\R^{\alpha_{\dthree}}}{1+\R^{\alpha_{\ctwo}}}\right).
\end{align*}
On the other hand, it can be shown that function $C(0,\tau_2,\tau_3) = C(0,\tau_2,1-\tau_2)$ satisfies the following
    \begin{align*}
       \left.\frac{\mathrm{d} C(0,\tau_2,1-\tau_2)}{\mathrm{d} \tau_2}\right|_{\tau_2=\tau^\prime}&=0,\qquad  \text{and}\qquad      \frac{\mathrm{d}^2C(0,\tau_2,1-\tau_2) }{\mathrm{d} \tau_2^2}\geq 0, \quad\forall \tau_2\in[0,1],
    \end{align*}
    where 
    \begin{align*}
       \tau^\prime=\dfrac{\R^{\alpha_{\cone}}}
       {\R^{\alpha_{\cone}}+\R^{\alpha_{\cthree}}}.
    \end{align*}
%, $\tau_2 = \dfrac{\R^{\alpha_{\done}}}{\R^{\alpha_{\done}} +\R^{\alpha_{\dthree}}}$, and $\tau_3 = \dfrac{\R^{\alpha_{\dthree}}}{\R^{\alpha_{\done}} +\R^{\alpha_{\dthree}}}$,

%   where
%   \begin{align}
%   \label{A}
%A(\tau)&=\tau\log_2\left(1+\dfrac{\dfrac{\R}{\tau}}
%         {1+\R^{\frac{m_c}{m_d}}}\right)+(1-
%         \tau)\log_2\left(1+\dfrac{\dfrac{\R^{\frac{m_{\dthree}}
%         {m_d}}}{1-\tau}}{1+\R^{\frac{m_c}{m_d}}}\right),\\
%            \label{B}
%         B(\tau)&=\tau\log_2\left(1+\dfrac{\R}
%         {{1+}\dfrac{\R^{\frac{m_c}{m_d}}}{\tau}}\right)+(1-
%         \tau)\log_2\left(1+\dfrac{\R}
%         {1+\dfrac{\R^{\frac{m_{\cthree}}{m_d}}}{1-\tau}}\right).
%      \end{align} 
%    $A(\tau)$ is the achievable sum-rate using TDMA with a time sharing parameter $\tau$ in a multiple access channel with a noise variance of $1+\R^{\frac{m_c}{m_d}}$.  
Thus, $C(0,\tau_2,1-\tau_2)$ is convex and achieves its minimum at $\tau_2=\tau^\prime$, with minimum value    \begin{align*}       \min_{\tau_2\in[0,1]}C(0,\tau_2,1-\tau_2)=C(0,\tau^\prime,1-\tau^\prime)=\log_2\left(1+\dfrac{\R^
{\alpha_{\dtwo}}}
       {1+\R^{\alpha_{\cone}}+\R^{\alpha_{\cthree}}}\right).
    \end{align*}
Now, if $\tau^\star\neq \tau^\prime$, we have
\begin{align*}
%\label{TDMATININEQ}
R_{\Sigma,\mathrm{TDMA-TIN}}&\geq B(0,\tau^\star,1-\tau^\star)+C(0,\tau^\star,1-\tau^\star)\\
         &\stackrel{(a)}{>}B(0,\tau^\star,1-\tau^\star)+C(0,\tau^\prime,1-\tau^\prime)\\         &=\log_2\left(1+    \dfrac{\R^{\alpha_{\done}}+\R^{\alpha_{\dthree}}}{1+\R^{\alpha_{\ctwo}}}\right) + \log_2\left(1+\dfrac{\R^
{\alpha_{\dtwo}}}
       {1+\R^{\alpha_{\cone}}+\R^{\alpha_{\cthree}}}\right)\\
          &=R_{\Sigma,\mathrm{{TIN}}},
\end{align*} 
where $(a)$ follows since $\tau^\star\neq \tau^\prime$ and since $B(\tau)$ is minimum at $\tau^\prime$.
    
Therefore, TDMA-TIN always outperforms naive-TIN if $\tau^\star\neq \tau^\prime$. This corresponds to the condition
    \begin{align*}
       \alpha_{\dthree}-\alpha_{\cthree}\neq \alpha_{\done}-\alpha_{\cone}.
    \end{align*}
%Otherwise, if $\tau^\star=\tau^\prime,$ we have
%\begin{align*}
%%\label{line}
%\alpha_{\dthree}-\alpha_{\cthree}= \alpha_{\done}-\alpha_{\cone},
%\end{align*}
%and in this case, both TDMA-TIN and naive-TIN perform similarly.

\section{Transmission scheme of Proposition \ref{Pro:regime3}: \IACP}
\label{Gauss:Regime3}
The transmission scheme is based on common and private signalling with interference alignment. First, the transmitters split their messages as follows:
\begin{itemize}
\item Tx$1$ splits its message $W_1$ into $W_{1,p}$ and $W_{1,a}$ with rates $R_{1,p}$ and $R_{a}$, respectively.
\item Tx$2$ splits its message $W_2$ into $W_{2,p1}$ and $W_{2,p2}$ with rates $R_{2,p1}$ and $R_{2,p2}$, respectively.
\item Tx$3$ splits its message $W_3$ into $W_{3,c}$, $W_{3,a}$, and $W_{3,p}$ with rates $R_{3,c}$, $R_{a}$, and $R_{3,p}$ respectively.
\end{itemize}
{The alignment message $W_{1,a}$ is further split into $W_{1,a}^R$ and $W_{1,a}^I$ with rates $R_{a}^{R}$ and $R_a^{I}$, with $R_{a}^{R}+R_a^{I}=R_a$. Similarly, $W_{3,a}$ is split into $W_{3,a}^C$, $C=\{R,I\}$, with rate $R_{a}^{C}$.} The superscript $C=\{R,I\}$ determines as whether the message is intended for the real part or the imaginary part of the channel.

\subsection{Encoding:}
The alignment messages $W_{1,a}^C$ and $W_{3,a}^C$ are encoded into $x_{1,a}^{C,n}$ and $x_{3,a}^{C,n}$ using nested-lattice codes. Note that Tx$1$ and Tx$3$ use the same nested-lattice codebook $(\Lambda_{f},\Lambda_{c})$ with rate $R_{a}$ and power $1$, where $\Lambda_{c}$ and $\Lambda_f$ denote the coarse and fine lattices, respectively. For more details about nested-lattice codes, the reader is referred to~\cite{NazerGastpar,WilsonNarayananPfisterSprintson,ErezZamir}. Tx$i$, $i\in\{1,3\}$, encodes its message $W_{i,a}^{C}$ into a length-$n$ codeword $\lambda_{i,a}^{C,n}$ from the nested-lattice codebook $(\Lambda_{f},\Lambda_{c})$. Then, it constructs the following signal 
\begin{align*}
x_{i,a}^{C,n} = {\sqrt{\frac{P_{i,a}}{2}}}[(\lambda_{i,a}^{C,n}-d_{i,a}^{C,n})\bmod\Lambda_{c} ],\quad  C=\{R,I\},
\end{align*}
where ${P_{i,a}/2}$ is the power of the alignment signal $x_{i,a}^{C,n}$ and $d_{i,a}^{C,n}$ is $n$-dimensional random dither vector~\cite{NazerGastpar} known also at the receivers. Since the length of all sequences in this section is $n$, we drop the superscript $n$ in the rest of the section. 

The messages $W_{i,p}$, $W_{2,p1}$, $W_{2,p2}$, and $W_{3,c}$ are encoded into $x_{i,p}$, $x_{2,p1}$, $x_{2,p2}$, and $x_{3,c}$ with powers $P_{i,p}$, $P_{2,p1}$, $P_{2,p2}$, and $P_{3,c}$, respectively, using Gaussian random codebooks. Then the transmitters send the signals 
\begin{align*}
x_3 &=  x_{3,c} +  \mathrm{e}^{-\mathrm{j}\, \varphi_{\cthree}} (\underbrace{x_{3,a}^R+\mathrm{j}\, x_{3,a}^I}_{x_{3,a}}) + x_{3,p},  \\
x_1 &=  \mathrm{e}^{-\mathrm{j}\, \varphi_{\cone}} (\underbrace{x_{1,a}^R+\mathrm{j}\, x_{1,a}^I}_{x_{1,a}}) + x_{1,p},  \\ 
x_2 &= x_{2,p1} + x_{2,p2},
\end{align*}
where $\mathrm{j}=\sqrt{-1}$ and $\varphi_{k}$ represents the phase of the channel $h_k$, where $k\in\{\done,\cone,\dtwo,\ctwo,\dthree,\cthree\}$. Note that the assigned powers must fulfill the given power constraints, hence,
\begin{align*}
P_{3,c}+P_{3,a} + P_{3,p} = P_3 &\leq P,  \\
P_{1,a} + P_{1,p} = P_1 &\leq P,  \\
P_{2,p1}+P_{2,p2} = P_2 &\leq P.  
\end{align*}
Using \eqref{recsig1} and \eqref{recsig2}, we can write the received signals of the receivers as follows
\begin{align*}
y_{1} &= h_{\done} (\mathrm{e}^{-\mathrm{j}\, \varphi_{\cone}}x_{1,a} + x_{1,p}) + h_{\dthree} (x_{3,c}+\mathrm{e}^{-\mathrm{j}\, \varphi_{\cthree}}x_{3,a} +x_{3,p}) + h_{\ctwo}(x_{2,p1} + x_{2,p2})+z_1,  \\
%%%%%%%%%%%%%%%%%%%%%%%%%%%%%%%%%%%%%%%%%%%%%%%%%%%%%%%%%
y_{2} &= h_{\dtwo}(x_{2,p1} + x_{2,p2}) +  h_{\cone} (\mathrm{e}^{-\mathrm{j}\, \varphi_{\cone}}x_{1,a} + x_{1,p}) + h_{\cthree} (x_{3,c}+\mathrm{e}^{-\mathrm{j}\, \varphi_{\cthree}}x_{3,a} + x_{3,p})+z_2. 
\end{align*}
Recall from our discussion in Section \ref{Det:regime3} that the signals $x_{1,a}$ and $x_{3,a}$ must be aligned at Rx$2$. Therefore, the powers of these two signals must be adjusted such that
\begin{align}
\label{eqn:PwerAlgn}
|h_{\cone}|^2 P_{1,a} = |h_{\cthree}|^2 P_{3,a},
\end{align}
which guarantees that the two alignment signals are received at Rx$2$ at the same power. Namely, the alignment signals are received at Rx$2$ as
\begin{align*}
h_{\cone} \mathrm{e}^{-\mathrm{j}\, \varphi_{\cone}}x_{1,a}+ h_{\cthree} \mathrm{e}^{-\mathrm{j}\, \varphi_{\cthree}} x_{3,a}=&
 |h_{\cone}|{\sqrt{\frac{P_{1,a}}{2}}}\left[{(\lambda_{1,a}^R - d_{1,a}^R)\bmod\Lambda_{c} + \mathrm{j} \left[(\lambda_{1,a}^I  - d_{1,a}^I) \bmod\Lambda_{c}\right]}\right]  \\ 
 & +|h_{\cthree}| {\sqrt{\frac{P_{3,a}}{2}}}\left[{(\lambda_{3,a}^R- d_{3,a}^R)\bmod\Lambda_{c} + \mathrm{j}\left[(\lambda_{3,a}^I-d_{3,a}^I)\bmod\Lambda_{c}\right]} \right]\\
 = & {|h_{\cone}|\sqrt{\frac{P_{1,a}}{2}}}\left[(\lambda_{1,a}^R-d_{1,a}^R)\bmod\Lambda_{c} + (\lambda_{3,a}^R-d_{3,a}^R)\bmod\Lambda_{c}\right.\\ &\qquad \qquad +\left.\mathrm{j}\left[(\lambda_{1,a}^I-d_{1,a}^I)\bmod\Lambda_{c} +(\lambda_{3,a}^I-d_{3,a}^I)\bmod\Lambda_{c}\right] \right].
\end{align*}

\subsection{Decoding}
Since the PIMAC is not symmetric, the decoding process is not the same for both receives. Therefore, we discuss the decoding at the two receivers separately. 

\subsubsection{Decoding at Rx$1$}
First, Rx$1$ decodes $x_{3,c}$ while all other signals are treated as noise. To do this reliably, the following constraint needs to be satisfied 
\begin{align}
R_{3,c} \leq \log_2\left(1+\frac{|h_{\dthree}|^2 P_{3,c}}{1+|h_{\done}|^2P_1 + |h_{\dthree}|^2(P_{3,a} + P_{3,p}) +|h_{\ctwo}|^2 P_2}\right). \label{R3cRx1}
\end{align}
As long as \eqref{R3cRx1} is satisfied, Rx$1$ is able to decode $W_{3,c}$ and hence, it is able to reconstruct $x_{3,c}$. Rx$1$ removes the interference caused by $x_{3,c}$ from the received signal $y_1$.
Further decoding at Rx$1$ depends on the channel strength. Therefore, we distinguish between two different cases.
\begin{itemize}
\item $\frac{|h_{\done}|}{|h_{\cone}|}<\frac{|h_{\dthree}|}{|h_{\cthree}|} $: In this case, Rx$1$ proceeds the decoding in the following order $W_{3,a}\rightarrow W_{1,a}\rightarrow \{W_{1,p},W_{3,p}\}$. The receiver decodes each of these signals while treating the remaining signals as noise, then it subtracts the contribution of the decoded signal, and proceeds with decoding the next one. Note that Rx$1$ multiplies the received signal with $\mathrm{e}^{\mathrm{j}(\varphi_{ci}-\varphi_{di})}$ before decoding the alignment messages $W_{i,a}^{C}$. Then after removing the contribution of $W_{i,a}^{C}$ from the received signal, Rx$1$ multiplies the resulting signal with $\mathrm{e}^{-\mathrm{j}(\varphi_{ci}-\varphi_{di})}$. {It is shown in~\cite{ErezZamir} that nested-lattice codes achieve the capacity of the point-to-point AWGN channel. Therefore,} the rate constraints for successive decoding of messages $W_{3,a}$ and $W_{1,a}$ at Rx$1$ are given by 
\begin{align}
\label{R3a1}
R_{a}^R= R_{a}^I& \leq\frac{1}{2} \log_2\left(1+\frac{|h_{\dthree}|^2 \frac{P_{3,a}}{2}}{\frac{1}{2}(1+ |h_{\done}|^2 P_{1}+ |h_{{\dthree}}|^2 P_{3,p}+ |h_{\ctwo}|^2 P_{2}) }\right), \\
%%%%%%%%%%%%%%%%%%%%%%%%%%%%%%%%%%%%%%%%%%%%%%%%%%%%%%%%%%
R_{a}^R= R_{a}^I & \leq \frac{1}{2} \log_2\left(1+\frac{|h_{\done}|^2 \frac{P_{1,a}}{2}}{\frac{1}{2}(1 + |h_{\done}|^2  P_{1,p}+  |h_{\dthree}|^2 P_{3,p}  + |h_{\ctwo}|^2 P_{2}) }\right).  
\end{align}
Note the term $\frac{1}{2}$ in the denominator is needed to obtain the fraction of the noise and interference power in the real or the imaginary part.
Thus, we obtain 
\begin{align}
\label{R3a2}
R_{a}& \leq \log_2\left(1+\frac{|h_{\dthree}|^2 P_{3,a}}{1+ |h_{\done}|^2 P_{1}+ |h_{{\dthree}}|^2 P_{3,p}+ |h_{\ctwo}|^2 P_{2} }\right), \\
%%%%%%%%%%%%%%%%%%%%%%%%%%%%%%%%%%%%%%%%%%%%%%%%%%%%%%%%%%
R_{a} & \leq \log_2\left(1+\frac{|h_{\done}|^2 P_{1,a}}{1 + |h_{\done}|^2  P_{1,p}+  |h_{\dthree}|^2 P_{3,p}  + |h_{\ctwo}|^2 P_{2} }\right).  
\end{align}
\item $\frac{|h_{\dthree}|}{|h_{\cthree}|}<\frac{|h_{\done}|}{|h_{\cone}|} $: In this case, the decoding order at Rx$1$ is $W_{1,a}\rightarrow W_{3,a}\rightarrow \{W_{1,p},W_{3,p}\}$. Similar to the previous case, we obtain the following rate constraints
\begin{align}
R_{a} & \leq \log_2\left(1+\frac{|h_{\done}|^2 P_{1,a}}{1+ |h_{\done}|^2P_{1,p}+|h_{\dthree}|^2  (P_{3,a} + P_{3,p})+ |h_{\ctwo}|^2 P_{2} }\right),  \\
%%%%%%%%%%%%%%%%%%%%%%%%%%%%%%%%%%%%%%%%%%%%%%%%%%%%%%%%%%
R_{a} & \leq \log_2\left(1+\frac{|h_{\dthree}|^2 P_{3,a}}{1+ |h_{\done}|^2P_{1,p}+|h_{\dthree}|^2 P_{3,p}+ |h_{\ctwo}|^2 P_{2} }\right).  
\end{align}
\end{itemize}
The remaining signals $x_{1,p}$ and $x_{3,p}$ are treated in the same way for both cases. Rx$1$ decodes $W_{1,p}$ and $W_{3,p}$ as in a multiple access channel while treating $W_{2,p1}$ and $W_{2,p2}$ as noise. Rx$1$ can decode $W_{1,p}$ and $W_{3,p}$ successfully if the following conditions are satisfied
\begin{align}
R_{1,p} &\leq \log_2\left(1+\frac{|h_{\done}|^2 P_{1,p}}{1+|h_{\ctwo}|^2 P_2} \right), \\ 
%%%%%%%%%%%%%%%%%%%%%%%%%%%%%%%%%%%%%%%%%%%%%%%%%%%%%%%%%%%
R_{3,p} &\leq \log_2\left(1+\frac{|h_{\dthree}|^2 P_{3,p}}{1+|h_{\ctwo}|^2 P_2} \right), \\
%%%%%%%%%%%%%%%%%%%%%%%%%%%%%%%%%%%%%%%%%%%%%%%%%%%%%%%%%%%
R_{3,p} + R_{1,p} &\leq \log_2\left(1+\frac{|h_{\dthree}|^2 P_{3,p} + |h_{\done}|^2 P_{1,p}}{1+|h_{\ctwo}|^2 P_2} \right).
\end{align}
\subsubsection{Decoding at Rx$2$} 
The decoding order at Rx$2$ is $W_{3,c} \rightarrow W_{2,p1}\rightarrow f(W_{1,a},W_{3,a}) \rightarrow W_{2,p2}$, where $f(W_{1,a},W_{3,a})$ is a function of $W_{1,a}$ and $W_{3,a}$. Namely, Rx$2$ decodes the sum of the lattice codewords corresponding to {$W_{1,a}^{R}$ and $W_{3,a}^{R}$ and also the sum of the lattice codewords corresponding to $W_{1,a}^{I}$ and $W_{3,a}^{I}$}. 
First, Rx$2$ decodes $W_{3,c}$ while the other signals are treated as noise. For reliable decoding of $W_{3,c}$ the following constraint needs to be satisfied
\begin{align}
R_{3,c} \leq \log_2\left(1+\frac{|h_{\cthree}|^2 P_{3,c}}{1+|h_{\cone}|^2P_1 + |h_{\cthree}|^2(P_{3,a} + P_{3,p}) +|h_{\dtwo}|^2 P_2}\right). \label{R3cRx2}
\end{align}
Next, Rx$2$ reconstructs $x_{3,c}$ from $W_{3,c}$ and it removes the interference caused by $x_{3,c}$.
Then, it decodes $W_{2,p1}$ while treating the other signals as noise. Therefore, the rate of $W_{2,p1}$ needs to satisfy  
\begin{align}
R_{2,p1} \leq \log_2\left(1+\frac{|h_{\dtwo}|^2 P_{2,p1}}{1+|h_{\cone}|^2P_1 + |h_{{\cthree}}|^2 (P_{3,a}+P_{3,p})+|h_{\dtwo}|^2P_{2,p2}}\right).
\end{align}
Next the receiver decodes the sums $(\lambda_{1,a}^R + \lambda_{3,a}^R)\bmod\Lambda_c$ and $(\lambda_{1,a}^I + \lambda_{3,a}^I)\bmod\Lambda_c$. Decoding these sums is possible as long as~\cite{NazerGastpar}
\begin{align}
R_{a}^R = R_{a}^I  \leq \frac{1}{2}\left[\log_2\left(\frac{1}{2}+\frac{|h_{\cone}|^2 \frac{P_{1,a}}{2}}{\frac{1}{2}(1+|h_{\cone}|^2P_{1,p}+| h_{\cthree}|^2P_{3,p}+|h_{\dtwo}|^2P_{2,p2})}\right)\right]^{+}.
\end{align}
Since $R_a=R_a^R+R_a^I$, we obtain
\begin{align}
R_{a} \leq \left[\log_2\left(\frac{1}{2}+\frac{|h_{\cone}|^2 P_{1,a}}{1+|h_{\cone}|^2P_{1,p}+| h_{\cthree}|^2P_{3,p}+|h_{\dtwo}|^2P_{2,p2}}\right)\right]^{+}.
\end{align}
The receiver can then construct the received sum of alignment signals $|h_{\cone}|x_{1,a}+ |h_{\cthree}| x_{3,a}$ from the decoded sums of codewords $(\lambda_{1,a}^R + \lambda_{3,a}^R)\bmod\Lambda_c$ and $(\lambda_{1,a}^C + \lambda_{3,a}^C)\bmod\Lambda_c$ (cf.~\cite{Nazer_IZS2012}). After reconstructing the sum of alignment signals, its contribution is removed from the received signal and then $W_{2,p2}$ is decoded. Decoding $W_{2,p2}$ is possible reliably as long as
\begin{align}
\label{R2p2}
R_{2,p2}\leq \log_2\left(1+\frac{|h_{\dtwo}|^2 P_{2,p2}}{1+|h_{\cone}|^2P_{1,p}+|h_{\cthree}|^2 P_{3,p}}\right).
\end{align}

\subsection{Achievable rate}
As a result of this decoding process, the following sum-rate is achievable
\begin{align}
R_{\Sigma,\text{\IACP}}&=R_{3,c}+2R_{a}+R_{1,p}+R_{2,p1}+R_{2,p2}+R_{3,p},
\end{align}
where the terms above satisfy \eqref{R3cRx1}-\eqref{R2p2}.
Since, we are interested in an approximation of the sum-rate at high SNR, we translate the achievable sum-rate into the achievable GDoF as follows
\begin{align}
d_{\Sigma,\text{\IACP}}(\boldsymbol{\alpha}) &\leq d_{3,c}+2d_{a}+d_{1,p}+d_{2,p1}+d_{2,p2}
+d_{3,p}, \label{app_GDoF}
\end{align}
where 
\begin{align*}
%d_\Sigma = \frac{R_\Sigma}{\log_2\R}, \quad
d_{3,c} = \frac{R_{3,c}}{\log_2\R},\quad
d_{a} = \frac{R_{a}}{\log_2\R}, \quad
d_{i,p} = \frac{R_{1,p}}{\log_2\R}, \quad
%d_{3,p} = \frac{R_{3,p}}{\log_2\R}, \quad
d_{2,p1} = \frac{R_{2,p2}}{\log_2\R}, \quad
d_{2,p2} = \frac{R_{2,p2}}{\log_2\R}, 
\end{align*}
and $\R\to\infty.$
%The terms in \eqref{app_GDoF} needs to satisfy \eqref{R3cRx1}-\eqref{R2p2}.
%Now we write these achievable rates at high $\mathrm{SNR}$. 
We start by defining 
\begin{align*}
r_{i,a}=\frac{\log_2\left(\frac{P_{i,a}}{P}\right)}{\log_2\R}, \, 
r_{i,p}= \frac{\log_2\left(\frac{P_{i,p}}{P}\right)}{\log_2\R}, \, 
r_{2,p1}=\frac{\log_2\left(\frac{P_{2,p1}}{P}\right)}{\log_2\R}, \, 
r_{2,p2}=\frac{\log_2\left(\frac{P_{2,p2}}{P}\right)}{\log_2\R}, \,
r_{3,c}=\frac{\log_2\left(\frac{P_{3,c}}{P}\right)}{\log_2\R}.
\end{align*}
Note that since $P_{i,a},P_{i,p},P_{2,p1},P_{2,p2},P_{3,c}\leq P$ and $1<\R$, then we have $r_{i,a},r_{i,p},r_{2,p1},r_{2,p2},r_{3,c}\leq 0$. Furthermore, we impose the constraints $\R^{r_{1,a}}+\R^{r_{1,p}}\leq 1$, $\R^{r_{2,p1}}+\R^{r_{2,p2}}\leq 1$, and
$\R^{r_{3,c}}+\R^{r_{3,a}}+\R^{r_{3,p}}\leq 1$ in order to satisfy the power constraints, and $\alpha_{\cone}+r_{1,a}=\alpha_{\cthree}+r_{3,a}$ in order to satisfy \eqref{eqn:PwerAlgn}. Now, we substitute these parameters in the rate constraints \eqref{R3cRx1}-\eqref{R2p2} and approximate the expression in the high $\mathrm{SNR}$ regime.
Consider the constraint \eqref{R3cRx1}. This can be written as 
\begin{align*}
R_{3,c} &\leq \log_2\left(1+\frac{|h_{\dthree}|^2 P_{3,c}}{1+|h_{\done}|^2P_1 + |h_{\dthree}|^2(P_{3,a} + P_{3,p}) +|h_{\ctwo}|^2 P_2}\right) \\ 
%%%%%%%%%%%%%%%%%%%%%%%%%%%%%%%%%%%%%%%%%%%%%%%%%%%%%%%%
& = \log_2\left(1+\frac{|h_{\dthree}|^2 P \frac{P_{3,c}}{P}}{1+|h_{\done}|^2P \frac{P_1}{P} + |h_{\dthree}|^2 P \frac{(P_{3,a} + P_{3,p})}{P} +|h_{\ctwo}|^2P \frac{P_2}{P}}\right)\\
%%%%%%%%%%%%%%%%%%%%%%%%%%%%%%%%%%%%%%%%%%%%%%%%%%%%%%%%
& = \log_2\left(1+\frac{|h_{\dthree}|^2 P \frac{P_{3,c}}{P}}{1+|h_{\done}|^2P \frac{P_{1,a}+P_{1,p}}{P} + |h_{\dthree}|^2 P \frac{(P_{3,a} + P_{3,p})}{P} +|h_{\ctwo}|^2P \frac{P_{2,p1}+P_{2,p2}}{P}}\right) \\ 
%%%%%%%%%%%%%%%%%%%%%%%%%%%%%%%%%%%%%%%%%%%%%%%%%%%%%%%%
&=\log_2\left(1+\frac{\R^{\alpha_{\dthree}+r_{3,c}}}{1+\R^{\alpha_{\done}} (\R^{r_{1,a}}+\R^{r_{1,p}}) + \R^{\alpha_{\dthree}} (\R^{r_{3,a}} + \R^{r_{3,p}}) + \R^{\alpha_{\ctwo}} (\R^{r_{2,p1}}+\R^{r_{2,p2}})}\right)\\
%%%%%%%%%%%%%%%%%%%%%%%%%%%%%%%%%%%%%%%%%%%%%%%%%%%%%%%%
&\approx \log_2 \left(\frac{\R^{\alpha_{\dthree}+r_{3,c}}}{1+\R^{\alpha_{\done}} (\R^{r_{1,a}}+\R^{r_{1,p}}) + \R^{\alpha_{\dthree}} (\R^{r_{3,a}} + \R^{r_{3,p}}) + \R^{\alpha_{\ctwo}} (\R^{r_{2,p1}}+\R^{r_{2,p2}})}\right)\\
%%%%%%%%%%%%%%%%%%%%%%%%%%%%%%%%%%%%%%%%%%%%%%%%%%%%%%%%
& \approx [\alpha_{\dthree}+r_{3,c} - \max\{0,
\alpha_{\done} + r_{1,a}, \alpha_{\done} + r_{1,p}, \alpha_{\dthree}+r_{3,a}, \alpha_{\dthree}+r_{3,p},
\alpha_{\ctwo} + r_{2,p1},
\alpha_{\ctwo} + r_{2,p2}\}]^+ \log_2(\R)
\end{align*}
where the approximation follows by considering $\mathrm{SNR}$ high enough so that the additive constants can be neglected. By following a similar procedure, we can show that the rate constraints \eqref{R3cRx1}-\eqref{R2p2} translate to
\begin{align*}
d_{3,c} &\leq \min\{[\alpha_{\dthree}+r_{3,c} - \max\{0,
\alpha_{\done} + r_{1,a}, \alpha_{\done} + r_{1,p}, \alpha_{\dthree}+r_{3,a}, \alpha_{\dthree}+r_{3,p},
\alpha_{\ctwo} + r_{2,p1},
\alpha_{\ctwo} + r_{2,p2}\}]^+,  \notag\\
& \, [\alpha_{\cthree}+r_{3,c} - \max\{0,
\alpha_{\cone} + r_{1,a}, \alpha_{\cone} + r_{1,p}, \alpha_{\cthree}+r_{3,a}, \alpha_{\cthree}+r_{3,p},
\alpha_{\dtwo} + r_{2,p1},
\alpha_{\dtwo} + r_{2,p2}\}]^+\},
\end{align*}
\begin{align*}
d_{1,p} &\leq [\alpha_{\done}+r_{1,p}-\max\{0,\alpha_{\ctwo}+r_{2,p1},\alpha_{\ctwo}+
r_{2,p2}\}]^{+},\\
d_{3,p} &\leq [\alpha_{\dthree}+r_{3,p}-\max\{0,\alpha_{\ctwo}+r_{2,p1},
\alpha_{\ctwo}+r_{2,p2}\}]^{+},\\
d_{3,p} + d_{1,p} &\leq [\max\{\alpha_{\dthree}+r_{3,p},\alpha_{\done}+r_{1,p}\}-
\max\{0,\alpha_{\ctwo}+r_{2,p1},\alpha_{\ctwo}+r_{2,p2}\}]^{+},\\
d_{2,p1} &\leq [\alpha_{\dtwo}+r_{2,p1}-\max\{0,\alpha_{\cone}+r_{1,p},
\alpha_{\cone}+r_{1,a}, \alpha_{{\cthree}}+r_{3,p},
\alpha_{{\cthree}}+r_{3,a},\alpha_{\dtwo}+r_{2,p2}\}]^{+},\\
d_{2,p2}&\leq [\alpha_{\dtwo}+r_{2,p2}-\max\{0,\alpha_{\cone}+r_{1,p},
\alpha_{\cthree}+r_{3,p}\}]^{+},
\end{align*}
and
{\begin{align*}
d_{a}  \leq \min\{&[\alpha_{\cone}+r_{1,a}-\max\{0,\alpha_{\cone}+r_{1,p},
\alpha_{\cthree}+r_{3,p},\alpha_{\dtwo}+r_{2,p2}\}]^{+},\nonumber\\
%%%%%%%%%%%%%%%%%%%%%%%%%%%%%%%%%%%%%%%%%%%%%%%
&[\alpha_{\dthree}+r_{3,a}-\max\{0, \alpha_{\done}+r_{1,p},\alpha_{\done}+r_{1,a},\alpha_{\dthree}+r_{3,p},
\alpha_{\ctwo}+r_{2,p1},\alpha_{\ctwo}+r_{2,p2}\}]^{+},\nonumber\\
%%%%%%%%%%%%%%%%%%%%%%%%%%%%%%%%%%%%%%%%%%%%%%%
&[\alpha_{\done}+r_{1,a}-\max\{0,\alpha_{\done}+r_{1,p},
\alpha_{\dthree}+r_{3,p},
\alpha_{\ctwo}+r_{2,p1},\alpha_{\ctwo}+r_{2,p2}\}]^{+}\}, 
\end{align*}
if $\frac{|h_{\done}|}{|h_{\cone}|}<\frac{|h_{\dthree}|}{|h_{\cthree}|} $, and 
\begin{align*}
d_{a} \leq \min\{&
[\alpha_{\cone}+r_{1,a}-\max\{0,\alpha_{\cone}+r_{1,p},
\alpha_{\cthree}+r_{3,p},\alpha_{\dtwo}+r_{2,p2}\}]^{+},\nonumber\\
&[\alpha_{\done}+r_{1,a}-\max\{0,\alpha_{\done}+r_{1,p},
\alpha_{\dthree}+r_{3,p},
\alpha_{\dthree}+r_{3,a},
\alpha_{\ctwo}+r_{2,p1},
\alpha_{\ctwo}+r_{2,p2}\}]^{+},\nonumber\\
%%%%%%%%%%%%%%%%%%%%%%%%%%%%%%%%%%%%%%%%%%%%%%%%%%%%%%%
&[\alpha_{\dthree}+r_{3,a}-\max\{0, \alpha_{\done}+r_{1,p},
\alpha_{\dthree}+r_{3,p},
\alpha_{\ctwo}+r_{2,p1},
\alpha_{\ctwo}+r_{2,p2}\}]^{+}\},
\end{align*}
if $\frac{|h_{\dthree}|}{|h_{\cthree}|}<\frac{|h_{\done}|}{|h_{\cone}|} $. 
This proves the achievability of the rate given in Proposition \ref{Pro:regime3}.}
  
\section{Sub-Optimality of TDMA-TIN in Regime 3: Proof of Corollary \ref{Theorem_subopt_TDMATINin 7,8,9}}
\label{IACP_poweralloc}
In this appendix, we want to prove Corollary \ref{Theorem_subopt_TDMATINin 7,8,9} which states that TDMA-TIN is not GDoF optimal in regime 3.
To prove this, we need to find power allocations for the \IACP{} scheme (presented in Proposition \ref{Pro:regime3}) which lead to higher GDoF than \eqref{eq:GDoF-TDMA-TIN} in regime 3.
First, we fix the power allocation parameters of \IACP{} for sub-regime \threeA{} as follows
\begin{align*}
r_{1,p}= -\alpha_{\cone},\quad r_{2,p2}= -\alpha_{\ctwo}, \quad r_{3,c}=0, \quad  r_{1,a}=r_{2,p1}=r_{3,a}=r_{3,p}=-\infty.
\end{align*}
This is equivalent to setting the powers of the private, common, and alignment signals in Appendix \ref{Gauss:Regime3} to $P_{1,p}= \frac{1}{|h_{\cone}|^{2}}$ $P_{2,p2}=\frac{1}{|h_{\ctwo}|^{2}}$ (note that $\frac{1}{|h_{\cone}|^{2}},\frac{1}{|h_{\ctwo}|^{2}}<P$ according to \eqref{InterferenceLimited}), $P_{3,c}=P$, and $P_{1,a}=P_{2,p1}=P_{3,a}=P_{3,p}=0$. This satisfies the power constraint. Next, we substitute these parameters in Proposition \ref{Pro:regime3} to obtain
\begin{align*}
d_{3c}&\leq \min\{\alpha_{\dthree}- (\alpha_{\done}-\alpha_{\cone}) ,\alpha_{\cthree} -(\alpha_{\dtwo}-\alpha_{\ctwo})\},\\
d_{1,p}&\leq \alpha_{\done}-\alpha_{\cone},\\
d_{2,p2}&\leq \alpha_{\dtwo}-\alpha_{\ctwo},\\
d_{3,p}, d_{a}, d_{2,p1}&\leq 0,
\end{align*}
for an achievable GDoF of 
\begin{align}
\label{sumratePCd}
d_{\Sigma,\rm{\IACP,\threeA}}(\boldsymbol{\alpha})=\min\{\alpha_{\dthree}+ (\alpha_{\dtwo}-\alpha_{\ctwo}),\alpha_{\cthree} + (\alpha_{\done}-\alpha_{\cone})\}.
\end{align}
{Now, similar to the analysis for the LD-PIMAC, by comparing \eqref{sumratePCd} with \eqref{eq:GDoF-TDMA-TIN}, we can show that the achievable GDoF using \IACP{} is higher than that of the TDMA-TIN in sub-regime \threeA.} 

Now, we prove Corollary \ref{Theorem_subopt_TDMATINin 7,8,9} for sub-regime \threeB. In this sub-regime, we choose the power allocation parameters of \IACP{} as follows\footnote{{In Appendix \ref{app:int_Powerallocation}, it is explained how we choose the power allocation using the insight obtained from LD-PIMAC.}}
\begin{align*}
&r_{1,a}= \frac{-2}{\log_2 \R}, \quad r_{3,a}= \alpha_{\cone}-\alpha_{\cthree}-\frac{2}{\log_2 \R}, \quad r_{3,c}= \frac{-2}{\log_2 \R}, \quad r_{3,p}= \frac{-2}{\log_2 \R}-\alpha_{\cthree},  \quad r_{2,p1}= -\alpha_{\ctwo}-\frac{2}{\log_2\R},   \notag\\
& r_{2,p2}= \max\{(\alpha_{\dthree}-\alpha_{\cthree})-(\alpha_{\done}-
\alpha_{\cone}),\alpha_{\done}-(\alpha_{\dthree}-\alpha_{\cthree})\}-
\alpha_{\dtwo}-\frac{2}{\log_2\R} \quad  r_{1,p}= \frac{-2}{\log_2\R}-\alpha_{\cone}.
\end{align*}
which corresponds to setting
\begin{align}
&P_{1,a}=\frac{P}{4}, \quad P_{3,a} = \frac{|h_{\cone}|^2}{|h_{\cthree}|^2}\frac{P}{4}, \quad P_{3,c}=\frac{P}{4}, \quad P_{3,p} = \frac{1}{4|h_{\cthree}|^2}, \quad P_{2,p1} = \frac{1}{4|h_{\ctwo}|^2}, \label{eq:Poweralloc3B1} \\
&P_{2,p2} = \max\left\lbrace \frac{|h_{\dthree}|^2|h_{\cone}|^2}{4|h_{\cthree}|^2|h_{\done}|^2 |h_{\dtwo}|^2}, \frac{P|h_{\done}|^2|h_{\cthree}|^2}{4|h_{\dthree}|^2 |h_{\dtwo}|^2}\right\rbrace, \quad P_{1,p}=\frac{1}{4|h_{\cone}|^2}.
\label{eq:Poweralloc3B2}
\end{align}
This power allocation can satisfy the power constraint and the alignment constraint. By applying this power allocation to Proposition \ref{Pro:regime3} and letting $\R\to\infty$, we obtain 
\begin{align*}
d_{a} &=\min\{\alpha_{\done}-\alpha_{\dthree}+\alpha_{\cthree},(\alpha_{\dthree}-\alpha_{\cthree}) -(\alpha_{\done}-\alpha_{\cone})\}, \\
d_{3,c} &=\alpha_{\cthree}-(\alpha_{\dtwo}-\alpha_{\ctwo}), \\
d_{3,p}  &\leq \alpha_{\dthree}-\alpha_{\cthree},\\
d_{2,p1} &= \alpha_{\dtwo}-\alpha_{\ctwo}-\alpha_{\cone}, \\
d_{2,p2} &= \max\{\alpha_{\done}-\alpha_{\dthree}+\alpha_{\cthree},(\alpha_{\dthree}-\alpha_{\cthree}) -(\alpha_{\done}-\alpha_{\cone})\},\\
d_{1,p} &\leq \alpha_{\done}-\alpha_{\cone}, \\
d_{1,p} + d_{3,p} &= \alpha_{\dthree} - \alpha_{\cthree}.
\end{align*}
Hence, we achieve the following GDoF 
\begin{align}
d_{\Sigma,\rm{\IACP,\threeB}}(\boldsymbol{\alpha}) = (\alpha_{\dthree}-\alpha_{\cthree}) +(\alpha_{\dtwo}-\alpha_{\ctwo}) + d_{3,c}+d_a. \label{eq:GDoF_R3B}
\end{align}
Due to the fact that in sub-regime \threeB, $d_{3,c}+d_a$ is always positive, the achievable GDoF is strictly larger than $(\alpha_{\dthree}-\alpha_{\cthree}) +(\alpha_{\dtwo}-\alpha_{\ctwo})$. Moreover, by substituting $d_{3,c}$ into \eqref{eq:GDoF_R3B}, we obtain
$d_{\Sigma,\rm{\IACP,\threeB}}(\boldsymbol{\alpha}) = \alpha_{\dthree} + d_a$ which is larger than $\alpha_{\dthree}$ since in sub-regime \threeB, $d_a$ is positive. Hence, we conclude that the achievable GDoF of \IACP{} is larger than that of TDMA-TIN given in \eqref{eq:GDoF-TDMA-TIN} in sub-regime \threeB. 

Finally, we show Corollary \ref{Theorem_subopt_TDMATINin 7,8,9} for sub-regime \threeC. To this end, we choose the power allocation parameters of \IACP{} accordingly.
The following power allocation parameters can be used to show that \IACP{} outperforms TDMA-TIN in terms of GDoF in sub-regime \threeC, and thus prove Corollary \ref{Theorem_subopt_TDMATINin 7,8,9} in this sub-regime.
\begin{align*}
&r_{1,a} = -(\alpha_{\cone} - \alpha_{\cthree})^+-\frac{1}{\log_2\R},\quad
r_{1,p} = -\alpha_{\cone}-\frac{1}{\log_2\R},\\
&r_{2,p1} = -\alpha_{\ctwo}-\frac{1}{\log_2\R}, \quad 
r_{2,p2} = \max\{d_a^{(1)},d_a^{(2)}\} -\alpha_{\dtwo}-\frac{1}{\log_2\R},\quad \\
&r_{3,c} = -\infty,\quad
r_{3,a} = -(\alpha_{\cthree}-\alpha_{\cone})^+-\frac{1}{\log_2\R},\quad 
r_{3,p} = -\alpha_{\cthree}-\frac{1}{\log_2\R},
\end{align*}
where $d_a^{(1)}$ and $d_a^{(2)}$ are defined in Table \ref{Table:DA12}.
\begin{table}[h]
\centering
\begin{tabular}{ |c| c | c |c| }
\hline
 & &  $d_a^{(1)}$ & $d_a^{(2)}$ \\\hline
\multicolumn{1}{ |c| }{\multirow{2}{*}{$\alpha_{\dthree}-
\alpha_{\cthree}>\alpha_{\done}-\alpha_{\cone}$} } &
\multicolumn{1}{ |c| }{$\alpha_{\cthree}<\alpha_{\cone}$} &$\alpha_{\dthree}-\alpha_{\cthree}-\alpha_{\done}+
\alpha_{\cone}$ &$\alpha_{\done}-\alpha_{\cone}-\alpha_{\dthree} + 2\alpha_{\cthree}$  \\ \cline{2-4}
\multicolumn{1}{ |c  }{} &
\multicolumn{1}{ |c| }{$\alpha_{\cone}\leq\alpha_{\cthree}$} & $\alpha_{\dthree}-\alpha_{\cthree}-\alpha_{\done}+
\alpha_{\cone}$ & $ \alpha_{\done}-\alpha_{\dthree} + \alpha_{\cthree}$ \\ \hline 
\multicolumn{1}{ |c| }{\multirow{2}{*}{$\alpha_{\dthree}-\alpha_{\cthree}<\alpha_{\done}-
\alpha_{\cone}$} } &
\multicolumn{1}{ |c| }{$\alpha_{\cthree}<\alpha_{\cone}$} & $ \alpha_{\done}-
\alpha_{\cone}-\alpha_{\dthree}+\alpha_{\cthree}$ &$\alpha_{\dthree}-\alpha_{\done}+
\alpha_{\cone}$  \\ \cline{2-4}
\multicolumn{1}{ |c  }{} &
\multicolumn{1}{ |c| }{$\alpha_{\cone}\leq\alpha_{\cthree}$} &$\alpha_{\done}-\alpha_{\cone}-\alpha_{\dthree}+
\alpha_{\cthree}$ & $\alpha_{\dthree}-\alpha_{\cthree}-\alpha_{\done}+
2\alpha_{\cone}$ \\ \hline
\end{tabular}
\caption{The values of $d_a^{(1)}$ and $d_a^{(2)}$.}
\label{Table:DA12}
\end{table}

These correspond to setting
\begin{align*}
P_{1,a} =  P_{3,a}\frac{|h_{\cthree}|^2}{|h_{\cone}|^2}, \quad 
P_{3,p} = \frac{1}{2|h_{\cthree}|^2},\quad 
P_{1,p}  = \frac{1}{2|h_{\cone}|^2}, \quad
P_{2,p1} = \frac{1}{2|h_{\ctwo}|^2}, \quad
P_{3,c} = 0.
\end{align*}
The remaining parameters are given in Table \ref{Table:PowerAlloc_P}. 

\begin{table}[h]
\centering
\begin{tabular}{ |c| c | c |c| }
\hline
 & &  $P_{3,a}$ & $P_{2,p2}$ \\\hline
\multicolumn{1}{ |c| }{\multirow{2}{*}{$\frac{|h_{\dthree}|}{|h_{\cthree}|}>\frac{|h_{\done}|}{|h_{\cone}|}$} } &
\multicolumn{1}{ |c| }{$|h_{\cthree}|^2<|h_{\cone}|^2$} &$\frac{P}{2}$ &$\max\left\lbrace \frac{|h_{\dthree}|^2 |h_{\cone}|^2}{2|h_{\done}|^2 |h_{\dtwo}|^2 |h_{\cthree}| ^2},
\frac{P|h_{\done}|^2  |h_{\cthree}|^4}{2|h_{\dthree}|^2|h_{\cone}|^2 |h_{\dtwo}|^2}\right\rbrace$  \\ \cline{2-4}
\multicolumn{1}{ |c  }{} &
\multicolumn{1}{ |c| }{$|h_{\cone}|^2\leq |h_{\cthree}|^2$} & $\frac{P}{2} \frac{|h_{\cone}|^2}{|h_{\cthree}|^2}$ & $\max\left\lbrace \frac{|h_{\dthree}|^2 |h_{\cone}|^2}{2|h_{\done}|^2 |h_{\dtwo}|^2 |h_{\cthree}| ^2},\frac{P|h_{\done}|^2|h_{\cthree}|^2}{2|h_{\dthree}|^2|h_{\dtwo}|^2}\right\rbrace$ \\ \hline 
\multicolumn{1}{ |c| }{\multirow{2}{*}{$\frac{|h_{\dthree}|}{|h_{\cthree}|}<\frac{|h_{\done}|}{|h_{\cone}|}$} } &
\multicolumn{1}{ |c| }{$|h_{\cthree}|^2<|h_{\cone}|^2$} & $\frac{P}{2}$ &$\max\left\lbrace \frac{|h_{\done}|^2 |h_{\cthree}|^2 }{2|h_{\cone}|^2 |h_{\dthree}|^2 |h_{\dtwo}|^2}, \frac{P|h_{\dthree}|^2 |h_{\cone}|^2}{2|h_{\done}|^2|h_{\dtwo}|^2} \right\rbrace$  \\ \cline{2-4}
\multicolumn{1}{ |c  }{} &
\multicolumn{1}{ |c| }{$|h_{\cone}|^2 \leq |h_{\cthree}|^2$} &$\frac{P}{2} \frac{|h_{\cone}|^2}{|h_{\cthree}|^2}$ & $\max\left\lbrace \frac{|h_{\done}|^2 |h_{\cthree}|^2 }{2|h_{\cone}|^2 |h_{\dthree}|^2 |h_{\dtwo}|^2}, \frac{P|h_{\dthree}|^2 |h_{\cone}|^4}{2|h_{\cthree}|^2|h_{\done}|^2|h_{\dtwo}|^2} \right\rbrace$ \\ \hline
\end{tabular}
\caption{Power allocation parameters ($P_{3,a}$ and $P_{2,p2}$) for \IACP{} in sub-regime \threeC.}
\label{Table:PowerAlloc_P}
\end{table} 

The given power allocation parameters satisfy the power constraint {and the alignment constraint}. 
By substituting these power allocation parameters in the constraints in Proposition \ref{Pro:regime3} and letting $\R\to\infty$, we obtain 
\begin{align*}
d_{3,c} &= 0\\
d_{1,p} &\leq \alpha_{\done}-\alpha_{\cone}\\
d_{3,p} &\leq  \alpha_{\dthree}-\alpha_{\cthree}\\
d_{1,p} + d_{3,p} &= \max\{\alpha_{\done}-\alpha_{\cone},\alpha_{\dthree}-\alpha_{\cthree}\} \\
d_{2,p1} &= \alpha_{\dtwo} - \alpha_{\ctwo} - \min\{\alpha_{\cone},\alpha_{\cthree}\} \\
d_{2,p2}&= \max\{d_a^{(1)},d_a^{(2)}\}\\
d_{a} &= \min\{d_a^{(1)},d_a^{(2)}\},
\end{align*}
where $d_a^{(1)}$ and $d_a^{(2)}$ are provided in Table \ref{Table:DA12}.
%\begin{align*}
%d_{a} &=  \min\{ (\alpha_{\dthree}-
%\alpha_{\cthree})- (\alpha_{\done}-\alpha_{\cone}), (\alpha_{\done}-\alpha_{\cone})-(\alpha_{\dthree}-
%2\alpha_{\cthree})\} \\
%d_{3,p} &\leq \alpha_{\dthree}-\alpha_{\cthree},\\
%d_{2,p1} &= \alpha_{\dtwo}-\alpha_{\ctwo}-\alpha_{\cthree},\\
%d_{3,c} &=  0,\\ 
%d_{1,p} &\leq  \alpha_{\done}-\alpha_{\cone},\\
%d_{1,p} + d_{3,p} &= \alpha_{\dthree} - \alpha_{\cthree},\\
%d_{2,p2}&= \max\left\lbrace (\alpha_{\dthree} - \alpha_{\cthree}) - (\alpha_{\done}-\alpha_{\cone}),(\alpha_{\done}-\alpha_{\cone}) - (\alpha_{\dthree} - 2\alpha_{\cthree}) \right\rbrace.
%\end{align*}
Hence, the proposed scheme achieves
\begin{align*}
d_{\Sigma,\rm{\IACP,\threeC}}(\boldsymbol{\alpha}) &= 2d_{a}+d_{1,p}+d_{3,p}+d_{2,p1}+d_{2,p2} + d_{3,c}\\
&=  d_{a} + \underbrace{\max\{\alpha_{\dthree}-\alpha_{\cthree},\alpha_{\done}-\alpha_{\cone}\}
+\alpha_{\dtwo}-\alpha_{\ctwo}}_{d_{\Sigma,\text{TDMA-TIN,\threeC}}(\boldsymbol{\alpha})} \notag\\
&> d_{\Sigma,\text{TDMA-TIN,\threeC}}(\boldsymbol{\alpha}),
\end{align*}
since in sub-regime \threeC{}, $d_{a}$ is positive. 

{Therefore, TDMA-TIN is outperformed by \IACP{} in regime 3, in terms of GDoF. This shows that TDMA-TIN cannot achieve the GDoF of the Gaussian PIMAC in regime 3 which completes the proof of corollary \ref{Theorem_subopt_TDMATINin 7,8,9}.}

{\section{An Example for Choosing Power Allocation Parameters}}
\label{app:int_Powerallocation}
\begin{figure}
\centering
\begin{tikzpicture}[scale=.75]
\IACOMTINGauss
\end{tikzpicture}
\caption{A graphical illustration showing the received signals at receivers 1 and 2 of the Gaussian PIMAC for sub-regime \threeB.}
\label{Fig:Align-Gauss}
\end{figure}
Here, we explain how to use the insight of the linear deterministic case, for choosing the power allocation parameters for the Gaussian case. To do this, we explain the power allocation \IACP{} for sub-regime \threeB. First we recall the graphical illustration of the received signal in the LD-PIMAC in this sub-regime shown in Fig.~ \ref{Fig:Align}. Since we are interested in how to allocate the powers for the Gaussian case, we need to replace the bit levels with the power levels. To do this, we replace $n_k$ by $\alpha_k$ for all $k\in\{\done,\cone,\dtwo,\ctwo,\dthree,\cthree\}$. Doing this, we get Fig. \ref{Fig:Align-Gauss}.
Notice that, while the length of each block in Figure \ref{Fig:Align} (for the LD-PIMAC) represents the rate of the corresponding signal, in the Gaussian case it represents the DoF achieved by each signal. As an example, the length of the block which represents $x_{3,c}$ is given by $d_{3,c}$ in Fig. \ref{Fig:Align-Gauss}. Notice that the length of the blocks which represent $x_{1,a}$ and $x_{3,a}$ are the same and $d_{1,a} = d_{3,a} = d_a$.

Now, we are ready to choose the power allocation parameters. In what follows, first we choose the power allocation parameters of the common signal, next alignment signals and finally we deal with private signals.
First, consider $x_{3,c}$. As it is shown in Figure \ref{Fig:Align-Gauss}, this signal is received at Rx$1$ at power level $\alpha_{\dthree}$. Roughly speaking, this power level is a logarithmic representation of $P|h_{\dthree}|^2$. By dividing this received power by $|h_{\dthree}|^2$ which represents the channel from Tx$3$ to Rx$1$, we obtain the transmit power of $x_{3,c}$. Hence, at the moment we set the power of $x_{3,c}$ to $P$. Similarly, we can set the power of $x_{1,a}$ to $P$. Since $x_{3,a}$ and $x_{1,a}$ have to be aligned at Rx$2$, the alignment condition $$P_{3,a} |h_{\cthree}|^2 \overset{!}{=} P_{1,a} |h_{\cone}|^2$$ has to be satisfied. Hence, we set the power of $x_{3,a}$ to $\frac{P|h_{\cone}|^2}{|h_{\cthree}|^2}$. 
Now, we need to choose the power of the private signals. First consider $x_{1,p}$, $x_{2,p1}$, and $x_{3,p}$. All these signals are received at the noise level at the undesired Rx. Hence, we set the power of $x_{1,p}$, $x_{2,p1}$, and $x_{3,p}$ to $\frac{1}{|h_{\cone}|^2}$, $\frac{1}{|h_{\ctwo}|^2}$, and $\frac{1}{|h_{\cthree}|^2}$, respectively. Finally, we set the power of $x_{2,p2}$. 
This signal is received at Rx$2$ at power level $\alpha_{\cone} - d_{a}$. We can obtain $d_{a}$ easily from $$R_a = \min\{(n_{\dthree}-n_{\cthree})- (n_{\done}-n_{\cone}),n_{\done}+n_{\cthree}-n_{\dthree}\}$$ (given for the linear deterministic case). To obtain $d_a$, we replace the $n$-parameters in $R_a$ with the $\alpha$-parameters. Hence, we write
$$d_a = \min\{(\alpha_{\dthree}-\alpha_{\cthree})- (\alpha_{\done}-\alpha_{\cone}),\alpha_{\done}+\alpha_{\cthree}-\alpha_{\dthree}\} $$
Hence, $x_{2,p2}$ is received at Rx$2$ at power level $$\alpha_{\cone} - d_a = \max\{(\alpha_{\dthree}-\alpha_{\cthree}) - (\alpha_{\done}-\alpha_{\cone}),\alpha_{\done} - (\alpha_{\dthree} - \alpha_{\cthree})\}.$$ Writing this power level in linear scale, we obtain $$\max \left \lbrace  \frac{P|h_{\dthree}|^2 P|h_{\cone}|^2}{P|h_{\cthree}|^2 P |h_{\done}|^2} , \frac{P|h_{\done}|^2 P|h_{\cthree}|^2}{P|h_{\dthree}|^2}\right\rbrace.$$ Note that this is the received power of $x_{2,p2}$ at Rx$2$. To obtain the transmit power of $x_{2,p2}$, we divide this expression by $|h_{\dtwo}|^2$. Doing this, the allocated power to $x_{2,p2}$ is $$\max \left \lbrace  \frac{P|h_{\dthree}|^2 P|h_{\cone}|^2}{P|h_{\cthree}|^2 P |h_{\done}|^2 |h_{\dtwo}|^2} , \frac{P|h_{\done}|^2 P|h_{\cthree}|^2}{P|h_{\dthree}|^2 |h_{\dtwo}|^2}\right\rbrace.$$ 
It is obvious that the chosen powers violate the power constraint $P$. To fix this, we scale the allocated powers by a constant such that the power constraints are satisfied. Hence, we write
\begin{align*}
P_{1,a} &= aP, \\
P_{1,p} &= a\frac{1}{|h_{\cone}|^2},  \\
%%%%%%%%%%%%%%%%%%%%%%%%%%%%%%%%%%%%%%%%%%%%%%%%%%%%%%%%%%
P_{2,p1} &= a\frac{1}{|h_{\ctwo}|^2} , \\
P_{2,p2} &= a \max \left \lbrace  \frac{|h_{\dthree}|^2 |h_{\cone}|^2}{|h_{\cthree}|^2  |h_{\done}|^2 |h_{\dtwo}|^2} , \frac{P|h_{\done}|^2 |h_{\cthree}|^2}{|h_{\dthree}|^2 
|h_{\dtwo}|^2}\right\rbrace \\
%%%%%%%%%%%%%%%%%%%%%%%%%%%%%%%%%%%%%%%%%%%%%%%%%%%%%%%
P_{3,c} &= a P, \\
 P_{3,a} &= a \frac{P|h_{\cone}|^2}{|h_{\cthree}|^2},  \\
P_{3,p} &= a\frac{1}{|h_{\cthree}|^2}
\end{align*}
with 
\begin{align*}
\begin{cases}
P_{1,a} + P_{1,p}  \overset{!}{\leq}P, \\
P_{2,p1} + P_{2,p2}\overset{!}{\leq} P \\
P_{3,c} + P_{3,a} + P_{3,p} \overset{!}{\leq} P
\end{cases}.
\end{align*}
All three power constraints will be satisfied if $a\leq \frac{1}{3}$. For sake of simplicity, we choose $a$ such that its binary logarithm is integer. Hence, here we use $a=\frac{1}{4}$. Notice that since $a$ does not grow with $\R$, this scaling does not have any impact on the GDoF. 
Now, we want to obtain the power allocation parameter $r$ for each signal. For instance, consider signal $x_{3,c}$ with power $P_{3,c} =  \frac{P}{4}$. Then, we can write
\begin{align*}
r_{3,c} &= \frac{\log_2 \left(\frac{P_{3,c}}{P}\right)}{\log_2\R} \\ 
&= \frac{-2}{\log_2\R}
\end{align*}
Similarly, for all other signals, we can write 
\begin{align*}
r_{1,a} &= \frac{-2}{\log_2 \R}\\
r_{1,p}&= \frac{-2}{\log_2\R}-\alpha_{\cone}\\
r_{2,p1} &= -\frac{2}{\log_2\R}-\alpha_{\ctwo}\\
r_{2,p2}= & \max\{(\alpha_{\dthree}-\alpha_{\cthree})-(\alpha_{\done}-
\alpha_{\cone}), 
\alpha_{\done}-(\alpha_{\dthree}-\alpha_{\cthree})\}-
\alpha_{\dtwo}-\frac{2}{\log_2\R} \\
r_{3,c} &= \frac{-2}{\log_2 \R}\\ 
r_{3,a} &= -\frac{2}{\log_2 \R}+\alpha_{\cone}-\alpha_{\cthree}\\
r_{3,p}&= \frac{-2}{\log_2 \R}-\alpha_{\cthree}.
\end{align*}

\section{Sub-optimality of TIN when $\alpha_{\dthree}-\alpha_{\cthree}=\alpha_{\done}-\alpha_{\cone}$}
\label{app:Special_case_Gauss}
In this section, we show the sub-optimality of TDMA-TIN when 
$\alpha_{\dthree}-\alpha_{\cthree}=\alpha_{\done}-\alpha_{\cone}$
holds.
To do this, we propose a scheme which outperforms TDMA-TIN in term of GDoF. 
This scheme is similar to \IACP{} (proposed in Appendix \ref{Gauss:Regime3}) from this aspect that both schemes are based on common and private signalling with interference alignment. The difference of the schemes is that while in \IACP{} the interference alignment is done in the signal level space, in this scheme, the phase alignment is required \cite{CadambeJafarWangAsymetricSig}. This scheme is called \PACP{} (phase alignment with common and private signalling).

Before we present the scheme in details, we simplify our model as follows. In Gaussian PIMAC, the received signals of two receivers are given by 
\begin{align*}
y_1 = |h_{\done}| \mathrm{e}^{\mathrm{j}\varphi_{\done}} x_1 + |h_{\ctwo}| \mathrm{e}^{\mathrm{j}\varphi_{\ctwo}} x_2 +
|h_{\dthree}| \mathrm{e}^{\mathrm{j}\varphi_{\dthree}} x_3 + z_1 \\ 
%%%%%%%%%%%%%%%%%%%%%%%%%%%%%%%%%%%%%%%%%%%%%%%%%%%
y_2 = |h_{\cone}| \mathrm{e}^{\mathrm{j}\varphi_{\cone}} x_1 + |h_{\dtwo}| \mathrm{e}^{\mathrm{j}\varphi_{\dtwo}} x_2 +
|h_{\cthree}| \mathrm{e}^{\mathrm{j}\varphi_{\cthree}} x_3 + z_2.
\end{align*}
Now, by defining $\tilde{x}_1 = \mathrm{e}^{\mathrm{j}\varphi_{\done}} x_1$, $\tilde{x}_3 = \mathrm{e}^{\mathrm{j}\varphi_{\dthree}} x_3$, $\tilde{y}_2 = \mathrm{e}^{-\mathrm{j}(\varphi_{\cone}-\varphi_{\done})} y_2$, and $\tilde{z}_2 = \mathrm{e}^{-\mathrm{j}(\varphi_{\cone}-\varphi_{\done})} z_2$, we write 
\begin{align*}
y_1 &= |h_{\done}|  \tilde{x}_1 + |h_{\ctwo}| \mathrm{e}^{\mathrm{j}\varphi_{\ctwo}} x_2 +
|h_{\dthree}|  \tilde{x}_3 + z_1 \\ 
%%%%%%%%%%%%%%%%%%%%%%%%%%%%%%%%%%%%%%%%%%%%%%%%%%%
\tilde{y}_2 &= |h_{\cone}| \tilde{x}_1 + |h_{\dtwo}| \mathrm{e}^{\mathrm{j}(\varphi_{\dtwo}-\varphi_{\cone}+\varphi_{\done})} x_2 +
|h_{\cthree}| \mathrm{e}^{\mathrm{j}(\varphi_{\cthree}-
\varphi_{\dthree}-\varphi_{\cone}+\varphi_{\done})} \tilde x_3 + \tilde z_2.
\end{align*}
We proceed by defining $\tilde{x}_2 = \mathrm{e}^{\mathrm{j}(\varphi_{\dtwo}-\varphi_{\cone}+\varphi_{\done})}x_2$, $\theta = \varphi_{\ctwo}-
\varphi_{\dtwo}+\varphi_{\cone}-\varphi_{\done}$, and $\varphi= \varphi_{\cthree}-
\varphi_{\dthree}-\varphi_{\cone}+\varphi_{\done}$. Doing this, we obtain
\begin{align*}
y_1 &= |h_{\done}|  \tilde{x}_1 + |h_{\ctwo}| \mathrm{e}^{\mathrm{j}\theta} \tilde{x}_2 +
|h_{\dthree}|  \tilde{x}_3 + z_1 \\ 
%%%%%%%%%%%%%%%%%%%%%%%%%%%%%%%%%%%%%%%%%%%%%%%%%%%
\tilde{y}_2 &= |h_{\cone}| \tilde{x}_1 + |h_{\dtwo}| \tilde{x}_2 +
|h_{\cthree}| \mathrm{e}^{\mathrm{j}\varphi}  \tilde x_3 + \tilde{z}_2.
\end{align*}
As it is shown above the input-output relationship of any PIMAC can be rewritten such that all channels except two of them are real. Hence, without loss of generality, we present the transmission scheme for a simple PIMAC with the input output relationship given as follows 
\begin{align}
y_1 &= |h_{\done}|  x_1 + |h_{\ctwo}| \mathrm{e}^{\mathrm{j}\theta} x_2 +
|h_{\dthree}|  x_3 + z_1 \label{eq:sim_sys_model1}\\ 
%%%%%%%%%%%%%%%%%%%%%%%%%%%%%%%%%%%%%%%%%%%%%%%%%%%
y_2 &= |h_{\cone}| {x}_1 + |h_{\dtwo}| x_2 +
|h_{\cthree}| \mathrm{e}^{\mathrm{j}\varphi} x_3 + z_2.\label{eq:sim_sys_model2}
\end{align}
Note that all input and output signals in \eqref{eq:sim_sys_model1} and \eqref{eq:sim_sys_model2} are complex. Now, by writing the complex numbers in an alternative vector form with real entries (as in \cite{CadambeJafarWangAsymetricSig}), we obtain
\begin{align}
\begin{bmatrix} y_1^R \\ y_1^I  \end{bmatrix} &= |h_{\done}| \begin{bmatrix} x_1^R \\ x_1^I  \end{bmatrix} + |h_{\ctwo}| \begin{bmatrix} \cos(\theta) & -\sin (\theta) \\ \sin(\theta) & \cos (\theta)  \end{bmatrix} \begin{bmatrix} x_2^R \\ x_2^I  \end{bmatrix} + |h_{\dthree}|\begin{bmatrix} x_3^R \\ x_3^I  \end{bmatrix} + \begin{bmatrix} z_1^R \\ z_1^I  \end{bmatrix} \label{eq:Rx1_received_Asym}\\
%%%%%%%%%%%%%%%%%%%%%%%%%%%%%%%%%%%%%%%%%%%%%%%%%%%%%%
\begin{bmatrix} y_2^R \\ y_2^I  \end{bmatrix} &= 
|h_{\cone}|\begin{bmatrix} x_1^R \\ x_1^I \end{bmatrix} + |h_{\dtwo}| \begin{bmatrix} x_2^R \\ x_2^I  \end{bmatrix} + |h_{\cthree}| \begin{bmatrix} \cos(\varphi) & -\sin (\varphi) \\ \sin(\varphi) & \cos (\varphi)  \end{bmatrix} \begin{bmatrix} x_3^R \\ x_3^I  \end{bmatrix}  + \begin{bmatrix} z_2^R \\ z_2^I  \end{bmatrix}, \label{eq:Rx2_received_Asym}
\end{align}
where $x^R$ and $x^I$ represent the real and imaginary part of signal $x$, respectively. Now, we are ready to present the transmission scheme. The transmitters split their messages as follows:
\begin{itemize}
\item Tx$1$ splits its message $W_1$ into $W_{1,p}^R$, $W_{1,p}^I$, and $W_{1,a}^I$ with rates $R_{1,p}^R$, $R_{1,p}^I$, and $R_{1,a}^I$, respectively.
\item Tx$2$ splits its message $W_2$ into $W_{2,p}^R$ and $W_{2,p}^I$ with rates $R_{2,p}^R$, $R_{2,p}^I$, respectively.
\item Tx$3$ splits its message $W_3$ into $W_{3,c}^R$, $W_{3,c}^I$, and $W_{3,a}^R$ with rates $R_{3,c}^R$, $R_{3,c}^I$, and $R_{3,a}^{R,n}$, respectively.
\end{itemize}
Note that in what follows we set $R_{1,a}^I=R_{3,a}^R=R_{a}$.
\subsection{Encoding:}
Similar to the scheme presented in Appendix \ref{Gauss:Regime3}, while the alignment messages are encoded using nested-lattice codes $(\Lambda_f,\Lambda_c)$ with power 1 and rate $R_a$, other messages are encoded using Gaussian random codebooks. 
Encoding the alignment signals is done in the same way as discussed in Appendix \ref{Gauss:Regime3}. For example, the message $W_{1,a}^I$ is encoded into a length-$n$ codeword $\lambda_{1,a}^{I,n}$ using the nested-lattice codebook $(\Lambda_f,\Lambda_c)$. Then, the signal $$x_{1,a}^{I,n} = \sqrt{P_{1,a}^I} \left[\left(\lambda_{1,a}^{I,n}-d_{1,a}^{I,n}\right)\mod \Lambda_c\right]$$ is constructed, where $P_{1,a}^I$ is the power allocated to this signal and $d_{1,a}^{I,n}$ is an $n$-dimensional random dither vector which is also known at the receivers. Similarly, Tx$3$ generates $x_{3,a}^R$.
The generated signals and their powers are summarized in Table \ref{Table:power_alloc_Gauss_asym}. 

\begin{table}[h]
\centering
\begin{tabular}{ |c| c | c |c| }
\hline
Encoded Message   & Generated Signal & Power & Encoding \\
\hline  \hline
$W_{1,p}^R$ & $x_{1,p}^{R,n}$ & $P_{1,p}^R = \frac{1}{4}\frac{1}{|h_{\cone}|^2}$ & Gaussian random codebook \\ \hline
$W_{1,p}^I$ & $x_{1,p}^{I,n}$ & $P_{1,p}^I=\frac{1}{4}\frac{1}{|h_{\cone}|^2}$ & Gaussian random codebook \\ \hline
$W_{1,a}^I$ & $x_{1,a}^{I,n}$ & $P_{1,a}^I=\frac{P}{4} \min \left\lbrace 1,\frac{|h_{\cthree}|^2}{|h_{\cone}|^2} \right\rbrace$ & nested-lattice codebook \\ \hline\hline 
$W_{2,p}^R$ & $x_{2,p}^{R,n}$ & $P_{2,p}^R=\frac{1}{4}\frac{1}{|h_{\ctwo}|^2}$ & Gaussian random codebook \\ \hline
$W_{2,p}^I$ & $x_{2,p}^{I,n}$ & $P_{2,p}^I=\frac{1}{4}\frac{1}{|h_{\ctwo}|^2}$ & Gaussian random codebook \\ \hline \hline
$W_{3,c}^R$ & $x_{3,c}^{R,n}$ & $P_{3,c}^R = \begin{cases} \frac{P}{4} \quad  &\text{ if }  \frac{|h_{\dtwo}|^2}{|h_{\ctwo}|^2} \leq  P|h_{\cthree}|^2  \\
0  &\text{ otherwise }  \end{cases}$ & Gaussian random codebook \\ \hline
$W_{3,c}^I$ & $x_{3,c}^{I,n}$ & $P_{3,c}^{I} = \begin{cases} \frac{P}{4} \quad  &\text{ if } \frac{|h_{\dtwo}|^2}{|h_{\ctwo}|^2} \leq  P|h_{\cthree}|^2  \\
0  &\text{ otherwise }  \end{cases}$ & Gaussian random codebook \\ \hline
$W_{3,a}^R$ & $x_{3,a}^{R,n}$ & $P_{3,a}^R = \frac{P}{4} \min \left\lbrace 1,\frac{|h_{\cone}|^2}{|h_{\cthree}|^2} \right\rbrace$ & nested-lattice codebook \\ \hline
\end{tabular}
\caption{The message encoding and the allocated power to each signal are given in this table.}
\label{Table:power_alloc_Gauss_asym}
\end{table}
Then, each transmitter generates its signal as follows\footnote{We drop the superscript $n$ since all sequences have the length $n$.}
\begin{align}
\begin{bmatrix} x_1^{R} \\ x_1^{I} \end{bmatrix} = 
\begin{bmatrix} x_{1,p}^{R} \\ x_{1,p}^{I}  \end{bmatrix} + \begin{bmatrix} 0 \\ x_{1,a}^{I} \end{bmatrix}, \quad  \quad 
\begin{bmatrix} x_2^{R} \\ x_2^{I} \end{bmatrix} = 
\begin{bmatrix} x_{2,p}^{R} \\ x_{2,p}^{I}  \end{bmatrix}, \quad \quad 
\begin{bmatrix} x_3^{R} \\ x_3^{I} \end{bmatrix} = 
\begin{bmatrix} x_{3,c}^{R}  \\ x_{3,c}^{I}  \end{bmatrix} +  x_{3,a}^{R} \begin{bmatrix}  \sin(\varphi) \\ \cos(\varphi)  \end{bmatrix}.
\label{eq:sent_sig_Asym}
\end{align}
Note that the assigned powers given in Table \ref{Table:power_alloc_Gauss_asym} satisfy the power constraint.
\subsection{Decoding:}
First, we present the decoding at Rx$1$. By using \eqref{eq:sent_sig_Asym}, we rewrite the received signal \eqref{eq:Rx1_received_Asym} as follows
\begin{align*}
\begin{bmatrix} y_1^R \\ y_1^I  \end{bmatrix} &= |h_{\done}| \left(\begin{bmatrix} x_{1,p}^{R} \\ x_{1,p}^{I}  \end{bmatrix} + \begin{bmatrix} 0 \\ x_{1,a}^{I} \end{bmatrix}\right) + |h_{\ctwo}| \begin{bmatrix} \cos(\theta) & -\sin (\theta) \\ \sin(\theta) & \cos (\theta)  \end{bmatrix} \begin{bmatrix} x_{2,p}^R \\ x_{2,p}^I  \end{bmatrix} + |h_{\dthree}|\left(\begin{bmatrix} x_{3,c}^{R}  \\ x_{3,c}^{I}  \end{bmatrix} + x_{3,a}^{R} \begin{bmatrix}  \sin(\varphi) \\ \cos(\varphi)  \end{bmatrix}\right)+ \begin{bmatrix} z_1^R \\ z_1^I  \end{bmatrix}. 
\end{align*}
Note that $y_1^R$ and $y_1^I$ are received over two orthogonal dimensions, i.e., real and imaginary part of the received signal $y_1$. Hence, Rx$1$ can decode each dimension without suffering from any interference caused by the other dimension. Here, Rx$1$ decodes first $y_1^R$ and then $y_1^I$. Rx$1$ decodes $y_1^R$ in the following order: $W_{3,c}^R\rightarrow W_{3,a}^R\rightarrow W_{1,p}^R$. The receiver decodes each of these signals while the remaining signals in $y_1^R$ are treated as noise, then it removes the contribution of the decoded signal, and proceeds with decoding. Similar to Appendix \ref{Gauss:Regime3}, we can write the conditions for the reliable decoding of $x_{3,c}^R$, $x_{3,a}$, and $x_{1,p}^R$ as follows 
\begin{align}
R_{3,c}^R &\leq \frac{1}{2}\log_2\left(1+ \frac{ |h_{\dthree}|^2 P_{3,c}^R }{\frac{1}{2}+|h_{\dthree}|^2 P_{3,a}^R\sin^2(\varphi) + |h_{\done}|^2P_{1,p}^R  + |h_{\ctwo}|^2(\cos^2({\theta}) P_{2,p}^R + \sin^2(\theta)P_{2,p}^I)}\right) \label{eq:rate_cond_1_asy} \\ 
%%%%%%%%%%%%%%%%%%%%%%%%%%%%%%%%%%%%%%%%%%%%%%%%%%
R_{a} &\leq \frac{1}{2}\log_2\left(1+ \frac{|h_{\dthree}|^2 P_{3,a}^R\sin^2(\varphi)}{\frac{1}{2}+ |h_{\done}|^2P_{1,p}^R  + |h_{\ctwo}|^2(\cos^2({\theta}) P_{2,p}^R + \sin^2(\theta)P_{2,p}^I)}\right) \label{eq:rate_cond_2_asy}\\
%%%%%%%%%%%%%%%%%%%%%%%%%%%%%%%%%%%%%%%%%%%%%%%%%%
R_{1,p}^R &\leq \frac{1}{2}\log_2\left(1+ \frac{|h_{\done}|^2P_{1,p}^R}{\frac{1}{2}+  |h_{\ctwo}|^2(\cos^2({\theta}) P_{2,p}^R + \sin^2(\theta)P_{2,p}^I)}\right). \label{eq:rate_cond_3_asy} 
\end{align}
As long as the rates of the messages satisfy the conditions \eqref{eq:rate_cond_1_asy}-\eqref{eq:rate_cond_3_asy}, Rx$1$ is able to decode $x_{3,c}^R$, $x_{3,a}^R$, and $x_{3,p}^R$ successfully. Hence, Rx$1$ is able to remove the interference caused by $x_{3,a}^R$ before decoding $y_1^I$. Doing this Rx$1$ obtains 
\begin{align*}
y_1^I - \cos(\varphi)x_{3,a}^R = |h_{\done}| (x_{1,p}^I + x_{1,a}^I) + |h_{\ctwo}| (\sin(\theta)x_{2,p}^R + \cos(\theta)x_{2,p}^I) + |h_{\dthree}|x_{3,c}^I + z_1^I.
\end{align*}
Next, Rx$1$ decodes in the following order: $W_{3,c}^I \rightarrow W_{1,a}^I \rightarrow W_{1,p}^I$. This successive decoding is done similar to above. The successful decoding can be accomplished as long as
\begin{align}
R_{3,c}^I &\leq \frac{1}{2}\log_2\left(1+ \frac{|h_{\dthree}|^2 P_{3,c}^I }{\frac{1}{2}+|h_{\done}|^2 P_{1,p}^I + |h_{\done}|^2P_{1,a}^I  + |h_{\ctwo}|^2(\sin^2({\theta}) P_{2,p}^R + \cos^2(\theta)P_{2,p}^I)}\right) \label{eq:rate_cond_4_asy} \\ 
%%%%%%%%%%%%%%%%%%%%%%%%%%%%%%%%%%%%%%%%%%%%%%%%%%%%%%%
R_{a} &\leq \frac{1}{2}\log_2\left(1+ \frac{|h_{\done}|^2 P_{1,a}^I}{\frac{1}{2}+ |h_{\done}|^2P_{1,p}^I  + |h_{\ctwo}|^2(\sin^2({\theta}) P_{2,p}^R + \cos^2(\theta)P_{2,p}^I)}\right) \label{eq:rate_cond_5_asy} \\ 
%%%%%%%%%%%%%%%%%%%%%%%%%%%%%%%%%%%%%%%%%%%%%%%%%%%%%%%
R_{1,p}^I &\leq \frac{1}{2}\log_2\left(1+ \frac{|h_{\done}|^2P_{1,p}^I}{\frac{1}{2}+  |h_{\ctwo}|^2(\sin^2({\theta}) P_{2,p}^R + \cos^2(\theta)P_{2,p}^I)}\right). \label{eq:rate_cond_6_asy} 
\end{align}

Now, we explain the decoding at Rx$2$. The received signal at Rx$2$ in \eqref{eq:Rx2_received_Asym} can be rewritten as 
\begin{align*}
\begin{bmatrix} y_2^R \\ y_2^I  \end{bmatrix} &= 
|h_{\cone}|\left(
\begin{bmatrix} x_{1,p}^{R} \\ x_{1,p}^{I}  \end{bmatrix} + \begin{bmatrix} 0 \\ x_{1,a}^{I} \end{bmatrix}
\right) + |h_{\dtwo}| \begin{bmatrix} x_{2,p}^R \\ x_ {2,p}^I  \end{bmatrix} + |h_{\cthree}|\left( \underbrace{\begin{bmatrix} \cos(\varphi) & -\sin (\varphi) \\ \sin(\varphi) & \cos (\varphi)  \end{bmatrix}}_{\boldsymbol{U}} 
\begin{bmatrix} x_{3,c}^{R}  \\ x_{3,c}^{I}  \end{bmatrix} +  \begin{bmatrix}  0 \\ x_{3,a}^{R} \end{bmatrix}\right)+\begin{bmatrix} z_2^R \\ z_2^I  \end{bmatrix}.
\end{align*}
Note that due to the rotation matrix $\boldsymbol{U}$,  signals $x_{3,c}^R$ and $x_{3,c}^I$ are received in both components $y_2^R$ and $y_2^I$. In order to separate these two signals in two orthogonal dimensions, we rotate the vector $\begin{bmatrix} y_2^R & y_2^I  \end{bmatrix}^T$ by multiplying $\boldsymbol{U}^T$ from right hand side to it. Note that $\boldsymbol{U}^T\boldsymbol{U} = \boldsymbol{I}_2$. Hence, we have
\begin{align}
\boldsymbol{U}^T \begin{bmatrix} y_2^R \\ y_2^I  \end{bmatrix} = 
\boldsymbol{U}^T \left(
|h_{\cone}|\left(
\begin{bmatrix} x_{1,p}^{R} \\ x_{1,p}^{I}  \end{bmatrix} + \begin{bmatrix} 0 \\ x_{1,a}^{I} \end{bmatrix}
\right) + |h_{\dtwo}| \begin{bmatrix} x_{2,p}^R \\ x_ {2,p}^I  \end{bmatrix} + |h_{\cthree}|
  \begin{bmatrix}  0 \\ x_{3,a}^{R} \end{bmatrix} +\begin{bmatrix} z_1^R \\ z_1^I  \end{bmatrix} 
 \right)+ |h_{\cthree}|
\begin{bmatrix} x_{3,c}^{R}  \\ x_{3,c}^{I}  \end{bmatrix}.
\label{eq:rotated_vector}
\end{align}
Now, Rx$2$ decodes $W_{3,c}^R$ and $W_{3,c}^I$ separately. This can be done successfully as long as 
\begin{align}
R_{3,c}^R &\leq \frac{1}{2}\log_2\left(1+\frac{ |h_{\cthree}|^2 P_{3,c}^R}{\frac{1}{2} +  \cos^2(\varphi) [|h_{\cone}|^2 P_{1,p}^R  + |h_{\dtwo}|^2 P_{2,p}^R] + \sin^2(\varphi)[|h_{\cone}|^2 (P_{1,p}^I+P_{1,a}^I) + |h_{\dtwo}|^2 P_{2,p}^I + |h_{\cthree}|^2 P_{3,a}^R] }\right)
\label{eq:rate_cond_7_asy} \\
%%%%%%%%%%%%%%%%%%%%%%%%%%%%%%%%%%%%%%%%%%%%%%%%%
R_{3,c}^I &\leq \frac{1}{2}\log_2\left(1+\frac{ |h_{\cthree}|^2 P_{3,c}^I}{\frac{1}{2} +  \sin^2(\varphi) [|h_{\cone}|^2 P_{1,p}^R  + |h_{\dtwo}|^2 P_{2,p}^R] + \cos^2(\varphi)[|h_{\cone}|^2 (P_{1,p}^I+P_{1,a}^I) + |h_{\dtwo}|^2 P_{2,p}^I + |h_{\cthree}|^2 P_{3,a}^R] }\right).\label{eq:rate_cond_8_asy}
\end{align}
After successful decoding of $x_{3,c}^R$ and $x_{3,c}^I$, Rx$2$ rotates the vector in \eqref{eq:rotated_vector} back to $\begin{bmatrix} y_2^R & y_2^I  \end{bmatrix}^T$ by multiplying $\boldsymbol{U}$ from right hand side to \eqref{eq:rotated_vector}. Next, it removes the interference caused by $x_{3,c}^R$ and $x_{3,c}^I$  and obtains $\begin{bmatrix} \tilde y_2^R & \tilde y_2^I  \end{bmatrix}^T$ given by 
\begin{align*}
\begin{bmatrix} \tilde y_2^R  \\ \tilde y_2^I  \end{bmatrix} = |h_{\cone}|\left(
\begin{bmatrix} x_{1,p}^{R} \\ x_{1,p}^{I}  \end{bmatrix} + \begin{bmatrix} 0 \\ x_{1,a}^{I} \end{bmatrix}
\right) + |h_{\dtwo}| \begin{bmatrix} x_{2,p}^R \\ x_ {2,p}^I  \end{bmatrix} + |h_{\cthree}|
  \begin{bmatrix}  0 \\ x_{3,a}^{R} \end{bmatrix} +\begin{bmatrix} z_1^R \\ z_1^I  \end{bmatrix}.
\end{align*}
Now, Rx$2$ proceeds by decoding $x_{2,p}^R$ while $x_{1,p}^R$ is treated as noise. Notice that all signals which are contained in $\tilde{y}_2^I$ do not cause any interference during decoding $x_{2,p}^R$. Reliable decoding of $x_{2,p}^R$ is possible as long as
\begin{align}
R_{2,p}^R \leq \frac{1}{2} \log_2\left(1+\frac{|h_{\dtwo}|^2P_{2,p}^R}{\frac{1}{2}+|h_{\cone}|^2P_{1,p}^R} \right). \label{eq:rate_cond_9_asy}
\end{align}
Next, Rx$2$ decodes $W_{2,p}^I\rightarrow f(W_{3,a}^R,W_{1,a}^I)$, where $f(W_{3,a}^R,W_{1,a}^I)$ is the sum $\left(\lambda_{3,a}^{R}+ \lambda_{1,a}^{I}\right)\mod \Lambda_c$. Rx$2$ can decode $W_{2,p}^I$ successfully if
\begin{align}
R_{2,p}^I \leq \frac{1}{2} \log_2\left(1 + \frac{|h_{\dtwo}|^2P_{2,p}^I}{\frac{1}{2} +|h_{\cone}|^2P_{1,p}^I+ |h_{\cone}|^2 P_{1,a}^I+|h_{\cthree}|^2 P_{3,a}^R} \right).
\label{eq:rate_cond_10_asy}
\end{align}
Next, Rx$2$ removes the interference caused by $x_{2,p}^I$ and decodes $f(W_{3,a}^R,W_{1,a}^I)$. Note that $x_{3,a}^{R}$ and $x_{1,a}^{I}$ have to be aligned at Rx$2$ since the transmit power of $x_{3,a}^{R}$ and $x_{1,a}^{I}$ need to satisfy $$|h_{\cone}|^2 P_{1,a}^I \overset{}{=} |h_{\cthree}|^2 P_{3,a}^R.$$
The decoding of $f(W_{3,a}^R,W_{1,a}^I)$ is done successfully as long as \cite{NazerGastpar}
\begin{align}
R_a\leq \frac{1}{2}\left[\log_2\left(\frac{1}{2} +  \frac{|h_{\cone}|^2 P_{1,a}^I}{\frac{1}{2}+|h_{\cone}|^2P_{1,p}^I } \right)\right]^+. \label{eq:rate_cond_11_asy}
\end{align}
This schemes achieves 
\begin{align}
R_{\Sigma,\text{\PACP}} = R_{1,p}^R + R_{1,p}^I + R_{a} + R_{2,p}^R+R_{2,p}^I+R_{3,c}^R+R_{3,c}^I+R_{a}, \label{eq:sum-rate_asy}
\end{align}
where all rates above satisfy \eqref{eq:rate_cond_1_asy}-\eqref{eq:rate_cond_6_asy} and  \eqref{eq:rate_cond_7_asy}-\eqref{eq:rate_cond_11_asy}. 
By dividing the sum-rate in \eqref{eq:sum-rate_asy} by $\log_2\R$ and letting $\R\to\infty$, we write the achievable GDoF 
\begin{align}
d_{\Sigma,\text{\PACP}}(\boldsymbol{\alpha}) = d_{1,p}^R + d_{1,p}^I + d_{a} + d_{2,p}^R+ d_{2,p}^I+ d_{3,c}^R+ d_{3,c}^I+ d_{a},\label{eq:asy_GDoF}
\end{align}
where
$$ d_{1,p}^C = \frac{R_{1,p}^C}{\log_2\R},\quad 
d_{a} = \frac{R_{a}}{\log_2\R},\quad 
d_{2,p}^C = \frac{R_{2,p}^C}{\log_2\R},\quad
d_{3,c}^C = \frac{R_{3,c}^C}{\log_2\R},\quad  C\in\{R,I\} $$ as $\R\to\infty$. The terms above can be obtained by substituting the powers of each signal given in Table \ref{Table:power_alloc_Gauss_asym} into the rate constraints in \eqref{eq:rate_cond_1_asy}-\eqref{eq:rate_cond_6_asy} and  \eqref{eq:rate_cond_7_asy}-\eqref{eq:rate_cond_11_asy}. Hence, we write
\begin{align}
d_{1,p}^R = d_{1,p}^I &= \frac{1}{2} (\alpha_{\done}-\alpha_{\cone})
\label{eq:asy_GDoF_message1}\\
%%%%%%%%%%%%%%%%%%%%%%%%%%%%%%%%%%%%%%%%%%%%%%%%%%%
d_{a} & =  \frac{1}{2}\min\{\alpha_{\cone},\alpha_{\cthree}\} \quad \text{if }  \varphi \mod \pi  \neq 0 \\
%%%%%%%%%%%%%%%%%%%%%%%%%%%%%%%%%%%%%%%%%%%%%%%
d_{2,p}^R &= \frac{1}{2}(\alpha_{\dtwo}-\alpha_{\ctwo}) \\ 
%%%%%%%%%%%%%%%%%%%%%%%%%%%%%%%%%%%%%%%%%%%%%%%
d_{2,p}^I &= \frac{1}{2}[(\alpha_{\dtwo}-\alpha_{\ctwo})-\min\{\alpha_{\cone},\alpha_{\cthree}\}]\\
%%%%%%%%%%%%%%%%%%%%%%%%%%%%%%%%%%%%%%%%%%%%%%%%%
d_{3,c}^R = d_{3,c}^I &= \frac{1}{2}(\alpha_{\cthree}-(\alpha_{\dtwo}-\alpha_{\ctwo}))^+. \label{eq:asy_GDoF_message5}
\end{align}
%%%%%%%%%%%%%%%%%%%%%%%%%%%%%%%%%%%%%%%%%%%%%%%%%%%%%%
\iffalse
\begin{align}
d_{3,c}^R &\leq \begin{cases}\frac{1}{2}(\alpha_{\dthree}-\alpha_{\done}) &\quad \text{if } \alpha_{\dtwo} - \alpha_{\ctwo} \leq \alpha_{\cthree} \\ 0 &\quad \text{otherwise}  \end{cases}  \\ 
d_{a} &\leq \frac{1}{2}\min\{\alpha_{\cone},\alpha_{\cthree}\} \quad \text{if }  \varphi \mod \pi  \neq 0 \\
d_{1,p}^R &\leq \frac{1}{2} (\alpha_{\done}-\alpha_{\cone}).
\end{align}
\eqref{eq:rate_cond_4_asy}-\eqref{eq:rate_cond_6_asy}:
\begin{align}
d_{3,c}^I & \leq  \begin{cases}\frac{1}{2}(\alpha_{\dthree}-\alpha_{\done}) &\quad \text{if } \alpha_{\dtwo} - \alpha_{\ctwo} \leq \alpha_{\cthree} \\ 0 &\quad \text{otherwise}  \end{cases}  \\ 
d_a & \leq \frac{1}{2} \min\{\alpha_{\cone},\alpha_{\cthree}\}\\
d_{1,p}^I &\leq \frac{1}{2} (\alpha_{\done}-\alpha_{\cone})
\end{align}
\eqref{eq:rate_cond_7_asy},\eqref{eq:rate_cond_8_asy}
\begin{align}
d_{3,c}^R \leq \frac{1}{2}(\alpha_{\cthree}-(\alpha_{\dtwo}-\alpha_{\ctwo}))^+\\
d_{3,c}^I \leq \frac{1}{2}(\alpha_{\cthree}-(\alpha_{\dtwo}-\alpha_{\ctwo}))^+
\end{align}
\eqref{eq:rate_cond_9_asy}
\begin{align}
d_{2,p}^R\leq \frac{1}{2}(\alpha_{\dtwo}-\alpha_{\ctwo}).
\end{align}
\eqref{eq:rate_cond_10_asy}
\begin{align}
d_{2,p}^I &\leq \frac{1}{2}[(\alpha_{\dtwo}-\alpha_{\ctwo})-\min\{\alpha_{\cone},\alpha_{\cthree}\}]
\end{align}
\eqref{eq:rate_cond_11_asy}
\begin{align}
d_{a} &\leq \frac{1}{2}\min\{\alpha_{\cone},\alpha_{\cthree}\}
\end{align}
\fi
%%%%%%%%%%%%%%%%%%%%%%%%%%%%%%%%%%%%%%%%%%%%%%%%%%%
Now, by substituting \eqref{eq:asy_GDoF_message1}-\eqref{eq:asy_GDoF_message5} into \eqref{eq:asy_GDoF}, we see that this schemes achieves a GDoF of
\begin{align}
d_{\Sigma,\text{\PACP}}(\boldsymbol{\alpha}) = \underbrace{\alpha_{\done}-\alpha_{\cone} +\alpha_{\dtwo} - \alpha_{\ctwo} + (\alpha_{\cthree}-[\alpha_{\dtwo}-\alpha_{\ctwo})]^+}_
{d_{\Sigma,\text{TDMA-TIN}}(\boldsymbol{\alpha})} +\frac{1}{2}\min\{\alpha_{\cone},\alpha_{\cthree}\} \quad \text{if }  \varphi \mod \pi  \neq 0.
\end{align}
Since \PACP{} achieves a higher GDoF than TDMA-TIN as long as $\varphi \mod \pi  \neq 0$, we conclude that TDMA-TIN cannot achieve the GDoF of PIMAC when $\alpha_{\dthree}-\alpha_{\cthree} = \alpha_{\done}-\alpha_{\cone}$ except over a subset
of channel coefficient values of measure $0$. 

\ifCLASSOPTIONcaptionsoff
  \newpage
\fi

\bibliography{bibl}

\end{document}